\providecommand{\tabularnewline}{\\}
\providecommand{\algorithmname}{Algorithm}
\newcommand{\lyxaddress}[1]{
\par {\raggedright #1
\vspace{1.4em}
\noindent\par}
}
\theoremstyle{plain}
\newtheorem{thm}{\protect\theoremname}
  \theoremstyle{remark}
  \newtheorem*{rem*}{\protect\remarkname}
  \theoremstyle{plain}
  \newtheorem{assumption}{\protect\assumptionname}
  \theoremstyle{plain}
  \newtheorem{prop}[thm]{\protect\propositionname}
  \theoremstyle{plain}
  \newtheorem{lem}[thm]{\protect\lemmaname}
  \theoremstyle{plain}
  \newtheorem{cor}[thm]{\protect\corollaryname}
  \theoremstyle{remark}
  \newtheorem{rem}[thm]{\protect\remarkname}
  \providecommand{\assumptionname}{Assumption}
  \providecommand{\lemmaname}{Lemma}
  \providecommand{\propositionname}{Proposition}
  \providecommand{\remarkname}{Remark}
\providecommand{\corollaryname}{Corollary}
\providecommand{\theoremname}{Theorem}
  \providecommand{\assumptionname}{Assumption}
  \providecommand{\corollaryname}{Corollary}
  \providecommand{\lemmaname}{Lemma}
  \providecommand{\propositionname}{Proposition}
  \providecommand{\remarkname}{Remark}
\providecommand{\theoremname}{Theorem}
\begin{document}

\title{The Correlated Pseudo-Marginal Method}

\author{George Deligiannidis$^{\dagger}$, Arnaud Doucet$^{\ddagger}$ and
Michael K. Pitt$^{\dagger}$\thanks{corresponding author}}
\maketitle

\lyxaddress{$^{\dagger}$Department of Mathematics, King's College London, UK.}

\lyxaddress{$^{\ddagger}$Department of Statistics, University of Oxford, UK.}
\begin{abstract}
The pseudo-marginal algorithm is a Metropolis\textendash Hastings-type
scheme which samples asymptotically from a target probability density
when we are only able to estimate unbiasedly an unnormalised version
of it. In a Bayesian context, it is a state-of-the-art posterior simulation
technique when the likelihood function is intractable but can be estimated
unbiasedly using Monte Carlo samples. However, for the performance
of this scheme not to degrade as the number $T$ of data points increases,
it is typically necessary for the number $N$ of Monte Carlo samples
to be proportional to$\ T$ to control the relative variance of the
likelihood ratio estimator appearing in the acceptance probability
of this algorithm.\ The correlated pseudo-marginal algorithm is a
modification of the pseudo-marginal method using a likelihood ratio
estimator computed using two correlated likelihood estimators. For
random effects models, we show under regularity conditions that the
parameters of this scheme can be selected such that the relative variance
of this likelihood ratio estimator is controlled when $N$ increases
sublinearly with $T$ and we provide guidelines on how to optimise
the parameters of the algorithm based on a non-standard weak convergence
analysis. The efficiency of computations for Bayesian inference relative
to the pseudo-marginal method empirically increases with $T$ and
is higher than two orders of magnitude in some of our examples. 
\end{abstract}
{\small{}{}Keywords: }Asymptotic posterior normality; Correlated
random numbers; Intractable likelihood; \linebreak{}
 Metropolis\textendash Hastings algorithm; Particle filter; Random
effects model; State-space model; Weak convergence.

\section{Introduction\label{section:introduction}}

Consider a Bayesian model where the likelihood of the observations
$y$ is denoted by $p(y\mid\theta)$ and the prior for the parameter
$\theta\in\Theta\subseteq\mathbb{R}^{d}$ admits a density $p(\theta)$
with respect to Lebesgue measure $\mathrm{d}\theta$. Then the posterior
density of interest is $\pi(\theta)\propto p(y\mid\theta)p(\theta)$.
We slightly abuse notation by using the same symbols for distributions
and densities.

A standard approach to compute expectations with respect to $\pi\left(\theta\right)$
is to use the Metropolis\textendash Hastings (MH)\ algorithm to generate
an ergodic Markov chain of invariant density $\pi\left(\theta\right)$.
Given the current state $\theta$ of the Markov chain, one samples
at each iteration a candidate $\theta^{\prime}$ which is accepted
with a probability which depends on the likelihood ratio $p(y\mid\theta^{\prime})/p(y\mid\theta)$.
For many latent variable models, the likelihood is intractable and
it is thus impossible to implement the MH\ algorithm. In this context,
Markov chain Monte Carlo (MCMC) schemes targeting the joint posterior
density of the parameter and latent variables are often inefficient
as the parameter and latent variables can be strongly correlated under
the posterior, or cannot even be used if only forward simulation of
the latent variables is feasible; see \citep{ionides2006}, \citep{Johndrow2016},
and \citep[Section 2.3]{andrieu:doucet:holenstein2010} for a\ detailed
discussion.

Contrary to these approaches, the pseudo-marginal (PM) algorithm directly
mimicks the MH scheme targeting the marginal $\pi\left(\theta\right)$
by substituting an estimator of the likelihood ratio $p(y\mid\theta^{\prime})/p(y\mid\theta)$
for the true likelihood ratio in the MH\ acceptance probability \citep{linliuSloan2000},
\citep{beaumont2003estimation}, \citep{andrieu2009pseudo}. This
estimator is obtained by computing a non-negative unbiased estimator
of $p(y\mid\theta^{\prime})$ and dividing it by the estimator of
$p(y\mid\theta)$ computed when $\theta$ was accepted. This simple
yet powerful idea has become very popular as it is often possible
to obtain a non-negative unbiased estimator of intractable likelihoods
and it provides state-of-the-art performance in many scenarios; see,
e.g., \citep{andrieu:doucet:holenstein2010}, \citep{fluryshephard2011}.
Qualitative convergence results for this procedure have been obtained
in \citep{andrieu2009pseudo} and \citep{andrieuvihola2015}.

Assuming that the likelihood estimator is evaluated using importance
sampling or particle filters for state-space models with $N$ particles,
it has also been shown under various assumptions in \citep{PittSilvaGiordaniKohn(12)},
\citep{doucet2015efficient} and \citep{Sherlock2015efficiency} that
$N$ should be selected such that the variance of the loglikelihood
ratio estimator should take a value between 1.0 and 2.0 in regions
of high probability mass to minimise the computational resources necessary
to achieve a specific asymptotic variance for a particular PM average.
As the number $T$ of data $y=\left(y_{1},...,y_{T}\right)$ increases,
this implies that $N$ should increase linearly with $T$ \citep[Theorem 1]{berarddelmoraldoucet2014}
and the computational cost of PM is thus of order $T^{2}$ at each
iteration. This can be prohibitive for large datasets.

The reason for this is that the PM algorithm is based on an estimator
of $p(y\mid\theta^{\prime})/p(y\mid\theta)$ obtained by dividing
estimators of $p(y\mid\theta^{\prime})$ and $p(y\mid\theta)$ which
are independent given $\theta$ and $\theta^{\prime}$. In contrast,
the correlated pseudo-marginal (CPM) method correlates the estimators
of $p(y\mid\theta^{\prime})$ and $p(y\mid\theta)$ so as to reduce
the variance of the resulting ratio. Correlation between these estimators
is introduced by correlating the auxiliary random variates used to
obtain these estimators. Two implementations of this generic idea
are detailed. We show how to correlate importance sampling estimators
for random effects models and how to correlate particle filter estimators
for state-space models using the Hilbert sort procedure proposed in
\citep{gerber2015}.

We study in detail large sample properties of the CPM\ scheme for
random effects models. In this scenario, the loglikelihood ratio estimator
based on our correlation scheme is shown to satisfy a conditional
Central Limit Theorem (CLT) whenever $N$ grows to infinity sublinearly
with $T$ and the Euclidean distance between $\theta$ and $\theta^{\prime}$
is of order $1/\sqrt{T}$. When the posterior concentrates towards
a normal of standard deviation $1/\sqrt{T}$, this CLT\ can be used
to show that a space-rescaled version of the CPM chain converges weakly
to a discrete-time Markov chain on the parameter space. The Integrated
Autocorrelation Time (IACT) of the weak limit is not impacted by how
fast $N$ goes to infinity with $T$. However the lower this growth
rate is, the more correlated the auxiliary variables need to be to
control the variance of this estimator. We provide results suggesting
we need $N$ to grow at least at rate $\sqrt{T}$ for the IACT of
the original CPM chain to remain finite as $T\rightarrow\infty$.
We use these results to provide practical guidelines on how to optimise
performance of the algorithm for large data sets which are validated
experimentally. In our numerical examples on random effects models
and state-space models, CPM always outperforms PM and the improvement
increases with $T$ from 20 to 50 times when $T$ is a few hundreds
to more than 100 times when $T$ is a few thousands.

The rest of the paper is organised as follows. In Section \ref{SS: sim likelihood},
we introduce the CPM algorithm and detail its implementation for random
effects and state-space models. In Section \ref{sec:scaling}, we
present various CLTs for the loglikelihood estimator and loglikelihood
ratio estimators used by PM\ and CPM. In Section \ref{sec: optimisation},
we exploit these results to analyze and optimize the CPM kernel in
the large sample regime. We demonstrate experimentally the efficiency
of this methodology in Section\ \ref{sec:Applications} and discuss
various potential extensions in Section \ref{sec:discussion}. All
the proofs are given in the Supplementary Material.

\section{Metropolis\textendash Hastings and correlated pseudo-marginal schemes\label{SS: sim likelihood}}

\subsection{Metropolis\textendash Hastings algorithm}

The transition kernel $Q_{\textsc{ex}}$ of the MH algorithm targeting
$\pi\left(\theta\right)$ using a proposal distribution $q\left(\theta,\mathrm{d}\theta^{\prime}\right)=q\left(\theta,\theta^{\prime}\right)\mathrm{d}\theta^{\prime}$
is given by 
\begin{equation}
Q_{\textsc{ex}}\left(\theta,\mathrm{d}\theta^{\prime}\right)=q\left(\theta,\mathrm{d}\theta^{\prime}\right)\alpha_{\textsc{ex}}(\theta,\theta^{\prime})+\left\{ 1-\varrho_{\textsc{ex}}\left(\theta\right)\right\} \delta_{\theta}\left(\mathrm{d}\theta^{\prime}\right),\label{eq:Q_EX}
\end{equation}
where 
\begin{equation}
r_{\textsc{ex}}(\theta,\theta^{\prime})=\frac{\pi(\theta^{\prime})q\left(\theta^{\prime},\theta\right)}{\pi(\theta)q\left(\theta,\theta^{\prime}\right)}=\frac{p(y\mid\theta^{\prime})p\left(\theta^{\prime}\right)q\left(\theta^{\prime},\theta\right)}{p(y\mid\theta)p\left(\theta\right)q\left(\theta,\theta^{\prime}\right)},\label{eq:MHexactacceptanceratio}
\end{equation}
and 
\begin{equation}
\alpha_{\textsc{ex}}(\theta,\theta^{\prime})=\min\{1,r_{\textsc{ex}}(\theta,\theta^{\prime})\}\text{, \  \ }\varrho_{\textsc{ex}}\left(\theta\right)=\int q\left(\theta,\mathrm{d}\theta^{\prime}\right)\alpha_{\textsc{ex}}(\theta,\theta^{\prime}).\label{eq: accept EX}
\end{equation}
Implementing this MH\ scheme requires being able to evaluate the
likelihood ratio $p(y\mid\theta^{\prime})/p(y\mid\theta)$ appearing
in the expression of $r_{\textsc{ex}}(\theta,\theta^{\prime})$.

\subsection{The correlated pseudo-marginal algorithm}

Assume $\widehat{p}(y\mid\theta,U)$ is a non-negative unbiased estimator
of the intractable likelihood $p(y\mid\theta)$ when $U\sim m$. Here
$U$ corresponds to the $\mathcal{U}$-valued auxiliary random variables
used to obtain the estimator. We assume that $m\left(\mathrm{d}u\right)=m\left(u\right)\mathrm{d}u$
and introduce the joint density $\overline{\pi}(\theta,u)$ on $\Theta\times\mathcal{U}$,
where 
\begin{equation}
\overline{\pi}(\theta,u)=\pi(\theta)m(u)\thinspace\widehat{p}(y\mid\theta,u)/p(y\mid\theta).\label{eq:pseudomarginaltarget}
\end{equation}
As $\widehat{p}(y\mid\theta,U)$ is unbiased, $\overline{\pi}(\theta,u)$
admits $\pi\left(\theta\right)$ as marginal density. The CPM\ algorithm
is a MH scheme targeting (\ref{eq:pseudomarginaltarget}) with proposal
density $q\left(\theta,\mathrm{d}\theta^{\prime}\right)K\left(u,\mathrm{d}u^{\prime}\right)$
where $K$ admits an $m$-reversible Markov transition density, i.e.
\begin{equation}
m\left(u\right)K\left(u,u^{\prime}\right)=m\left(u^{\prime}\right)K\left(u^{\prime},u\right).\label{eq:Kismreversible}
\end{equation}
This yields the acceptance probability 
\begin{equation}
\alpha_{Q}\left\{ \left(\theta,u\right),\left(\theta^{\prime},u^{\prime}\right)\right\} =\min\left\{ 1,r_{\textsc{ex}}(\theta,\theta^{\prime})\frac{\widehat{p}(y\mid\theta^{\prime},u^{\prime})/p(y\mid\theta^{\prime})}{\widehat{p}(y\mid\theta,u)/p(y\mid\theta)}\right\} .\label{eq:acceptanceprobabilityCPM}
\end{equation}
Hence, the CPM\ algorithm admits $\overline{\pi}\left(\theta,u\right)$
as an invariant density by construction and its transition kernel
$Q$ is given by 
\begin{equation}
Q\left\{ \left(\theta,u\right),\left(\mathrm{d}\theta^{\prime},\mathrm{d}u^{\prime}\right)\right\} =q\left(\theta,\mathrm{d}\theta^{\prime}\right)K\left(u,\mathrm{d}u^{\prime}\right)\alpha_{Q}\left\{ \left(\theta,u\right),\left(\theta^{\prime},u^{\prime}\right)\right\} +\left\{ 1-\varrho_{Q}\left(\theta,u\right)\right\} \delta_{\left(\theta,u\right)}\left(\mathrm{d}\theta^{\prime},\mathrm{d}u^{\prime}\right),\label{eq:transitionCPM}
\end{equation}
where $1-\varrho_{Q}\left(\theta,u\right)$ is the corresponding rejection
probability. For $K\left(u,u^{\prime}\right)=m\left(u^{\prime}\right)$,
we recover the PM algorithm.

Let $\varphi\left(z;\mu,\Sigma\right)$ be the multivariate normal
density of argument $z$, mean $\mu$ and covariance\ matrix $\Sigma$
and let $X\sim\mathcal{N}\left(\mu,\Sigma\right)$ denote a sample
from this distribution. Henceforth, we assume the likelihood estimator
is computed using $M\geq1$ standard normal random variables so 
\begin{equation}
m\left(u\right)=\varphi\left(u;0_{M},I_{M}\right)\text{ and }K_{\rho}\left(u,u^{\prime}\right)=\varphi\left(u^{\prime};\rho u,\left(1-\rho^{2}\right)I_{M}\right),\label{eq:invariantuandARkernel}
\end{equation}
where $\rho\in\left(-1,1\right)$, $0_{M}$ is the $M\times1$ vector
with zero entries and $I_{M}$ the $M\times M$ identity matrix. It
is straightforward to check that $K_{\rho}$ is $m-$reversible. There
is no loss of generality to select $m$ as a\ normal density since
inversion techniques can be used to form any random variable of interest\footnote{For example, in Section \ref{sec:PF}, it is necessary to generate
uniform random variates and these may be constructed as $\Phi(u_{i})$
where $u_{i}$ is a scalar element of $u$ and $\Phi$ the cumulative
distribution function of the standard normal.}. In addition the kernel $K_{\rho}$ has the advantage that it can
be regarded as a discretized Ornstein\textendash Uhlenbeck process.
This property is exploited to establish the main result of Section
\ref{sec:scaling}.

Algorithm \ref{alg:CPM} summarizes how one samples from $Q\left\{ \left(\theta,U\right),\cdot\right\} $.

\begin{algorithm}[H]
\caption{\textbf{Correlated Pseudo-Marginal Algorithm}\label{alg:CPM}}

\begin{enumerate}
\item Sample $\theta^{\prime}\sim q\left(\theta,\cdot\right)$. 
\item Sample $\varepsilon\sim\mathcal{N}\left(0_{M},I_{M}\right)$ and set
$U^{\prime}=\rho U+\sqrt{1-\rho^{2}}\varepsilon.$ 
\item Compute the estimator $\widehat{p}(y\mid\theta^{\prime},U^{\prime})$
of $p(y\mid\theta^{\prime}).$ 
\item With probability 
\begin{equation}
\alpha_{Q}\left\{ \left(\theta,U\right),\left(\theta^{\prime},U^{\prime}\right)\right\} =\min\left\{ 1,\frac{\widehat{p}(y\mid\theta^{\prime},U^{\prime})}{\widehat{p}(y\mid\theta,U)}\frac{p(\theta^{\prime})}{p(\theta)}\frac{q\left(\theta^{\prime},\theta\right)}{q\left(\theta,\theta^{\prime}\right)}\right\} ,\label{eq:alg_accept}
\end{equation}
output $\left(\theta^{\prime},U^{\prime}\right)$. Otherwise, output
$\left(\theta,U\right)$. 
\end{enumerate}
\end{algorithm}

Contrary to the PM method corresponding to $\rho=0$, we need to store
the vector $u$ instead of $\widehat{p}(y\mid\theta,u)$ to implement
the algorithm when $\rho\neq0$. In the applications considered, this
overhead is mild.

The rationale behind the CPM\ scheme is that if $\left(\theta,u\right)\longmapsto\widehat{p}(y\mid\theta,u)$
is a regular enough function and $\left(\theta,U\right)$ and $\left(\theta^{\prime},U^{\prime}\right)$
are ``close\textquotedblright \ enough then we expect the ratio
estimator $\widehat{p}(y\mid\theta^{\prime},U^{\prime})/\widehat{p}(y\mid\theta,U)$
to have small relative variance and therefore to better mimick the
noiseless MH scheme $Q_{\textsc{ex}}$. In many situations, the posterior
$\pi\left(\theta\right)$ will be approximately normal for large data
sets with a covariance scaling in $1/\sqrt{T}$ so an appropriately
scaled MH random walk or autoregressive proposal $q\left(\theta,\mathrm{d}\theta^{\prime}\right)$
will ensure that $\theta$ and $\theta^{\prime}$ are ``close\textquotedblright .
We explain in Section \ref{sec:scaling}\ how $\rho$ can be selected
as a function of $T$ to ensure that $U$ and $U^{\prime}$ are ``close\textquotedblright \ enough
so that the loglikelihood ratio estimator $\log\{\widehat{p}(y\mid\theta^{\prime},U^{\prime})/\widehat{p}(y\mid\theta,U)\}$
satisfies a conditional CLT at stationarity. As alluded to in the
introduction, properties of this estimator and in particular its asymptotic
distribution and variance at stationarity are critical to our analysis
of the CPM\ scheme in the large sample regime detailed in Section
\ref{sec: optimisation}.

\subsection{Application to latent variable models\label{sec:latentvariablemodels}}

\subsubsection{Random effects models\label{subsec:panel}}

Consider the model 
\begin{equation}
X_{t}\overset{\text{i.i.d.}}{\sim}f_{\theta}(\cdot)\text{,}\text{\hspace{1cm}}\left.Y_{t}\right\vert X_{t}\sim g_{\theta}(\cdot\mid X_{t}),\label{eq:independentlatentvariablemodels}
\end{equation}
where $\left\{ X_{t};t\geq1\right\} $ are $\mathbb{R}^{k}$-valued
latent variables and $\left\{ Y_{t};t\geq1\right\} $ are $\mathsf{Y}$-valued
observations. For any $i<j$, let $i:j=\left\{ i,i+1,...,j\right\} $.
For a realization $Y_{1:T}=y_{1:T}$, the likelihood satisfies 
\begin{equation}
p(y_{1:T}\mid\theta)={\displaystyle \prod\limits _{t=1}^{T}}\;p(y_{t}\mid\theta),\text{\hspace{1cm}}p(y_{t}\mid\theta)=\int g_{\theta}(y_{t}\mid x_{t})\,f_{\theta}(x_{t})\mathrm{d}x_{t}.\label{eq:likelihoodpaneldata}
\end{equation}
If the $T$ integrals appearing in (\ref{eq:likelihoodpaneldata})
are intractable, we can estimate them using importance sampling 
\begin{equation}
\widehat{p}(y_{1:T}\mid\theta,U)={\displaystyle \prod\limits _{t=1}^{T}}\;\widehat{p}(y_{t}\mid\theta,U_{t}),\text{\hspace{1cm}}\widehat{p}(y_{t}\mid\theta,U_{t})=\frac{1}{N}\sum_{i=1}^{N}\omega\left(y_{t},U_{t,i};\theta\right),\label{eq:likelihoodestimator}
\end{equation}
where the importance weights\ $\omega\left(y,U_{t,i};\theta\right)$
are given by 
\begin{equation}
\omega(y_{t},U_{t,i};\theta)=\frac{g_{\theta}(y_{t}\mid X_{t,i})\thinspace f_{\theta}(X_{t,i})}{q_{\theta}(X_{t,i}\mid y_{t})},\label{eq:ISweights}
\end{equation}
assuming that there exists a deterministic map $\Xi_{t}:\mathbb{R}^{p}\times\Theta\rightarrow\mathbb{R}^{k}$
such that $X_{t,i}=\Xi_{t}(U_{t,i};\theta)\sim q_{\theta}(\cdot\mid y_{t})$
for $U_{t,i}\sim\mathcal{N}\left(0_{p},I_{p}\right)$. In this case,
we have $U=\left(U_{1},\ldots,U_{T}\right),$ $U_{t}=\left(U_{t,1},...,U_{t,N}\right)$
so $U\sim\mathcal{N}\left(0_{M},I_{M}\right)$ where $M=TNp$.

\subsubsection{State-space models\label{sec:PF}}

Consider a generalization of the model (\ref{eq:independentlatentvariablemodels})
where the latent variables $\left\{ X_{t};t\geq1\right\} $ now arise
from a homogeneous $\mathbb{R}^{k}$-valued Markov process of initial
density $\nu_{\theta}$ and Markov transition density $f_{\theta}$,
i.e. for $t\geq1$ 
\begin{equation}
X_{1}\sim\nu_{\theta},\text{ }\left.X_{t+1}\right\vert X_{t}\sim f_{\theta}(\cdot\mid X_{t}),\text{\hspace{1cm}}\left.Y_{t}\right\vert X_{t}\sim g_{\theta}(\cdot\mid X_{t}).\label{eq:statespacemodels}
\end{equation}
For a realization $Y_{1:T}=y_{1:T}$, the likelihood satisfies the
predictive decomposition 
\begin{equation}
p(y_{1:T}\mid\theta)=p(y_{1}\mid\theta)\,{\displaystyle \prod\limits _{t=1}^{T}}\text{ }p(y_{t}\mid y_{1:t-1},\theta),\label{eq:predictivedecompositionlikelihood}
\end{equation}
with 
\begin{equation}
p(y_{t}\mid y_{1:t-1},\theta)=\int g_{\theta}(y_{t}\mid x_{t}).p_{\theta}(x_{t}\mid y_{1:t-1})\mathrm{d}x_{t},\label{eq:predictiveSSM}
\end{equation}
where $p_{\theta}(x_{1}\mid y_{1:0})=\nu_{\theta}(x_{1})$ and $p_{\theta}(x_{t}\mid y_{1:t-1})$
denotes the posterior density of $X_{t}$ given $Y_{1:t-1}=y_{1:t-1}$
for $t\geq2$. Importance sampling estimators of the likelihood have
relative variance typically increasing exponentially with $T$ so
the likelihood is usually estimated using particle filters.

Particle filters propagate $N$ random samples, termed particles,
over time using a sequence of resampling steps and importance sampling
steps using the importance densities $q_{\theta}\left(\left.x_{1}\right\vert y_{1}\right)$
at time $1$ and $q_{\theta}\left(\left.x_{t}\right\vert y_{t},x_{t-1}\right)$
at times $t\geq2$. Let $\Xi_{1}:\mathbb{R}^{p}\times\Theta\rightarrow\mathbb{R}^{k}$
and $\Xi_{t}:\mathbb{R}^{k}\times\mathbb{R}^{p}\times\Theta\rightarrow\mathbb{R}^{k}$
for $t\geq2$ be deterministic maps such that $X_{1}=\Xi_{1}(V;\theta)\sim q_{\theta}(\cdot\mid y_{1})$
and $X_{t}=\Xi_{t}(x_{t-1},V;\theta)\sim q_{\theta}(\cdot\mid y_{t},x_{t-1})$
for $t\geq2$ if $V\sim\mathcal{N}\left(0_{p},I_{p}\right)$. If we
use these representations to sample the particles and normal random
variables to obtain uniform\ random variables to sample the categorical
distributions appearing in the resampling steps then we can obtain
an unbiased estimator $\widehat{p}(y_{t}\mid\theta,U)$ of the likelihood
where $U$ follows a multivariate normal \citep{DelMoral2004}. When
this estimator is used within a PM scheme, the resulting algorithm
is known as the particle marginal MH \citep{andrieu:doucet:holenstein2010}.
However if this likelihood estimator is used in the CPM\ context,
the likelihood ratio estimator $\widehat{p}(y_{1:T}\mid\theta^{\prime},u^{\prime})/\widehat{p}(y_{1:T}\mid\theta,u)$
can significantly deviate from $1$ even when $\left(\theta,u\right)$
is close to $\left(\theta^{\prime},u^{\prime}\right)$ and the true
likelihood is continuous at $\theta.$ This is because the resampling
steps introduce discontinuities in the particles that are selected
when $\theta$ and $u$ are modified, even slightly.

To reduce the variability of this likelihood ratio estimator, we use
a resampling scheme based on the Hilbert sort procedure proposed in
\citep{gerber2015}. This procedure is based on the Hilbert space-filling
curve which is a continuous fractal map $H:\left[0,1\right]\rightarrow\left[0,1\right]^{k}$
whose image is $\left[0,1\right]^{k}$. It admits a pseudo-inverse
$h:\left[0,1\right]^{k}\rightarrow\left[0,1\right]$, that is $H\circ h\left(x\right)=x$
for all $x\in\left[0,1\right]^{k}$. For most points $x,x^{\prime}$\ that
are close in $\left[0,1\right]^{k}$, their images $h\left(x\right)$
and $h\left(x^{\prime}\right)$ tend to be close. This property can
be used to build a ``sorted\textquotedblright \ resampling procedure
which will ensure that when the parameter or auxiliary variables change
only slightly the particles that are selected remain close. Practically,
this resampling procedure proceeds as follows: 1) the $\mathbb{R}^{k}-$valued
particles are projected in the hypercube $\left[0,1\right]^{k}$ using
a bijection $\varkappa:\mathbb{R}^{k}\rightarrow\left[0,1\right]^{k}$,
2) The resulting $\left[0,1\right]^{k}-$valued particles are projected
on $\left[0,1\right]$ using the pseudo-inverse $h$, 3) These projected
$\left[0,1\right]-$valued particles are sorted, 4)\ The systematic
resampling scheme proposed in \citep{carpenter1999} is used on the
sorted points.

Let us introduce the importance weights $\omega_{1}\left(u_{1};\theta\right)=\nu_{\theta}(x_{1})\thinspace g_{\theta}(y_{1}\mid x_{1})/\thinspace q_{\theta}(x_{1}\mid y_{1})$
and $\omega_{t}\left(x_{t-1},u_{t};\theta\right)=f_{\theta}(x_{t}\mid x_{t-1})\thinspace g_{\theta}(y_{t}\mid x_{t})/q_{\theta}(x_{t}\mid y_{t},x_{t-1})$
for $t\geq2$. The only difference between the resulting particle
filter presented below and the algorithm proposed in \citep{gerber2015}
is that we use normal random variates instead of randomized quasi-Monte
Carlo points in $\left[0,1\right]^{p}$. For the mapping $\varkappa$,
we adopt the logistic transform used in \citep{gerber2015}.

\begin{algorithm}[H]
\caption{\textbf{Particle filter using Hilbert sort}\label{alg:PFHilbert}}

\begin{enumerate}
\item Sample $U_{1,i}\sim\mathcal{N}(0_{p},I_{p})$ \textsf{and set} $X_{1,i}=\Xi_{1}(U_{1,i};\theta)\ $\textsf{for
}$i\in1:N$\textsf{.} 
\item For $t=1,\ldots,T-1$
\begin{enumerate}
\item \textsf{Find the permutation }$\sigma_{t}$\textsf{ such that }$h\circ\varkappa\left(X_{t,\sigma_{t}\left(1\right)}\right)\leq\ldots\leq h\circ\varkappa\left(X_{t,\sigma_{t}\left(N\right)}\right)$
\textsf{if} $t\geq2$\textsf{, or }$X_{t,\sigma_{t}\left(1\right)}\leq\ldots\leq X_{t,\sigma_{t}\left(N\right)}$
\textsf{if} $t=1$. 
\item \textsf{Sample }$U_{t}^{R}\sim\mathcal{N}\left(0,1\right)$\textsf{,
set }$\overline{U}_{t,i}=(i-1)/N$ $+$ $\Phi(U_{t}^{R})/N$ \textsf{for}
$i\in1:N$\textsf{.} 
\item \textsf{Sample }$A_{t,i}\sim F_{t}^{-1}\left(\overline{U}_{t,i}\right)$
\textsf{for} $i\in1:N$\textsf{ where }$F_{t}^{-1}$\textsf{ is the
generalized inverse distribution function of the categorical distribution
with weights }$\{\omega_{1}(U_{1,\sigma_{1}\left(i\right)};\theta);i\in1:N\}$
\textsf{if }$t=1$\textsf{ and $\{\omega_{t}(X_{t-1,\sigma_{t-1}\left(A_{t-1,\sigma_{t}\left(i\right)}\right)},U_{t,\sigma_{t}\left(i\right)};\theta);i\in1:N\}$
for }$t\geq2.$ 
\item \textsf{Sample }$U_{t+1,i}\sim\mathcal{N}\left(0_{p},I_{p}\right)$\textsf{
and set }$X_{t+1,i}=\Xi_{t+1}(X_{t,\sigma_{t}\left(A_{t,i}\right)},U_{t+1,i};\theta)$\textsf{
for }$i\in1:N$. 
\end{enumerate}
\end{enumerate}
\end{algorithm}

If we denote by $U=\left(U_{1,1},...,U_{T,N},U_{1}^{R},...,U_{T-1}^{R}\right)$
the vector of all the standard normal variables used within this particle
filter, the corresponding unbiased likelihood estimator is given by
\begin{equation}
\widehat{p}(y_{1:T}\mid\theta,U)=\left\{ \frac{1}{N}\sum_{i=1}^{N}\omega_{1}(U_{1,i};\theta)\right\} {\displaystyle \prod\limits _{t=2}^{T}}\left\{ \frac{1}{N}\sum_{i=1}^{N}\omega_{t}(X_{t-1,\sigma_{t-1}\left(A_{t-1,i}\right)},U_{t,i};\theta)\right\} .\label{eq:likelihoodestimatorPF}
\end{equation}
In this case, we have $M=TNp+T-1$. We can now directly use this estimator
within the CPM\ method.

\subsection{Discussion}

Ideas related to the CPM\ scheme have previously been proposed: \citep{Lee2010}
suggest combining PM steps with updates where $U$ is held fixed and
only $\theta$ is updated but this scheme will scale poorly with $T$
as it still uses PM\ steps. In \citep{andrieudoucetlee2012}, the
authors propose combining PM steps with steps where $\theta$ is held
fixed and correlation between $\widehat{p}(y\mid\theta,U)$ and $\widehat{p}(y\mid\theta,U^{\prime})$
is introduced by sampling $U^{\prime}$ using a $m-$reversible Markov
kernel $K$. However, the crucial selection of $K$ was not discussed
in \citep{andrieudoucetlee2012}. After the first version of this
work was made available, \citep{Dahlin2015} proposed independently
to use the correlation scheme (\ref{eq:invariantuandARkernel}) but
their guidelines for the correlation parameter $\rho$ differ from
the ones we give in subsequent sections and do not ensure that the
variance of the loglikelihood ratio estimator is controlled as $T$
increases. They also use a standard particle filter which will .

As the density $m$ of $U$ is independent of $\theta$, it might
be argued that a Gibbs algorithm sampling alternately from the full
conditional densities $\overline{\pi}(\left.\theta\right\vert u)$
and $\overline{\pi}(\left.u\right\vert \theta)$ of $\overline{\pi}(\theta,u)$
could mix well and that it is unnecessary to update $\theta$ and
$U$ jointly as in the CPM\ scheme. Related ideas have been explored
in \citep{papaspiliopoulos2007}. However, such a Gibbs strategy is
usually not implementable in the applications considered here. Sophisticated
particle Gibbs samplers have been proposed to mimick it but their
computational complexity is of order $T^{2}N$ per iteration for state-space
models when using such a parameterisation \citep[Section 6.2]{Lindsten2014}.
Thus they are not even competitive to the standard PM\ algorithm
whose cost per iteration is of order $T^{2}$.

\section{Asymptotics of the loglikelihood ratio 
estimators\label{sec:scaling}}

To understand the quantitative properties of the CPM scheme, it is
key to establish the statistical properties of the likelihood ratio
estimator appearing in its acceptance probability (\ref{eq:acceptanceprobabilityCPM}).
For the random effects models introduced in Section \ref{subsec:panel},
we establish conditional CLTs for the\ loglikelihood estimator (\ref{eq:likelihoodestimator})
and the corresponding loglikelihood ratio estimators used by the PM\ and
the CPM\ algorithms when $N,T\rightarrow\infty$. Here $N$ will
be a deterministic function of $T$ denoted by $N_{T}$. We show that
these estimators exhibit very different behaviours, underlining the
benefits of the CPM\ over the PM.

Consider a sequence of random variables $\{M^{T};T\geq1\}$ defined
on a common probability space $\left(\Omega,\mathcal{G},P\right)$
and sub-$\sigma$-algebras $\{\mathcal{G}^{T};T\geq1\}$ of $\mathcal{G}$
and write $\overset{P}{\rightarrow}$ to denote convergence in probability.
We write subsequently $\left.M^{T}\right\vert \mathcal{G}^{T}\Rightarrow\lambda$
if $M\sim\lambda$ and $\mathbb{E}[\left.f\left(M^{T}\right)\right\vert \mathcal{G}^{T}]\overset{P}{\rightarrow}\mathbb{E}[f\left(M\right)]$
as $T\rightarrow\infty$ for any bounded continuous function $f$.

Henceforth, we will make the assumption that $Y_{t}\overset{\text{i.i.d.}}{\sim}\mu$
and denote by $\mathcal{Y}^{T}$ the $\sigma$-field spanned by $Y_{1:T}$.
When additionally $U\sim m$, we denote the associated probability
measure, expectation and variance by $\mathbb{P}$,\ $\mathbb{E}$
and $\mathbb{V}$. As our limit theorems consider the asymptotic regime
where $N_{T},T\rightarrow\infty$, we should write $m_{T},\pi_{T}$
instead $m,\pi$ and similarly $U^{T},$ $U_{t}^{T}$ and $U_{t,i}^{T}$
instead of $U,$ $U_{t}$ and $U_{t,i}.$ The probability space is
defined precisely in Supplementary Material~\ref{Appendix:notation}. We do not
emphasise here this dependency on $T$ for notational simplicity but
it should be kept in mind that we deal with triangular arrays of random
variables. We can write unambiguously $\mathbb{E}(\psi(Y_{1},U_{1,1}^{T};\theta))$
as $\mathbb{E}(\psi(Y_{1},U_{1,1};\theta))$ because $U_{1,1}^{T}\sim\mathcal{N}\left(0_{p},I_{p}\right)$
under $\mathbb{P}$ for any $T\geq1.$

\subsection{Asymptotic distribution of the loglikelihood error\label{subsec:MarginalCLT}}

Let $\gamma(y_{1};\theta)^{2}=\mathbb{V}(\varpi(y_{1},U_{1,1};\theta))$
be the variance conditional upon $Y_{1}=y_{1}$ and $\gamma\left(\theta\right)^{2}=\mathbb{V}(\varpi(Y{}_{1},U_{1,1};\theta))=\mathbb{E}(\gamma(Y_{1};\theta)^{2})$
the unconditional variance of the normalized importance weight 
\begin{equation}
\varpi(Y_{t},U_{1,1};\theta)=\frac{\omega(Y_{t},U_{1,1};\theta)}{p(Y_{t}\mid\theta)},\label{eq:normalisedISweight}
\end{equation}
where $\omega\left(Y_{t},U_{1,1};\theta\right)$ is defined in (\ref{eq:ISweights}).

We present conditional CLTs for the loglikelihood error 
\begin{equation}
Z_{T}\left(\theta\right)=\log\widehat{p}(Y_{1:T}\mid\theta,U)-\log p(Y_{1:T}\mid\theta),\label{eq:loglikelihoodestimatorCLT}
\end{equation}
when $U$ arises from the proposal $m$ or from
\begin{equation}
\overline{\pi}(\left.u\right\vert \theta)=\frac{\overline{\pi}(\theta,u)}{\pi(\theta)}={\displaystyle \prod\limits _{t=1}^{T}}\frac{\widehat{p}(Y_{t}\mid\theta,u_{t})}{p(Y_{t}\mid\theta)}\varphi(u_{t};0_{pN_{T}},I_{pN_{T}}),\label{eq:posteriorpanelfactorizes}
\end{equation}
$\overline{\pi}(\theta,u)$ being given in (\ref{eq:pseudomarginaltarget}).
The density $\overline{\pi}(\left.u\right\vert \theta)$ depends upon
$N_{T}$ as the estimator of $p(Y_{t}\mid\theta)$ is obtained using
$N_{T}$ samples. 
\begin{thm}
\label{Theorem:CLTmarginal}Let $N_{T}=\left\lceil \beta T^{\alpha}\right\rceil $
with $1/3<\alpha\leq1$, $\beta>0$ and $Y_{t}\overset{\text{i.i.d.}}{\sim}\mu$.
\begin{enumerate}
\item If $\mathbb{E}\left(\varpi(Y,U_{1,1};\theta)^{8}\right)<\infty$ and
$U\sim m$ then 
\begin{equation}
\left.T^{\left(\alpha-1\right)/2}Z_{T}\left(\theta\right)+\frac{1}{2}T^{\left(1-\alpha\right)/2}\beta^{-1}\gamma\left(\theta\right)^{2}\right\vert \mathcal{Y}^{T}\Rightarrow\mathcal{N}\left(0,\beta^{-1}\gamma\left(\theta\right)^{2}\right).\label{eq:CLTexpressionmarginalproposal}
\end{equation}
\item If $\mathbb{E}\left(\varpi(Y_{1},U_{1,1};\theta)^{9}\right)+\mathbb{E}\left(\gamma(Y_{1};\theta)^{4}\right)<\infty$
and $U\sim\overline{\pi}(\left.\cdot\right\vert \theta)$ then 
\begin{equation}
\left.T^{\left(\alpha-1\right)/2}Z_{T}\left(\theta\right)-\frac{1}{2}T^{\left(1-\alpha\right)/2}\beta^{-1}\gamma\left(\theta\right)^{2}\right\vert \mathcal{Y}^{T}\Rightarrow\mathcal{N}\left(0,\beta^{-1}\gamma\left(\theta\right)^{2}\right).\label{eq:CLTexpressionmarginalequilibrium}
\end{equation}
\end{enumerate}
\end{thm}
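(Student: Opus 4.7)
The plan is to exploit the product structure to write $Z_T(\theta) = \sum_{t=1}^T V_t$ with $V_t = \log W_t$ and $W_t := \widehat{p}(Y_t\mid\theta,U_t)/p(Y_t\mid\theta) = N_T^{-1}\sum_{i=1}^{N_T}\varpi(Y_t,U_{t,i};\theta)$, and then apply a Lindeberg--Feller CLT to the scaled sum after isolating a deterministic bias. Under both laws for $U$, the factorisation (\ref{eq:posteriorpanelfactorizes}) ensures the $V_t$'s are independent conditional on $\mathcal{Y}^T$, so only row-wise independent triangular-array arguments are needed.

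For Part 1 ($U\sim m$), $\mathbb{E}[W_t\mid Y_t]=1$ and $\mathbb{V}[W_t\mid Y_t] = \gamma(Y_t;\theta)^2/N_T$. I would Taylor-expand $\log W_t = (W_t-1) - (W_t-1)^2/2 + R_t$, take conditional expectations to obtain $\mu_t := \mathbb{E}[V_t\mid Y_t] = -\gamma(Y_t;\theta)^2/(2N_T) + O(N_T^{-2})$, and use the LLN for the iid sequence $\{\gamma(Y_t;\theta)^2\}$ to show $\sum_t \mu_t = -T\gamma(\theta)^2/(2N_T) + O_p(\sqrt{T}/N_T) + O(T/N_T^{2})$. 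Both error terms are $o_p(T^{(1-\alpha)/2})$ precisely when $\alpha>1/3$, which delivers the deterministic centering in (\ref{eq:CLTexpressionmarginalproposal}) after multiplying by $T^{(\alpha-1)/2}$. For the stochastic part $T^{(\alpha-1)/2}\sum_t(V_t-\mu_t)$, the conditional variance $T^{\alpha-1}\sum_t\mathbb{V}[V_t\mid Y_t] \to \beta^{-1}\gamma(\theta)^2$ in probability since $T^\alpha/N_T\to 1/\beta$ and $\mathbb{V}[V_t\mid Y_t] = \gamma(Y_t;\theta)^2/N_T + O(N_T^{-2})$, while the conditional Lindeberg condition follows from the fourth-moment bound $\mathbb{E}[(V_t-\mu_t)^4\mid Y_t] \leq C/N_T^{2}$; both bounds rely on the assumed eighth moment of $\varpi$.

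Part 2 ($U\sim\overline{\pi}(\cdot\mid\theta)$) runs along the same lines under the tilted conditional law $U_t\mid Y_t \propto W_t(U_t)\varphi(U_t;0_{pN_T},I_{pN_T})$. A direct computation gives the sign-flipped bias $\tilde{\mu}_t = \mathbb{E}[W_t\log W_t\mid Y_t] = +\gamma(Y_t;\theta)^2/(2N_T) + O(N_T^{-2})$, while $\tilde{\mathbb{V}}[V_t\mid Y_t]$ coincides with the $\mathbb{P}$-conditional variance to leading order; the $\mathbb{E}[\gamma(Y_1;\theta)^4]<\infty$ hypothesis is what allows the LLN for $\{\tilde{\mu}_t\}$ to yield a $T^{(1-\alpha)/2}$-negligible fluctuation. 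The extra power in the ninth-moment assumption on $\varpi$ is needed because the tilted fourth-moment control $\tilde{\mathbb{E}}[(V_t-\tilde{\mu}_t)^4\mid Y_t] = \mathbb{E}[W_t (V_t-\tilde{\mu}_t)^4\mid Y_t]$ costs one extra factor of $\varpi$ relative to Part 1.

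The main obstacle is controlling the Taylor remainder uniformly across the atypical event $\mathcal{A}_t = \{|W_t-1| > 1/2\}$, on which $\log W_t$ is no longer close to a polynomial and can be unboundedly negative. Although $\mathbb{P}(\mathcal{A}_t\mid Y_t) = O(N_T^{-1})$ by Chebyshev, bounding $\mathbb{E}[|\log W_t|^k\mathbf{1}_{\mathcal{A}_t}\mid Y_t]$ at the resolutions needed both for the $O(N_T^{-2})$ bias expansion and for the Lindeberg condition at scale $T^{(1-\alpha)/2}$ consumes essentially the entire moment budget of $\varpi$ stated in the theorem. Once this bookkeeping is carried out, the rest is the standard conditional Lindeberg--Feller theorem applied row-wise given $\mathcal{Y}^T$.
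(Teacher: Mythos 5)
Your overall strategy --- writing $Z_T(\theta)=\sum_t \log W_t$, Taylor-expanding each $\log W_t$ around $1$, extracting the $\pm\gamma(\theta)^2 T/(2N_T)$ bias via a law of large numbers, applying a conditional Lindeberg--Feller CLT to the leading term, and attributing the extra power in Part 2's ninth-moment assumption to the change of measure --- is exactly the paper's. The gap is in how you handle the event where $W_t$ is far from $1$. You propose to center at $\mu_t=\mathbb{E}[\log W_t\mid Y_t]$, to verify Lindeberg via $\mathbb{E}[(V_t-\mu_t)^4\mid Y_t]\le C/N_T^{2}$, and to close the argument by bounding $\mathbb{E}[|\log W_t|^{k}\mathbf{1}_{\mathcal{A}_t}\mid Y_t]$. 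None of these quantities is guaranteed to be finite under the stated hypotheses: the assumptions control only positive moments of $\varpi$, hence the upper tail of $W_t$, whereas $\log W_t\downarrow-\infty$ as $W_t\downarrow 0$ and nothing prevents $\varpi$ from putting mass arbitrarily close to (or at) zero --- if $\varpi$ has an atom at $0$, then $\mathbb{E}[\log W_t\mid Y_t]=-\infty$. So the centering constants and the fourth-moment bounds your Lindeberg verification relies on need not exist, and no amount of positive moments of $\varpi$ will make $\mathbb{E}[|\log W_t|^{k}\mathbf{1}_{\mathcal{A}_t}]$ finite.

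The paper's resolution is the step you are missing: it never integrates $\log W_t$ over the bad set. It shows $\mathbb{P}\{\max_{t\le T}|W_t-1|>\epsilon\}\le T\,\mathbb{E}(\varepsilon_N(Y_1;\theta)^{8})/(\epsilon^{8}N^{4})=O(T^{1-4\alpha})\to 0$ --- note that your single-$t$ Chebyshev bound $\mathbb{P}(\mathcal{A}_t)=O(N_T^{-1})$ is useless here, since the union over $T$ terms gives $T/N_T\to\infty$ for $\alpha<1$; the eighth moment is precisely what makes the union bound close --- and then performs the (fourth-order) Taylor expansion only on the complement of this event, where every term is a polynomial in $\varepsilon_N(Y_t;\theta)$ whose moments are controlled by the assumed moments of $\varpi$. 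Since weak convergence is unaffected by an event of vanishing probability, moments of $\log W_t$ itself never enter. Your bias and variance arithmetic, the identification of $\alpha>1/3$ as the threshold killing the $O(T/N_T^{2})$ correction, and the sign flip in Part 2 coming from the tilted first moment are all correct and match the paper; the proof as proposed, however, does not go through without replacing the unconditional centering by the paper's restriction to the high-probability event.
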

\begin{rem*}
To establish (\ref{eq:CLTexpressionmarginalproposal}), respectively
(\ref{eq:CLTexpressionmarginalequilibrium}), for $1/2<\alpha\leq1$,
the condition $\mathbb{E}\left(\varpi(Y_{1},U_{1,1};\theta)^{4}\right)<\infty$,
respectively $\mathbb{E}\left(\varpi(Y_{1},U_{1,1};\theta)^{5}\right)<\infty$,
is sufficient. 
\end{rem*}
For particle filters, a CLT for $Z_{T}\left(\theta\right)$ of the
form (\ref{eq:CLTexpressionmarginalproposal}) has already been established
for the case $\alpha=1$ in \citep{berarddelmoraldoucet2014} when
using multinomial resampling under strong mixing assumptions. We conjecture
that both (\ref{eq:CLTexpressionmarginalproposal}) and (\ref{eq:CLTexpressionmarginalequilibrium})
hold under weaker assumptions for $1/3<\alpha<1$ and the Hilbert
sort resampling scheme. However, it is technically very challenging
to establish this result\footnote{In the simpler scenario where one uses systematic resampling, such
a CLT has not yet been established. Some of the technical problems
arising when attempting to carry out such an analysis are detailed
in \citep{gentil2008}. }. 

The results (\ref{eq:CLTexpressionmarginalproposal}) and (\ref{eq:CLTexpressionmarginalequilibrium})
imply that for large $T$ we expect that, under the proposal, $Z_{T}\left(\theta\right)$
is approximately normal with mean $-\beta^{-1}T^{1-\alpha}\gamma\left(\theta\right)^{2}/2$
and variance $\beta^{-1}T^{1-\alpha}\gamma\left(\theta\right)^{2}$,
whereas at equilibrium it is also approximately normal with the same
variance but opposite mean. The empirical distribution of $Z_{T}\left(\theta\right)$
is examined for random effects models and state-space models in Section
\ref{sec:Applications} and shown to closely match these limiting
distributions.

\subsection{Asymptotic distribution of the loglikelihood ratio error\label{subsec:conditionalCLT}}

Assume that we are at state $\left(\theta,U\right)$ and propose $\left(\theta^{\prime},U^{\prime}\right)$
using $\theta^{\prime}\sim q\left(\theta,\cdot\right)$, $U^{\prime}\sim m$
as in the PM algorithm or $\theta^{\prime}\sim q\left(\theta,\cdot\right)$,
$U^{\prime}\sim K_{\rho}\left(U,\cdot\right)$ as in the CPM algorithm.
In both cases, the acceptance ratio (\ref{eq:acceptanceprobabilityCPM})
depends on the loglikelihood ratio error 
\begin{equation}
R_{T}\left(\theta,\theta^{\prime}\right)=\log\text{\thinspace}\frac{\widehat{p}(Y_{1:T}\mid\theta^{\prime},U^{\prime})}{\widehat{p}(Y_{1:T}\mid\theta,U)}-\log\frac{p(Y_{1:T}\mid\theta^{\prime})}{p(Y_{1:T}\mid\theta)}.\label{eq:loglikelihoodratioerror}
\end{equation}
We examine here the limiting distribution of $R_{T}(\theta,\theta+\xi/\sqrt{T})$
for fixed $\theta$ and $\xi$. The rationale for examining this ratio
is that the posterior typically concentrates at rate $1/\sqrt{T}$
when $T$ increases so a correctly scaled random walk proposal for
a MH\ algorithm will be of the form $\theta^{\prime}=\theta+\xi/\sqrt{T}$
for $\xi$ a random variable of distribution independent of $T$.

For the PM\ algorithm, we have the following conditional CLT. 
\begin{thm}
\label{Theorem:CLTlikelihoodratiostandardpseudomarginal}Let $\theta$,$\xi$
be fixed. Assume that $\vartheta\mapsto\varpi\left(y_{1},u_{1,1};\vartheta\right)\ $and
$\vartheta\mapsto\mathbb{E}\left(\varpi(Y_{1},U_{1,1};\vartheta)^{9}\right)$
are continuous at $\vartheta=\theta$ for any $(y_{1},u_{1,1})\in\mathsf{Y\times}\mathbb{R}^{p}$,
$\vartheta\mapsto$ $\gamma\left(\vartheta\right)$ is continuously
differentiable at $\vartheta=\theta$ and $\mathbb{E}\left(\varpi(Y_{1},U_{1,1};\vartheta)^{9}\right)+\mathbb{E}\left(\gamma(Y_{1};\theta)^{4}\right)<\infty$.
For $N_{T}=\left\lceil \beta T^{\alpha}\right\rceil $ with $1/3<\alpha\leq1,$
$\beta>0,$ $Y_{t}\overset{\text{i.i.d.}}{\sim}\mu$, $U\sim\overline{\pi}\left(\left.\cdot\right\vert \theta\right)$
and $U^{\prime}\sim m$ where $U$ and $U^{\prime}$ are independent,
we have 
\begin{equation}
\left.T^{\left(\alpha-1\right)/2}R_{T}(\theta,\theta+\xi/\sqrt{T})+T^{\left(1-\alpha\right)/2}\beta^{-1}\gamma\left(\theta\right)^{2}\right\vert \mathcal{Y}^{T}\Rightarrow\mathcal{N}\left(0,2\beta^{-1}\gamma\left(\theta\right)^{2}\right).\label{eq:CLTloglikelihoodratioestimatorPMH}
\end{equation}
\end{thm}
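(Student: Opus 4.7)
The plan is to decompose
\[
R_T(\theta,\theta+\xi/\sqrt{T}) = Z_T(\theta+\xi/\sqrt{T}) - Z_T(\theta),
\]
apply the two parts of Theorem~\ref{Theorem:CLTmarginal} separately to the two terms, and combine them using conditional independence of the two terms given $\mathcal{Y}^T$ (which follows from the independence of $U$ and $U'$). For $Z_T(\theta)$, the auxiliary variable $U\sim\overline{\pi}(\cdot\mid\theta)$ is at equilibrium, so Theorem~\ref{Theorem:CLTmarginal}.2 applies directly and gives
\[
T^{(\alpha-1)/2}Z_T(\theta) - \tfrac{1}{2}T^{(1-\alpha)/2}\beta^{-1}\gamma(\theta)^2 \;\Big|\; \mathcal{Y}^T \;\Rightarrow\; \mathcal{N}\bigl(0, \beta^{-1}\gamma(\theta)^2\bigr).
\]

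For $Z_T(\theta_T)$ with $\theta_T := \theta+\xi/\sqrt{T}$, the auxiliary variable $U'\sim m$ is drawn from the proposal, so the setting is that of Theorem~\ref{Theorem:CLTmarginal}.1 but at a drifting parameter. The argument underlying Theorem~\ref{Theorem:CLTmarginal}.1 is a triangular-array / martingale CLT applied to the log normalized importance weights; its Lindeberg-type conditions and conditional second moments depend continuously on the parameter at $\theta$, and the assumed continuity of $\vartheta\mapsto\varpi(y_1,u_{1,1};\vartheta)$ and of $\vartheta\mapsto\mathbb{E}(\varpi(Y_1,U_{1,1};\vartheta)^9)$ at $\theta$ is exactly what is needed to rerun that argument with $\theta$ replaced by $\theta_T$. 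Using continuity of $\gamma$ at $\theta$ to identify the limiting variance then yields
\[
T^{(\alpha-1)/2}Z_T(\theta_T) + \tfrac{1}{2}T^{(1-\alpha)/2}\beta^{-1}\gamma(\theta_T)^2 \;\Big|\; \mathcal{Y}^T \;\Rightarrow\; \mathcal{N}\bigl(0, \beta^{-1}\gamma(\theta)^2\bigr).
\]

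Because $U$ and $U'$ are independent, $Z_T(\theta)$ and $Z_T(\theta_T)$ are conditionally independent given $\mathcal{Y}^T$; factorization of conditional characteristic functions upgrades the two marginal conditional CLTs to a joint conditional CLT with independent limits, and subtraction yields
\[
T^{(\alpha-1)/2}R_T(\theta,\theta_T) + \tfrac{1}{2}T^{(1-\alpha)/2}\beta^{-1}\bigl[\gamma(\theta_T)^2 + \gamma(\theta)^2\bigr] \;\Big|\; \mathcal{Y}^T \;\Rightarrow\; \mathcal{N}\bigl(0, 2\beta^{-1}\gamma(\theta)^2\bigr).
\]
To reduce the centering to the stated form $T^{(1-\alpha)/2}\beta^{-1}\gamma(\theta)^2$, continuous differentiability of $\gamma$ at $\theta$ gives $\gamma(\theta_T)^2 - \gamma(\theta)^2 = O(T^{-1/2})$, so the discrepancy $\tfrac{1}{2}T^{(1-\alpha)/2}\beta^{-1}[\gamma(\theta_T)^2 - \gamma(\theta)^2]$ is $O(T^{-\alpha/2}) = o(1)$ since $\alpha>1/3$, and Slutsky concludes. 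The main technical obstacle is the extension of Theorem~\ref{Theorem:CLTmarginal}.1 from a fixed $\theta$ to the drifting sequence $\theta_T\to\theta$; once this extension is justified, the remainder is essentially a bookkeeping of centerings and variances.
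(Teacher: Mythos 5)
Your decomposition $R_T(\theta,\theta+\xi/\sqrt{T})=W_T(\theta_T)-Z_T(\theta)$ with $\theta_T=\theta+\xi/\sqrt{T}$, and your bookkeeping of centerings and variances, are correct and lead to the stated result. However, your route is organized differently from the paper's. The paper does not prove two separate conditional CLTs and then combine them: it forms the per-observation differences $\zeta_N(Y_t;\theta)=\epsilon_N(Y_t,U_t';\theta+\xi/\sqrt{T})-\epsilon_N(Y_t,U_t;\theta)$, Taylor-expands the log-ratio once, and applies a \emph{single} conditional Lindeberg CLT (Lemma \ref{lem:conditionalLindebergCLT}) to this triangular array, whose conditional variance per term is $\gamma(Y_t;\theta)^2+\gamma(Y_t;\theta_T)^2+O(1/N)$ by independence of $U_t$ and $U_t'$. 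This sidesteps the step you need of upgrading two marginal conditional CLTs to a joint one via factorization of conditional expectations; that step is valid here (conditional independence given $\mathcal{Y}^T$ plus a tightness and Stone--Weierstrass argument in the sense of Lemma \ref{lem:conditionalLindebergCLT}), but it is an extra lemma you would have to write out, whereas the paper's single-array formulation gets it for free.

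The more substantive issue is the drifting-parameter extension of Theorem \ref{Theorem:CLTmarginal}.1, which you correctly flag as the main obstacle but leave unjustified; this is precisely where the paper's proof does its work. Closing it requires three concrete ingredients, all present in the paper: (i) continuity of $\vartheta\mapsto\mathbb{E}(\varpi(Y_1,U_{1,1};\vartheta)^9)$ at $\theta$ to get, for $T$ large, uniform bounds on the moments of $\epsilon_N(Y_t,U_t';\theta_T)$ needed for the Taylor-remainder and Lindeberg controls (the paper bounds $\widetilde{\mathbb{E}}(\zeta_N^8)$ by moments at $\theta$ plus a vanishing correction); (ii) a WLLN for the triangular array $\{\gamma(Y_t;\theta_T)^2\}_t$ to show $T^{-1}\sum_t\{\gamma(Y_t;\theta_T)^2-\gamma(Y_t;\theta)^2\}\overset{\mathbb{P}}{\rightarrow}0$, which identifies the limiting conditional variance as $\gamma(\theta)^2$ rather than a moving target (equation (\ref{eq:variationingamma})); and (iii) the mean-value bound $T^{(1-\alpha)/2}|\gamma(\theta_T)^2-\gamma(\theta)^2|\leq T^{-\alpha/2}\,\xi\sup|\partial_\vartheta\gamma(\vartheta)^2|\rightarrow0$ for the deterministic centering (equation (\ref{eq:remainderingammavanishes})), which you do invoke correctly at the end. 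With (i)--(iii) supplied, your argument is complete; without them, the claim that continuity is "exactly what is needed to rerun the argument" remains an assertion rather than a proof.
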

This result shows that the loglikelihood ratio error in the PM case
can only have a limiting variance of order 1 if $N_{T}$ is proportional
to $T$. The loglikelihood ratio estimator used by the CPM\ exhibits
a markedly different behaviour if we consider $U^{\prime}\sim K_{\rho_{T}}\left(U,\cdot\right)$
with 
\begin{equation}
\rho_{T}=\exp\left(-\psi\frac{N_{T}}{T}\right),\label{eq:correlationscaling}
\end{equation}
for some $\psi>0$. Let us denote by $\mathcal{F}^{T}$ the $\sigma$-field
spanned by $\left\{ Y_{t};t\in1:T\right\} $ and $\left\{ U_{t,i};t\in1:T,i\in1:N\right\} $.
We also denote the Euclidean norm by $\left\Vert \cdot\right\Vert $
and we write $\nabla_{u}f=\left(\partial_{u^{1}}f,...,\partial_{u^{p}}f\right)^{\prime}$
for a real-valued function $f:\mathbb{R}^{p}\rightarrow\mathbb{R}$
where $u=\left(u^{1},...,u^{p}\right)$. 
\begin{thm}
\label{Theorem:conditionalCLTthetathetacand}Let $\theta$,$\xi$
be fixed. Let $Y_{t}\overset{\text{i.i.d.}}{\sim}\mu$, $U\sim\overline{\pi}\left(\left.\cdot\right\vert \theta\right)$
and $U^{\prime}\sim K_{\rho_{T}}\left(U,\cdot\right)$ where $\rho_{T}$
is given by (\ref{eq:correlationscaling}) then if Assumptions \ref{ass:momentsofW}-\ref{ass:duSteinfourthmoment}
in Supplementary Material~\ref{Section:ProofCLTnightmare} \ hold and if $N_{T}\rightarrow\infty$
as $T\rightarrow\infty$ with $N_{T}/T\rightarrow0$, we have 
\begin{equation}
\left.R_{T}(\theta,\theta+\xi/\sqrt{T})\right\vert \mathcal{F}^{T}\Rightarrow\mathcal{N}\left(-\kappa\left(\theta\right)^{2}/2,\kappa\left(\theta\right)^{2}\right),\label{eq:conditionalCLTloglikelihoodratioestimatorCPMH}
\end{equation}
where 
\begin{equation}
\kappa\left(\theta\right)^{2}=2\psi\mathbb{E}\left(\left\Vert \nabla_{u}\varpi(Y_{1},U_{1,1};\theta)\right\Vert ^{2}\right).\label{eq:varianceloglikelihoodratiocorrelated}
\end{equation}
\end{thm}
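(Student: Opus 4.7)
My strategy is to perform an asymptotic expansion of $R_T$ in the small perturbations $\theta'-\theta=\xi/\sqrt T$ and $U'_{t,i}-U_{t,i}=(\rho_T-1)U_{t,i}+\sqrt{1-\rho_T^2}\,\varepsilon_{t,i}$, exploiting that, conditionally on $\mathcal{F}^T$, the innovations $\{\varepsilon_{t,i}\}_{t,i}$ are i.i.d.\ $\mathcal{N}(0_p,I_p)$ and independent of $\mathcal{F}^T$. Because the likelihood factorises across $t$, one writes
\begin{equation*}
R_T=\sum_{t=1}^{T}\log\!\Bigl(1+\tfrac{D_t}{\widehat W_t}\Bigr),\qquad\widehat W_t=\tfrac{1}{N_T}\sum_{i=1}^{N_T}\varpi(Y_t,U_{t,i};\theta),\quad D_t=\widehat W_t'-\widehat W_t,
\end{equation*}
with $\widehat W_t'$ the analogue at $(\theta+\xi/\sqrt T,U'_t)$. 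A LLN applied to the importance weights (arguing as in Theorem~\ref{Theorem:CLTmarginal}) yields $\widehat W_t\overset{P}{\rightarrow}1$, so the denominators may be replaced by $1$ up to $o_P(1)$ terms.

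The next step is to Taylor expand $\varpi(Y_t,U'_{t,i};\theta+\xi/\sqrt T)$ around $(U_{t,i},\theta)$ to third order and expand $\log(1+x)=x-x^2/2+O(x^3)$. The dominant stochastic contribution is the linear-in-$\varepsilon$ term
\begin{equation*}
L_T=\sqrt{1-\rho_T^2}\,\sum_{t=1}^{T}\frac{1}{N_T}\sum_{i=1}^{N_T}\nabla_u\varpi(Y_t,U_{t,i};\theta)\cdot\varepsilon_{t,i},
\end{equation*}
which, conditionally on $\mathcal{F}^T$, is \emph{exactly} Gaussian with zero mean, since its coefficients are $\mathcal{F}^T$-measurable and the $\varepsilon_{t,i}$ are independent Gaussians independent of $\mathcal{F}^T$. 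Its conditional variance $(1-\rho_T^2)N_T^{-2}\sum_{t,i}\|\nabla_u\varpi(Y_t,U_{t,i};\theta)\|^2$ is converted into $\kappa(\theta)^2$ using $1-\rho_T^2\sim 2\psi N_T/T$ together with a triangular-array LLN for the $T N_T$-array. This LLN is valid because, under $\overline\pi(\cdot|\theta)$, each $U_{t,i}$ has marginal density differing from $m$ by only $O(1/N_T)$ and the $Y_t$ are i.i.d.\ $\mu$.

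The conditional bias $-\kappa(\theta)^2/2$ is assembled from the second-order pieces. Interpreting $K_{\rho_T}$ as the Ornstein--Uhlenbeck semigroup at time $s=-\log\rho_T=\psi N_T/T$ yields $\mathbb{E}[\varpi(\cdot,U';\theta)\vert U]=\varpi(\cdot,U;\theta)+s(\Delta_u\varpi-U\cdot\nabla_u\varpi)+O(s^2)$, and the Stein identity $\mathbb{E}_m[U\cdot\nabla_u\varpi]=\mathbb{E}_m[\Delta_u\varpi]$ shows that the resulting drift contribution to $\mathbb{E}[\sum_t D_t\vert\mathcal{F}^T]$ is asymptotically null. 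The $\xi/\sqrt T$-linear piece also vanishes by LLN combined with $\mathbb{E}_m[\nabla_\theta\varpi|Y_1]=\nabla_\theta\mathbb{E}_m[\varpi|Y_1]=0$, obtained by differentiating the unbiasedness identity $\mathbb{E}[\varpi|Y_1]=1$. The remaining $-\tfrac{1}{2}\sum_t D_t^2/\widehat W_t^2$ term from the log expansion has dominant conditional contribution $-\tfrac{1}{2}\mathbb{V}(L_T|\mathcal{F}^T)\to-\kappa(\theta)^2/2$.

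\textbf{Main obstacle.} The most delicate step is controlling the third-order Taylor remainders under the weak scaling $N_T/T\to 0$, as naive moment bounds on Gaussian third moments seem to require the stronger $N_T=o(\sqrt T)$. Sharper control is obtained by iterated Gaussian integration by parts, exploiting the Stein-type hypotheses (in particular Assumption~\ref{ass:duSteinfourthmoment}) to transfer derivatives away from the innovations onto $\varpi$ itself, where they are already controlled by the assumed moment bounds. In parallel, the triangular-array structure requires a quantitative LLN with explicit control of the $O(1/N_T)$ tilting discrepancy that distinguishes $\overline\pi(\cdot|\theta)$ from the product measure $m^{\otimes T}$; managing these two sources of error simultaneously is the technical heart of the argument.
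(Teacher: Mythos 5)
Your overall architecture --- factorise over $t$, expand $\log(1+\eta_t^T)$, identify a conditionally Gaussian linear-in-$\varepsilon$ term carrying the variance $\kappa(\theta)^2$, use the Ornstein--Uhlenbeck generator and Stein's identity for the drift, and extract the mean $-\kappa(\theta)^2/2$ from $-\tfrac12\sum_t(\eta_t^T)^2$ --- matches the paper's, which makes the expansion exact by embedding $U'=U(\delta_T)$ into an OU flow and applying It\^o's formula to obtain the decomposition $\eta_t^T=J_t^T+L_t^T+M_t^T$. However, your treatment of the denominators $\widehat W_t$ contains a genuine error, which happens to be compensated by a second error, so the argument as written does not establish the result.

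Concretely, the step ``the denominators may be replaced by $1$ up to $o_P(1)$ terms'' fails for the linear term. Writing $\eta_t^T=D_t/\widehat W_t$, one has $\widetilde{\mathbb E}[D_t/\widehat W_t]=\mathbb E_m[D_t]=0$ exactly, because the factor $\widehat W_t^{-1}$ cancels the tilting $\widehat W_t\,\mathrm{d}m$ of the stationary law; by contrast $\widetilde{\mathbb E}[D_t]=\mathbb E_m[\widehat W_t\widehat W_t']-\mathbb E_m[\widehat W_t^2]=\tfrac1N\{\mathrm{Cov}_m(\varpi(U_{t,1}';\theta'),\varpi(U_{t,1};\theta))-\gamma(Y_t;\theta)^2\}\approx-\tfrac{\delta_T}{N}\mathbb E[\|\nabla_u\varpi\|^2\mid Y_t]$, which summed over $t$ converges to $-\psi\,\mathbb E\|\nabla_u\varpi\|^2=-\kappa(\theta)^2/2$, an order-one discrepancy. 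Equivalently, your claim that Stein's identity makes the drift contribution to $\mathbb E[\sum_tD_t\mid\mathcal F^T]$ asymptotically null is wrong: that drift is $\delta_T\sum_t\tfrac1N\sum_i\mathcal S\{\varpi(Y_t,U_{t,i};\theta)\}$, and although $\mathbb E_m[\mathcal S\{\varpi\}]=0$, under $\overline\pi(\cdot\mid\theta)$ one has $\widetilde{\mathbb E}[\tfrac1N\sum_i\mathcal S\{\varpi\}]=\tfrac1N\mathbb E_m[\varpi\,\mathcal S\{\varpi\}]=-\tfrac1N\mathbb E_m[\|\nabla_u\varpi\|^2]$; the $O(1/N)$ tilting discrepancy you treat as a nuisance elsewhere is here multiplied by $T\delta_T=\psi N$ and contributes exactly $-\kappa(\theta)^2/2$ in probability, not $0$. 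Your two errors cancel, so you land on the correct mean, but each claim is individually false (and the analogous $\theta$-increment term acquires an uncontrolled $O(\sqrt T/N)$ bias without the denominator). The fix is the paper's: keep the $\widehat W_t^{-1}$ factors throughout, so that the change of measure back to $m$ makes Stein's lemma apply exactly, giving $\widetilde{\mathbb E}[L_t^T]=\widetilde{\mathbb E}[J_t^T]=0$, and then show the drift and $\theta$-increment sums have variance $O(N/T)$ and $O(1/N)$ respectively.
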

We do not make any structural assumption on $\varpi(y,u;\theta)$
to establish Theorem \ref{Theorem:conditionalCLTthetathetacand}.
Assumptions \ref{ass:momentsofW}-\ref{ass:duSteinfourthmoment} are
differentiability and integrability assumptions of this quantity with
respect to $y,$ $u$ and $\theta$. For CPM, this result states that
the limiting variance of the loglikelihood ratio is of order 1 when
$N_{T}$ grows sublinearly with $T$. Moreover, it shows that the
distribution of the loglikelihood ratio error becomes asymptotically
independent of $U$, suggesting that the CPM\ chain is less prone
to sticking than the PM at stationarity.%

This conditional CLT\ has not been established for particle filters.
For univariate state-space models, i.e. $k=1$, we have observed experimentally
on various stationary state-space models that a similar conditional
CLT appears to hold. For multivariate state-space models, the CLT
only appears to hold conditional upon $\mathcal{Y}^{T}$ when $N_{T}$
grows at least at rate $T^{k/\left(k+1\right)}$; see Section \ref{sec:Applications}.

\section{Analysis and optimisation\label{sec: optimisation}}

\subsection{Weak convergence in the large sample regime}

The use of weak convergence techniques to analyse and optimise MCMC
schemes was pioneered in \citep{RobertsGelmanGilks1997} and has found
numerous applications ever since; see, e.g., \citep{Sherlock2015efficiency}
for a recent application to PM. To the best of our knowledge, all
these contributions consider the asymptotic regime where the parameter
dimension $d\rightarrow\infty$ while $T$ is fixed. In these scenarios,
a time rescaling is introduced and the limiting Markov process is
usually a diffusion process. We analyze here the CPM\ scheme for
random effects models after space rescaling under the large sample
regime standard in asymptotic statistics; i.e., $d$ is fixed while
$T\rightarrow\infty$. Our analysis assumes the statistical model
is regular enough to ensure that the posteriors $\{\pi_{T}(\theta);T\geq1\}$
can be approximated by normal densities which concentrate. Here $\pi_{T}(\theta)$
is a random probability density dependent on $Y_{1:T}$ assumed to
be measurable w.r.t. $\mathcal{Y}^{T}$; see, e.g., \citep{bertipratellirigo2006,crauel2003}
for a formal definition. We write $\overset{\mathbb{P}^{Y}}{\rightarrow}$
to denote convergence in probability with respect to the law of $\left\{ Y_{t};t\geq1\right\} $. 
\begin{assumption}
\label{ass:BVM}The sequence of random probability densities $\{\pi_{T}(\theta);T\geq1\}$
satisfies at $T\rightarrow\infty$
\[
\int\left\vert \pi_{T}(\theta)-\varphi(\theta;\widehat{\theta}_{T},\overline{\Sigma}/T)\right\vert \mathrm{d}\theta\overset{\mathbb{P}^{Y}}{\rightarrow}0
\]
where $\{\widehat{\theta}_{T};T\geq1\}$ is a random sequence such
that $\widehat{\theta}_{T}$ is $\mathcal{Y}^{T}-$measurable, $\widehat{\theta}_{T}\overset{}{\rightarrow_{\mathbb{P}^{Y}}}\overline{\theta}$
and $\overline{\Sigma}$ is a positive definite matrix. 
\end{assumption}
This assumption will be satisfied if a Berstein-von\ Mises theorem
holds; see \citep[Section 10.2]{vandervaart2000} for sufficient conditions.

Consider the stationary CPM chain $\{(\vartheta_{n}^{T},\mathsf{U}_{n}^{T});n\geq0\}$
of proposal $q_{T}\left(\theta,\theta^{\prime}\right)$ targeting
the random measure (\ref{eq:pseudomarginaltarget}) $\overline{\pi}_{T}\left(\mathrm{d}\theta,\mathrm{d}u\right)=\pi_{T}(\mathrm{d}\theta)\overline{\pi}_{T}(\left.\mathrm{d}u\right\vert \theta)$
associated with $Y_{1:T}$. By rescaling the parameter component of
the CPM chain using $\widetilde{\vartheta}_{n}^{T}:=\sqrt{T}(\vartheta_{n}^{T}-\widehat{\theta}_{T})$,
we obtain the stationary Markov chain $\{(\widetilde{\vartheta}_{n}^{T},\mathsf{U}_{n}^{T});n\geq0\}$
of initial distribution $(\widetilde{\vartheta}_{0}^{T},\mathsf{U}_{0}^{T})\sim\widetilde{\pi}_{T}$
where 
\begin{equation}
\widetilde{\pi}_{T}(\widetilde{\theta},u)=\widetilde{\pi}_{T}(\widetilde{\theta})\widetilde{\pi}_{T}(\left.u\right\vert \widetilde{\theta}),\quad\widetilde{\pi}_{T}(\widetilde{\theta})=\pi_{T}(\widehat{\theta}_{T}+\widetilde{\theta}/\sqrt{T})/\sqrt{T},\quad\widetilde{\pi}_{T}(\left.u\right\vert \widetilde{\theta})=\overline{\pi}_{T}(u|\widehat{\theta}_{T}+\widetilde{\theta}/\sqrt{T}),\label{eq:rescaledtargetCPM}
\end{equation}
and the associated proposal density for the parameter becomes 
\begin{equation}
\widetilde{q}_{T}(\widetilde{\theta},\widetilde{\theta}^{\prime})=q_{T}(\widehat{\theta}_{T}+\widetilde{\theta}/\sqrt{T},\widehat{\theta}_{T}+\widetilde{\theta}^{\prime}/\sqrt{T})/\sqrt{T}.\label{eq:rescaledproposal}
\end{equation}
We will assume here that we use a random walk proposal scaled appropriately. 
\begin{assumption}
\label{Assumption:proposal}The proposal density is of the form 
\begin{equation}
q_{T}\left(\theta,\theta^{\prime}\right)=\sqrt{T}\upsilon(\sqrt{T}(\theta^{\prime}-\theta)),\label{eq:randomwalkproposal}
\end{equation}
where $\upsilon$ is a probability density on $\mathbb{R}^{d}$; that
is $\theta^{\prime}\sim q_{T}\left(\theta,\cdot\right)$ when $\theta^{\prime}=\theta+\xi/\sqrt{T}$
with $\xi\sim\upsilon$. 
\end{assumption}
Finally, we assume that we can control uniformly the rate at which
convergence of the CLT in Theorem \ref{Theorem:conditionalCLTthetathetacand}
holds in a neighbourhood of $\overline{\theta}$ specified in Assumption
\ref{ass:BVM}. We denote by $d_{\mathrm{BL}}\left(\mu,\nu\right)$
the bounded Lipschitz metric between probability measures $\mu,\nu$;
see, e.g., \citep[p. 332]{vandervaart2000} or Supplementary Material~\ref{Section:Conditionalweakconvergence}.
\begin{assumption}
\label{Assumption:uniformCLT}There exists a neighbourhood $N(\bar{\theta})$
of \textup{$\bar{\theta}$} such that the loglikelihood ratio error
considered in Theorem \ref{Theorem:conditionalCLTthetathetacand}
with $\xi\sim\upsilon\left(\cdot\right)$ satisfies as $T\rightarrow\infty$
\[
\sup_{\theta\in N(\bar{\theta})}\,\widetilde{\mathbb{E}}\left[\left.d_{\mathrm{BL}}\left\{ \mathcal{L}\mathrm{aw}\left(\left.R_{T}(\theta,\theta+\xi/\sqrt{T})\right\vert \mathcal{F}^{T}\right),\mathcal{N}\left(-\kappa\left(\theta\right)^{2}/2,\kappa\left(\theta\right)^{2}\right)\right\} \right|\mathcal{Y}^{T}\right]\overset{\mathbb{P}^{Y}}{\rightarrow}0.
\]
\end{assumption}
We prove that Assumption \ref{Assumption:uniformCLT} holds under
regularity conditions in Supplementary Material~\ref{Appendix:uniformCLT}. 

Under Assumption \ref{Assumption:proposal}, the proposal $\widetilde{q}_{T}(\widetilde{\theta},\widetilde{\theta}^{\prime})$
defined in (\ref{eq:rescaledproposal}) satisfies $\widetilde{q}_{T}(\widetilde{\theta},\widetilde{\theta}^{\prime})=\upsilon(\widetilde{\theta}^{\prime}-\widetilde{\theta})$
and will be denoted $\widetilde{q}(\widetilde{\theta},\widetilde{\theta}^{\prime})$.
In this case, the corresponding transition kernel of the rescaled
CPM chain is given by 
\begin{equation}
Q_{T}\{(\widetilde{\theta},u),(\mathrm{d}\widetilde{\theta}^{\prime},\mathrm{d}u^{\prime})\}=\widetilde{q}(\widetilde{\theta},\mathrm{d}\widetilde{\theta}^{\prime})K_{\rho_{T}}\left(u,\mathrm{d}u^{\prime}\right)\alpha_{Q_{T}}\{(\widetilde{\theta},u),(\widetilde{\theta}^{\prime},u^{\prime})\}+\{1-\varrho_{Q_{T}}(\widetilde{\theta},u)\}\delta_{(\widetilde{\theta},u)}(\mathrm{d}\widetilde{\theta}^{\prime},\mathrm{d}u^{\prime})\label{eq:rescaledtransitionkernelr}
\end{equation}
with acceptance probability 
\[
\alpha_{Q_{T}}\{(\widetilde{\theta},u),(\widetilde{\theta}^{\prime},u^{\prime})\}=\min\left\{ 1,\frac{\widetilde{\pi}_{T}(\widetilde{\theta}^{\prime},u^{\prime})\widetilde{q}(\widetilde{\theta}^{\prime},\widetilde{\theta})K_{\rho_{T}}\left(u^{\prime},u\right)}{\widetilde{\pi}_{T}(\widetilde{\theta},u)\widetilde{q}(\widetilde{\theta},\widetilde{\theta}^{\prime})K_{\rho_{T}}\left(u,u^{\prime}\right)}\right\} ,
\]
and corresponding rejection probability $1-\varrho_{Q_{T}}(\widetilde{\theta},u)$.
The kernel $Q_{T}$ depends on $Y_{1:T}$ and is assumed to be measurable
w.r.t. $\mathcal{Y}^{T}$. Let $\Theta_{T}=\{\widetilde{\vartheta}_{n}^{T};n\geq0\}$.
The following result shows that the non-Markov stationary sequences
$\left\{ \Theta_{T};T\geq1\right\} $ converge weakly as $T\rightarrow\infty$
to a stationary Markov chain corresponding to the Penalty method\textendash an
``ideal\textquotedblright{} Monte Carlo technique which cannot be
practically implemented \citep{ceperley1999}, \citep[p. 7]{nicholls2012}. 
\begin{thm}
\label{Theorem:weakconvergence}If Assumptions \ref{ass:BVM}, \ref{Assumption:proposal}
and \textcolor{black}{\ref{Assumption:uniformCLT} hold }and $\vartheta\mapsto\kappa\left(\vartheta\right)$
is continuous at $\vartheta=\overline{\theta}$ then the random probability
measures on $\left(\mathbb{R}^{d}\right)^{\infty}$ given by the laws
of $\left\{ \Theta_{T};T\geq1\right\} $ converge weakly in probability
$\mathbb{P}^{Y}$ as $T\rightarrow\infty$ to the law of a stationary
Markov chain $\{\widetilde{\vartheta}_{n};n\geq0\}$ defined by $\widetilde{\vartheta}_{0}\sim\mathcal{N}(0,\overline{\Sigma})$
and $\widetilde{\vartheta}_{n}\sim P(\widetilde{\vartheta}_{n},\cdot)$
for $n\geq1$ with 
\begin{equation}
P(\widetilde{\theta},\mathrm{d}\widetilde{\theta}^{\prime})=\widetilde{q}(\widetilde{\theta},\mathrm{d}\widetilde{\theta}^{\prime})\alpha_{P}(\widetilde{\theta},\widetilde{\theta}^{\prime})+\{1-\varrho_{P}(\widetilde{\theta})\}\delta_{\widetilde{\theta}}(\mathrm{d}\widetilde{\theta}^{\prime}),\label{eq:penaltykernel}
\end{equation}
and
\[
\alpha_{P}(\widetilde{\theta},\widetilde{\theta}^{\prime})=\int\varphi\left(\mathrm{d}w;-\kappa^{2}/2,\kappa^{2}\right)\min\left\{ 1,\frac{\varphi(\widetilde{\theta}^{\prime};0,\overline{\Sigma})\widetilde{q}(\widetilde{\theta}^{\prime},\widetilde{\theta})}{\varphi(\widetilde{\theta};0,\overline{\Sigma})\widetilde{q}(\widetilde{\theta},\widetilde{\theta}^{\prime})}\exp\left(w\right)\right\} ,
\]
$1-\varrho_{P}(\widetilde{\theta})$ being the corresponding rejection
probability and \textup{$\kappa:=\kappa(\overline{\theta})$.}
\end{thm}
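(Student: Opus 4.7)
The plan is to establish weak convergence of the laws of $\{\Theta_T;T\geq 1\}$ on $(\mathbb{R}^d)^\infty$ (in probability $\mathbb{P}^Y$) by reducing it to convergence of finite-dimensional distributions and then arguing inductively. The base case, $\widetilde{\vartheta}_0^T\sim\widetilde{\pi}_T\Rightarrow\mathcal{N}(0,\overline{\Sigma})$, is given by Assumption~\ref{ass:BVM}. The inductive step requires showing that, although the projected sequence $\{\widetilde{\vartheta}_n^T\}$ is not itself Markov (the full state is $(\widetilde{\vartheta}_n^T,\mathsf{U}_n^T)$), its one-step transition kernel becomes asymptotically independent of $\mathsf{U}_n^T$ and equals $P$ in the limit. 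All convergences must be stated in probability under $\mathbb{P}^Y$, since the kernels $Q_T$ and targets $\widetilde{\pi}_T$ are random objects.

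The key algebraic step is to rewrite the rescaled CPM acceptance probability using the $m$-reversibility of $K_{\rho_T}$ and the explicit form of $\overline{\pi}_T(u\mid\theta)=m(u)\widehat{p}(y\mid\theta,u)/p(y\mid\theta)$. The identity $m(u')K_{\rho_T}(u',u)=m(u)K_{\rho_T}(u,u')$ cancels the auxiliary-variable densities, leaving
\[
\alpha_{Q_T}\{(\widetilde{\theta},u),(\widetilde{\theta}',u')\}=\min\left\{1,\frac{\widetilde{\pi}_T(\widetilde{\theta}')\widetilde{q}(\widetilde{\theta}',\widetilde{\theta})}{\widetilde{\pi}_T(\widetilde{\theta})\widetilde{q}(\widetilde{\theta},\widetilde{\theta}')}\exp\bigl(R_T(\theta,\theta+\xi/\sqrt{T})\bigr)\right\},
\]
where $\theta=\widehat{\theta}_T+\widetilde{\theta}/\sqrt{T}$ and $\xi=\widetilde{\theta}'-\widetilde{\theta}\sim\upsilon$ by Assumption~\ref{Assumption:proposal}. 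The dependence on $u$ is now carried solely through $R_T$. By Assumption~\ref{ass:BVM}, the ratio $\widetilde{\pi}_T(\widetilde{\theta}')/\widetilde{\pi}_T(\widetilde{\theta})\to\varphi(\widetilde{\theta}';0,\overline{\Sigma})/\varphi(\widetilde{\theta};0,\overline{\Sigma})$ in probability $\mathbb{P}^Y$ for Lebesgue-almost every $(\widetilde{\theta},\widetilde{\theta}')$ (after passing through the $L^1$ convergence to extract pointwise subsequences), while Assumption~\ref{Assumption:uniformCLT} together with continuity of $\kappa$ at $\overline{\theta}$ gives $R_T(\theta,\theta+\xi/\sqrt{T})\mid\mathcal{F}^T\Rightarrow\mathcal{N}(-\kappa^2/2,\kappa^2)$ uniformly in $\theta\in N(\overline{\theta})$, with the limit law free of $u$.

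Combining these ingredients, for any bounded continuous $f:\mathbb{R}^d\to\mathbb{R}$ one shows that
\[
\widetilde{\mathbb{E}}\bigl[f(\widetilde{\vartheta}_{n+1}^T)\bigm|\widetilde{\vartheta}_n^T=\widetilde{\theta},\mathsf{U}_n^T=u,\mathcal{Y}^T\bigr]\;\overset{\mathbb{P}^Y}{\longrightarrow}\;\int f(\widetilde{\theta}')\,P(\widetilde{\theta},\mathrm{d}\widetilde{\theta}'),
\]
for $\widetilde{\theta}$ in compacts and uniformly in $u$. Iterated use of the tower property, applied to the genuine Markov chain $(\widetilde{\vartheta}_n^T,\mathsf{U}_n^T)$, then gives by induction on $n$ that
\[
\widetilde{\mathbb{E}}\Bigl[\prod_{k=0}^{n}f_k(\widetilde{\vartheta}_k^T)\Bigm|\mathcal{Y}^T\Bigr]\;\overset{\mathbb{P}^Y}{\longrightarrow}\;\mathbb{E}\Bigl[\prod_{k=0}^{n}f_k(\widetilde{\vartheta}_k)\Bigr],
\]
for bounded continuous $f_0,\ldots,f_n$, which is convergence of all finite-dimensional distributions. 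Combined with tightness (obtained from the fact that $\widetilde{\vartheta}_n^T\sim\widetilde{\pi}_T$ at stationarity and Assumption~\ref{ass:BVM}), this yields the claimed weak convergence in $\mathbb{P}^Y$-probability to the Penalty method chain on $(\mathbb{R}^d)^\infty$.

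The hard part is controlling the one-step convergence uniformly. First, Assumption~\ref{Assumption:uniformCLT} only delivers uniformity on $N(\overline{\theta})$, so we must truncate $\widetilde{\theta}$ to compact sets and argue that for each compact $K$ and large enough $T$ the parameter $\theta=\widehat{\theta}_T+\widetilde{\theta}/\sqrt{T}$ lies in $N(\overline{\theta})$ with $\mathbb{P}^Y$-probability tending to one. Second, one must pass from the $\mathcal{F}^T$-conditional weak convergence of $R_T$ to a statement that is integrated against $\widetilde{q}(\widetilde{\theta},\mathrm{d}\widetilde{\theta}')$ and against the law of $u$, which is exactly what the bounded-Lipschitz formulation in Assumption~\ref{Assumption:uniformCLT} was designed for: bounding $\min\{1,r\exp(R_T)\}$ by a bounded Lipschitz function of $R_T$ (uniformly in the deterministic factor $r$) lets us use dominated convergence to move between the $\mathcal{F}^T$ and $\mathcal{Y}^T$ filtrations. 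The convergence of $\widetilde{\pi}_T$ in total variation rather than pointwise is then handled by Scheff\'e-type arguments, and continuity of $\kappa$ lets us collapse $\kappa(\theta)$ to $\kappa(\overline{\theta})$ in the penalty acceptance.
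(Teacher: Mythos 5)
Your proposal follows essentially the same route as the paper's proof: cancel the auxiliary-variable densities via the $m$-reversibility of $K_{\rho_T}$ so that the $u$-dependence of the acceptance ratio sits entirely in $R_T$, establish one-step convergence of the rescaled kernel to the penalty kernel $P$ from Assumptions \ref{ass:BVM} and \ref{Assumption:uniformCLT} together with continuity of $\kappa$, propagate to $k$-step kernels and finite-dimensional distributions by induction, and conclude on $\left(\mathbb{R}^{d}\right)^{\infty}$ (where finite-dimensional convergence already implies weak convergence for the product topology, so the tightness step you mention is superfluous), using the subsequence trick to turn the in-probability statements into almost-sure ones. The one piece of wording to repair is the claim that the one-step transition converges ``uniformly in $u$'': that is false pointwise and is not what Assumption \ref{Assumption:uniformCLT} provides; the correct and sufficient statement, which the paper proves and which your dominated-convergence remark effectively amounts to, is $L^{1}$ convergence of $Q_{T}f$ to $Pf$ under the stationary law $\widetilde{\pi}_{T}$, and it is precisely the $\widetilde{\pi}_{T}$-invariance of $Q_{T}$ that allows this $L^{1}$ bound to propagate through the induction over iterates.
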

The consequence of this result is that, as $T\rightarrow\infty$,
only the asymptotic distribution of the loglikelihood ratio error
at the central parameter value $\overline{\theta}$ impacts the acceptance
probability of the limiting chain. For large $T$ and a proposal of
the form specified in Assumption \ref{Assumption:proposal}, we thus
expect some of the quantitative properties of the CPM kernel $Q$,
where we now omit $T$ from notation, to be captured by the Markov
kernel 
\begin{equation}
\widehat{Q}\left(\theta,\mathrm{d}\theta^{\prime}\right)=q\left(\theta,\mathrm{d}\theta^{\prime}\right)\alpha_{\widehat{Q}}\left(\theta,\theta^{\prime}\right)+\{1-\varrho_{\widehat{Q}}\left(\theta\right)\}\delta_{\theta}\left(\mathrm{d}\theta^{\prime}\right),\label{eq:CPMkernelapproximation}
\end{equation}
where 
\[
\alpha_{\widehat{Q}}\left(\theta,\theta^{\prime}\right)=\int\varphi(\mathrm{d}w;-\kappa^{2}/2,\kappa^{2})\min\left\{ 1,r_{\textsc{ex}}(\theta,\theta^{\prime})\exp\left(w\right)\right\} ,
\]
$1-\varrho_{\widehat{Q}}\left(\theta\right)$ being the corresponding
rejection probability. We have obtained (\ref{eq:CPMkernelapproximation})
by using the change of variables $\theta=\widehat{\theta}_{T}+\widetilde{\theta}/\sqrt{T}$
and substituting the true target for its normal approximation in (\ref{eq:penaltykernel}),
hence removing a level of approximation.

\subsection{A\ bounding Markov chain\label{subsec:Abounding-Markov-chain}}

We analyse here the stationary Markov chain of transition kernel $\widehat{Q}$
arising from our weak convergence analysis. To state our results,
we need the following notation. For any real-valued measurable function
$h$, probability measure and Markov kernel $K$ on a measurable space
$(E,\mathcal{E})$, we write $\mu\left(h\right)=\int_{E}h\left(x\right)\mu\left(\mathrm{d}x\right)$,
$Kh\left(x\right)=\int_{E}K\left(x,\mathrm{d}x^{\prime}\right)h\left(x^{\prime}\right)$
and $K^{n}h\left(x\right)=\int_{E}K^{n-1}\left(x,\mathrm{d}z\right)K\left(z,\mathrm{d}x^{\prime}\right)h\left(x^{\prime}\right)$
for $n\geq2$ with $K^{1}=K$. We also introduce the Hilbert space
$L^{2}\left(\mu\right)=\left\{ h:E\rightarrow\mathbb{R}:\mu\left(h^{2}\right)<\infty\right\} $
equipped with the inner product $\left\langle g,h\right\rangle _{\mu}=\int_{E}g\left(x\right)h\left(x\right)\mu\left(\mathrm{d}x\right)$.
For any $h\in L^{2}\left(\mu\right)$, the autocorrelation at lag
$n\geq0$ is\ $\phi_{n}\left(h,K\right)=\left\langle \overline{h},K^{n}h\right\rangle _{\mu}/\mu(\overline{h}^{2})$
where $\overline{h}=h-\mu\left(h\right)$. The IACT associated with
a function $h$ under a Markov kernel $K$ is given\ by $\mathrm{IF}\left(h,K\right)=1+2\sum_{n=1}^{\infty}\phi_{n}\left(h,K\right)$
and will be referred to subsequently as the \emph{inefficiency}. For
$\mu\left(\mathrm{d}x\right)=\mu\left(\mathrm{d}x_{1},\mathrm{d}x_{2}\right)$,
we will slightly abuse notation and write $\mathrm{IF}(h,K)$ instead
of $\mathrm{IF}(g,K)$ when $g\left(x_{1},x_{2}\right)=h\left(x_{1}\right)$
or $g\left(x_{1},x{}_{2}\right)=h\left(x_{2}\right)$. When estimating
$\mu\left(h\right)$, $n\mathrm{IF}\left(h,K\right)$ samples from
a stationary Markov chain of $\mu$-invariant transition kernel $K$
are necessary to obtain approximately an estimator of the same precision
as an average of $n$ independent draws from $\mu$; see, e.g., \citep{geyer1992}. 

We provide an upper bound on $\mathrm{IF}(h,\widehat{Q})$ which we
exploit to provide guidelines on how to optimise the performance of
the CPM\ scheme in Subsection \ref{subsec:Optimization}. The inefficiency
$\mathrm{IF}(h,\widehat{Q})$ is difficult to work with but we give
an upper bound that only depends on $\mathrm{IF}(h,Q_{\textsc{ex}})$
and $\kappa$. To proceed, we introduce an auxiliary Markov kernel
$Q^{\ast}$ given by 
\begin{equation}
Q^{\ast}\left(\theta,\mathrm{d}\theta^{\prime}\right)=\varrho_{\text{\textsc{U}}}\left(\kappa\right)Q_{\textsc{ex}}\left(\theta,\mathrm{d}\theta^{\prime}\right)+\left\{ 1-\varrho_{\text{\textsc{U}}}\left(\kappa\right)\right\} \delta_{\theta}\left(\mathrm{d}\theta^{\prime}\right),\label{eq:Boundingchainkernel}
\end{equation}
where 
\begin{equation}
\varrho_{\text{\textsc{U}}}\left(\kappa\right)=\int\varphi(\mathrm{d}w;-\kappa^{2}/2,\kappa^{2})\min\left\{ 1,\exp\left(w\right)\right\} =2\Phi\left(-\kappa/2\right).\label{eq:lazyproba}
\end{equation}
We denote by $\mathrm{\bar{\varrho}_{Q^{*}}}\left(\kappa\right)$,
respectively $\mathrm{\bar{\varrho}_{\widehat{Q}}}\left(\kappa\right)$,
the average acceptance probability of $Q^{*}$, respectively $\widehat{Q}$,
at stationarity. The kernel $Q^{\ast}$ is a ``lazy\textquotedblright \ version
of $Q_{\textsc{ex}}$ which satisfies the following properties. 
\begin{prop}
\label{Proposition:Qstarproperties}The kernel $Q^{\ast}$ is reversible
w.r.t. $\pi$ and $\mathrm{IF(}h,\widehat{Q})\leq\mathrm{IF}\left(h,Q^{\ast}\right)$
for any $h\in L^{2}\left(\pi\right)$ where 
\begin{equation}
\mathrm{IF}\left(h,Q^{\ast}\right)=\{1+\mathrm{IF}(h,Q_{\textsc{ex}})\}/\varrho_{\text{\textsc{U}}}(\kappa)-1,\label{eq:inefficiencyQstar}
\end{equation}
with equality when $\varrho_{\textsc{ex}}\left(\theta\right)=1$ for
all $\theta\in\Theta$ and 
\begin{equation}
\mathrm{\bar{\varrho}_{Q^{*}}\left(\kappa\right)}=\varrho_{\text{\textsc{U}}}\left(\kappa\right)\pi(\varrho_{\textsc{ex}})\leq\mathrm{\bar{\varrho}_{\widehat{Q}}}\left(\kappa\right).\label{eq:inequalityaverageacceptanceproba}
\end{equation}
Moreover, $Q^{\ast}$ is geometrically ergodic if $Q_{\textsc{ex}}$
is geometrically ergodic. 
\end{prop}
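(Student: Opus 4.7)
\emph{Reversibility of $Q^*$ and the formula \eqref{eq:inefficiencyQstar}.} Since $Q_{\textsc{ex}}$ is $\pi$-reversible and the identity kernel is trivially so, their convex combination $Q^*$ is $\pi$-reversible. To compute its inefficiency, writing $\rho := \varrho_{\text{\textsc{U}}}(\kappa)$ and exploiting that $Q_{\textsc{ex}}$ commutes with the identity, I would expand
\[
(Q^*)^n = \sum_{k=0}^n \binom{n}{k} \rho^k (1-\rho)^{n-k} Q_{\textsc{ex}}^k,
\]
so that $\phi_n(h,Q^*)$ becomes the corresponding binomial mixture of the $\phi_k(h,Q_{\textsc{ex}})$. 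Summing over $n \geq 1$, swapping summation orders, and using $\sum_{n \geq k}\binom{n}{k}(1-\rho)^{n-k} = \rho^{-(k+1)}$ then yields \eqref{eq:inefficiencyQstar}.

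\emph{Peskun comparison $\mathrm{IF}(h,\widehat{Q}) \leq \mathrm{IF}(h,Q^*)$.} The plan requires two ingredients: (i)~that $\widehat{Q}$ is itself $\pi$-reversible, and (ii)~that $\widehat{Q}(\theta,\cdot) \geq Q^*(\theta,\cdot)$ off the diagonal. For (i) it suffices to verify $\alpha_{\widehat{Q}}(\theta,\theta') = r_{\textsc{ex}}(\theta,\theta')\,\alpha_{\widehat{Q}}(\theta',\theta)$; writing $r := r_{\textsc{ex}}(\theta,\theta')$ and applying the Gaussian exponential-tilting identity $e^w\varphi(w;-\kappa^2/2,\kappa^2) = \varphi(w;\kappa^2/2,\kappa^2)$ together with the substitution $w \mapsto -w$ reduces both sides to the same explicit expression in $\Phi$. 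For (ii), a case analysis on whether $r \leq 1$ or $r > 1$ yields the pointwise bound $\min(1,re^w) \geq \min(1,r)\min(1,e^w)$, which integrated against $\varphi(\cdot;-\kappa^2/2,\kappa^2)$ delivers $\alpha_{\widehat{Q}}(\theta,\theta') \geq \varrho_{\text{\textsc{U}}}(\kappa)\,\alpha_{\textsc{ex}}(\theta,\theta')$. Tierney's extension of Peskun's lemma to general $\pi$-reversible kernels then gives the inefficiency bound. When $\varrho_{\textsc{ex}} \equiv 1$, $\pi$-reversibility of $Q_{\textsc{ex}}$ forces $r_{\textsc{ex}} = 1$ on $\pi q$-almost every pair, the two off-diagonal masses coincide, and equality holds.

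\emph{Average acceptance probabilities and geometric ergodicity.} The identity $\bar{\varrho}_{Q^*}(\kappa) = \varrho_{\text{\textsc{U}}}(\kappa)\pi(\varrho_{\textsc{ex}})$ follows by direct integration of \eqref{eq:Boundingchainkernel}, and integrating the pointwise acceptance bound from the previous step against $\pi(\mathrm{d}\theta)q(\theta,\mathrm{d}\theta')$ immediately produces $\bar{\varrho}_{\widehat{Q}}(\kappa) \geq \bar{\varrho}_{Q^*}(\kappa)$. For geometric ergodicity, I would condition on the binomial count $B_n \sim \mathrm{Bin}(n,\rho)$ of active $Q_{\textsc{ex}}$-steps and split on $\{B_n \geq n\rho/2\}$: this bounds $\|(Q^*)^n(\theta,\cdot)-\pi\|_{\mathrm{TV}}$ by the sum of a term $M(\theta)r^{n\rho/2}$ inherited from geometric ergodicity of $Q_{\textsc{ex}}$ and a Hoeffding tail $e^{-cn}$, both decaying geometrically in $n$.

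The main obstacle is the Peskun step: establishing both the $\pi$-reversibility of $\widehat{Q}$ and the pointwise comparison $\alpha_{\widehat{Q}} \geq \varrho_{\text{\textsc{U}}}\alpha_{\textsc{ex}}$ relies on exploiting the specific log-normal structure of the penalty through the exponential-tilting identity, and some care is needed about the almost-sure behaviour of $r_{\textsc{ex}}$ under $\pi q$ to pin down the equality case.
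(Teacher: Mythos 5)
Your proposal is correct and its skeleton coincides with the paper's proof: reversibility via the convex-combination structure, the binomial expansion of $(Q^{\ast})^{n}$, the pointwise inequality $\min(1,re^{w})\geq\min(1,r)\min(1,e^{w})$ giving $\alpha_{\widehat{Q}}\geq\varrho_{\text{\textsc{U}}}(\kappa)\alpha_{\textsc{ex}}$, integration of that bound for \eqref{eq:inequalityaverageacceptanceproba}, and Tierney's general-state-space version of Peskun's theorem. You also supply two details the paper leaves implicit but which Tierney's theorem genuinely requires: the $\pi$-reversibility of $\widehat{Q}$ (your exponential-tilting identity $e^{w}\varphi(w;-\kappa^{2}/2,\kappa^{2})=\varphi(w;\kappa^{2}/2,\kappa^{2})$ plus $w\mapsto-w$ is the right verification), and the observation that $\varrho_{\textsc{ex}}\equiv1$ forces $r_{\textsc{ex}}=1$ a.e.\ (since $r_{\textsc{ex}}\geq1$ a.e.\ together with $\int r_{\textsc{ex}}\,\mathrm{d}(\pi\otimes q)=1$ pins it down), which justifies the equality case.

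Two sub-arguments diverge from the paper, and one of them has a gap. To extract \eqref{eq:inefficiencyQstar} you write $\phi_{n}(h,Q^{\ast})$ as a binomial mixture of the $\phi_{k}(h,Q_{\textsc{ex}})$ and swap the order of the double sum; the negative-binomial identity and the resulting algebra are fine, but the interchange is not automatic. The $\phi_{k}(h,Q_{\textsc{ex}})$ can alternate in sign, and $\sum_{k}\vert\phi_{k}(h,Q_{\textsc{ex}})\vert$ may diverge even when $\mathrm{IF}(h,Q_{\textsc{ex}})<\infty$: for a spectral measure with, say, a bounded density near $\lambda=-1$ one gets $\phi_{k}\asymp(-1)^{k}/(k+1)$, which is summable only conditionally. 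The paper avoids this by inserting the spectral representation $\phi_{k}(h,Q_{\textsc{ex}})=\int\lambda^{k}e(h,Q_{\textsc{ex}})(\mathrm{d}\lambda)$ \emph{before} summing over $n$, so that $\phi_{n}(h,Q^{\ast})=\int\{(1-\varrho_{\text{\textsc{U}}}(\kappa))+\varrho_{\text{\textsc{U}}}(\kappa)\lambda\}^{n}e(h,Q_{\textsc{ex}})(\mathrm{d}\lambda)$ and the geometric series converges pointwise because the affine map pushes the spectrum strictly away from $-1$; you should route your computation through this device (or otherwise justify Fubini). Your geometric-ergodicity argument, by contrast, is a legitimate alternative to the paper's: conditioning on the binomial count of active $Q_{\textsc{ex}}$-steps and splitting on $\{B_{n}\geq n\varrho_{\text{\textsc{U}}}(\kappa)/2\}$ gives a total-variation bound $M(\theta)r^{n\varrho_{\text{\textsc{U}}}(\kappa)/2}+e^{-cn}$ that is more elementary than the paper's transfer of the $L^{2}$ spectral gap through the same affine change of variables, at the price of working with total-variation geometric ergodicity directly rather than the spectral characterisation.
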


For any transition kernel $K$ admitting an invariant distribution
given by $\pi$, or admitting $\pi$ as a marginal, we define the
relative inefficiency $\mathrm{RIF}\left(h,K\right)$ of the kernel
$K$ with respect to the known likelihood kernel $Q_{\textsc{ex}}$
and the auxiliary relative computing time $\mathrm{ARCT}\left(h,K\right)$
by 
\begin{equation}
\mathrm{RIF}\left(h,K\right):=\frac{\mathrm{IF}\left(h,K\right)}{\mathrm{IF}(h,Q_{\textsc{ex}})},\text{\hspace{1cm}}\mathrm{ARCT}\left(h,K\right):=\sqrt{\frac{\mathrm{RIF}\left(h,K\right)}{\kappa^{2}\varrho_{\text{\textsc{U}}}(\kappa)}}.\label{eq:RCTdef}
\end{equation}
We next minimise $\mathrm{ARCT}\left(h,Q^{\ast}\right)$, an upper
bound on $\mathrm{ARCT}(h,\widehat{Q})$, w.r.t. $\kappa$\textendash{}
this quantity is a component of the function we need to minimize in
order to optimize the performance of the CPM algorithm; see Section
\ref{subsec:Optimization}.
\begin{prop}
\label{cor:RIF}The following results hold:
\begin{enumerate}
\item If $\mathrm{IF}(h,Q_{\textsc{ex}})=1,$ then 
\[
\mathrm{RIF}\left(h,Q^{\ast}\right)=\{2-\varrho_{\text{\textsc{U}}}(\kappa)\}/\varrho_{\text{\textsc{U}}}(\kappa),
\]
and $\mathrm{ARCT}\left(h,Q^{\ast}\right)$ is minimised at $\kappa=1.35$,
at which point $\varrho_{\text{\textsc{U}}}(\kappa)=0.50$, $\mathrm{RIF}(h,Q^{\ast})=2.99$
and $\mathrm{ARCT}\left(h,Q^{\ast}\right)=1.81$. 
\item As $\mathrm{IF}(h,Q_{\textsc{ex}})\longrightarrow\infty,$ 
\[
\mathrm{RIF}\left(h,Q^{\ast}\right)=1/\varrho_{\text{\textsc{U}}}(\kappa),
\]
and $\mathrm{ARCT}\left(h,Q^{\ast}\right)$ is minimised at $\kappa=1.50$,
at which point $\varrho_{\text{\textsc{U}}}(\kappa)=0.43$, $\mathrm{RIF}(h,Q^{\ast})=2.20$
and $\mathrm{ARCT}\left(h,Q^{\ast}\right)=1.47$. 
\item $\mathrm{RIF}\left(h,Q^{\ast}\right)$ and $\mathrm{ARCT}\left(h,Q^{\ast}\right)$
are decreasing functions of $\mathrm{IF}(h,Q_{\textsc{ex}})$. The
minimising argument rises monotonically from $1.35$ to $1.50$ as
$\mathrm{IF}(h,Q_{\textsc{ex}})$ increases from $1$ to $\infty$. 
\end{enumerate}
\end{prop}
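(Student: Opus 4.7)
The plan is to reduce Proposition~\ref{cor:RIF} to one-dimensional optimisation in $\kappa$ using (\ref{eq:inefficiencyQstar}), handle the two boundary regimes $I:=\mathrm{IF}(h,Q_{\textsc{ex}})=1$ and $I\to\infty$ via numerical root-finding on transcendental first-order conditions, and analyse the minimiser $\kappa^{\ast}(I)$ for intermediate $I$ by the implicit function theorem. Starting from (\ref{eq:inefficiencyQstar}), a direct rearrangement gives
\begin{equation*}
\mathrm{RIF}(h,Q^{\ast})=\frac{1}{\varrho_{\text{\textsc{U}}}(\kappa)}+\frac{1-\varrho_{\text{\textsc{U}}}(\kappa)}{I\,\varrho_{\text{\textsc{U}}}(\kappa)},\qquad\mathrm{ARCT}(h,Q^{\ast})^{2}=\frac{1}{\kappa^{2}\varrho_{\text{\textsc{U}}}(\kappa)^{2}}\left(1+\frac{1-\varrho_{\text{\textsc{U}}}(\kappa)}{I}\right),
\end{equation*}
and specialising $I=1$ and $I\to\infty$ immediately yields the two closed forms stated in parts 1 and 2.

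For the minimisation I would use $\varrho_{\text{\textsc{U}}}(\kappa)=2\Phi(-\kappa/2)$ and $\varrho_{\text{\textsc{U}}}'(\kappa)=-(2\pi)^{-1/2}e^{-\kappa^{2}/8}$ to differentiate. In the $I\to\infty$ regime the first-order condition reduces to $\varrho_{\text{\textsc{U}}}(\kappa)+\kappa\varrho_{\text{\textsc{U}}}'(\kappa)=0$, i.e.\
\begin{equation*}
2\Phi(-\kappa/2)=\frac{\kappa}{\sqrt{2\pi}}e^{-\kappa^{2}/8},
\end{equation*}
whose unique positive root is $\kappa\approx 1.50$ (easily located by bisection or Newton). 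The $I=1$ regime gives an analogous but distinct transcendental equation with unique positive root $\kappa\approx 1.35$. Plugging back in produces the stated values of $\varrho_{\text{\textsc{U}}}$, $\mathrm{RIF}$ and $\mathrm{ARCT}$, and checking the sign of the second derivative at each root confirms that these are global minima on $(0,\infty)$.

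For part 3, monotonicity in $I$ of $\mathrm{RIF}$ and $\mathrm{ARCT}$ at fixed $\kappa$ is immediate from the formulas above since the $I$-dependent summand carries a positive coefficient $(1-\varrho_{\text{\textsc{U}}})/\varrho_{\text{\textsc{U}}}>0$ and a $1/I$ factor. For the minimiser I would write $G(I,\kappa)=A(\kappa)+B(\kappa)/I$ with $A(\kappa)=(\kappa\varrho_{\text{\textsc{U}}})^{-2}$ and $B(\kappa)=(1-\varrho_{\text{\textsc{U}}})A(\kappa)$; the implicit function theorem applied to $A'(\kappa^{\ast})+B'(\kappa^{\ast})/I=0$ yields
\begin{equation*}
\frac{d\kappa^{\ast}}{dI}=\frac{B'(\kappa^{\ast})/I^{2}}{A''(\kappa^{\ast})+B''(\kappa^{\ast})/I},
\end{equation*}
whose sign at a local minimum is that of $B'(\kappa^{\ast})$. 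At $I=1$ the first-order condition forces $B'(\kappa^{\ast}(1))=-A'(1.35)>0$ since $A$ is strictly decreasing on $(0,1.50)$, and at $I=\infty$ one has $A'(1.50)=0$. The main obstacle will be ruling out non-monotonic behaviour in between, which I expect to follow from the explicit expression
\begin{equation*}
B'(\kappa)=-\frac{\varrho_{\text{\textsc{U}}}'(\kappa)}{\kappa^{2}\varrho_{\text{\textsc{U}}}(\kappa)^{2}}-\frac{2(1-\varrho_{\text{\textsc{U}}}(\kappa))\{\varrho_{\text{\textsc{U}}}(\kappa)+\kappa\varrho_{\text{\textsc{U}}}'(\kappa)\}}{\kappa^{3}\varrho_{\text{\textsc{U}}}(\kappa)^{3}},
\end{equation*}
whose first term is strictly positive and whose second term remains small throughout $[1.35,1.50]$ because $\varrho_{\text{\textsc{U}}}+\kappa\varrho_{\text{\textsc{U}}}'$ decreases monotonically from a small positive value at $\kappa=1.35$ to $0$ at $\kappa=1.50$; together this gives $B'>0$ on the whole interval $[1.35,1.50]$, yielding the claimed monotonicity of $\kappa^{\ast}(I)$ and completing the proof.
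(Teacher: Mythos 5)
Your proposal is correct and follows essentially the same route as the paper: closed forms for $\mathrm{RIF}(h,Q^{\ast})$ in the two limiting regimes obtained directly from (\ref{eq:inefficiencyQstar}), numerical solution of the transcendental first-order condition in $\kappa$ for parts 1 and 2, and implicit differentiation of the first-order condition to obtain monotonicity of the minimiser in part 3. The one place the paper's version is tighter is part 3: it establishes $\partial J/\partial\hat{\kappa}>0$ at every critical point, which simultaneously yields uniqueness of the minimiser and the sign of $d\hat{\kappa}/d\mathrm{IF}$ without presupposing $\kappa^{\ast}(I)\in[1.35,1.50]$, whereas your reduction to the sign of $B'(\kappa^{\ast})$ needs a short continuation/uniqueness step to avoid the mild circularity of assuming the minimiser already lies in $[1.35,1.50]$, and the claim that the negative contribution to $B'$ "remains small" there should be backed by the (easily verified) numerical bound rather than asserted --- both fixes are routine, so the gap is cosmetic.
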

Figure \ref{fig:RCT_limit} displays $\varrho_{\text{\textsc{U}}}(\kappa)$,
$\mathrm{RIF}\left(h,Q^{\ast}\right)$ and $\mathrm{ARCT}\left(h,Q^{\ast}\right)$
against $\kappa$. The two scenarios displayed are for $\mathrm{IF}(h,Q_{\textsc{ex}})=1$,
corresponding to the ``perfect\textquotedblright \ proposal case
where $q\left(\theta,\theta^{\prime}\right)=\pi(\theta^{\prime})$,
and for the limiting case where $\mathrm{IF}(h,Q_{\textsc{ex}})\longrightarrow\infty$.
These are parts (i) and (ii) of Proposition \ref{cor:RIF}. From Figure
\ref{fig:RCT_limit}, it is also clear that $\mathrm{ARCT}\left(h,Q^{\ast}\right)$,
for both scenarios, is fairly flat as a function of $\kappa$. The
function only approximately doubles relative to the minimum at $\kappa=1$
or $4$.

\begin{figure}[ptb]
\begin{centering}
\includegraphics[height=2.5in]{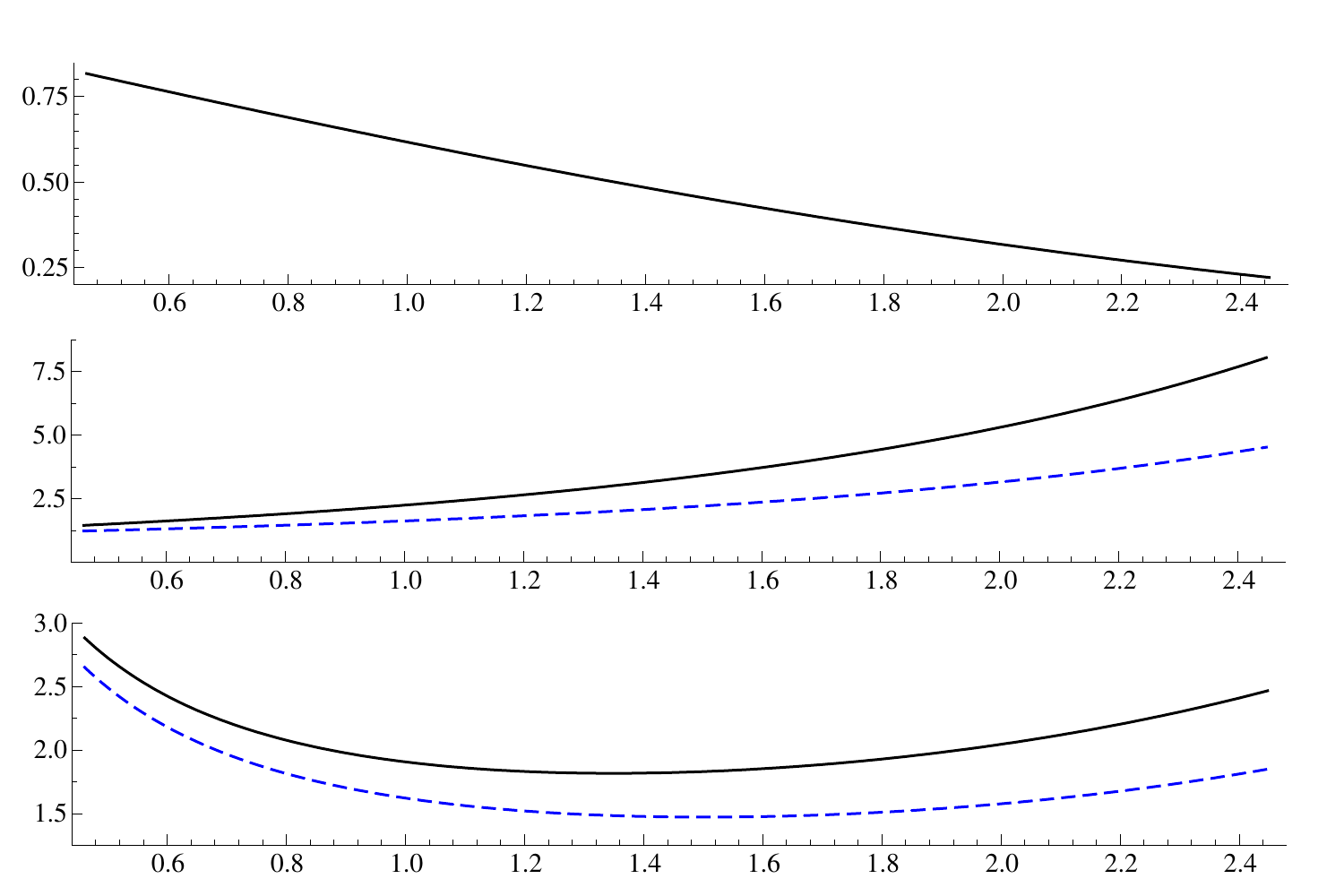}
\par\end{centering}
\caption{Illustrations of Proposition \ref{cor:RIF}. Top: Acceptance probability
$\varrho_{\text{\textsc{U}}}(\kappa)$ against $\kappa$. Middle:
Relative inefficiency $\mathrm{RIF}\left(h,Q^{\ast}\right)$ against
$\kappa$ (solid line $\mathrm{IF}(h,Q_{\textsc{ex}})=1$, dashed
line $\mathrm{IF}(h,Q_{\textsc{ex}})\rightarrow\infty$). Bottom:
Auxiliary relative computing time $\mathrm{ARCT}\left(h,Q^{\ast}\right)$
against $\kappa$ (solid line $\mathrm{IF}(h,Q_{\textsc{ex}})=1$,
dashed line $\mathrm{IF}(h,Q_{\textsc{ex}})\rightarrow\infty$). }
\label{fig:RCT_limit} 
\end{figure}

\subsection{A lower bound on the integrated autocorrelation time\label{subsec:A-lower-bound}}

The weak convergence result presented in Theorem \ref{Theorem:weakconvergence}%
{} does not imply that $\mathrm{IF}(h,Q)\overset{\mathbb{P}^{Y}}{\rightarrow}\mathrm{IF}\left(h,P\right)$
as $T\rightarrow\infty$ for any test function $h:\Theta\rightarrow\mathbb{R}$
such that $\mathrm{IF}(h,Q_{T})+\mathrm{IF}(h,P)<\infty$. Whereas
our weak convergence result holds whenever $N_{T}\rightarrow\infty$
as $T\rightarrow\infty$ and $N_{T}/T\rightarrow0$, we establish
here a result suggesting that we need $N_{T}$ to increase with $T$
at least at rate $\sqrt{T}$ for $\mathrm{IF}(h,Q_{T})$ to remain
controlled. To simplify the presentation in this section, we assume
further on that $d=1$.

In the CPM\ context, the auxiliary variables sequence $\{\mathsf{U}_{n};n\geq0\}$
evolve at a much slower scale than $\{\vartheta_{n};n\geq0\}$ according
to the kernel $K_{\rho_{T}}$ where $\rho_{T}$ is given by (\ref{eq:correlationscaling}).
We expect and observe empirically that the inefficiency $\mathrm{IF}(h,Q_{T})$
is of the same order as the inefficiency of $\left\{ \mathbb{E}\left[h(\vartheta_{n})|\mathsf{U}_{n}\right];n\geq0\right\} $
when $N_{T}$ increases too slowly to infinity with $T$. Moreover,
under large $T$, we have under regularity conditions, see, e.g.,
\citep[Lemma 2]{doucet2013}
\begin{equation}
\mathbb{E}\left[h(\vartheta_{n})|\mathsf{U}_{n}\right]=h(\widehat{\theta}_{T})+\frac{\overline{\Sigma}}{2T}\nabla_{\vartheta,\vartheta}h(\widehat{\theta}_{T})+\frac{\overline{\Sigma}}{T}\nabla_{\vartheta}h(\widehat{\theta}_{T})\text{ }\Psi(\widehat{\theta}_{T},\mathsf{U}_{n})+O_{\widetilde{\mathbb{P}}}\left(T^{-2}\right),\label{eq:expansion}
\end{equation}
where 
\begin{equation}
\Psi(\widehat{\theta}_{T},U)=\nabla_{\vartheta}\log\{\widehat{p}(Y_{1:T}\mid\widehat{\theta}_{T},U)/p(Y_{1:T}\mid\widehat{\theta}_{T})\}\label{eq:scoreerror}
\end{equation}
is the error in the simulated score at $\widehat{\theta}_{T}$, which
we will refer to as the score error. As\ a first step, we compute
the inefficiency $\mathrm{IF}(\Psi,Q_{T})$ of the score error.
\begin{prop}
\label{Proposition:slowdown}Under regularity conditions given in
Section \ref{Section:Breakdownproof}, there exists $C>0$ such that
$\mathrm{IF}(\Psi,Q_{T})\geq C\mathbb{V}_{\overline{\pi}}\left(\Psi\right)$
\textup{~$\mathbb{P}^{Y}-\mathrm{a.s.}$}
\end{prop}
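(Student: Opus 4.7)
The plan is to combine a variational lower bound on the inefficiency (Kipnis--Varadhan type) with the fact that the Ornstein--Uhlenbeck kernel $K_{\rho_T}$ moves $U$ only very slightly when $N_T/T\to 0$. The key observation is that the Dirichlet form of $Q_T$ evaluated at $\Psi(\widehat{\theta}_T,\cdot)$ stays bounded as $T\to\infty$, while $\mathbb{V}_{\overline{\pi}}(\Psi)$ diverges, forcing the inefficiency to grow at least linearly in the variance.

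First, because $Q_T$ is $\overline{\pi}$-reversible it is self-adjoint on $L^{2}(\overline{\pi})$ with spectrum in $[-1,1]$, so $I-Q_T\succeq 0$. Regarding $f(\theta,u):=\Psi(\widehat{\theta}_T,u)$ as an element of $L^{2}(\overline{\pi})$ (it depends only on $u$), applying Cauchy--Schwarz to the factorisation $\bar{f}=(I-Q_T)^{1/2}(I-Q_T)^{-1/2}\bar{f}$ together with the spectral identity $\mathrm{IF}(\Psi,Q_T)=\|\bar{f}\|^{-2}\langle\bar{f},(I+Q_T)(I-Q_T)^{-1}\bar{f}\rangle$ in $L^{2}(\overline{\pi})$ yields
\[
\mathrm{IF}(\Psi,Q_T)\;\geq\;\frac{2\,\mathbb{V}_{\overline{\pi}}(\Psi)}{\mathcal{E}_{Q_T}(\Psi)}-1,
\qquad
\mathcal{E}_{Q_T}(\Psi):=\langle\bar{f},(I-Q_T)\bar{f}\rangle_{\overline{\pi}}.
\]
It therefore suffices to show that $\mathcal{E}_{Q_T}(\Psi)$ is bounded uniformly in $T$ on a $\mathbb{P}^{Y}$-full-measure set.

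Since the rejection branch of $Q_T$ contributes zero to the Dirichlet form and $\alpha_{Q_T}\leq 1$,
\[
\mathcal{E}_{Q_T}(\Psi)\;\leq\;\tfrac{1}{2}\int m^{\ast}(du)\,K_{\rho_T}(u,du')\bigl(\Psi(\widehat{\theta}_T,u)-\Psi(\widehat{\theta}_T,u')\bigr)^{2},
\]
where $m^{\ast}$ denotes the $u$-marginal of $\overline{\pi}$. Under Assumption \ref{ass:BVM} the Radon--Nikodym derivative $m^{\ast}/m$ is controlled $\mathbb{P}^{Y}$-a.s., so up to a multiplicative constant one can take expectation under $U\sim m=\mathcal{N}(0_M,I_M)$. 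Decomposing $f$ along the Hermite polynomials that diagonalise $K_{\rho_T}$ with eigenvalues $\rho_T^{|k|}$, and using the elementary inequality $1-\rho_T^{|k|}\leq 2|k|(1-\rho_T)$ valid for $\rho_T\geq 1/2$, gives the dimension-free estimate
\[
\tfrac{1}{2}\mathbb{E}_{m}\bigl[(\Psi(\widehat{\theta}_T,U)-\Psi(\widehat{\theta}_T,U'))^{2}\bigr]\;\leq\;2(1-\rho_T)\,\mathbb{E}_{m}\bigl[\|\nabla_u\Psi(\widehat{\theta}_T,U)\|^{2}\bigr].
\]

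Finally, the factorisation $\Psi(\widehat{\theta}_T,u)=\sum_{t=1}^{T}\bigl\{\nabla_{\vartheta}\log\widehat{p}(Y_t|\widehat{\theta}_T,u_t)-\nabla_{\vartheta}\log p(Y_t|\widehat{\theta}_T)\bigr\}$ has disjoint blocks $u_t$, so $\mathbb{E}[\|\nabla_u\Psi\|^{2}]$ splits as a sum over $t$. Direct differentiation of $\widehat{p}(Y_t|\widehat{\theta}_T,u_t)=N_T^{-1}\sum_{i}\omega(Y_t,u_{t,i};\widehat{\theta}_T)$ together with moment bounds on the normalised importance weights akin to those used in Theorem \ref{Theorem:CLTmarginal} show that each summand is $O(N_T^{-1})$ a.s., so $\mathbb{E}[\|\nabla_u\Psi\|^{2}]\leq C_1 T/N_T$. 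Combined with $1-\rho_T\leq\psi N_T/T$ this gives $\mathcal{E}_{Q_T}(\Psi)\leq C_2\psi$ uniformly, and plugging this into the variational inequality delivers $\mathrm{IF}(\Psi,Q_T)\geq C\,\mathbb{V}_{\overline{\pi}}(\Psi)$ for any $C<2/(C_2\psi)$ and all sufficiently large $T$. The main technical obstacle is the $\mathbb{P}^{Y}$-a.s.\ control of the density $m^{\ast}/m$ and of the moments of $\nabla_u\nabla_{\vartheta}\log\widehat{p}$ appearing in the bound on $\mathbb{E}[\|\nabla_u\Psi\|^{2}]$; once these regularity conditions are secured, the rest of the argument is purely spectral.
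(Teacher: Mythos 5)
Your overall skeleton coincides with the paper's: both arguments reduce the claim to the Jensen/Cauchy--Schwarz spectral inequality $\mathrm{IF}(\Psi,Q_T)\ge 2\mathbb{V}_{\overline{\pi}}(\Psi)/\mathcal{E}-1$, where $\mathcal{E}$ is an expected-squared-jump type quantity obtained by bounding the acceptance probability by one, and then show that $\mathcal{E}=O(1)$ almost surely. Your shortcut of applying this directly to the $\overline{\pi}$-reversible kernel $Q_T$ (rather than first passing to the marginal chain $\overline{Q}$ on $u$ via the Andrieu--Vihola-type comparison of Lemma \ref{Proposition:Deltadifference}, as the paper does) is legitimate and mildly cleaner, since $\Psi$ depends only on $u$ and the rejection branch contributes nothing to the Dirichlet form.

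There is, however, a genuine gap in your bound on $\mathcal{E}$. You replace the expectation under $m^{*}(\mathrm{d}u)\,K_{\rho_T}(u,\mathrm{d}u')$, with $m^{*}$ the $u$-marginal of $\overline{\pi}$, by an expectation under $m\otimes K_{\rho_T}$ ``up to a multiplicative constant'', citing Assumption \ref{ass:BVM}. That assumption concerns only the posterior of $\theta$ and says nothing about $m^{*}/m$; in fact $m^{*}(u)/m(u)=\int\widehat{p}(Y_{1:T}\mid\theta,u)/p(Y_{1:T}\mid\theta)\,\pi(\mathrm{d}\theta)$ is a mixture of products of $T$ independent mean-one weights whose logarithm is approximately $\mathcal{N}(\pm\sigma^{2}/2,\sigma^{2})$ with $\sigma^{2}=\gamma^{2}T/N_T\to\infty$ in the regime $N_T=o(T)$ (Theorem \ref{Theorem:CLTmarginal}), so its supremum is not bounded uniformly in $T$ and the change of measure by a constant fails. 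The same substitution is needed a second time in your Hermite step, since $K_{\rho_T}$ is reversible with respect to $m$ but not with respect to $m^{*}$, so the eigenfunction expansion and the Poincar\'e-type inequality apply only to the pair law $m\otimes K_{\rho_T}$. The paper circumvents both problems by exploiting the product structure: $\Psi(u')-\Psi(u)$ decomposes as a sum over $t$ of terms depending only on $(u_t,u_t')$, for each fixed $t$ the weights $\widehat{W}_r$ with $r\neq t$ integrate out exactly to one, and the remaining powers of $\widehat{W}_t^{\pm1}$ are controlled by Cauchy--Schwarz together with the negative-moment assumptions; the asymmetry caused by the non-$m^{*}$-reversibility of $K_{\rho_T}$ is absorbed by the It\^o/Ornstein--Uhlenbeck decomposition into drift ($L_t$) and martingale ($M_t$) pieces. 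Your final estimate $\mathcal{E}\lesssim(1-\rho_T)\,T/N_T=O(\psi)$ is the right answer, but the route to it must carry the importance weights through each term rather than discharging them globally at the outset.
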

It follows from calculations similar to Supplementary Material~\ref{Appendix:breakdown},
see also \citep[Proposition 3]{Lindsten2016}, that under regularity
conditions there exists $A>0$ such that $\mathbb{V}_{\overline{\pi}}\left(\Psi\right)\sim AT/N$
$\mathbb{P}^{Y}-\mathrm{a.s.}$ By combining (\ref{eq:expansion})
and Proposition \ref{Proposition:slowdown}, we thus expect the inefficiency
of $\left\{ \mathbb{E}\left[h(\vartheta_{n})|\mathsf{U}_{n}\right];n\geq0\right\} $
to be lower bounded by a term of order 
\[
\frac{\mathrm{IF}(\Psi,Q_{T})\text{ }\mathbb{V}_{\overline{\pi}}\left(\Psi/T\right)}{\mathbb{V}_{\pi}\left(h\right)}\gtrsim B\frac{T}{N_{T}}\frac{T^{1-\alpha}}{T^{2}}T=BT^{1-2\alpha}
\]
for $N_{T}=\left\lceil \beta T^{\alpha}\right\rceil $ and a constant
$B>0$. This result suggests that a necessary condition for $\mathrm{IF}(h,Q_{T})$
to remain finite as $T\rightarrow\infty$ is to have $N_{T}$ growing
at least at rate $\sqrt{T}$. This is validated by our experimental
results of Section \ref{sec:Applications} which also suggest that
this rate is sufficient.

\subsection{Optimization\label{subsec:Optimization}}

We provide a heuristic to select the parameters of the CPM so as to
optimise its performance which is validated by experimental results
in Section \ref{sec:Applications}. Again, we assume here that $d=1$.
For a test function $h:\Theta\rightarrow\mathbb{R}$, we want to minimize
\begin{equation}
\smash{\mathrm{CT}(h,Q_{T})=N_{T}\times\mathrm{IF}(h,Q_{T}),}\label{eq:CTcpm}
\end{equation}
where the factor $N_{T}$ arises from the fact that the computational
cost of obtaining the likelihood estimator is proportional to $N_{T}$
for random effects models. The results of Section \ref{subsec:A-lower-bound}
suggest that we should choose the number of Monte Carlo samples to
scale as $N_{T}=\beta T^{1/2}$ so that $\rho_{T}=\mathrm{exp}\left(-\psi\beta T^{-1/2}\right)$.
It remains to determine both $\psi$ and $\beta$.

To evaluate (\ref{eq:CTcpm}), we first decompose the functional of
interest evaluated at the parameter at the $n$-th iteration as 
\begin{equation*}
\smash{h(\vartheta_{n})=f(\mathsf{U}_{n})+g(\vartheta_{n},\mathsf{U}_{n}).}
\end{equation*}
where 
\begin{equation}
\smash{f(\mathsf{U}):=\mathbb{E}{}_{\bar{\pi}_{T}}\left[h(\vartheta)|\mathsf{U}\right],\hspace{1cm}g(\vartheta,\mathsf{U}):=h(\vartheta)-\mathbb{E}{}_{\bar{\pi}_{T}}\left[h(\vartheta)|\mathsf{U}\right].}\label{eq:fastslow}
\end{equation}
From \citep{KipnisVaradhan86}, we have that
\[
\mathbb{V}{}_{\pi_{T}}\left(h\right)\mathrm{IF}(h,Q_{T})\leq2\mathbb{V}{}_{\bar{\pi}_{T}}(f)\mathrm{IF}(f,Q_{T})+2\mathbb{V}{}_{\bar{\pi}_{T}}(g)\mathrm{IF}(g,Q_{T}).
\]

Assumption \ref{ass:BVM} combined to mild regularity assumptions
on $h$ and integrability conditions shows that
$
\mathbb{V}{}_{\bar{\pi}_{T}}\left(h\left(\vartheta_{n}\right)\right)\approx {\overline{\Sigma}_{h}} / {T}
$,
where $\overline{\Sigma}_{h}=|h'(\bar{\theta})|^{2}\overline{\Sigma}.$
Since $f(\vartheta_{n},\mathsf{U}_{n})$ and $g(\vartheta_{n},\mathsf{U}_{n})$
are clearly uncorrelated, this implies that
$
\mathbb{V}{}_{\bar{\pi}_{T}}\left(h\right)=\mathbb{V}{}_{\bar{\pi}_{T}}\left(f\right)+\mathbb{V}{}_{\bar{\pi}_{T}}\left(g\right)$.
From (\ref{eq:expansion}) we have 
$
\mathbb{V}{}_{\bar{\pi}_{T}}(f)\approx\overline{\Sigma}^{2}\mathbb{V}{}_{\bar{\pi}_{T}}\left(\Psi/T\right)\approx {\overline{\Sigma}_{f}} / {(TN_{T})}
$,
so it follows that
\begin{equation*}
\smash{\mathbb{V}{}_{\bar{\pi}_{T}}(g) \approx\frac{\overline{\Sigma}_{h}}{T}-\frac{\overline{\Sigma}_{f}}{TN_{T}}\approx\frac{\overline{\Sigma}_{h}}{T}.}
\end{equation*}
Using the reasoning of Section \ref{subsec:A-lower-bound} and the
calculations above we obtain
\begin{align}
\mathrm{IF}\left(h,Q_{T}\right) & \leq\frac{2}{\overline{\Sigma}_{h}}\left(\mathbb{V}{}_{\bar{\pi}_{T}}\left(\sqrt{T}f\right)\mathrm{IF}\left(f,Q_{T}\right)+\mathbb{V}{}_{\bar{\pi}_{T}}\left(\sqrt{T}g\right)\mathrm{IF}\left(g,Q_{T}\right)\right)\nonumber \\
 & \approx\frac{2}{\overline{\Sigma}_{h}}\left(\frac{\overline{\Sigma}_{f}}{N_{T}}\mathrm{IF}\left(\Psi,Q_{T}\right)+\overline{\Sigma}_{h}\mathrm{IF}\left(g,Q_{T}\right)\right).\label{eq:IFinequality}
\end{align}

Proposition \ref{Proposition:slowdown} states that $\mathrm{IF}(\Psi,Q_{T})$
is of order at least $T/N_{T}$ in probability as $T\to\infty$. Numerical
results suggest that in fact we have $\mathrm{IF}(\Psi,Q_{T})\approx A/\left(\delta_{T}\varrho_{\text{\textsc{U}}}(\kappa)\right)$
where $\delta_{T}=\psi N_{T}/T=-\mathrm{log}\thinspace\rho_{T}$ as
detailed in Section \ref{Sec: app_RE}, Figure \ref{fig:IFscoreasafunctionofdelta}.%
{} Hence, by substituting this expression of $\mathrm{IF}(\Psi,Q_{T})$
in (\ref{eq:IFinequality}), it follows that
\begin{align*}
\mathrm{IF}\left(h,Q_{T}\right) & \lesssim\frac{2}{\overline{\Sigma}_{h}}\left(\frac{\overline{\Sigma}_{f}}{N_{T}}\frac{A}{\delta_{T}\varrho_{\text{\textsc{U}}}(\kappa)}+\overline{\Sigma}_{h}\thinspace\mathrm{IF}\left(g,Q_{T}\right)\right).
\end{align*}
It can also be observed empirically from Figure \ref{Fig:longrange1},
described in Section \ref{Sec: app_RE}, that the autocorrelations
of $g(\vartheta_{n},\mathsf{U}_{n})$ decay exponentially, at a rate
independent of $T$. Thus we expect that, at least approximately,
we have $\mathrm{IF}\left(g,Q_{T}\right)\approx\mathrm{IF}(h,\widehat{Q}_{T})$
in probability. Therefore overall we have that for some constant $B$
\[
\mathrm{IF}\left(h,Q_{T}\right)\lesssim2\left(\frac{B}{\varrho_{\text{\textsc{U}}}(\kappa)\delta_{T}N_{T}}+\mathrm{IF}\left(h,\widehat{Q}_{T}\right)\right).
\]

We are interested in optimizing $\mathrm{CT}(h,Q_{T})\simeq N_{T}\times\mathrm{IF}(h,Q_{T})$
w.r.t. $\psi$ and $\beta$ where we recall from (\ref{eq:varianceloglikelihoodratiocorrelated})
that $\delta_{T}=\psi N_{T}/T=\psi\beta/\sqrt{T}=(\kappa^{2}\beta)/(\gamma^{2}\sqrt{T})$
as $\kappa^{2}=\psi\gamma^{2}$. Therefore 
\begin{equation}
\mathrm{CT}(h,Q_{T})\lesssim2T^{1/2}\left(\frac{C}{\beta\varrho_{\text{\textsc{U}}}(\kappa)\kappa^{2}}+\beta\mathrm{IF}\left(h,\widehat{Q}_{T}\right)\right),\label{eq:upperboundCT}
\end{equation}
where $C=B\gamma^{2}$, and thus the upper bound on $\mathrm{CT}(h,Q_{T})$
is minimized for
\[
\beta=\sqrt{\frac{C}{\varrho_{\text{\textsc{U}}}(\kappa)\kappa^{2}\mathrm{IF}\left(h,\widehat{Q}_{T}\right)}}.
\]
Substituting this expression in the upper bound on $\mathrm{IF}(h,\widehat{Q}_{T})$,
we obtain 
$
\mathrm{IF}\left(h,Q_{T}\right)\lesssim4\mathrm{IF}\big(h,\widehat{Q}_{T}\big)
$.
Then by further substituting the last expression in the resulting
upper bound on $\mathrm{CT}(h,Q_{T})$, we obtain 
\begin{equation}
\mathrm{CT}(h,Q_{T})\lesssim4\sqrt{C}T^{1/2}\mathrm{ARCT}\left(h,\widehat{Q}_{T}\right)\lesssim4\sqrt{C}T^{1/2}\mathrm{ARCT}(h,Q_{T}^{*})\label{eq:RCT}
\end{equation}
where $\mathrm{ARCT}$ is introduced in (\ref{eq:RCTdef}). Therefore
in practice we %
{} will minimize $\mathrm{ARCT}(h,Q_{T}^{*})$ w.r.t. $\kappa$, which
has already been addressed in Proposition \ref{cor:RIF}. The minimiser
$\hat{\kappa}$ is a function of $\mathrm{IF}(h,Q_{\textsc{ex}})$
which varies only slightly as $\mathrm{IF}(h,Q_{\textsc{ex}})$ varies
from $1$ to $\infty$ as observed in Figure \ref{fig:RCT_limit}.
Consequently, we propose the following procedure to optimize the performance
of CPM. Let $T$ be fixed and large enough for the asymptotic assumptions
to hold approximately. First, we choose a candidate value for $N$
and determine $\hat{\psi}$ such that the standard deviation of the
log-likelihood ratio estimator around the mode of the posterior satisfies
$\hat{\kappa}\approx1.4$. Second, fixing $\psi$ to $\hat{\psi}$,
we evaluate for several values of $\beta$ the computational time
$\mathrm{CT}(h,Q_{T})$ which we assume is of the form of the upper
bound (\ref{eq:upperboundCT}), i.e.,
\begin{equation}
\mathrm{CT}(h,Q_{T})={C_{0}}/{\beta}+C_{1}\beta,\label{eq:CTbeta}
\end{equation}
with $\kappa$ and $T$ kept constant; see Figure \ref{fig:limit_delta}
in Section \ref{Sec: app_RE} for empirical results. This function
is minimized for $\beta=\sqrt{C_{0}/C_{1}}$. Practically we evaluate
$\mathrm{CT}(h,Q_{T})$ only on a subset of the data. We then estimate
through regression the constants $C_{0}$ and $C_{1}$ by $\hat{C}_{0}$
and $\hat{C}_{1}$ which in turn provide the following estimate of
$\beta$
\begin{equation}
\hat{\beta}=\sqrt{\hat{C}_{0}/\hat{C}_{1}}.\label{eq:estimatebeta}
\end{equation}
We examine in Section \ref{Sec: app_RE} the assumptions made here,
illustrate this procedure and demonstrate its robustness.

\section{Applications\label{sec:Applications}}

\subsection{Random effects model\label{Sec: app_RE}}

We illustrate the performance of\ the PM and CPM\ schemes on a simple
Gaussian random effects model where 
\begin{equation}
X_{t}\overset{\text{i.i.d.}}{\sim}\mathcal{N}(\theta,1)\text{,\text{\hspace{1cm}}}\left.Y_{t}\right\vert X_{t}\sim\mathcal{N}(X_{t},1).\label{eq:Gaussianrandomeffect}
\end{equation}
We are interested in estimating $\theta$ (which has a true value
of $0.5$) to which we assign a zero-mean Gaussian prior with large
variance. In this scenario, the likelihood is known as $Y_{t}\sim\mathcal{N}(\theta,2)$.
This allows for detailed experimental analysis of the loglikelihood
error and the loglikelihood ratio error. This also allows us to implement
the MH algorithm with the true likelihood. The same normal random
walk proposal is used for all three schemes (MH, PM and CPM) and the
following unbiased estimator of the likelihood is used for the PM
and CPM schemes: 
\begin{equation}
\widehat{p}(y_{1:T}\mid\theta,U)={\displaystyle \prod\limits _{t=1}^{T}}\text{ }\text{ }\widehat{p}(y_{t}\mid\theta,U_{t}),\text{\hspace{1cm}}\widehat{p}(y_{t}\mid\theta,U_{t})=\frac{1}{N}\sum\limits _{i=1}^{N}\varphi\left(y_{t};\theta+U_{t,i},1\right),\text{\hspace{1cm}}U_{t,i}\overset{\text{i.i.d.}}{\sim}\mathcal{N}(0,1).\label{eq:likelihoodestimate}
\end{equation}
The inefficiency is estimated for all three schemes for $h\left(\theta\right)=\theta$
using $1+2\sum_{n=1}^{L}\widehat{\phi}_{n}$ where $\widehat{\phi}_{n}$
is the estimated correlation for $\theta$ at lag $n$ and $L$ is
a suitable cutoff value. We use the notation $Z=\log\left\{ \widehat{p}(y_{1:T}\mid\theta,U)/p(y_{1:T}\mid\theta)\right\} $
and $W=\log\left\{ \widehat{p}(y_{1:T}\mid\theta^{\prime},U')/p(y_{1:T}\mid\theta^{\prime})\right\} $
where $\theta^{\prime}\sim q\left(\theta,\cdot\right)$, $U^{\prime}\sim K_{\rho}\left(U,\cdot\right)$
and write $R=W-Z$ for $R_{T}\left(\theta,\theta'\right)$ defined
in (\ref{eq:loglikelihoodratioerror}).

As discussed in Section \ref{sec: optimisation}, for large datasets,
the relative inefficiency $\mathrm{RIF=\mathrm{IF}/\mathrm{IF}_{\mathrm{MH}}}$
and associated relative computing time $\mathrm{RCT}=N\times\mathrm{RIF}$
of the CPM scheme depend on the standard deviation $\kappa$ of $R$
at stationarity and the correlation parameter $\rho$. To validate
experimentally the results of Section \ref{sec:scaling}, we first
analyze the case where $T=8192$ in more detail. We run CPM using
a random walk proposal for $N=80$ and $\rho=0.9963$, so that $\kappa=1.145$.
The draws of $W$ and $Z$ at equilibrium, together with $R$, are
displayed in Figure \ref{fig_GE_8k_N80}. The draws of $Z$ are approximately
distributed according to $\mathcal{N}(\sigma^{2}/2,\sigma^{2})$ (middle
left), where the variance $\sigma^{2}$ is high. The draws of $R$
appear uncorrelated (in unreported tests) and their histogram is indistinguishable
from the expected theoretical distribution $\mathcal{N}(-\kappa^{2}/2,\kappa^{2})$
established in Theorem \ref{Theorem:conditionalCLTthetathetacand}
(middle right). This is in agreement with Theorem \ref{Theorem:CLTmarginal},
equation (\ref{eq:CLTexpressionmarginalequilibrium}), the posterior
of $\theta$ being concentrated. The resulting draws and correlogram
(bottom panel) of $\theta$ demonstrate low persistence. 

\begin{figure}[ptb]
\begin{centering}
\includegraphics[height=2.5in]{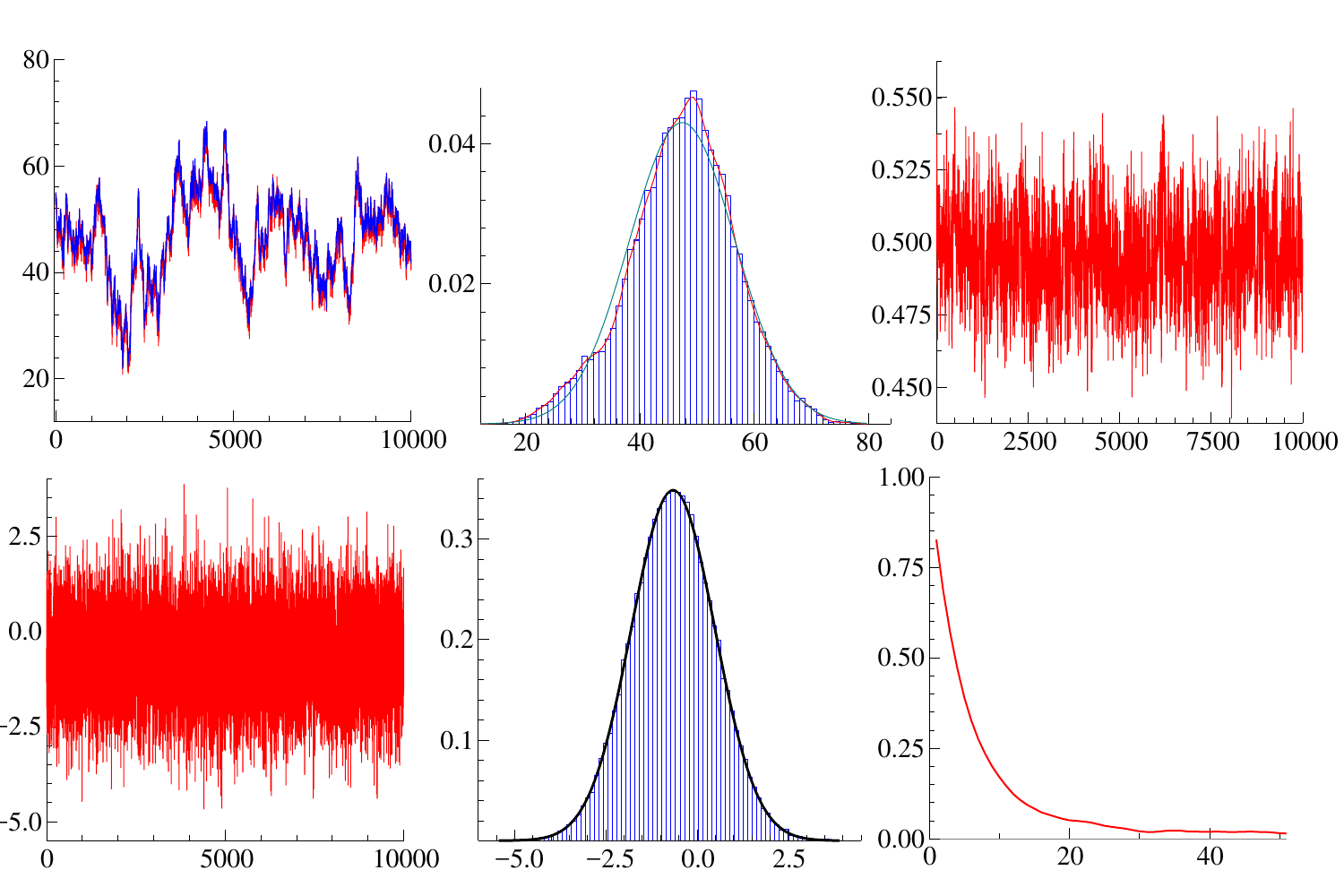} 
\par\end{centering}
\caption{Random effects model using CPM: $T=8192,$ $N=80$, $\rho=0.9963$.
Left: the first $10,000$ iterations of $W$ (blue) and $Z$ (red)
(top), the difference $R$ (bottom). Middle: Histograms of $Z$ (top)
and $R$ (bottom) and the theoretical Gaussian densities. Right: draws
of $\theta$ (top) and the corresponding correlogram (bottom). }
\label{fig_GE_8k_N80} 
\end{figure}

For the PM scheme, it is necessary to take $N=5000$ samples to ensure
that the variance of $Z$ evaluated at a central value $\widehat{\theta}$
is approximately one \citep{doucet2015efficient}. We next validate
experimentally the theoretical results of Section \ref{sec: optimisation}
by investigating the performance of CPM for this dataset, varying
$N$, and thus also hence $\kappa^{2}=\mathbb{V}(R)$, while keeping
$\rho=0.9963$.%
{} \ Figure \ref{analyseRE} displays the values of $\mathrm{RIF}$
and $\mathrm{RCT}$ against $\kappa$ as well as the marginal acceptance
probabilities, showing that $\mathrm{RCT}$ is approximately minimized
around $\kappa=1.6$ close to the minimizing argument of $\mathrm{ARCT}(h,Q_{T}^{*})$
established in Proposition \ref{cor:RIF} which satisfies (\ref{eq:RCT}).
The bottom two plots show that $\log\thinspace\kappa^{2}$ decreases
linearly with $\log\thinspace N$ as expected (bottom right) and that
the marginal probability of acceptance in the CPM\ scheme is close
to the asymptotic lower bound (bottom left) given by (\ref{eq:inequalityaverageacceptanceproba}).
From these experimental results, it is clear that for all values of
$N$ considered, the gains of the CPM scheme over the PM method in
terms of $\mathrm{RCT}$ are very significant. The optimal value of
$N$ for the CPM scheme is $35$ ($\kappa=1.6$) which gives $\mathrm{RCT}$
$=$ $61$ against a value of $\mathrm{RCT}=$ $14100$ for the PM\ scheme.
As a consequence, PM would take more than $200$ times as long in
computational time to produce an estimate of the posterior mean of
$\theta$ of the same accuracy.%
{} 

\begin{figure}[ptb]
\begin{centering}
\includegraphics[height=2.5in]{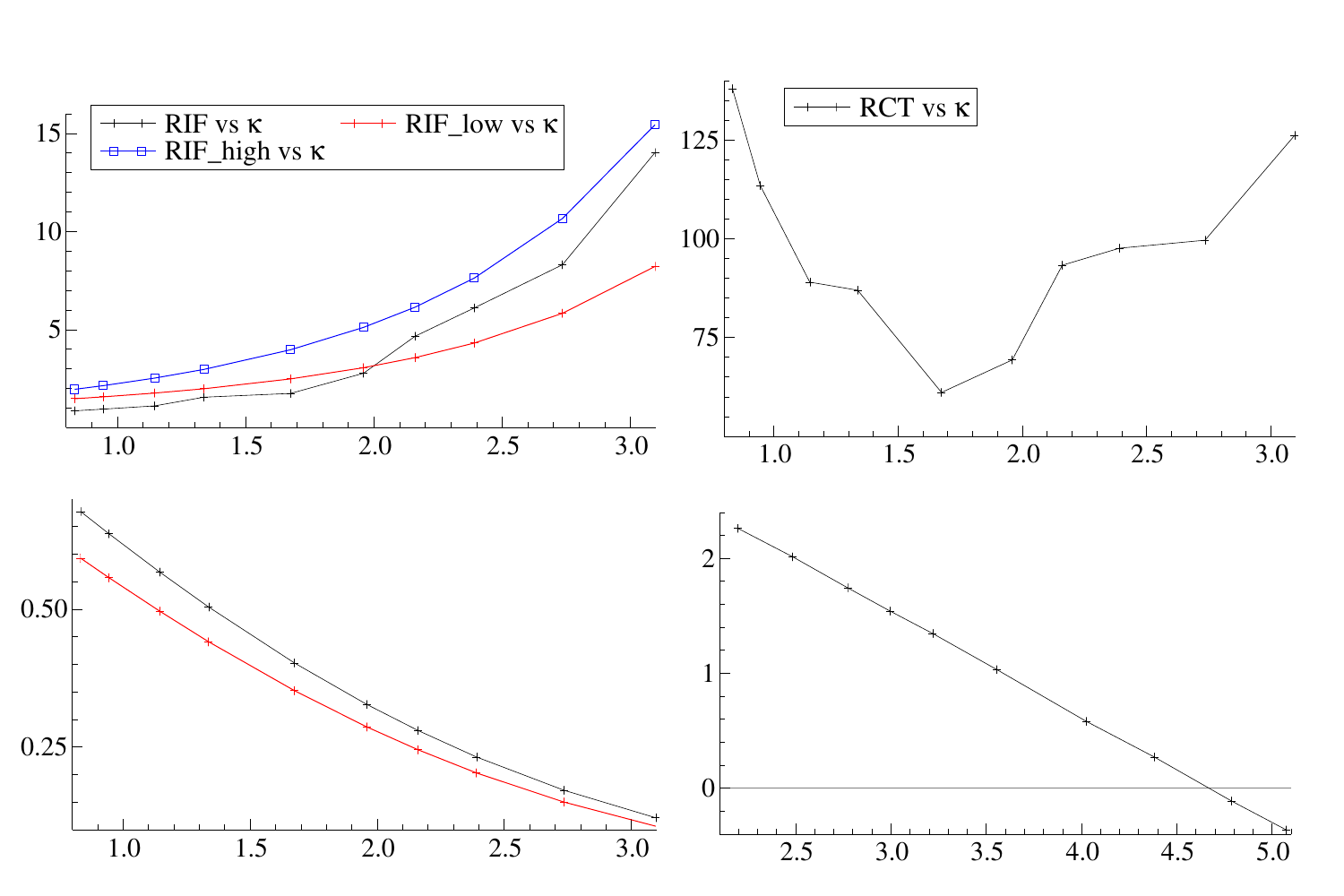} 
\par\end{centering}
\caption{Random effects model using CPM: $T=8192$, $\rho$ fixed and various
$N$. $\mathrm{RIF}_{\mathrm{CPM}}$ (top left) and $\mathrm{RIF}_{Q^{*}}$
for $\mathrm{IF}(h,Q_{\textsc{ex}})=1$ and $\mathrm{IF}(h,Q_{\textsc{ex}})=\infty$
against $\kappa$, see Corollary \ref{cor:RIF}. $\mathrm{RCT}_{\mathrm{CPM}}$
against $\kappa$ (top right). The acceptance probability of the CPM\textbf{
}and the theoretical lower bound, of (\ref{eq:inequalityaverageacceptanceproba}),
against $\kappa$ (bottom left). $\log(\kappa^{2})$ against $\log(N)$
(bottom right).}
\label{analyseRE} 
\end{figure}

We next investigate the performance of CPM when $T$ and $N=\beta\sqrt{T}$
vary while $\psi$, equivalently $\rho$, is scaled such that $\kappa$
is approximately constant. The results are recorded in Table \ref{table_GE_diifT}.
They suggest that the scaling $N=\beta\sqrt{T}$ is successful as
$\mathrm{IF}_{\mathrm{CPM}}$ appears to stabilize whereas the scaling
$N=\beta T$ would be necessary for $\mathrm{IF}_{\mathrm{PM}}$ to
stabilize. Experimental results not reported here confirm that if
$N$ grows at a slower rate than $\sqrt{T}$ , then IF$_{\mathrm{CPM}}$
increases without bound with $T$. 

\begin{table}[h!]
\centering%
\begin{tabular}{lllllllll}
\hline 
$T$  & $N$  & $\rho$  & $\kappa^{2}$  & $\bar{\varrho}_{\mathrm{MH}}$  & IF$_{\mathrm{MH}}$  & $\bar{\varrho}_{\mathrm{CPM}}$  & IF$_{\mathrm{CPM}}$  & RIF$_{\mathrm{CPM}}$\tabularnewline
\hline 
$1024$  & $19$  & $0.9894$  & $2.0$  & $0.71$  & \multicolumn{1}{r}{$10.71$} & $0.48$  & $43.26$  & \multicolumn{1}{r}{$4.0$4}\tabularnewline
$2048$  & $28$  & $0.9925$  & $1.9$  & $0.69$  & \multicolumn{1}{r}{$8.21$} & $0.49$  & $38.50$  & \multicolumn{1}{r}{$4.61$}\tabularnewline
$4096$  & $39$  & $0.9947$  & $1.7$  & $0.72$  & \multicolumn{1}{r}{$11.75$} & $0.51$  & $21.0$1  & \multicolumn{1}{r}{$1.79$}\tabularnewline
$8192$  & $56$  & $0.9962$  & $1.8$  & $0.81$  & \multicolumn{1}{r}{$15.61$} & $0.50$  & $24.25$  & \multicolumn{1}{r}{$1.55$}\tabularnewline
$16384$  & $79$  & $0.9974$  & $1.8$  & $0.70$  & \multicolumn{1}{r}{$9.37$} & $0.50$  & $20.05$  & \multicolumn{1}{r}{$2.14$}\tabularnewline
\hline 
\end{tabular}\caption{Random effects model. Inefficiency and acceptance probabilities for
MH and CPM, $N=\beta\sqrt{T}$ and $\rho$ selected such that $\kappa^{2}$
is approximately constant.}
\label{table_GE_diifT} 
\end{table}

We now justify empirically some of the assumptions made in Section
\ref{sec: optimisation} on how to select the parameters $\psi$ and
$\beta$. First, we show that the CPM process can be thought of as
a combination of two different processes: a `slow' moving component
$f(\mathsf{U}_{n})\approx\hat{f}(\mathsf{U}_{n})=\widehat{\theta}_{T}+\overline{\Sigma}T^{-1}\Psi(\widehat{\theta}_{T},\mathsf{U}_{n})$,
the modified score error associated to the score error $\Psi(\widehat{\theta}_{T},\mathsf{U}_{n})$
defined in (\ref{eq:scoreerror}), and a `fast' component $g(\vartheta_{n},\mathsf{U}_{n})=\vartheta_{n}-f(\mathsf{U}_{n})\approx\hat{g}(\vartheta_{n},\mathsf{U}_{n})=\vartheta_{n}-\hat{f}(\mathsf{U}_{n})$.
We display these components for a CPM run and the associated correlograms
in Figure \ref{Fig:longrange1}%
{} for fixed $\kappa$. We also illustrate in Figure \ref{fig:IFscoreasafunctionofdelta}
that $\mathrm{IF}(\Psi,Q_{T})\approx A/\left(\delta_{T}\varrho_{\text{\textsc{U}}}(\kappa)\right)$
where $\delta_{T}=\psi N_{T}/T=-\mathrm{log}\thinspace\rho_{T}$.
The mixing of the score error deteriorates as $\rho$ approaches one.
For fixed $\kappa$, as $N$ increases then $\rho$ decreases so the
autocorrelation function of the score decays faster and, additionally,
the variability of the modified score error decreases. Hence its contribution
to the autocorrelation function of the CPM scheme decreases. The optimization
scheme developed in Section \ref{subsec:Optimization} essentially
selects $\beta$ such that the asymptotic variances of both the slow
and fast components $\hat{f}(\mathsf{U}_{n})$ and $\hat{g}(\vartheta_{n},\mathsf{U}_{n})$
are of the same order.

\begin{figure}[ptb]
\begin{centering}
\includegraphics[height=2.5in]{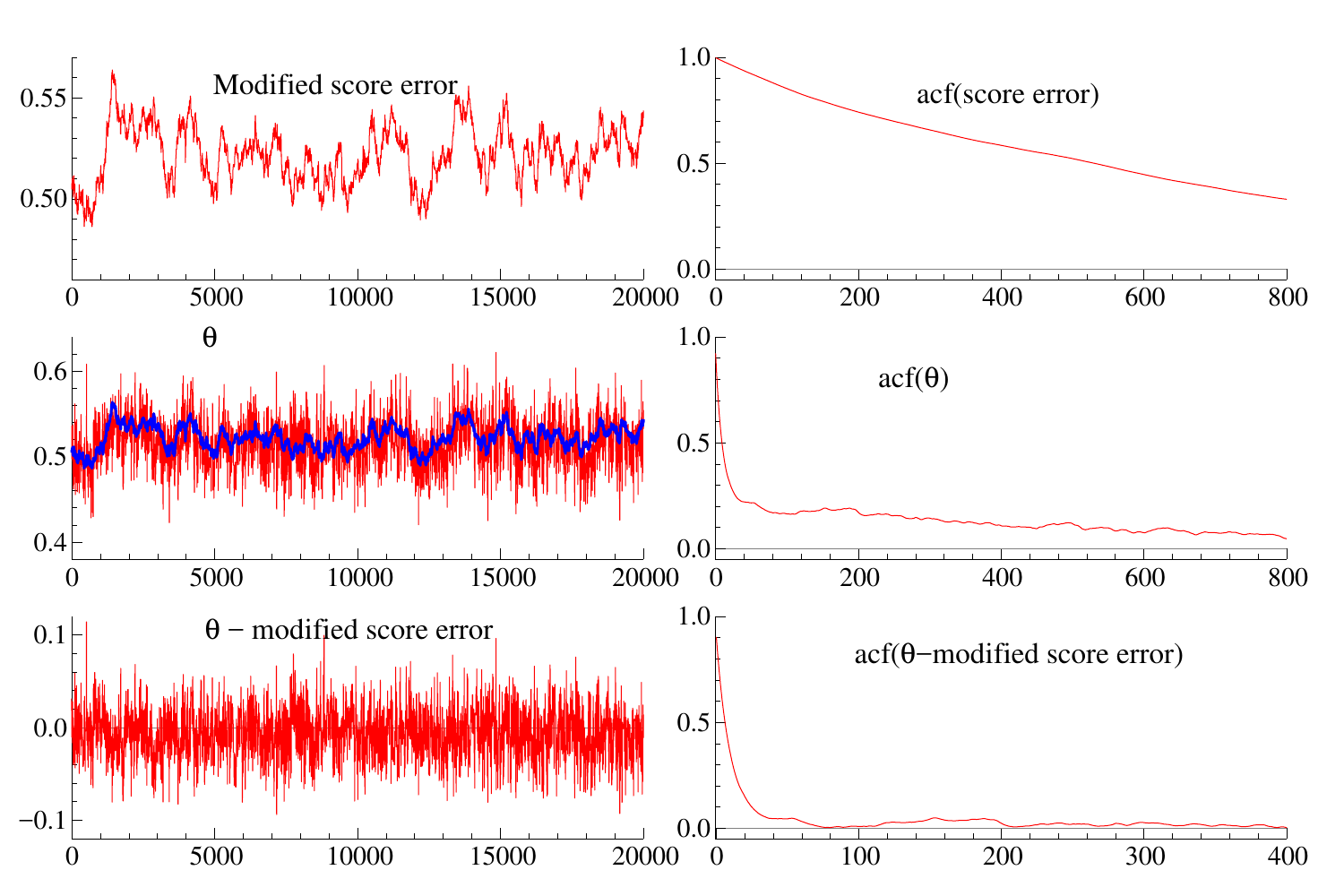} 
\par\end{centering}
\caption{Random effects model using CPM: $T=2,560$, $\beta=0.12,$ $N=6$,
$\rho=0.9977$. Top: modified score error $\hat{f}(\mathsf{U}_{n})$
(left) and its correlogram (right). Middle: parameter $\vartheta_{n}$
(red) and modified score error (blue) (left) and correlogram $\vartheta_{n}$
(right). Bottom: residual $\hat{g}(\vartheta_{n},\mathsf{U}_{n})$
(left) and correlogram (right).}

\label{Fig:longrange1}
\end{figure}

\begin{figure}[ptb]
\begin{centering}
\includegraphics[height=2.5in]{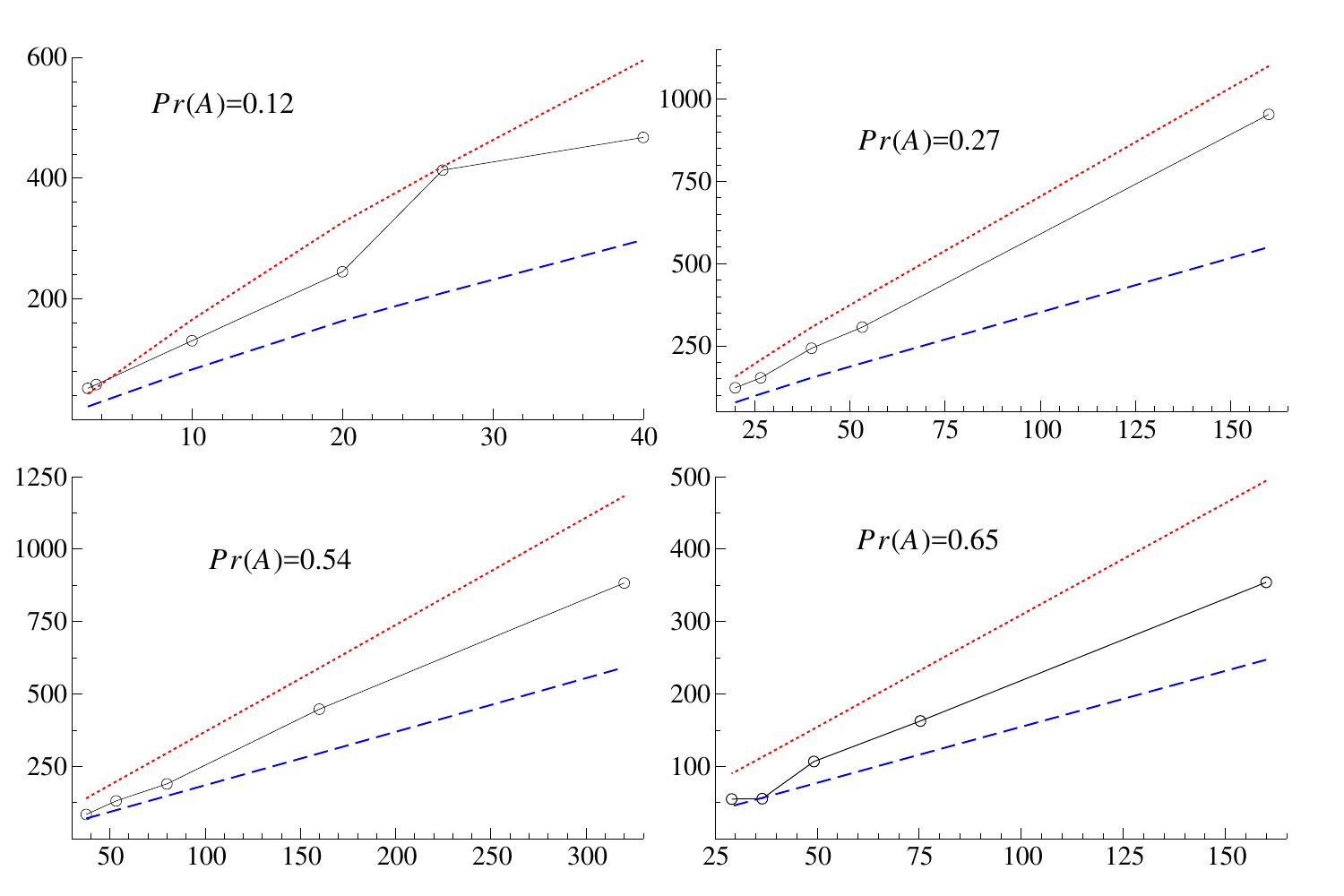} 
\par\end{centering}
\caption{Random effects model using CPM: $T=320$. Inefficiency of the score
error (black line) plotted against $1/\delta$ for four different
values of $\kappa^{2}=9.5,4.9,1.42,0.75$ from top left to bottom
right clockwise and corresponding acceptance probability $\bar{\varrho}_{\mathrm{CPM}}$.
Upper bound $2/(\delta\bar{\varrho}_{\mathrm{CPM}})$ (dotted red)
and lower bound $1/(\delta\bar{\varrho}_{\mathrm{CPM}})$ (dotted
blue).}
\label{fig:IFscoreasafunctionofdelta} 
\end{figure}

To apply the optimization procedure, we first run the algorithm for
$N=20$ and tune $\psi$ to get $\hat{\kappa}\approx1.4$. For the
resulting value $\hat{\psi}$, we then evaluate $\mathrm{CT}_{\mathrm{CPM}}=N\times\mathrm{IF}{}_{\mathrm{CPM}}$
for various values of $\beta$ and perform a regression based on (\ref{eq:CTbeta})-(\ref{eq:estimatebeta}).
Practically, we can only use a subset of the data to perform this
optimization to speed up computation. The results are fairly insensitive
to the size of this subset as illustrated in Figure \ref{fig:limit_delta}
and suggest selecting $\beta$ around 0.25.

\begin{figure}[ptb]
\begin{centering}
\includegraphics[height=2.2in]{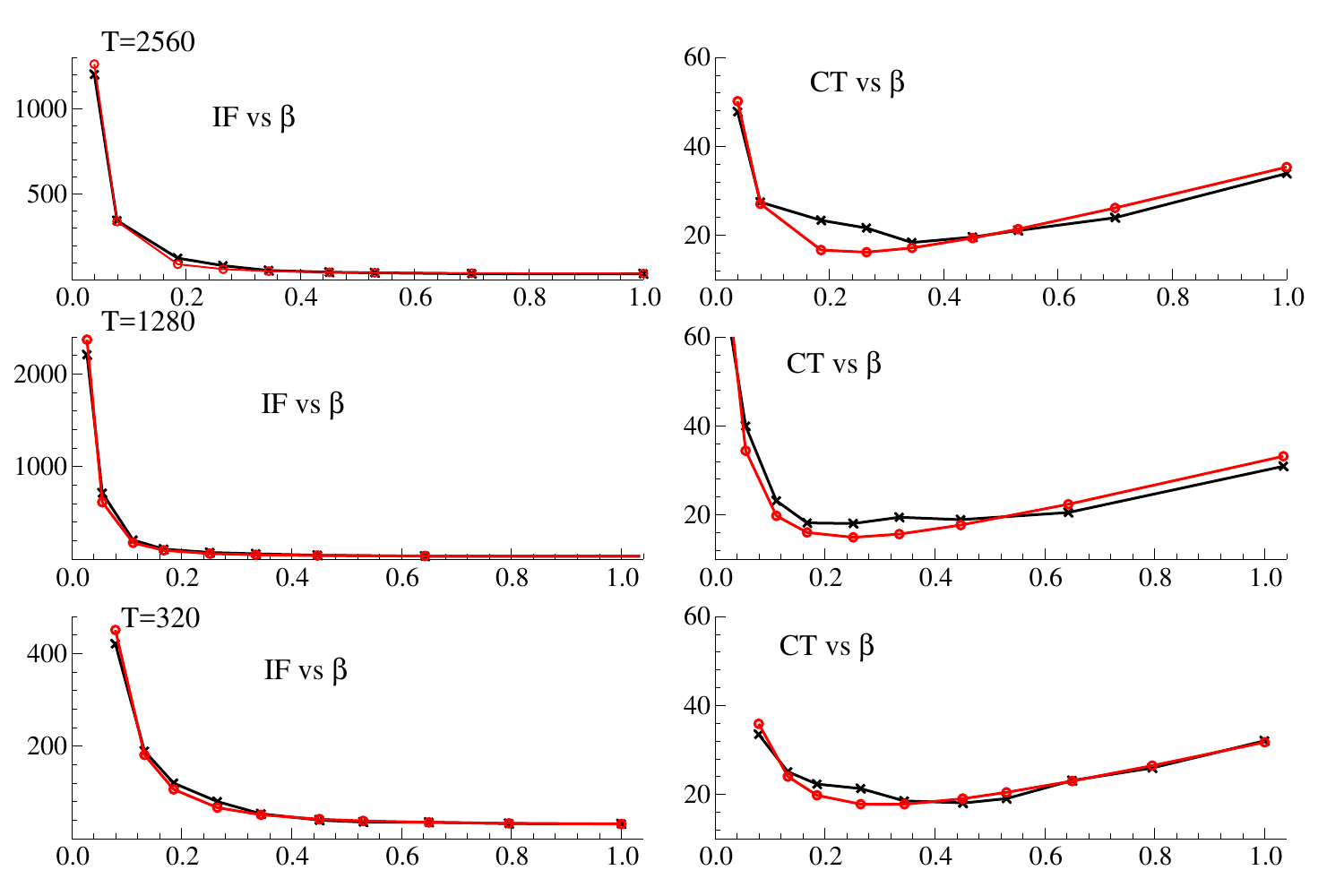} 
\par\end{centering}
\caption{Random effects model. IF and CT\ as a function of $\beta$. Top to
bottom: $T=2560,1280,320$. Left: $\mathrm{IF}=\mathrm{IACT}$ vs
$\beta$. Right: $\mathrm{CT}=\mathrm{IF}\times\beta$ vs $\beta$.
The regression fit based upon estimated CT is included in red.}
\label{fig:limit_delta} 
\end{figure}

\subsection{Heston stochastic volatility model\label{Sec:SVscalar}}

We investigate here the empirical performance of CPM on the Heston
model \citep{Heston1993,chopin2017}, a popular stochastic volatility
model with leverage which is a partially observed diffusion model.
The logarithm of observed price $P\left(t\right)$ evolves according
to 
\begin{align*}
\mathrm{d\log}P(t) & =\sigma(t)\mathrm{d}B(t),\\
\mathrm{d}\sigma^{2}(t) & =\upsilon\left\{ \mu-\sigma^{2}\left(t\right)\right\} \mathrm{d}t+\omega\sigma\left(t\right)\mathrm{d}W(t),
\end{align*}
where $\sigma(t)$ is a stationary latent spot stochastic volatility
process such that $\sigma^{2}\left(t\right)\sim\mathcal{G}\left(\alpha,\beta\right)$
where $\mathcal{G}\left(\alpha,\beta\right)$ is the gamma distribution
of shape $\alpha=2\mu\upsilon/\omega^{2}$ and rate $\beta=2\upsilon/\omega^{2}$.
The Brownian motions $B(t)$ and $W\left(t\right)$ are correlated
with $\chi=$corr$\left\{ B(t),W(t)\right\} $. We shall suppose that
the log prices are observed at equally spaced times $\tau_{0}<\cdots<\tau_{T}$,
where $\triangle=\tau_{s}-\tau_{s-1}$ for all $s$ and we denote
$Y_{s}=\log P(\tau_{s})-\log P(\tau_{s-1})$ for $s=1,...,T$. Conditional
on the volatility and driving processes $\sigma^{2}(t)$ and $W(t)$,
the distribution of these returns is given by
\begin{align}
Y_{s} & \sim\mathcal{N}\left\{ \chi\gamma_{s};(1-\chi^{2})\sigma_{s}^{2\ast}\right\} ,\label{eq:vol_meas}\\
\sigma_{s}^{2\ast} & =\int_{\tau_{s-1}}^{\tau_{s}}\sigma^{2}(t)\mathrm{d}t,\quad\gamma_{s}=\int_{\tau_{s-1}}^{\tau_{s}}\sigma(t)\mathrm{d}W(t).\label{eq:vol_state}
\end{align}
To perform inference, we first reparameterise the model in terms of
$x(t)=\log\sigma^{2}(t)$. We apply Itô's lemma to $x(t)$ and discretize
the resulting diffusion using an Euler scheme. %
We denote by $x_{k}^{s}=x\left(\tau_{s}+\epsilon i\right)$ where
$\epsilon=\triangle/K$ for $i=0,...,I$ so that $x_{I}^{s}=x_{0}^{s+1}$.
The evolution of these latent variables is given by
\[
x_{i+1}^{s}=x_{i}^{s}+\epsilon\left[\upsilon\left\{ \mu e^{-x_{i}^{s}}-1\right\} -\frac{\omega^{2}}{2}e^{-x_{i}^{s}}\right]+\sqrt{\epsilon}\omega e^{-x_{i}^{s}/2}\eta_{i},
\]
where $\eta_{i}\overset{\text{i.i.d.}}{\sim}\mathcal{N}\left(0,1\right)$
for $i=0,...,I-1$. Under the Euler scheme, the distribution of the
returns is given by
\begin{align}
Y_{s} & \sim\mathcal{N}\left\{ \chi\widehat{\gamma}_{s};(1-\chi^{2})\widehat{\sigma}_{s}^{2\ast}\right\} ,\label{eq:obsEuler}\\
\widehat{\sigma}_{s}^{2\ast} & =\epsilon{\textstyle \sum_{i=1}^{I}}\exp(x_{i}^{s}),\quad\widehat{\gamma}_{t}=\sqrt{\epsilon}{\textstyle \sum_{i=1}^{I}}\exp(x_{i}^{s}/2)\eta_{i}.\label{eq:volatilityEuler}
\end{align}
where $\widehat{\sigma}_{t}^{2\ast}$ and $\widehat{\gamma}_{t}$
are the Euler approximations to the corresponding expressions in (\ref{eq:vol_state}).
We are interested in inferring $\theta=(\mu,\upsilon,\omega,\chi)$
given $T=4,000$ daily returns $y_{1:T}$ from the S\&P 500 index
from 15/08/1990 to 03/07/2006%
. We use here $I=10$. Although the state is scalar, it is very difficult
to perform inference using standard MCMC techniques as this involves
$T\times I=40000$ highly correlated latent variates.

We first run the CPM by keeping the parameter fixed at the posterior
mean $\hat{\theta}$, estimated from a full CPM run, and only updating
the auxiliary variables. We display the histograms of $Z=\log\widehat{p}(y_{1:T}\mid\hat{\theta},U),$
$W=\log\widehat{p}(y_{1:T}\mid\hat{\theta},U')$ and $R=\log\{\widehat{p}(y_{1:T}\mid\hat{\theta},U')/\widehat{p}(y_{1:T}\mid\hat{\theta},U)\}$
in Figure \ref{fig:EmpiricalCLTSV} for $N=80$ and $N=300$ using
the parameters given in Table \ref{table_Heston-2}%
. We observe that $R$ is approximately distributed according to $\mathcal{N}(-\kappa^{2}/2,\kappa^{2})$
for $\kappa=1.35$ in both cases. Additionally the sequence of estimates
is almost uncorrelated across CPM iterations.

\begin{figure}[h!]
\centering\includegraphics[height=2.5in]{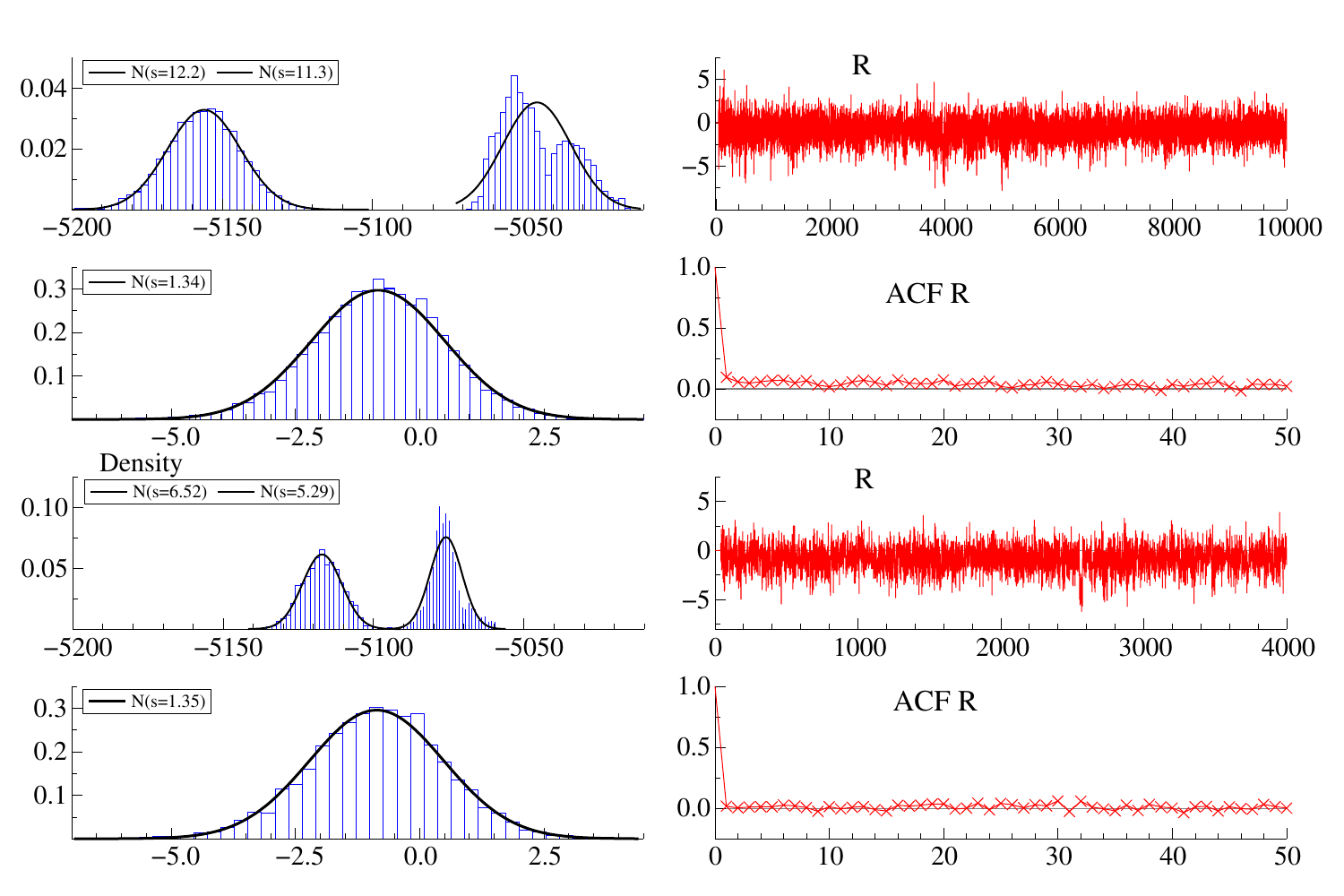}
\caption{Histograms of $W,Z$ for $N=80$ (1st left), $N=300$ (3rd left),
histograms of $R$ for $N=80$ (2nd left), $N=300$ (4th left). $R$
across CPM iterations and associated correlograms for $N=80$ (1st
right, 2nd right), $N=300$ (3rd right, 4th right).}
\label{fig:EmpiricalCLTSV} 
\end{figure}

We then run the CPM using a random walk proposal. Using $N=300$,
we first select $\psi=0.125$ to get the standard deviation of the
loglikelihood ratio estimator at $\hat{\theta}$ around $\kappa=1.4$.
We then run the CPM schemes for other values of $N$, $N=\beta\sqrt{T}$,
and compute $\mathrm{CT}=N\times\mathrm{IF}$. %
These results are summarized in Table \ref{table_Heston-2}. The posterior
estimates are in very close agreement across the different values
of $N$. In unreported results, we observe empirically that the dependence
of $\mathrm{CT}$ on of $\beta$ for parameters $\left(\mu,\phi:=e^{-\upsilon},\omega,\chi\right)$,
matches (\ref{eq:CTbeta}) which can be optimized, suggesting that
an optimal value of $N$ around 70-80. As in the random effect scenario,
we also observe on datasets of increasing length that the scaling
$N=\beta\sqrt{T}$ is successful as $\mathrm{IF}_{\mathrm{CPM}}$
appears to stabilize. In this context, the PM procedure is extremely
expensive computationally as we need approximately $N=20000$ to obtain
a standard deviation of $Z$ around one \citep{doucet2015efficient},
our implementation taking 7 minutes per iteration to run on a standard
desktop. In terms of $\mathrm{CT}$, CPM is approximately 100 times
more efficent than PM.

\begin{table}[h!]
\centering%
\begin{tabular}{llllll}
\hline 
$\mathbb{E}(\theta)$ (SD$(\theta)$ )  & $\mu$  & $\phi$  & $\omega$  & $\chi$  & CPM $\rho$\tabularnewline
\hline 
$N=80$  & 1.258 (0.098)  & 0.981 (0.0027)  & 0.142 (0.0099)  & -0.676 (0.027)  & 0.9975\tabularnewline
$N=150$  & 1.253 (0.098)  & 0.981 (0.0028)  & 0.142 (0.0105)  & -0.672 (0.034)  & 0.9953\tabularnewline
$N=300$  & 1.255 (0.099)  & 0.981 (0.0028)  & 0.142 (0.0110)  & -0.671 (0.032)  & 0.9907\tabularnewline
 &  &  &  &  & \\
\hline 
CT($\theta)$  & $\mu$  & $\phi$  & $\omega$  & $\chi$  & $\bar{\varrho}_{\mathrm{CPM}}$ \tabularnewline
\hline 
$N=80$  & 9,995 & 12,555 & 13,571 & 33,794  & 0.276\tabularnewline
$N=150$  & 19,691  & 20,256  & 17,931  & 32,588  & 0.272\tabularnewline
$N=300$  & 32,970 & 30,432  & 35,103  & 35,505  & 0.281\tabularnewline
\hline 
\end{tabular}\caption{Heston model. Posterior means and standard deviations over 10,000
iterations (top). $\mathrm{CT}=\mathrm{IF}\times N$ for the CPM scheme
for $N=\beta\sqrt{T}$ and $\rho$ selected such that $\kappa\approx1.4$
at $\hat{\theta}$.}
\label{table_Heston-2} 
\end{table}

\subsection{Linear Gaussian state-space model\label{subsec:LinearGaussianSSM}}

We examine empirically the performance of the CPM for multivariate
state-space models using the particle filter with Hilbert sort described
in Algorithm \ref{alg:PFHilbert} and compare it to the PM procedure.
Attention is restricted to a linear Gaussian state-space model which
allows exact calculation of the likelihood and of the loglikelihood
error $Z_{T}\left(\theta,U\right)=\log\widehat{p}(Y_{1:T}\mid\theta,U)-\log p(Y_{1:T}\mid\theta)$.
Very similar empirical results for non-linear non-Gaussian state-space
models were observed.

We consider the model discussed in \citep{AdamAnthony2016,Jacob2016}
where $\left\{ X_{t};t\geq1\right\} $ and $\left\{ Y_{t};t\geq1\right\} $
are $\mathbb{R}^{k}$-valued with
\begin{equation}
X_{1}\sim\mathcal{N}\left(0,I_{n}\right),\text{\hspace{1cm}}\text{ }X_{t+1}=A_{\theta}X_{t}+V_{t+1},\text{\hspace{1cm}}Y_{t}=X_{t}+W_{t},\label{eq:lineargaussianSSM}
\end{equation}
where $V_{t}\overset{\text{i.i.d.}}{\sim}\mathcal{N}\left(0_{k},I_{k}\right),W_{t}\overset{\text{i.i.d.}}{\sim}\mathcal{N}\left(0_{k},I_{k}\right)$
and $A_{\theta}^{i,j}=\theta^{\left|i-j\right|+1}$. %

We use for the proposal density within the particle filter the transition
density of $\left\{ X_{t};t\geq1\right\} $. We first examine the
achieved correlation between successive draws of $Z=\log\left\{ \widehat{p}(y_{1:T}\mid\theta,U)/p(y_{1:T}\mid\theta)\right\} $
by running the CPM procedure holding the parameter fixed and equal
to its true value $\theta=0.4$. This allows for the examination of
the variance of $R=\log\left\{ \widehat{p}(y_{1:T}\mid\theta',U')/p(y_{1:T}\mid\theta)\right\} -Z$
where $U'\sim K_{\rho}\left(U,\cdot\right)$ is the proposal when
$\theta'=\theta$. We examine the results for various values of $T$,
with $N=\left\lceil \beta T^{\alpha}\right\rceil $ and $\rho=\exp$$\left(-\psi N/T\right)$
for $k\in\{2,3,4\}$.

We will now discuss the choice of $\alpha$ for state-space models.
In sharp contrast to random effects models, we found empirically that
there are dimension dependent limitations to the realized correlation
that can be achieved through the particle filter with Hilbert sort.
In particular we found that, due to resampling, the realized correlation
is limited by $\min\left\{ 1-c_{1}N^{-1/k},1-c_{2}\delta\right\} $
for some constants $c_{1},c_{2}$, unless we set $\delta$ extremely
small. A back of the envelope calculation suggests that to overcome
this limitation we would need to choose $\delta$ around $N^{-2/k}$.
Since $\delta\propto N/T$ this would result in choosing $N\propto T^{k/(k+2)}$
in which case empirical results suggest that the number of particles
is too small to control $\kappa^{2}$, the variance of the loglikelihood
ratio error. Since the inefficiency tends to increase if we set $\delta$
too small, the above considerations suggest that we set $\delta=N^{-1/k}$,
or equivalently scaling $N\propto T^{k/(k+1)}$. Thus for the following
examples we set $\alpha=k/k+1$.

We run the simulated chain for 1000 iterations recording $\kappa^{2}=\mathbb{V}\left(R\right)$
and $\sigma^{2}=\mathbb{V}\left(Z\right)$. The values of $\beta$
and $\psi$ have been chosen so that they result in a particular target
value of $\kappa^{2}$ as will be evident from the following tables.
The asymptotic acceptance probability of the CPM scheme is thus in
this case given by$\varrho_{\text{\textsc{CPM}}}\left(\kappa\right):=\varrho_{\text{\textsc{U}}}\left(\kappa\right)=2\Phi\left(-\kappa/2\right)$
while it is $\varrho_{\text{\textsc{PM}}}\left(\sigma\right)=2\Phi\left(-\sigma/\sqrt{2}\right)$
for the PM \citep{doucet2015efficient}.

The results for $k=2$ are reported in Table \ref{table:dim2}, where
the two eigenvalues of $A_{\theta}$ are 0.56 and 0.24. It is clear
that the proposed scaling rules result in values of $\kappa^{2}$
which are approximately constant, remaining at values of around $2$
for $T\geq1600$. The implied acceptance probability of the CPM scheme
$\varrho_{\text{\textsc{CPM}}}\left(\kappa\right)$ therefore settles
at a value just below 0.5. By contrast the marginal variance $\sigma^{2}$
increases at the expected rate $T^{1-\alpha}$ and accordingly the
implied acceptance probability $\varrho_{\text{\textsc{PM}}}\left(\sigma\right)$
deteriorates rapidly, even for $T=100$. Similar results are found
for the case $k=3$, reported in Table \ref{table:dim3}, where the
eigenvalues of $\Lambda$ are ($0.6605$,$0.3360$,$0.2035$) resulting
in a model with moderately high persistence. In this case we set $\alpha=3/4$.
Although less dramatic, the implied gain of the CPM method over the
PM is substantial even for $T=100$ and increases as $T$ goes up.
The variance $\kappa^{2}$ appears again to stabilize at a value less
than $3$. 

\begin{table}[h!]
\centering %
\begin{tabular}{lllllll}
\hline 
\multicolumn{7}{l}{State dimension $k=2$ with $\beta=0.854,\thinspace\psi=0.12,\thinspace\alpha=2/3$}\tabularnewline
\hline 
$T$  & $N$  & $\delta=-\log\thinspace\rho$  & $\kappa^{2}$  & $\sigma^{2}$  & $\varrho_{\text{\textsc{CPM}}}\left(\kappa\right)$  & $\varrho_{\text{\textsc{PM}}}\left(\sigma\right)$\tabularnewline
\hline 
$100$  & $18$  & $0.0216$  & $2.59$  & $16.3$  & $0.42$  & $0.004$\tabularnewline
$400$  & $46$  & $0.0138$  & $2.71$  & $20.5$  & $0.41$  & $0.0013$\tabularnewline
$1600$  & $116$  & $0.0087$  & $2.01$  & $34.1$  & $0.48$  & $3.6\times10^{-5}$\tabularnewline
$6400$  & $294$  & $0.0055$  & $2.07$  & $49.7$  & $0.47$  & $6.0\times10^{-7}$\tabularnewline
$25600$  & $742$  & $0.0034$  & $1.97$  & $105.9$  & $0.48$  & $3.4\times10^{-13}$\tabularnewline
\hline 
\end{tabular}\caption{Linear state-space model. Results for $k=2$ for varying $T$.}
\label{table:dim2}
\end{table}
\begin{table}[h!]
\centering %
\begin{tabular}{lllllll}
\hline 
\multicolumn{7}{l}{State dimension $k=3$ with $\beta=1.57,\thinspace\psi=0.042,\thinspace$
$\alpha=3/4$}\tabularnewline
\hline 
$T$  & $N$  & $\delta=-\log\thinspace\rho$  & $\kappa^{2}$  & $\sigma^{2}$  & $\varrho_{\text{\textsc{CPM}}}\left(\kappa\right)$  & $\varrho_{\text{\textsc{PM}}}\left(\sigma\right)$\tabularnewline
\hline 
$100$  & $49$  & $0.0205$  & $3.15$  & $13.7$  & $0.37$  & $0.0089$\tabularnewline
$400$  & $140$  & $0.0147$  & $2.97$  & $16.6$  & $0.39$  & $0.0039$\tabularnewline
$1600$  & $397$  & $0.0104$  & $3.44$  & $26.7$  & $0.35$  & $0.00025$\tabularnewline
$6400$  & $1124$  & $0.0074$  & $3.03$  & $34.1$  & $0.38$  & $3.66\times10^{-5}$\tabularnewline
$25600$  & $3181$  & $0.0052$  & $2.69$  & $49.4$  & $0.41$  & $6.74\times10^{-7}$\tabularnewline
\hline 
\end{tabular}\caption{Linear state-space model. Results for $k=3$ for varying $T$. }
\label{table:dim3}
\end{table}
%

The full CPM procedure is now implemented for $T=400$ and $T=6400$
when $k=2$ and $k=3$ using the parameters of Tables \ref{table:dim2}
and \ref{table:dim3}. An autoregressive proposal in the Metropolis
algorithm is employed for $\theta$ which is based on the posterior
mode and the second derivative at this point \citep{Tran2016}.%
{} %

The results for $k=3$ and $T=6400$ are shown in Figure \ref{fig:CPMd3T6400}.
The mixing for $\theta$ is fairly rapid from the achieved value of
$\kappa=2.26$. The empirical distributions of $Z$ under $m$ and
$\bar{\pi}$ are plotted (middle left) and are close to the theoretical
distributions $\mathcal{N}\left(-\sigma^{2}/2,\sigma^{2}\right)$
and $\mathcal{N}\left(\sigma^{2}/2,\sigma^{2}\right)$ respectively,
where $\sigma=7.5$. The middle right plot and the third row show
the draws of $R$, the empirical distribution and the associated correlogram
arising from the CPM scheme. It is clear that $R$ is approximately
distributed according to $\mathcal{N}\left(-\kappa^{2}/2,\kappa^{2}\right)$
for some $\kappa$, which is overlaid, but the correlations across
iterations vanish slowler than for random effect models and one-dimensional
state-space models. The gain over the PM method is around $\sigma^{2}$
meaning we need around 50 times as many particles in the PM method
to achieve similar results to the CPM scheme. When $T=400,$ we obtained
$\kappa=1.92$ and $\sigma=4.30$ resulting in gains over the PM of
approximately 18 fold. When $k=2$, the gains are more impressive
and are around $25$ fold for $T=400$ and $80$ fold when $T=6400$.
\begin{figure}[ptb]
\centering \includegraphics[width=3.5558in,height=2.3728in]{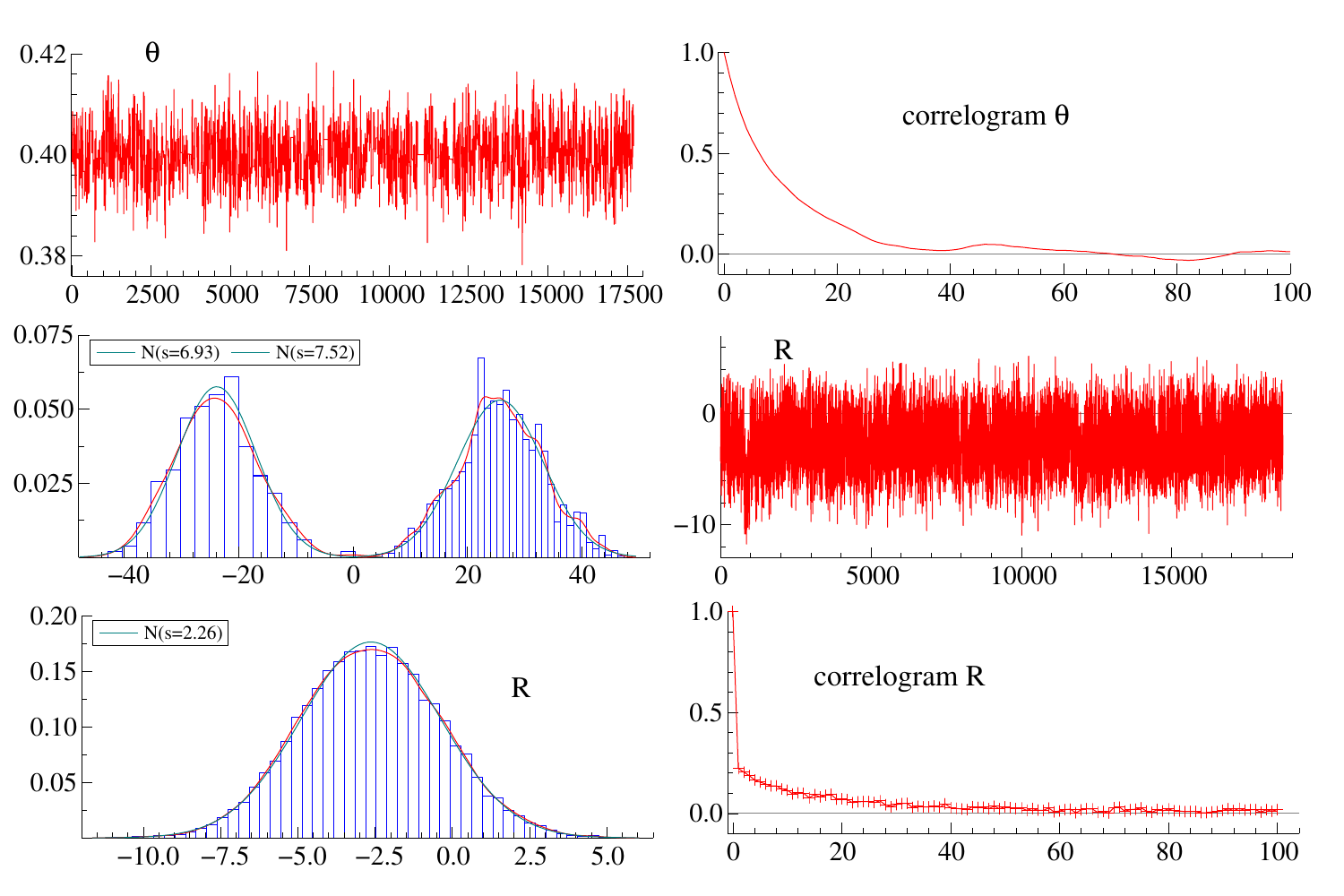}\caption{The CPM\ results for the 3-dimensional state space model with $T=6400$.
Top: parameter samples (left) and corresponding correlogram (right).
Middle: Histograms of $Z$ arising from $m$ and $\bar{\pi}$ (left),
draws of $R$ (right). Bottom: Histogram of $R$ (left) and correlogram
(right).}
\label{fig:CPMd3T6400}
\end{figure}

\section{Discussion\label{sec:discussion}}

The CPM method is a generic extension of the PM\ method based on
an estimator of the likelihood ratio appearing in its acceptance probability
which is obtained by correlating the estimators of its numerator and
denominator. We have detailed two implementations of this idea for
random effects and state-space models. For random effects models,
we have provided theory to perform an efficient implementation of
the methodology and have verified empirically that this methodology
is also useful for state-space models. In our examples, the efficiency
of computations using the CPM relative to the PM method increases
with $T$ and is improved by more than two order orders of magnitude
for large data sets. This methodology is particularly beneficial in
scenarios such as partially observed diffusions where sophisticated
MCMC alternatives, such as particle Gibbs techniques, are inefficient.

From a theoretical point of view, we have obtained for random effects
models a result suggesting that a necessary condition to ensure finiteness
of the IACT of the CPM parameter sequence as $T$ increases is to
have $N_{T}$ growing at least at rate $\sqrt{T}$. Our experimental
results suggest that this condition is also sufficient for a large
class of functions and thus that the computational complexity of CPM
for random effects models is $O(T^{\frac{3}{2}})$ versus $O(T^{2})$
for PM. For state-space models, our empirical results indicate that
this scaling degrades with the state dimension $k$ and that we need
$N_{T}$ to grow at rate $T^{\frac{k}{k+1}}$, suggesting that the
computational complexity of CPM in this context is thus $O(T^{\frac{2k+1}{k+1}})$
up to a logarithmic factor\footnote{The particle filter with Hilbert sort has computational complexity
$N_{T}$log$N_{T}$ per observation. } versus $O(T^{2})$ for PM. It would be of interest but technically
very involved to establish these results rigorously.

From a methodological point of view, it is possible in the state-space
context to use alternatives to the Hilbert resampling sort to implement
the CPM algorithm \citep[Section 6]{MalikPitt2011}, \citep{Lecuyer2016}
and several such methods have been proposed (for example \citep{Jacob2016},
\citep{SenThieryJasra2016}) following the first version of this work
(arXiv:1511.04992). Our empirical results suggest that all these procedures
provide roughly similar improvements over the PM. It could also be
beneficial to use the sequential randomised Quasi Monte Carlo (QMC)
algorithm proposed in \citep{gerber2015}, \citep{chopin2017} within
the CPM scheme by correlating the single uniform used to randomize
the QMC grid. In a random effects context, it has already be demonstrated
that this can provide significant improvements \citep{Tran2016b}.
Finally, a sequential extension of the particle marginal Metropolis\textendash Hastings
algorithm \citep{andrieu:doucet:holenstein2010}, a PM method, has
been proposed in \citep{chopin2013} and it would be interesting to
develop an efficient sequential version of CPM.

\section{Acknowledgments}

Arnaud Doucet's research is partially supported by the Engineering
and Physical Sciences Research Council, grant EP/K000276/1. We thank
Sebastian Schmon for his comments.

\newpage 

\appendix

\section{Supplementary Material}

\subsection{Notation\label{Appendix:notation}}

We define a reference probability space $(\Omega_{T},\mathcal{G}_{T},\mathbb{P}_{T})$
which supports the following random variables: 
\begin{enumerate}
\item $\theta^{T}\sim\pi_{T}$ where $\pi_{T}$ denote the posterior distribution
associated to observations $y_{1:T},$ 
\item $\{U_{t,i}^{T}:t\in1:T,i\in1:N\}$ independent and identically distributed
$\mathcal{N}(0_{p},I_{p})$ random variables, 
\item $\{B_{t,i}^{T}(\cdot):t\in1:T,i\in1:N\}$ where the $B_{t,i}^{T}(\cdot)$
are mutually independent, $p-$dimensional standard Brownian motions. 
\end{enumerate}
We set 
\[
\Omega_{T}:=\Theta\times\mathbb{R}^{pNT}\times C^{p}[0,\infty)^{NT},
\]
and 
\[
\mathbb{P}_{T}\left(\mathrm{d}\theta^{T},\left\{ \mathrm{d}u_{t,i}^{T}\right\} _{t,i},\left\{ \mathrm{d}\left(\beta_{t,i}^{T}(\cdot)\right)\right\} _{t,i}\right)=\pi_{T}(\mathrm{d}\theta^{T})\prod_{t=1}^{T}\prod_{i=1}^{N}\varphi(\mathrm{d}u_{t,i}^{T};0_{p},I_{p})\prod_{t=1}^{T}\prod_{i=1}^{N}\mathbb{W}^{p}\left(\mathrm{d}\beta_{t,i}^{T}(\cdot)\right),
\]
where $\mathbb{W}^{p}(\mathrm{d}\cdot)$ denotes the Wiener measure
on $C^{p}[0,\infty)$, $C^{p}[0,\infty)$ being the space of $\mathbb{R}^{p}$-
valued continuous paths on $[0,\infty)$.

Let $\mathfrak{Y=\otimes}_{t=1}^{\infty}\mathsf{Y}$ where $\mathsf{Y\ }$is
a topological space, $y_{t}$ is $\mathsf{Y}$-valued and $\mathcal{B\mathfrak{(Y)}}$
the associated Borel $\sigma$-algebra. Then we consider the product
space $(\Omega,\mathcal{G},\mathbb{P})$ where 
\[
\Omega=\mathfrak{Y}\times\prod_{T}\Omega_{T},\text{ }\mathcal{G}=B\mathfrak{(Y)}\otimes\left(\otimes_{T}\mathcal{G}_{T}\right),
\]
and 
\[
\mathbb{P}=\left(\prod_{t=1}^{\infty}\mu(\mathrm{d}y_{t})\right)\otimes\left(\otimes_{T}\mathbb{P}_{T}\right).
\]

In most cases we will be working with the probability measure $\widetilde{\mathbb{P}}$
capturing the scenario when the CPM\ algorithm is in the stationary
regime, which is defined as follows. For every $T\geq1$ and sequence
of observations $y_{1:T}$, we define the probability measure $\widetilde{\mathbb{P}}_{T}^{y_{1:T}}$
by 
\[
\frac{\mathrm{d}\widetilde{\mathbb{P}}_{T}^{y_{1:T}}}{\mathrm{d}\mathbb{P}_{T}}\left(\theta^{T},\left\{ u_{t,i}^{T}\right\} _{i,t},\left\{ \beta_{t,i}^{T}(\cdot)\right\} _{t,i}\right)=\prod_{t=1}^{T}\frac{1}{N}\sum_{i=1}^{N}\varpi(y_{t};\theta^{T},u_{t,i}^{T}),
\]
and let 
\[
\widetilde{\mathbb{P}}=\left(\prod_{t=1}^{\infty}\mu(\mathrm{d}y_{t})\right)\otimes\left(\otimes_{T}\widetilde{\mathbb{P}}_{T}^{y_{1:T}}\right).
\]
We will denote by $\mathbb{E}$, $\mathbb{V}$ and $\widetilde{\mathbb{E}}$,
$\widetilde{\mathbb{V}}$ the expectation and variance under $\mathbb{P}$
and $\widetilde{\mathbb{P}}$ respectively.

When $T$ and $\theta^{T}$ are understood fixed, allowing some abuse
of notation, we will write $\mathbb{P}$ to denote the measure 
\[
\mathbb{P}\left(\mathrm{d}y_{1},\dots,\mathrm{d}y_{T},\left\{ \mathrm{d}u_{t,i}^{T}\right\} _{t,i},\left\{ \mathrm{d}\left(\beta_{t,i}^{T}(\cdot)\right)\right\} _{t,i}\right)=\prod_{t=1}^{T}\mu(\mathrm{d}y_{t})\prod_{t=1}^{T}\prod_{i=1}^{N}\varphi(\mathrm{d}u_{t,i}^{T};0_{p},I_{p})\prod_{t=1}^{T}\prod_{i=1}^{N}\mathbb{W}^{p}\left(\mathrm{d}\beta_{t,i}^{T}(\cdot)\right),
\]
and similarly 
\begin{align*}
 & \widetilde{\mathbb{P}}\left(\mathrm{d}y_{1},\dots,\mathrm{d}y_{T},\left\{ \mathrm{d}u_{t,i}^{T}\right\} _{t,i},\left\{ \mathrm{d}\left(\beta_{t,i}^{T}(\cdot)\right)\right\} _{t,i}\right)\\
 & =\bar{\pi}_{T}\left(\{\mathrm{d}u_{t,i}^{T}\}\mid\theta^{T}\right)\prod_{t=1}^{T}\mu(\mathrm{d}y_{t})\prod_{t=1}^{T}\prod_{i=1}^{N}\mathbb{W}^{p}\left(\mathrm{d}\beta_{t,i}^{T}(\cdot)\right)\\
 & =\prod_{t=1}^{T}\frac{1}{N}\sum_{i=1}^{N}\varpi(y_{t};\theta^{T},u_{t,i}^{T})\prod_{t=1}^{T}\prod_{i=1}^{N}\varphi(\mathrm{d}u_{t,i}^{T};0_{p},I_{p})\prod_{t=1}^{T}\mu(\mathrm{d}y_{t})\prod_{t=1}^{T}\prod_{i=1}^{N}\mathbb{W}^{p}\left(\mathrm{d}\beta_{t,i}^{T}(\cdot)\right).
\end{align*}
To ease notation, we will often drop the superscript $T$, since we
will always be considering variables belonging to the same row. In
addition we will write $N$ for $N_{T}$ in the proofs, omitting the
explicit dependence of $N_{T}$ on $T$. In the proofs of Theorem
\ref{Theorem:CLTmarginal}, Theorem \ref{Theorem:CLTlikelihoodratiostandardpseudomarginal}
and Theorem \ref{Theorem:conditionalCLTthetathetacand}, we also write
$m,$ $\bar{\pi}\left(\mathrm{d}u\mid\theta\right)$, $B_{t,i}$,
$U_{t,i}$ instead of $m_{T},$ $\bar{\pi}_{T}\left(\mathrm{d}u^{T}\mid\theta^{T}\right)$,
$B_{t,i}^{T}$ and $U_{t,i}^{T}$. Notice that $\mathbb{E}\left(\varpi\left(Y_{1},U_{1,1}^{T};\theta\right)^{j}\right)$
is independent of $T$ for any $j$ as $U_{1,1}^{T}\sim\mathcal{N}(0_{p},I_{p})$
under $\mathbb{P}$.

\subsection{Proof of Part 1 of Theorem \ref{Theorem:CLTmarginal}}

The starting point of our analysis is the following decomposition
\begin{align}
\log\widehat{p}\left(\left.Y_{1:T}\right\vert \theta\right)-\log p\left(\left.Y_{1:T}\right\vert \theta\right) & =\sum_{t=1}^{T}\log\left\{ 1+\frac{\varepsilon_{N}(Y_{t};\theta)}{\sqrt{N}}\right\} \label{eq:loglikelihooderror}
\end{align}
with 
\[
\varepsilon_{N}(Y_{t};\theta):=\sqrt{N}\frac{\widehat{p}\left(\left.Y_{t}\right\vert \theta\right)-p\left(\left.Y_{t}\right\vert \theta\right)}{p\left(\left.Y_{t}\right\vert \theta\right)}=\frac{1}{\sqrt{N}}\sum_{i=1}^{N}\left\{ \varpi(Y_{t},U_{t,i};\theta)-1\right\} .
\]
We will denote by $\rho_{i}\left(\theta\right)$ the $i^{\text{th}}$
order cumulant of the normalized importance weight $\varpi(Y_{1},U_{1,1};\theta)$
given in (\ref{eq:normalisedISweight}) under $\mathbb{P}$ and by
$\gamma\left(\theta\right)^{2}$ its variance, so that $\rho_{2}\left(\theta\right)=\gamma\left(\theta\right)^{2}$.

We first establish three preliminary lemmas. 
\begin{lem}
\label{Lemma:momentsofepsilonunderproposal}The terms $\left\{ \varepsilon_{N}\left(y_{t};\theta\right)\right\} _{t=1}^{T}$
are independent with for any $y\in\mathsf{Y}$, $\ \mathbb{E}\left(\varepsilon_{N}\left(y;\theta\right)\right)=0$
and 
\begin{align}
\mathbb{E}\left(\varepsilon_{N}\left(y;\theta\right)^{2}\right) & =\mathbb{V}\left(\varpi(y,U_{1,1};\theta)\right)=\rho_{2}\left(y;\theta\right):=\gamma\left(y;\theta\right)^{2},\label{eq:secondordermoment}\\
\mathbb{E}\left(\varepsilon_{N}\left(y;\theta\right)^{3}\right) & =\frac{\rho_{3}\left(y;\theta\right)}{\sqrt{N}},\label{eq:thirdordermoment}\\
\mathbb{E}\left(\varepsilon_{N}\left(y;\theta\right)^{4}\right) & =3\gamma\left(y;\theta\right)^{4}+\frac{\rho_{4}\left(y;\theta\right)}{N},\label{eq:fourthordermoment}\\
\mathbb{E}\left(\varepsilon_{N}\left(y;\theta\right)^{5}\right) & =\frac{10\rho_{2}\left(y;\theta\right)\rho_{3}\left(y;\theta\right)}{\sqrt{N}}+\frac{\rho_{5}\left(y;\theta\right)}{N\sqrt{N}},\nonumber 
\end{align}
where $\rho_{i}\left(y;\theta\right)$ denotes the $i$th-order cumulant
of $\varpi(y,U_{1,1};\theta)$ and $\rho_{i}\left(\theta\right)=\mathbb{E}\left(\rho_{i}(Y;\theta)\right)=\mathbb{\int}\rho_{i}\left(y;\theta\right)\mu\left(\mathrm{d}y\right)$. 
\end{lem}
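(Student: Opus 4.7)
The plan is to reduce everything to a standard cumulants-to-moments computation for sums of i.i.d.\ centred random variables. First, note that $\varepsilon_N(y;\theta) = N^{-1/2}\sum_{i=1}^N W_i(y;\theta)$ with $W_i(y;\theta) := \varpi(y,U_{1,i};\theta)-1$. Under $\mathbb{P}$ the $U_{t,i}$ are i.i.d.\ $\mathcal{N}(0_p,I_p)$ across both indices, so for a fixed $y$ the $W_i(y;\theta)$ are i.i.d., and this also immediately gives the claimed independence of $\{\varepsilon_N(y_t;\theta)\}_{t=1}^T$ across $t$. Unbiasedness of the importance sampling weights (the very property underpinning the PM construction, namely $\int\omega(y,u;\theta)\varphi(u;0_p,I_p)\mathrm{d}u = p(y\mid\theta)$) gives $\mathbb{E}[\varpi(y,U_{1,1};\theta)]=1$, hence $\mathbb{E}[W_i(y;\theta)]=0$ and $\mathbb{E}[\varepsilon_N(y;\theta)]=0$.

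Next I would pass to cumulants. Writing $\rho_j(y;\theta)$ for the $j$-th cumulant of $\varpi(y,U_{1,1};\theta)$ (equivalently of $W_1(y;\theta)$, since cumulants of order $\geq 2$ are translation invariant), additivity of cumulants under independent sums gives $\kappa_j\!\left(\sum_i W_i\right) = N\,\rho_j(y;\theta)$ for $j\geq 2$, and then the scaling property yields
\[
\kappa_j\bigl(\varepsilon_N(y;\theta)\bigr) = N^{1-j/2}\,\rho_j(y;\theta), \qquad j\geq 2,
\]
while $\kappa_1(\varepsilon_N(y;\theta))=0$. Conversion back to raw (equivalently central, since the mean is zero) moments uses the standard formulas
\[
\mu_2 = \kappa_2,\quad \mu_3 = \kappa_3,\quad \mu_4 = \kappa_4 + 3\kappa_2^{2},\quad \mu_5 = \kappa_5 + 10\kappa_3\kappa_2,
\]
valid for any centred random variable. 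Substituting $\kappa_j = N^{1-j/2}\rho_j(y;\theta)$ yields exactly (\ref{eq:secondordermoment})--(\ref{eq:fourthordermoment}) and the fifth-moment identity stated in the lemma, with the special identification $\rho_2(y;\theta)=\gamma(y;\theta)^{2}$ from the definition of $\gamma(y;\theta)^{2}=\mathbb{V}(\varpi(y,U_{1,1};\theta))$.

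Since each step is essentially bookkeeping, there is no real obstacle; the only point to verify carefully is that the moments $\mathbb{E}[\varpi(y,U_{1,1};\theta)^{j}]$ are finite for $j\leq 5$ so that the cumulants $\rho_j(y;\theta)$ are well-defined, which is implicit in the integrability hypotheses of the two parts of Theorem~\ref{Theorem:CLTmarginal} (they impose moment conditions up to order $8$ or $9$ on $\varpi$). The final statement $\rho_i(\theta)=\mathbb{E}(\rho_i(Y;\theta))$ is immediate from the law of total expectation together with $Y_t\overset{\text{i.i.d.}}{\sim}\mu$. Hence no delicate analytic argument is required: the lemma is a bookkeeping consequence of i.i.d.\ additivity of cumulants and the elementary polynomial identities expressing low-order central moments in terms of cumulants.
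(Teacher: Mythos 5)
Your proof is correct, and since the paper omits this proof as following ``from direct calculations,'' your cumulant-additivity argument (with the standard centred moment--cumulant identities $\mu_2=\kappa_2$, $\mu_3=\kappa_3$, $\mu_4=\kappa_4+3\kappa_2^2$, $\mu_5=\kappa_5+10\kappa_2\kappa_3$ and the scaling $\kappa_j(\varepsilon_N)=N^{1-j/2}\rho_j$) is exactly the intended computation. The independence across $t$ and the zero mean from unbiasedness of the importance weights are also handled correctly, so nothing is missing.
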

The proof of Lemma \ref{Lemma:momentsofepsilonunderproposal} follows
from direct calculations so it is omitted. 
\begin{lem}
\label{Lemma:Marcinkiewicz}For any $k\geq2,$ if $\mathbb{E}\left(\varpi(Y_{1},U_{1,1};\theta)^{k}\right)<\infty$
then $\lim\sup_{T\rightarrow\infty}\mathbb{E}\left(\left\vert \varepsilon_{N}(Y_{1};\theta)\right\vert ^{k}\right)<\infty$. 
\end{lem}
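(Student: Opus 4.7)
The plan is to invoke the Marcinkiewicz--Zygmund inequality conditionally on $Y_1$ and then integrate out. Observe that $\varepsilon_N(Y_1;\theta) = N^{-1/2}\sum_{i=1}^N Z_{1,i}$, where $Z_{1,i} := \varpi(Y_1,U_{1,i};\theta)-1$ are, conditionally on $Y_1$, i.i.d.\ and centered under $\mathbb{P}$ (since $\mathbb{E}[\varpi(y,U_{1,1};\theta)] = 1$ for every $y \in \mathsf{Y}$).

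First I would recall the Marcinkiewicz--Zygmund inequality: for any $k \geq 2$ there is a constant $C_k$, depending only on $k$, such that for any sequence of i.i.d.\ centered random variables $\{Z_i\}_{i=1}^N$ with $\mathbb{E}|Z_1|^k<\infty$,
\[
\mathbb{E}\bigl|\textstyle\sum_{i=1}^N Z_i\bigr|^k \leq C_k\, N^{k/2}\,\mathbb{E}|Z_1|^k.
\]
Applying this conditionally on $Y_1=y$ to the $Z_{1,i}$ gives
\[
\mathbb{E}\bigl[|\varepsilon_N(y;\theta)|^k\bigr] \leq C_k\,\mathbb{E}\bigl|\varpi(y,U_{1,1};\theta)-1\bigr|^k.
\]
Crucially, the constant $C_k$ is independent of $N$ (and of $y$).

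Next I would take expectation with respect to $Y_1 \sim \mu$ to obtain
\[
\mathbb{E}\bigl[|\varepsilon_N(Y_1;\theta)|^k\bigr] \leq C_k\,\mathbb{E}\bigl|\varpi(Y_1,U_{1,1};\theta)-1\bigr|^k.
\]
By the $c_r$-inequality, $\mathbb{E}|\varpi(Y_1,U_{1,1};\theta)-1|^k \leq 2^{k-1}\bigl(\mathbb{E}|\varpi(Y_1,U_{1,1};\theta)|^k + 1\bigr)$, which is finite by hypothesis. Since the right-hand side does not depend on $N=N_T$, the conclusion $\limsup_{T\to\infty}\mathbb{E}[|\varepsilon_N(Y_1;\theta)|^k] < \infty$ follows at once.

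There is essentially no obstacle here: the only point to verify carefully is that the Marcinkiewicz--Zygmund constant can be chosen independent of $N$ (standard; see e.g.\ \citep{durrett2010}) and that the conditional application is legitimate, which it is because under $\mathbb{P}$ the variables $\{U_{1,i}\}_{i=1}^N$ are independent of $Y_1$ and i.i.d.\ $\mathcal{N}(0_p,I_p)$.
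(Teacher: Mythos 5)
Your proof is correct and takes essentially the same approach as the paper: both rest on the Marcinkiewicz--Zygmund inequality with a constant depending only on $k$ (not on $N$), followed by a $c_r$/C$_p$ bound to reduce to the assumed moment of $\varpi$. The paper merely spells out the intermediate Jensen step (bounding $\bigl|\frac{1}{N}\sum_{i}Z_{1,i}^{2}\bigr|^{k/2}$ by $\frac{1}{N}\sum_{i}|Z_{1,i}|^{k}$) and applies the inequalities directly rather than first conditioning on $Y_{1}$, but the argument is the same.
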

\begin{proof}[Proof of Lemma~\ref{Lemma:Marcinkiewicz}]
\textit{ }It follows from a successive application of Marcinkiewicz-Zygmund,
Jensen and C$_{p}$ inequalities that for any $k\geq2$, there exist
$b\left(k\right),c\left(k\right)<\infty$ such that 
\begin{align*}
\mathbb{E}\left(\left\vert \varepsilon_{N}(Y_{1};\theta)\right\vert ^{k}\right) & =\mathbb{E}\left(\left\vert \frac{1}{\sqrt{N}}\sum_{i=1}^{N}\left\{ \varpi(Y_{1},U_{1,i};\theta)-1\right\} \right\vert ^{k}\right)\\
 & \leq b\left(k\right)\mathbb{E}\left(\left\vert \frac{1}{N}\sum_{i=1}^{N}\left\{ \varpi(Y_{1},U_{1,i};\theta)-1\right\} ^{2}\right\vert ^{k/2}\right)\\
 & \leq b\left(k\right)\frac{1}{N}\sum_{i=1}^{N}\mathbb{E}\left(\left\vert \varpi(Y_{1},U_{1,i};\theta)-1\right\vert ^{k}\right)\\
 & =b\left(k\right)\left(\mathbb{E}\left\vert \varpi(Y_{1},U_{1,1};\theta)-1\right\vert ^{k}\right)\\
 & \leq b\left(k\right)c\left(k\right)\left(\mathbb{E}\left(\varpi(Y_{1},U_{1,1};\theta)^{k}\right)+1\right).
\end{align*}

This concludes the proof. 
\end{proof}
\begin{lem}
\label{Lemma:WLLN}Consider the triangular array $\left\{ \varepsilon_{N}\left(Y_{t};\theta\right)\right\} $
and let $k\geq2$. If there exists $\delta>0$ such that\linebreak{}
 $\mathbb{E}\left(\varpi(Y_{1},U_{1,1};\theta)^{k+\delta}\right)<\infty$
then 
\begin{equation}
T^{-1}\sum_{t=1}^{T}\varepsilon_{N}(Y_{t};\theta)^{k}-\mathbb{E}\left(\varepsilon_{N}(Y_{1};\theta)^{k}\right)\overset{\mathbb{P}}{\rightarrow}0.\label{eq:WLLN1}
\end{equation}
If $\mathbb{E}\left(\varpi(Y_{1},U_{1,1};\theta)^{2k}\right)$ then
we have for any $\lambda>0$ 
\begin{equation}
T^{-\frac{(1+\lambda)}{2}}\sum_{t=1}^{T}\varepsilon_{N}(Y_{t};\theta)^{k}-T^{\frac{(1-\lambda)}{2}}\mathbb{E}\left(\varepsilon_{N}(Y_{1};\theta)^{k}\right)\overset{\mathbb{P}}{\rightarrow}0.\label{eq:WLNNscale1}
\end{equation}
\end{lem}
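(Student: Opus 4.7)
The plan is to exploit the observation that for each fixed $T$ the variables $X_{t,T}:=\varepsilon_{N_T}(Y_t;\theta)^k$, $t\in 1{:}T$, are independent and identically distributed: this is inherited from the independence of $\{Y_t\}$ across $t$ together with the independence of the auxiliary variables $\{U_{t,i}\}_{t,i}$ and the already-stated independence of $\{\varepsilon_N(Y_t;\theta)\}_{t=1}^T$ in Lemma~\ref{Lemma:momentsofepsilonunderproposal}. Both displayed assertions are then statements of the form
\[
a_T^{-1}\sum_{t=1}^{T}\bigl(X_{t,T}-\mathbb{E} X_{1,T}\bigr)\overset{\mathbb{P}}{\rightarrow}0,
\]
with $a_T=T$ in (\ref{eq:WLLN1}) and $a_T=T^{(1+\lambda)/2}$ in (\ref{eq:WLNNscale1}); indeed the mean of $a_T^{-1}\sum_t X_{t,T}$ in the second case is precisely $T^{(1-\lambda)/2}\mathbb{E}(\varepsilon_N(Y_1;\theta)^k)$. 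The essential input is Lemma~\ref{Lemma:Marcinkiewicz}: applied with $k+\delta$ in part~1 it yields $\sup_T\mathbb{E}|X_{1,T}|^{1+\delta/k}<\infty$, and applied with $2k$ in part~2 it yields $\sup_T\mathbb{V}(X_{1,T})<\infty$. Thus the only issue to address is how to run a weak law in a triangular-array setup under moments that may lie strictly below the second.

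For part~2, I would simply use Chebyshev. By independence across $t$,
\[
\mathbb{E}\Bigl|a_T^{-1}\sum_{t=1}^T (X_{t,T}-\mathbb{E} X_{1,T})\Bigr|^{2}=\frac{\mathbb{V}(X_{1,T})}{T^{\lambda}}\xrightarrow[T\to\infty]{}0,
\]
since $\sup_T\mathbb{V}(X_{1,T})<\infty$ and $\lambda>0$. This gives convergence in $L^2$ and hence in probability, establishing (\ref{eq:WLNNscale1}).

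For part~1, if $\delta\ge k$ the same Chebyshev argument applies and gives a decay of order $T^{-1}$. The delicate case is $\delta/k<1$, where the only uniformly controlled moment is of order $p:=1+\delta/k\in(1,2)$. Here I would invoke the Von Bahr--Esseen inequality: for i.i.d.\ centred $Z_t$ with $\mathbb{E}|Z_1|^p<\infty$ and $p\in[1,2]$,
\[
\mathbb{E}\Bigl|\sum_{t=1}^{T}Z_t\Bigr|^{p}\le 2\,T\,\mathbb{E}|Z_1|^{p}.
\]
Taking $Z_{t,T}:=X_{t,T}-\mathbb{E} X_{1,T}$ and using $\mathbb{E}|Z_{1,T}|^p\le 2^p\mathbb{E}|X_{1,T}|^p$ together with Lemma~\ref{Lemma:Marcinkiewicz}, this yields
\[
\mathbb{E}\Bigl|T^{-1}\sum_{t=1}^{T}Z_{t,T}\Bigr|^{p}\le C\,T^{1-p}\sup_T\mathbb{E}|X_{1,T}|^{p}\xrightarrow[T\to\infty]{}0,
\]
which gives $L^p$-convergence and in particular convergence in probability, proving (\ref{eq:WLLN1}).

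The only genuine obstacle is the sub-quadratic-moment regime in part~1, which is resolved by the Von Bahr--Esseen bound; everything else is a direct second-moment computation. The key conceptual point is that, although the summands $X_{t,T}$ depend on $T$ through $N_T$, Lemma~\ref{Lemma:Marcinkiewicz} converts the $T$-free moment assumption on $\varpi$ into uniform-in-$T$ moment bounds on $X_{1,T}$, which is exactly what allows the classical i.i.d.\ weak laws to apply in the triangular-array setting.
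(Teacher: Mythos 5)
Your proposal is correct and follows essentially the same route as the paper: both parts rest on converting the moment assumption on $\varpi$ into a uniform-in-$T$ moment bound on $\varepsilon_{N}(Y_{1};\theta)^{k}$ via Lemma~\ref{Lemma:Marcinkiewicz}, and your treatment of (\ref{eq:WLNNscale1}) is the same Chebyshev computation the paper gives. The only difference is in (\ref{eq:WLLN1}), where the paper simply cites a textbook weak law for triangular arrays under the uniform $(1+\delta/k)$-moment bound, while you make that step self-contained with the Von Bahr--Esseen inequality; this is a harmless (arguably cleaner) substitution.
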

\begin{proof}[Proof of Lemma~\ref{Lemma:WLLN}]
\textit{ }The results (\ref{eq:WLLN1}) follows directly from a weak
law of large numbers (WLLN)\ applied to the triangular array $\varepsilon_{N}(Y_{t};\theta)^{k}-\mathbb{E}\left(\varepsilon_{N}(Y_{1};\theta)^{k}\right)$;
see, e.g., \citep[Theorem B.18]{doucmoulinesstoffer2014}. This results
holds as $\mathbb{E}\left(\varpi(Y_{1},U_{1,1};\theta)^{k+\delta}\right)<\infty$
so $\lim\sup_{T\rightarrow\infty}\mathbb{E}\left(\left\vert \varepsilon_{N}(Y_{1};\theta)\right\vert ^{k+\delta}\right)<\infty$
from Lemma \ref{Lemma:Marcinkiewicz}. For the second result (\ref{eq:WLNNscale1}),
we have for any $\epsilon>0$ 
\begin{align*}
\mathbb{P}\left\{ \left\vert T^{-\frac{(1+\lambda)}{2}}\sum_{t=1}^{T}\left\{ \varepsilon_{N}(Y_{t};\theta)^{k}-\mathbb{E}\left(\varepsilon_{N}(Y_{1};\theta)^{k}\right)\right\} \right\vert \geq\epsilon\right\}  & \leq\frac{\mathbb{E}\left[\left(\sum_{t=1}^{T}\left[\varepsilon_{N}(Y_{t};\theta)^{k}-\mathbb{E}\left(\varepsilon_{N}(Y_{1};\theta)^{k}\right)\right]\right)^{2}\right]}{T^{(1+\lambda)}\epsilon^{2}}\\
 & =\frac{\mathbb{E}\left(\left[\varepsilon_{N}(Y_{1};\theta)^{k}-\mathbb{E}\left(\varepsilon_{N}(Y_{1};\theta)^{k}\right)\right]^{2}\right)}{T^{\lambda}\epsilon^{2}}\\
 & \rightarrow0.
\end{align*}
The result follows. 
\end{proof}
We can now give the proof of Theorem \ref{Theorem:CLTmarginal} Part
1. 
\begin{proof}[Proof of Part 1 of Theorem ~\ref{Theorem:CLTmarginal}]
\textit{ }We first perform a fourth order Taylor expansion of each
term appearing in (\ref{eq:loglikelihooderror}), i.e. 
\begin{equation}
\log\left\{ 1+\frac{\varepsilon_{N}(Y_{t};\theta)}{\sqrt{N}}\right\} =\frac{\varepsilon_{N}(Y_{t};\theta)}{\sqrt{N}}-\frac{\varepsilon_{N}(Y_{t};\theta)^{2}}{2N}+\frac{\varepsilon_{N}(Y_{t};\theta)^{3}}{3N\sqrt{N}}-\frac{\varepsilon_{N}(Y_{t};\theta)^{4}}{4N^{2}}+R_{t,N}(Y_{t};\theta)\label{eq:Taylor}
\end{equation}
where 
\begin{equation}
R_{t,N}(Y_{t};\theta)=\frac{1}{5}\frac{1}{\left(1+\xi_{N}(Y_{t};\theta)\right)^{5}}\left\{ \frac{\varepsilon_{N}(Y_{t};\theta)}{\sqrt{N}}\right\} ^{5}\label{eq:remainderexpression}
\end{equation}
with $\left\vert \xi_{N}(Y_{t};\theta)\right\vert \leq\left\vert \frac{\varepsilon_{N}\left(Y_{t};\theta\right)}{\sqrt{N}}\right\vert .$
We need to ensure that these Taylor expansions are valid for $t\in1:T$
so we control the probability of the event $B\left(Y^{T},\epsilon\right)=\left\{ \underset{t\leq T}{\max}\left\vert \frac{\varepsilon_{N}\left(Y_{t};\theta\right)}{\sqrt{N}}\right\vert >\epsilon\right\} $.
We have for any $\epsilon>0$ 
\begin{align*}
\mathbb{P}\left\{ B\left(Y^{T},\epsilon\right)\right\}  & \leq\sum_{t=1}^{T}\mathbb{P}\left(\left\vert \frac{\varepsilon_{N}\left(Y_{t};\theta\right)}{\sqrt{N}}\right\vert >\epsilon\right)\\
 & =T\mathbb{P}\left(\left\vert \frac{\varepsilon_{N}\left(Y_{1};\theta\right)}{\sqrt{N}}\right\vert >\epsilon\right)\\
 & \leq T\frac{\mathbb{E}\left(\varepsilon_{N}\left(Y_{1};\theta\right)^{8}\right)}{\epsilon^{8}N^{4}}\\
 & \leq\frac{\mathbb{E}\left(\varepsilon_{N}\left(Y_{1};\theta\right)^{8}\right)}{\epsilon^{8}\beta^{4}T^{4\alpha-1}}\text{.}
\end{align*}
As $\mathbb{E}\left(\varpi\left(Y_{1},U_{1,1}^{T};\theta\right)^{8}\right)<\infty$
under assumption, the complementary event satisfies for $\alpha>1/4$
\begin{equation}
\underset{T\rightarrow\infty}{\lim}\mathbb{P}\left(\left(B\left(Y^{T},\epsilon\right)\right)^{\mathtt{C}}\right)=1.\label{eq:probaTaylorvalid}
\end{equation}
On the event $\left(B\left(Y^{T},\epsilon\right)\right)^{\mathtt{C}}$,
the Taylor expansion (\ref{eq:Taylor}) holds for all $t\in1:T$ so
we can write 
\begin{align}
\frac{\log\widehat{p}\left(\left.Y_{1:T}\right\vert \theta\right)-\log p\left(\left.Y_{1:T}\right\vert \theta\right)}{T^{\left(1-\alpha\right)/2}}= & \frac{1}{\beta^{1/2}T^{1/2}}\sum_{t=1}^{T}\varepsilon_{N}\left(Y_{t};\theta\right)\label{eq:firstterm}\\
 & -\frac{1}{2\beta T^{\left(1+\alpha\right)/2}}\sum_{t=1}^{T}\varepsilon_{N}\left(Y_{t};\theta\right)^{2}\label{eq:secondterm}\\
 & +\frac{1}{3\beta^{3/2}T^{\left(1+2\alpha\right)/2}}\sum_{t=1}^{T}\varepsilon_{N}\left(Y_{t};\theta\right)^{3}\label{eq:thirdterm}\\
 & -\frac{1}{4\beta^{2}T^{\left(1+3\alpha\right)/2}}\sum_{t=1}^{T}\varepsilon_{N}\left(Y_{t};\theta\right)^{4}\label{eq:fourthterm}\\
 & +\frac{1}{T^{\left(1-\alpha\right)/2}}\sum_{t=1}^{T}R_{t,N}\left(Y_{t};\theta\right)\label{eq:remainder}\\
 & +o_{\mathbb{P}}\left(1\right)\nonumber 
\end{align}
where the $o_{\mathbb{P}}\left(1\right)$ arises from substituting
$\beta T^{\alpha}$ to $N=\left\lceil \beta T^{\alpha}\right\rceil $.

We first control the remainder (\ref{eq:remainder}), using the fact
that (\ref{eq:remainderexpression}) can be controlled on the event
$B^{\mathtt{C}}\left(Y^{T},\epsilon\right)$, as follows 
\begin{align*}
\frac{1}{T^{\left(1-\alpha\right)/2}}\left\vert \sum_{t=1}^{T}R_{t,N}\left(Y_{t};\theta\right)\right\vert  & \leq\frac{1}{5\beta^{5/2}}\frac{1}{\left(1-\epsilon\right)^{5}}\frac{1}{T^{\left(1-\alpha\right)/2}N^{5/2}}\sum_{t=1}^{T}\left\vert \varepsilon_{N}\left(Y_{t};\theta\right)\right\vert ^{5}\\
 & \leq\frac{1}{5\beta^{5/2}}\frac{1}{\left(1-\epsilon\right)^{5}}\frac{1}{T^{\left(4\alpha-1\right)/2}}\frac{1}{T}\sum_{t=1}^{T}\left\vert \varepsilon_{N}\left(Y_{t};\theta\right)\right\vert ^{5}.
\end{align*}

The WLLN for triangular arrays holds by a similar argument to Lemma
\ref{Lemma:WLLN} so we have 
\[
\frac{1}{T}\sum_{t=1}^{T}\left\vert \varepsilon_{N}\left(Y_{t};\theta\right)\right\vert ^{5}-\mathbb{E}\left(\left\vert \varepsilon_{N}\left(Y_{1};\theta\right)\right\vert ^{5}\right)\overset{\mathbb{P}}{\rightarrow}0.
\]
Hence as $\alpha>1/4$, we have 
\begin{equation}
\frac{1}{T^{\left(1-\alpha\right)/2}}\left\vert \sum_{t=1}^{T}R_{t,N}\left(Y_{t};\theta\right)\right\vert \overset{\mathbb{P}}{\rightarrow}0.\label{eq:remaindergoestozero}
\end{equation}

The term on the r.h.s. of (\ref{eq:firstterm}) satisfies a conditional
CLT\ for triangular arrays; see Lemma \ref{lem:conditionalLindebergCLT}
in Section \ref{Section:Conditionalweakconvergence}. Indeed, we have
for any $\epsilon>0$ 
\begin{align}
\mathbb{E}\left[T^{-1}\sum_{t=1}^{T}\mathbb{E}\left(\left.\varepsilon_{N}\left(Y_{t};\theta\right)^{2}\mathbb{I}_{\left\{ \left\vert \varepsilon_{N}\left(Y_{t};\theta\right)\right\vert \geq\sqrt{T}\epsilon\right\} }\right\vert \mathcal{Y}^{T}\right)\right] & =\mathbb{E}\left[\epsilon^{2}\sum_{t=1}^{T}\mathbb{E}\left(\left.\frac{\varepsilon_{N}\left(Y_{t};\theta\right)^{2}}{\epsilon^{2}T}\mathbb{I}_{\left\{ \left\vert \varepsilon_{N}\left(Y_{t};\theta\right)\right\vert \geq\sqrt{T}\epsilon\right\} }\right\vert \mathcal{Y}^{T}\right)\right]\label{eq:checkingLindeberg}\\
 & \leq\epsilon^{2}\sum_{t=1}^{T}\mathbb{E}\left(\frac{\varepsilon_{N}\left(Y_{t};\theta\right)^{4}}{\epsilon^{4}T^{2}}\right)\nonumber \\
 & =\frac{1}{T\epsilon^{2}}\mathbb{E}\left(\varepsilon_{N}\left(Y_{t};\theta\right)^{4}\right)\nonumber \\
 & =\frac{1}{T\epsilon^{2}}\left\{ 3\gamma\left(\theta\right)^{4}+\frac{\rho_{4}\left(\theta\right)}{N}\right\} \nonumber \\
 & \rightarrow0,\nonumber 
\end{align}
so the following conditional Lindeberg condition holds 
\[
T^{-1}\sum_{t=1}^{T}\mathbb{E}\left(\left.\varepsilon_{N}\left(Y_{t};\theta\right)^{2}\mathbb{I}_{\left\{ \left\vert \varepsilon_{N}\left(Y_{t};\theta\right)\right\vert \geq\sqrt{T}\epsilon\right\} }\right\vert \mathcal{Y}^{T}\right)\overset{\mathbb{P}}{\rightarrow}0\ .
\]

As (\ref{eq:secondordermoment}) holds, by the strong law of large
numbers (SLLN), the limiting variance is given by 
\[
\lim_{T\rightarrow\infty}\frac{1}{\beta T}\sum_{t=1}^{T}\mathbb{E}\left(\left.\varepsilon_{N}\left(Y_{t};\theta\right)^{2}\right\vert \mathcal{Y}^{T}\right)=\lim_{T\rightarrow\infty}\frac{1}{\beta T}\sum_{t=1}^{T}\gamma\left(Y_{t};\theta\right)^{2}=\beta^{-1}\gamma\left(\theta\right)^{2}.
\]

Lemma \ref{Lemma:WLLN} shows that the second term (\ref{eq:secondterm})
satisfies 
\begin{equation}
\frac{1}{T^{\left(1+\alpha\right)/2}}\sum_{t=1}^{T}\varepsilon_{N}\left(Y_{t};\theta\right)^{2}-T^{\left(1-\alpha\right)/2}\gamma\left(\theta\right)^{2}\overset{\mathbb{P}}{\rightarrow}0,\label{eq:CVsecondterminproba}
\end{equation}
while the third term (\ref{eq:thirdterm}) satisfies 
\begin{equation}
\frac{1}{T^{\left(1+2\alpha\right)/2}}\sum_{t=1}^{T}\varepsilon_{N}\left(Y_{t};\theta\right)^{3}-\frac{\rho_{3}\left(\theta\right)}{\beta^{1/2}T^{\left(3\alpha-1\right)/2}}\overset{\mathbb{P}}{\rightarrow}0,\label{eq:CVthirdterminproba}
\end{equation}
hence it vanishes for $\alpha>1/3$. Similarly, Lemma \ref{Lemma:WLLN}
and (\ref{eq:fourthordermoment}) show that 
\begin{equation}
\frac{1}{T^{\left(1+3\alpha\right)/2}}\sum_{t=1}^{T}\varepsilon_{N}\left(Y_{t};\theta\right)^{4}-\frac{3\gamma\left(\theta\right)^{4}}{T^{\left(3\alpha-1\right)/2}}-\frac{\rho_{4}\left(\theta\right)}{\beta T^{\left(5\alpha-1\right)/2}}\overset{\mathbb{P}}{\rightarrow}0\label{eq:CVfourthterminproba}
\end{equation}
where $\frac{\rho_{4}\left(\theta\right)}{\beta T^{\left(5\alpha-1\right)/2}}\rightarrow0$
for any $\alpha>1/5.$

The term $T^{-\left(1-\alpha\right)/2}\left\{ \log\widehat{p}\left(\left.Y_{1:T}\right\vert \theta\right)-\log p\left(\left.Y_{1:T}\right\vert \theta\right)\right\} $
is asymptotically equivalent in distribution to the sum of the terms
(\ref{eq:firstterm}), (\ref{eq:secondterm}), (\ref{eq:thirdterm}),
(\ref{eq:fourthterm}) and (\ref{eq:remainder}). By combining (\ref{eq:probaTaylorvalid})
to the fact that (\ref{eq:firstterm}) satisfies a conditional CLT,
(\ref{eq:CVsecondterminproba}), (\ref{eq:CVthirdterminproba}), (\ref{eq:CVfourthterminproba}),
(\ref{eq:remaindergoestozero}) hold and Lemma \ref{propn1}, the
result follows. 
\end{proof}

\subsection{Proof of Part 2 of Theorem \ref{Theorem:CLTmarginal}}
\begin{lem}
\label{Lemma:relationshipmomentproposalequilibrium}For any $y\in\mathsf{Y}$
and integer $k\geq1$, if $\mathbb{E}\left[\left\vert \varepsilon_{N}\left(y;\theta\right)\right\vert ^{k+1}\right]<\infty$
then $\widetilde{\mathbb{E}}\left[\left\vert \varepsilon_{N}\left(y;\theta\right)\right\vert ^{k}\right]<\infty$
and 
\[
\widetilde{\mathbb{E}}\left[\varepsilon_{N}\left(y;\theta\right)^{k}\right]=\mathbb{E}\left[\varepsilon_{N}\left(y;\theta\right)^{k}\right]+\frac{1}{\sqrt{N}}\mathbb{E}\left[\varepsilon_{N}\left(y;\theta\right)^{k+1}\right].
\]
\end{lem}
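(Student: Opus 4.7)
The plan is to directly expand the $\widetilde{\mathbb{E}}$ expectation using the change of measure defined in Supplementary Material~\ref{Appendix:notation} and identify the two terms on the right-hand side. The key observation is that since the Radon--Nikodym derivative factorises across $t$, if we fix $y_t = y$ and consider the expectation of a quantity depending only on $\{U_{t,i}\}_{i=1}^N$, then effectively the change of measure reduces to multiplying the $\mathbb{P}$-density of $(U_{t,1},\dots,U_{t,N})$ by the factor $\tfrac{1}{N}\sum_{i=1}^N \varpi(y,U_{t,i};\theta)$, which by definition of $\varepsilon_N$ is exactly $1 + \varepsilon_N(y;\theta)/\sqrt{N}$.

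Given this, the first step is to write
\[
\widetilde{\mathbb{E}}\!\left[\varepsilon_N(y;\theta)^k\right] = \mathbb{E}\!\left[\varepsilon_N(y;\theta)^k\left(1+\frac{\varepsilon_N(y;\theta)}{\sqrt{N}}\right)\right],
\]
and distribute to obtain the claimed identity. Note that the Radon--Nikodym derivative $1+\varepsilon_N(y;\theta)/\sqrt{N}=\widehat{p}(y\mid\theta,U_t)/p(y\mid\theta)$ is non-negative (since $\widehat{p}\geq 0$) and has $\mathbb{P}$-mean one by the unbiasedness of $\widehat{p}$, so it is genuinely a valid density with respect to $\mathbb{P}$.

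For the integrability claim, I will bound $|\varepsilon_N(y;\theta)|^k\,\big(1+\varepsilon_N(y;\theta)/\sqrt{N}\big)$ pointwise by $|\varepsilon_N(y;\theta)|^k + |\varepsilon_N(y;\theta)|^{k+1}/\sqrt{N}$; the second summand is integrable by hypothesis, and the first is then integrable by the elementary inequality $|x|^k\leq 1+|x|^{k+1}$, so $\widetilde{\mathbb{E}}[|\varepsilon_N(y;\theta)|^k]<\infty$ follows. The same domination justifies splitting the expectation of the signed quantity into the two terms displayed in the lemma. No step here is a real obstacle; the only thing to keep straight is that the change-of-measure factor involves the \emph{same} $U_{t,i}$'s as $\varepsilon_N(y;\theta)$, which is precisely why the expansion produces a $(k+1)$-th moment rather than a product of lower moments.
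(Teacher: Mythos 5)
Your proof is correct and is essentially the paper's own argument: the paper likewise writes $\overline{\pi}(\mathrm{d}u_{1,1:N}\mid\theta)=\frac{1}{N}\bigl[N+\sum_{i}\{\varpi(y,u_{1,i};\theta)-1\}\bigr]\prod_{j}\varphi(\mathrm{d}u_{1,j};0_p,I_p)$, i.e. multiplies the $\mathbb{P}$-density by $1+\varepsilon_N(y;\theta)/\sqrt{N}$, and distributes to get the two moments. Your explicit domination argument for the integrability claim is a small addition the paper leaves implicit, but nothing differs in substance.
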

\begin{proof}[Proof of Lemma ~\ref{Lemma:relationshipmomentproposalequilibrium}]
\textit{ }We have 
\begin{align*}
\widetilde{\mathbb{E}}\left[\varepsilon_{N}\left(y;\theta\right)^{k}\right] & =\frac{1}{N^{k/2}}\dotsint\left[\sum_{i=1}^{N}\left\{ \varpi\left(y,u_{1,i};\theta\right)-1\right\} \right]^{k}\overline{\pi}(\mathrm{d}u_{1,1:N}|\theta)\\
 & =\frac{1}{N^{1+k/2}}\dotsint\left[\sum_{i=1}^{N}\left\{ \varpi\left(y,u_{1,i};\theta\right)-1\right\} \right]^{k}\left[N+\sum_{i=1}^{N}\left\{ \varpi\left(y,u_{1,i};\theta\right)-1\right\} \right]{\displaystyle \prod\nolimits _{j=1}^{N}}\varphi\left(\mathrm{d}u_{1,j};0_{p},I_{p}\right)\\
 & =\frac{1}{N^{k/2}}\dotsint\left[\sum_{i=1}^{N}\left\{ \varpi\left(y,u_{1,i};\theta\right)-1\right\} \right]^{k}{\displaystyle \prod\nolimits _{j=1}^{N}}\varphi\left(\mathrm{d}u_{1,j};0_{p},I_{p}\right)\\
 & +\frac{1}{N^{1+k/2}}\dotsint\left[\sum_{i=1}^{N}\left\{ \varpi\left(y,u_{1,i};\theta\right)-1\right\} \right]^{k+1}{\displaystyle \prod\nolimits _{j=1}^{N}}\varphi\left(\mathrm{d}u_{1,j};0_{p},I_{p}\right).
\end{align*}

The result follows directly. 
\end{proof}
\begin{cor}
By combining Lemma \ref{Lemma:momentsofepsilonunderproposal} and
Lemma \ref{Lemma:relationshipmomentproposalequilibrium}, we obtain
\begin{align}
\widetilde{\mathbb{E}}\left[\varepsilon_{N}\left(y;\theta\right)\right] & =\frac{\gamma\left(y;\theta\right)^{2}}{\sqrt{N}},\text{ }\label{eq:equilibriumfirstordermoment}\\
\widetilde{\mathbb{E}}\left[\varepsilon_{N}\left(y;\theta\right)^{2}\right] & =\gamma\left(y;\theta\right)^{2}+\frac{\rho_{3}\left(y;\theta\right)}{N},\label{eq:equilibriumsecondordermoment}\\
\widetilde{\mathbb{E}}\left[\varepsilon_{N}\left(y;\theta\right)^{3}\right] & =\frac{3\gamma\left(y;\theta\right)^{4}+\rho_{3}\left(y;\theta\right)}{\sqrt{N}}+\frac{\rho_{4}\left(y;\theta\right)}{N\sqrt{N}},\label{eq:equilibriumthirdordermoment}\\
\widetilde{\mathbb{E}}\left[\varepsilon_{N}\left(y;\theta\right)^{4}\right] & =3\gamma\left(y;\theta\right)^{4}+\frac{\rho_{4}\left(y;\theta\right)+10\rho_{2}\left(y;\theta\right)\rho_{3}\left(y;\theta\right)}{N}+\frac{\rho_{5}\left(y;\theta\right)}{N^{2}}.\label{eq:equilibriumfourthordermoment}
\end{align}
Similarly, we have $\widetilde{\mathbb{E}}\left[\varepsilon_{N}\left(Y_{1};\theta\right)\right]=\gamma\left(\theta\right)^{2}/\sqrt{N},$
$\widetilde{\mathbb{E}}\left[\varepsilon_{N}\left(Y_{1};\theta\right)^{2}\right]=\gamma\left(\theta\right)^{2}+\rho_{3}\left(\theta\right)/N$,
etc. 
\end{cor}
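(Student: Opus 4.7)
The statement is a pure bookkeeping corollary: it only records what one gets by substituting the proposal-measure moments from Lemma \ref{Lemma:momentsofepsilonunderproposal} into the identity from Lemma \ref{Lemma:relationshipmomentproposalequilibrium}. Accordingly, the plan is just to verify each line by a direct substitution, with no new probabilistic input needed.

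First I would check the integrability hypothesis implicit in applying Lemma \ref{Lemma:relationshipmomentproposalequilibrium} at each order $k\in\{1,2,3,4\}$: the identity requires $\mathbb{E}[|\varepsilon_N(y;\theta)|^{k+1}]<\infty$, which holds because Lemma \ref{Lemma:momentsofepsilonunderproposal} computes these moments explicitly in terms of the cumulants $\rho_i(y;\theta)$, all finite under the relevant moment assumptions on $\varpi(y,U_{1,1};\theta)$. Given that, Lemma \ref{Lemma:relationshipmomentproposalequilibrium} yields
\[
\widetilde{\mathbb{E}}\!\left[\varepsilon_{N}(y;\theta)^{k}\right]=\mathbb{E}\!\left[\varepsilon_{N}(y;\theta)^{k}\right]+\tfrac{1}{\sqrt{N}}\,\mathbb{E}\!\left[\varepsilon_{N}(y;\theta)^{k+1}\right].
\]

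Next I would simply plug in. For $k=1$, $\mathbb{E}[\varepsilon_N]=0$ and $\mathbb{E}[\varepsilon_N^{2}]=\gamma(y;\theta)^{2}$ give the first line. For $k=2$, combine $\mathbb{E}[\varepsilon_N^{2}]=\gamma(y;\theta)^{2}$ with $\mathbb{E}[\varepsilon_N^{3}]=\rho_3(y;\theta)/\sqrt{N}$, producing $\gamma(y;\theta)^{2}+\rho_3(y;\theta)/N$. For $k=3$, add $\mathbb{E}[\varepsilon_N^{3}]=\rho_3(y;\theta)/\sqrt{N}$ to $N^{-1/2}\mathbb{E}[\varepsilon_N^{4}]=N^{-1/2}(3\gamma(y;\theta)^{4}+\rho_4(y;\theta)/N)$, which collects to $(3\gamma(y;\theta)^{4}+\rho_3(y;\theta))/\sqrt{N}+\rho_4(y;\theta)/(N\sqrt{N})$. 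For $k=4$, take $\mathbb{E}[\varepsilon_N^{4}]=3\gamma(y;\theta)^{4}+\rho_4(y;\theta)/N$ and add $N^{-1/2}\mathbb{E}[\varepsilon_N^{5}]=N^{-1/2}(10\rho_2(y;\theta)\rho_3(y;\theta)/\sqrt{N}+\rho_5(y;\theta)/(N\sqrt{N}))$, which regroups to the stated expression.

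For the \emph{unconditional} statements about $Y_1$ (the final sentence of the corollary), I would integrate the displayed identities over $\mu(\mathrm{d}y)$ and use that $\rho_i(\theta)=\int\rho_i(y;\theta)\mu(\mathrm{d}y)$, noting that this interchange of expectation and integration is justified by Fubini under the same cumulant-integrability assumptions invoked above. No step presents any genuine obstacle: the only thing to be mildly careful about is matching the combinatorial coefficients (in particular the $10\rho_2\rho_3$ term in line four, which arises solely from $\mathbb{E}[\varepsilon_N^{5}]$ via Lemma \ref{Lemma:momentsofepsilonunderproposal}), so I would double-check that single cross-term before concluding.
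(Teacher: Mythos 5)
Your proposal is correct and is exactly the argument the paper intends: the corollary is stated as a direct consequence of substituting the proposal-measure moments from Lemma \ref{Lemma:momentsofepsilonunderproposal} into the identity of Lemma \ref{Lemma:relationshipmomentproposalequilibrium}, and your order-by-order substitutions (including the $10\rho_{2}\rho_{3}$ cross-term from $\mathbb{E}[\varepsilon_{N}^{5}]$) all check out. The paper offers no further proof beyond this combination, so there is nothing to add.
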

We can now give the proof of Part 2 of Theorem \ref{Theorem:CLTmarginal}.
For $\alpha=1$, it is possible to combine Part 1 of Theorem \ref{Theorem:CLTmarginal}\textit{
}to a uniform integrability argument to establish this result but
this argument does not extend to $1/3<\alpha<1$. 
\begin{proof}[Proof of Part 2 of Theorem ~\ref{Theorem:CLTmarginal}]
\textit{ }The proof of this CLT is very similar to the proof of Part
1 of Theorem \ref{Theorem:CLTmarginal}\ so we skip some details.
We again first perform a fourth order Taylor expansion of each term
appearing in (\ref{eq:loglikelihooderror}), i.e. see (\ref{eq:Taylor})
and (\ref{eq:remainderexpression}). We also need to ensure that these
Taylor expansions are valid for $t\in1:T$ so we need to control the
probability of the event $B\left(Y^{T},\epsilon\right)=\left\{ \underset{t\leq T}{\max}\left\vert N^{-1/2}\varepsilon_{N}\left(Y_{t};\theta\right)\right\vert >\epsilon\right\} $.
We have for any $\epsilon>0$ 
\[
\widetilde{\mathbb{P}}\left\{ B\left(Y^{T},\epsilon\right)\right\} \leq\frac{\widetilde{\mathbb{E}}\left(\varepsilon_{N}\left(Y_{1};\theta\right)^{8}\right)}{\epsilon^{8}\beta^{4}T^{4\alpha-1}}\text{.}
\]
As $\mathbb{E}\left(\varpi\left(Y_{1},U_{1,1}^{T};\theta\right)^{9}\right)<\infty$
holds, Lemma \ref{Lemma:relationshipmomentproposalequilibrium} ensures
that $\widetilde{\mathbb{E}}\left(\varepsilon_{N}\left(Y_{1};\theta\right)^{8}\right)<\infty$
so 
\begin{equation}
\underset{T\rightarrow\infty}{\lim}\widetilde{\mathbb{P}}\left(\left(B\left(Y^{T},\epsilon\right)\right)^{^{\mathtt{C}}}\right)=1\label{eq:probaTaylorvalid2}
\end{equation}
for $\alpha>1/4$. On the event $\left(B\left(Y^{T},\epsilon\right)\right)^{^{\mathtt{C}}}$,
the Taylor expansion (\ref{eq:Taylor}) holds for all $t\in1:T$ so
we can similarly decompose $T^{-\left(1-\alpha\right)/2}\left\{ \log\widehat{p}\left(\left.Y_{1:T}\right\vert \theta\right)-\log p\left(\left.Y_{1:T}\right\vert \theta\right)\right\} $
as the sum of the terms (\ref{eq:firstterm}), (\ref{eq:secondterm}),
(\ref{eq:thirdterm}), (\ref{eq:fourthterm}), (\ref{eq:remainder})
and an additional $o_{\widetilde{\mathbb{P}}}\left(1\right)$ term.

We can show that as $\alpha>1/4$ the remainder vanishes 
\begin{equation}
\frac{1}{T^{\left(1-\alpha\right)/2}}\left\vert \sum_{t=1}^{T}R_{t,N}\left(Y_{t};\theta\right)\right\vert \overset{\widetilde{\mathbb{P}}}{\rightarrow}0\label{eq:remaindergoestozero2}
\end{equation}
because the WLLN for triangular arrays holds so we have 
\[
\frac{1}{T}\sum_{t=1}^{T}\left\vert \varepsilon_{N}\left(Y_{t};\theta\right)\right\vert ^{5}-\widetilde{\mathbb{E}}\left(\left\vert \varepsilon_{N}\left(Y_{1};\theta\right)\right\vert ^{5}\right)\overset{\widetilde{\mathbb{P}}}{\rightarrow}0.
\]
Using (\ref{eq:equilibriumfirstordermoment}), we can rewrite the
first term (\ref{eq:firstterm}) as follows 
\begin{align}
\frac{1}{\beta^{1/2}T^{1/2}}\sum_{t=1}^{T}\varepsilon_{N}\left(Y_{t};\theta\right) & =\frac{1}{\beta^{1/2}T^{1/2}}\sum_{t=1}^{T}\left\{ \varepsilon_{N}\left(Y_{t};\theta\right)-\widetilde{\mathbb{E}}\left(\left.\varepsilon_{N}\left(Y_{t};\theta\right)\right\vert \mathcal{Y}^{T}\right)\right\} \label{eq:firsttermdecomposition1}\\
 & +\frac{1}{\beta^{1/2}T^{1/2}}\sum_{t=1}^{T}\left\{ \widetilde{\mathbb{E}}\left(\left.\varepsilon_{N}\left(Y_{t};\theta\right)\right\vert \mathcal{Y}^{T}\right)-\frac{\gamma\left(\theta\right)^{2}}{\sqrt{N}}\right\} \label{eq:firsttermdecomposition2}\\
 & +\frac{T^{1/2}}{\beta^{1/2}}\frac{\gamma\left(\theta\right)^{2}}{\sqrt{N}}.\label{eq:firsttermdecomposition3}
\end{align}

The r.h.s. of (\ref{eq:firsttermdecomposition1}) satisfies a conditional
CLT, see Lemma \ref{lem:conditionalLindebergCLT}. Indeed the conditional
Lindeberg condition holds using arguments similar to (\ref{eq:checkingLindeberg})
as $T^{-1}\widetilde{\mathbb{E}}\left(\varepsilon_{N}^{4}\left(Y_{t};\theta\right)\right)\rightarrow0$.
By Lemma \ref{Lemma:relationshipmomentproposalequilibrium} and the
SLLN, the limiting variance is given by 
\begin{align*}
\lim_{T\rightarrow\infty}\frac{1}{\beta T}\sum_{t=1}^{T}\widetilde{\mathbb{E}}\left(\left.\varepsilon_{N}\left(Y_{t};\theta\right)^{2}\right\vert \mathcal{Y}^{T}\right)-\widetilde{\mathbb{E}}\left(\left.\varepsilon_{N}\left(Y_{t};\theta\right)\right\vert \mathcal{Y}^{T}\right)^{2} & =\lim_{T\rightarrow\infty}\frac{1}{\beta T}\sum_{t=1}^{T}\left\{ \gamma\left(Y_{t};\theta\right)^{2}+\frac{\rho_{3}\left(Y_{t};\theta\right)}{N}-\frac{\gamma\left(Y_{t};\theta\right)^{4}}{N}\right\} \\
 & =\beta^{-1}\gamma\left(\theta\right)^{2}
\end{align*}
almost surely, (\ref{eq:equilibriumfirstordermoment})-(\ref{eq:equilibriumsecondordermoment})
and using the assumption $\widetilde{\mathbb{E}}\left[\gamma\left(Y_{1};\theta\right)^{4}\right]<\infty$.
The term (\ref{eq:firsttermdecomposition2}) satisfies 
\begin{equation}
\frac{1}{T^{1/2}}\sum_{t=1}^{T}\left\{ \widetilde{\mathbb{E}}\left(\left.\varepsilon_{N}\left(Y_{t};\theta\right)\right\vert \mathcal{Y}^{T}\right)-\frac{\gamma\left(\theta\right)^{2}}{\sqrt{N}}\right\} =\frac{1}{T^{1/2}\sqrt{N}}\sum_{t=1}^{T}\left\{ \gamma\left(Y_{t};\theta\right)^{2}-\gamma\left(\theta\right)^{2}\right\} \overset{\widetilde{\mathbb{P}}}{\rightarrow}0\label{eq:firsttermdecomposition2converges}
\end{equation}
by the SLLN, the assumption $\widetilde{\mathbb{E}}\left[\gamma\left(Y_{1};\theta\right)^{4}\right]<\infty$
and Chebyshev's inequality. Finally we have for (\ref{eq:firsttermdecomposition3})
\begin{equation}
\frac{T^{1/2}}{\beta^{1/2}}\frac{\gamma\left(\theta\right)^{2}}{\sqrt{N}}-\frac{T^{\left(1-\alpha\right)/2}}{\beta}\gamma\left(\theta\right)^{2}\rightarrow0.\label{eq:remainderfirstterm}
\end{equation}
For the second term (\ref{eq:secondterm}), using (\ref{eq:equilibriumsecondordermoment}),
we obtain using Lemma \ref{Lemma:WLLN} 
\begin{equation}
\frac{1}{T^{\left(1+\alpha\right)/2}}\sum_{t=1}^{T}\varepsilon_{N}\left(Y_{t};\theta\right)^{2}-T^{\left(1-\alpha\right)/2}\gamma\left(\theta\right)^{2}-\beta^{-1}T^{\left(1-3\alpha\right)/2}\rho_{3}\left(\theta\right)\overset{\widetilde{\mathbb{P}}}{\rightarrow}0,\label{eq:CVsecondterminproba2}
\end{equation}
where the third term on the l.h.s. vanishes for $\alpha>1/3.$ For
the third term (\ref{eq:thirdterm}), we obtain using (\ref{eq:equilibriumthirdordermoment})
and Lemma \ref{Lemma:WLLN} 
\begin{equation}
\frac{1}{T^{\left(1+2\alpha\right)/2}}\sum_{t=1}^{T}\varepsilon_{N}\left(Y_{t};\theta\right)^{3}-\frac{3\gamma\left(\theta\right)^{4}+\rho_{3}\left(\theta\right)}{\beta^{1/2}T^{\left(3\alpha-1\right)/2}}-\frac{\rho_{4}\left(\theta\right)}{\beta^{3/2}T^{\left(5\alpha-1\right)/2}}\overset{\widetilde{\mathbb{P}}}{\rightarrow}0.\label{eq:CVthirdterminproba2}
\end{equation}
Hence, (\ref{eq:thirdterm}) vanishes for $\alpha>1/3$. Finally for
the fourth term (\ref{eq:fourthterm}), we obtain using (\ref{eq:equilibriumfourthordermoment})
and Lemma \ref{Lemma:WLLN} 
\begin{equation}
\frac{1}{T^{\left(1+3\alpha\right)/2}}\sum_{t=1}^{T}\varepsilon_{N}\left(Y_{t};\theta\right)^{4}-\frac{3\gamma\left(\theta\right)^{4}}{T^{\left(3\alpha-1\right)/2}}-\frac{\rho_{4}\left(\theta\right)+10\rho_{2}\left(\theta\right)\rho_{3}\left(\theta\right)}{\beta T^{\left(5\alpha-1\right)/2}}-\frac{\rho_{5}\left(\theta\right)}{\beta^{2}T^{\left(7\alpha-1\right)/2}}\overset{\widetilde{\mathbb{P}}}{\rightarrow}0\label{eq:CVfourthterminproba2}
\end{equation}
where $T^{-\left(5\alpha-1\right)/2}\left\{ \rho_{4}\left(\theta\right)+10\rho_{2}\left(\theta\right)\rho_{3}\left(\theta\right)\right\} \rightarrow0$
and $T^{-\left(7\alpha-1\right)/2}\rho_{5}\left(\theta\right)\rightarrow0$
for any $\alpha>1/5.$

The term $T^{-\left(1-\alpha\right)/2}\left\{ \log\widehat{p}\left(\left.Y_{1:T}\right\vert \theta\right)-\log p\left(\left.Y_{1:T}\right\vert \theta\right)\right\} $
is asymptotically equivalent in distribution to the sum of the terms
(\ref{eq:firstterm}), (\ref{eq:secondterm}), (\ref{eq:thirdterm}),
(\ref{eq:fourthterm}) and (\ref{eq:remainder}). By combining (\ref{eq:probaTaylorvalid2})
to the fact that (\ref{eq:firsttermdecomposition1}) satisfies a conditional
CLT, (\ref{eq:firsttermdecomposition2converges}), (\ref{eq:remainderfirstterm}),
(\ref{eq:CVsecondterminproba2}), (\ref{eq:CVthirdterminproba2}),
(\ref{eq:CVfourthterminproba2}), (\ref{eq:remaindergoestozero2})
and Lemma \ref{propn1}, the result follows. 
\end{proof}
\begin{rem}
It follows directly from our proof that for $\frac{1}{4}<\alpha\leq\frac{1}{3}$
\begin{equation}
\left.\frac{Z_{T}\left(\theta\right)}{T^{\left(1-\alpha\right)/2}}-\frac{T^{\left(1-\alpha\right)/2}}{2}\beta^{-1}\gamma\left(\theta\right)^{2}+T^{\left(1-3\alpha\right)/2}\beta^{-2}\left\{ \frac{\rho_{3}\left(\theta\right)}{6}-\frac{1}{4}\gamma\left(\theta\right)^{4}\right\} \right\vert \mathcal{Y}^{T}\Rightarrow\mathcal{N}\left\{ 0,\beta^{-1}\gamma\left(\theta\right)^{2}\right\} .\label{eq:CLTexpressionmarginalequilibriumalphasmall}
\end{equation}
We also note that if we assume that higher order moments of $\varpi\left(Y_{1},U_{1,1};\theta\right)$
under $\widetilde{\mathbb{P}}$ are finite then we obtain different
expressions in the CLT for $\frac{1}{2k+3}<\alpha\leq\frac{1}{2k+1}$
where $k\in\mathbb{N}$. 
\end{rem}

\subsection{Proof of Theorem \ref{Theorem:CLTlikelihoodratiostandardpseudomarginal}}

To simplify presentation, we only give the proof when $\theta$ is
a scalar parameter, the multivariate extension is direct. We have
$Z_{T}\left(\theta\right)=\log\widehat{p}(Y_{1:T}\mid\theta,U)-\log p(Y_{1:T}\mid\theta,U)$
with $U\sim\overline{\pi}\left(\left.\cdot\right\vert \theta\right)$.
We define $W_{T}\left(\theta+\xi/\sqrt{T}\right)=\log\widehat{p}(Y_{1:T}\mid\theta+\xi/\sqrt{T},U^{\prime})-\log p(Y_{1:T}\mid\theta,U^{\prime})$
with $U^{\prime}\sim m$.

The result will follow by the arguments used in the proof of Theorem~\ref{Theorem:CLTmarginal},
replacing 
\[
\epsilon_{N}(Y_{t},U_{t};\theta)=\frac{1}{\sqrt{N}}\sum_{i=1}^{N}\left[\varpi(Y_{t},U_{t,i};\theta)-1\right],
\]
with 
\[
\zeta_{N}(Y_{t};\theta)=\epsilon_{N}(Y_{t},U_{t}^{\prime};\theta+\xi/\sqrt{T})-\epsilon_{N}(Y_{t},U_{t};\theta).
\]
We make here the dependence of $\epsilon_{N}$ on $U_{t}$ or $U_{t}^{\prime}$
explicit. We need to check that the moment conditions used for $\epsilon_{N}$
carry over to $\zeta_{N}$. We have by the C$_{p}$ inequality and
Lemma~\ref{Lemma:relationshipmomentproposalequilibrium} that there
exists $c<\infty$ 
\begin{align*}
\widetilde{\mathbb{E}}\left(\zeta_{N}(Y_{1};\theta)^{8}\right) & \leq c\left\{ \mathbb{E}\left(\epsilon_{N}\left(Y_{1},U_{1}^{\prime};\theta+\xi/\sqrt{T}\right)^{8}\right)+\widetilde{\mathbb{E}}\left(\epsilon_{N}\left(Y_{1},U_{1};\theta\right)^{8}\right)\right\} \\
 & \leq c\left\{ \mathbb{E}\left(\epsilon_{N}\left(Y_{1},U_{1};\theta+\xi/\sqrt{T}\right)^{8}\right)\right.\\
 & \qquad\left.+\mathbb{E}\left(\epsilon_{N}\left(Y_{1},U_{1};\theta\right)^{8}\right)+\frac{1}{\sqrt{N}}\left\vert \mathbb{E}\left(\epsilon_{N}\left(Y_{1},U_{1};\theta\right)^{9}\right)\right\vert \right\} .
\end{align*}
As $\vartheta\mapsto\varpi(Y_{1},U_{1,1};\vartheta)$ and $\vartheta\mapsto\widetilde{\mathbb{E}}(\varpi(Y_{1},U_{1,1};\vartheta)^{9})$
are continuous by assumption, it is straightforward to check that
lower order moments are also continuous. Therefore for $T$ large
enough 
\[
\widetilde{\mathbb{E}}\left(\zeta_{N}(Y_{1};\theta)^{8}\right)\leq c\left\{ 2\mathbb{E}\left(\epsilon_{N}\left(Y_{1},U_{1};\theta\right)^{8}\right)+\frac{1}{\sqrt{N}}\mathbb{E}\left(\left\vert \epsilon_{N}\left(Y_{1},U_{1};\theta\right)\right\vert ^{9}\right)\right\} ,
\]
and similar results hold for lower order moments.

We use a Taylor expansion similarly to Theorem~\ref{Theorem:CLTmarginal}
Part 1 and Part 2, 
\begin{align*}
\frac{W_{T}\left(\theta+\xi/\sqrt{T}\right)}{T^{\left(1-\alpha\right)/2}}-\frac{Z_{T}\left(\theta\right)}{T^{\left(1-\alpha\right)/2}} & =\frac{1}{\beta^{1/2}T^{1/2}}\sum_{t=1}^{T}\left[\varepsilon_{N}\left(Y_{t},U_{t}^{\prime};\theta+\xi/\sqrt{T}\right)-\varepsilon_{N}\left(Y_{t},U_{t};\theta\right)\right]\\
 & -\frac{1}{2\beta T^{\left(1+\alpha\right)/2}}\sum_{t=1}^{T}\left[\varepsilon_{N}\left(Y_{t},U_{t}^{\prime};\theta+\xi/\sqrt{T}\right)^{2}-\varepsilon_{N}\left(Y_{t},U_{t};\theta\right)^{2}\right]\\
 & +\frac{1}{3\beta^{3/2}T^{\left(1+2\alpha\right)/2}}\sum_{t=1}^{T}\left[\varepsilon_{N}\left(Y_{t},U_{t}^{\prime};\theta+\xi/\sqrt{T}\right)^{3}-\varepsilon_{N}\left(Y_{t},U_{t};\theta\right)^{3}\right]\\
 & -\frac{1}{4\beta^{2}T^{\left(1+3\alpha\right)/2}}\sum_{t=1}^{T}\left[\varepsilon_{N}\left(Y_{t},U_{t}^{\prime};\theta+\xi/\sqrt{T}\right)^{4}-\varepsilon_{N}\left(Y_{t},U_{t};\theta\right)^{4}\right]\\
 & +\frac{1}{T^{\left(1-\alpha\right)/2}}\sum_{t=1}^{T}R_{t,N}^{\prime}\left(Y_{t};\theta,\xi\right)+o_{\overline{\mathbb{P}}}\left(1\right),
\end{align*}
where $\overline{\mathbb{P}}$ denotes the probability over $U\sim\overline{\pi}\left(\left.\cdot\right\vert \theta\right),$
$U^{\prime}\sim m$ and $Y_{t}\overset{\text{i.i.d.}}{\sim}\mu$ and
$\overline{\mathbb{E}}$ the associated expectation. By inspecting
the proofs of Parts 1 and 2 of Theorem~\ref{Theorem:CLTmarginal},
we can rewrite this as 
\begin{align}
\lefteqn{\frac{W_{T}\left(\theta+\xi/\sqrt{T}\right)}{T^{\left(1-\alpha\right)/2}}-\frac{Z_{T}\left(\theta\right)}{T^{\left(1-\alpha\right)/2}}}\label{eq:taylorratio}\\
 & =\frac{1}{\beta^{1/2}T^{1/2}}\sum_{t=1}^{T}\left[\varepsilon_{N}\left(Y_{t},U_{t}^{\prime};\theta+\xi/\sqrt{T}\right)-\varepsilon_{N}\left(Y_{t},U_{t};\theta\right)\right]\label{eq:CLTtermTheorem3}\\
 & \qquad-\frac{1}{2\beta T^{\left(1+\alpha\right)/2}}\sum_{t=1}^{T}\left[\varepsilon_{N}\left(Y_{t},U_{t}^{\prime};\theta+\xi/\sqrt{T}\right)^{2}-\varepsilon_{N}\left(Y_{t},U_{t};\theta\right)^{2}\right]+o_{\overline{\mathbb{P}}}\left(1\right).\label{eq:constanttermTheorem3}
\end{align}
The term (\ref{eq:CLTtermTheorem3}) satisfies a conditional CLT\ for
triangular arrays (Lemma \ref{lem:conditionalLindebergCLT}) as the
conditional Lindeberg condition is verified 
\begin{align*}
 & \overline{\mathbb{E}}\left[T^{-1}\sum_{t=1}^{T}\overline{\mathbb{E}}\left(\left.\left\{ \varepsilon_{N}\left(Y_{t},U_{t}^{\prime};\theta+\xi/\sqrt{T}\right)-\varepsilon_{N}\left(Y_{t},U_{t};\theta\right)\right\} ^{2}\mathbb{I}_{\left\{ \left\vert \varepsilon_{N}\left(Y_{t};\theta\right)-\varepsilon_{N}\left(Y_{t},U_{t};\theta\right)\right\vert \geq\sqrt{T}\epsilon\right\} }\right\vert \mathcal{Y}^{T}\right)\right]\\
 & =\overline{\mathbb{E}}\left[\epsilon^{2}\sum_{t=1}^{T}\overline{\mathbb{E}}\left(\left.\frac{\left\{ \varepsilon_{N}\left(Y_{t},U_{t}^{\prime};\theta+\xi/\sqrt{T}\right)-\varepsilon_{N}\left(Y_{t},U_{t};\theta\right)\right\} ^{2}}{\epsilon^{2}T}\mathbb{I}_{\left\{ \left\vert \varepsilon_{N}\left(Y_{t};\theta\right)-\varepsilon_{N}\left(Y_{t},U_{t};\theta\right)\right\vert \geq\sqrt{T}\epsilon\right\} }\right\vert \mathcal{Y}^{T}\right)\right]\\
 & \leq\frac{1}{T\epsilon^{2}}\overline{\mathbb{E}}\left(\left\{ \varepsilon_{N}\left(Y_{t},U_{t}^{\prime};\theta+\xi/\sqrt{T}\right)-\varepsilon_{N}\left(Y_{t},U_{t};\theta\right)\right\} ^{4}\right)\\
 & \leq\frac{c}{T\epsilon^{2}}\left\{ \widetilde{\mathbb{E}}\left(\varepsilon_{N}\left(Y_{t},U_{t};\theta\right)^{4}\right)+\mathbb{E}\left(\varepsilon_{N}\left(Y_{t},U_{t}^{\prime};\theta+\xi/\sqrt{T}\right)^{4}\right)\right\} \text{ (}C_{p}\text{ inequality)}\\
 & \rightarrow0,
\end{align*}
so 
\[
T^{-1}\sum_{t=1}^{T}\overline{\mathbb{E}}\left(\left.\left\{ \varepsilon_{N}\left(Y_{t},U_{t}^{\prime};\theta+\xi/\sqrt{T}\right)-\varepsilon_{N}\left(Y_{t},U_{t};\theta\right)\right\} ^{2}\mathbb{I}_{\left\{ \left\vert \varepsilon_{N}\left(Y_{t};\theta\right)-\varepsilon_{N}\left(Y_{t},U_{t};\theta\right)\right\vert \geq\sqrt{T}\epsilon\right\} }\right\vert \mathcal{Y}^{T}\right)\overset{\overline{\mathbb{P}}}{\rightarrow}0
\]
and the limiting variance is given by 
\begin{align*}
 & \lim_{T\rightarrow\infty}\frac{1}{\beta T}\sum_{t=1}^{T}\Bigg[\overline{\mathbb{E}}\left(\left.\left\{ \varepsilon_{N}\left(Y_{t},U_{t}^{\prime};\theta+\xi/\sqrt{T}\right)-\varepsilon_{N}\left(Y_{t},U_{t};\theta\right)\right\} ^{2}\right\vert \mathcal{Y}^{T}\right)\\
 &\qquad\qquad\qquad\qquad -\overline{\mathbb{E}}\left(\left.\left\{ \varepsilon_{N}\left(Y_{t},U_{t}^{\prime};\theta+\xi/\sqrt{T}\right)-\varepsilon_{N}\left(Y_{t},U_{t};\theta\right)\right\} \right\vert \mathcal{Y}^{T}\right)^{2}\Bigg]\\
 & =\lim_{T\rightarrow\infty}\frac{1}{\beta T}\sum_{t=1}^{T}\widetilde{\mathbb{E}}\left(\left.\varepsilon_{N}\left(Y_{t},U_{t};\theta\right)^{2}\right\vert \mathcal{Y}^{T}\right)+\mathbb{E}\left(\left.\varepsilon_{N}\left(Y_{t},U_{t}^{\prime};\theta+\xi/\sqrt{T}\right)^{2}\right\vert \mathcal{Y}^{T}\right)-\widetilde{\mathbb{E}}\left(\left.\varepsilon_{N}\left(Y_{t},U_{t};\theta\right)\right\vert \mathcal{Y}^{T}\right)^{2}\\
 & =\lim_{T\rightarrow\infty}\frac{1}{\beta T}\sum_{t=1}^{T}\gamma\left(Y_{t};\theta\right)^{2}+\frac{\rho_{3}\left(Y_{t};\theta\right)}{N}-\frac{\gamma\left(Y_{t};\theta\right)^{4}}{N}+\gamma\left(Y_{t};\theta+\xi/\sqrt{T}\right)^{2}
\end{align*}
as $\mathbb{E}\left[\left.\varepsilon_{N}\left(Y_{t},U_{t}^{\prime};\theta+\xi/\sqrt{T}\right)\right\vert \mathcal{Y}^{T}\right]=0$.
Now we have 
\[
\lim_{T\rightarrow\infty}\frac{1}{\beta T}\sum_{t=1}^{T}\gamma\left(Y_{t};\theta\right)^{2}+\frac{\rho_{3}\left(Y_{t};\theta\right)}{N}-\frac{\gamma\left(Y_{t};\theta\right)^{4}}{N}=\beta^{-1}\gamma\left(\theta\right)^{2}
\]
by the SLLN as $\mathbb{E}\left[\gamma\left(Y_{t};\theta\right)^{4}\right]<\infty$.
We also have by the WLLN\ for triangular arrays that 
\begin{equation}
\lim_{T\rightarrow\infty}\frac{1}{T}\sum_{t=1}^{T}\gamma\left(Y_{t};\theta\right)^{2}-\gamma\left(Y_{t};\theta+\xi/\sqrt{T}\right)^{2}\overset{\overline{\mathbb{P}}}{\rightarrow}0\label{eq:variationingamma}
\end{equation}
and 
\begin{align}
T^{(1-\alpha)/2}\left\vert \gamma\left(\theta+\xi/\sqrt{T}\right)^{2}-\gamma\left(\theta\right)^{2}\right\vert  & =T^{(1-\alpha)/2}\left\vert \int_{\theta}^{\theta+\frac{\xi}{\sqrt{T}}}\frac{\partial\gamma\left(\vartheta\right)^{2}}{\partial\vartheta}\mathrm{d}\vartheta\right\vert \label{eq:remainderingammavanishes}\\
 & \leq T^{(1-\alpha)/2}\frac{\xi}{\sqrt{T}}\sup_{\vartheta\in\left[\theta\wedge\left(\theta+\frac{\xi}{\sqrt{T}}\right),\theta\vee\left(\theta+\frac{\xi}{\sqrt{T}}\right)\right]}\left\vert \frac{\partial\gamma\left(\vartheta\right)^{2}}{\partial\vartheta}\right\vert \rightarrow0.\nonumber 
\end{align}
We have already seen in the proof of Theorem\ \ref{Theorem:CLTmarginal},
equation (\ref{eq:CVsecondterminproba}), that 
\begin{equation}
\frac{1}{T^{\left(1+\alpha\right)/2}}\sum_{t=1}^{T}\varepsilon_{N}\left(Y_{t},U_{t};\theta\right)^{2}-T^{\left(1-\alpha\right)/2}\gamma\left(\theta\right)^{2}\overset{\overline{\mathbb{P}}}{\rightarrow}0,\label{eq:CVsecondterminproba3}
\end{equation}
and using a similar argument as the one used in the proof of (\ref{eq:CLTexpressionmarginalequilibrium})
in Theorem \ref{Theorem:CLTmarginal}, equation (\ref{eq:CVsecondterminproba2}),
we have 
\begin{equation}
\frac{1}{T^{\left(1+\alpha\right)/2}}\sum_{t=1}^{T}\varepsilon_{N}\left(Y_{t},U_{t}^{\prime};\theta+\xi/\sqrt{T}\right)^{2}-T^{\left(1-\alpha\right)/2}\gamma\left(\theta\right)^{2}\overset{\overline{\mathbb{P}}}{\rightarrow}0\label{eq:CVsecondterminproba4}
\end{equation}
as $\alpha>1/3$ and (\ref{eq:remainderingammavanishes}) holds.

Hence (\ref{eq:CLTtermTheorem3}) minus its mean satisfies a conditional
CLT\ of limiting variance $2\beta^{-1}\gamma\left(\theta\right)^{2}$
because of (\ref{eq:variationingamma})-(\ref{eq:remainderingammavanishes}).
Using (\ref{eq:firsttermdecomposition2converges}), its mean plus
$\beta^{-1}T^{\left(1-\alpha\right)/2}\gamma\left(\theta\right)^{2}$
converges to zero in probability and (\ref{eq:constanttermTheorem3})
vanishes in probability so the final result follows from Lemma \ref{propn1}.

\subsection{Proof of Theorem \ref{Theorem:conditionalCLTthetathetacand}\label{Section:ProofCLTnightmare}}

\subsubsection{Notation and continuous-time embedding}

For $\delta_{T}=\psi\frac{N}{T}$, we have $\rho_{T}=\exp\left(-\delta_{T}\right)$
and we can write for $t\in1:T$ and $i\in1:N$ 
\begin{equation}
U_{t,i}^{\prime}=e^{-\delta_{T}}U_{t,i}+\sqrt{1-e^{-2\delta_{T}}}\varepsilon_{t,i},\text{ \ }\varepsilon_{t,i}\sim\mathcal{N}\left(0_{p},I_{p}\right).\label{eq:originalAR}
\end{equation}
It will prove convenient for our proof to embed this discrete-time
process within the following Ornstein-Uhlenbeck process 
\begin{equation}
\mathrm{d}U_{t,i}\left(s\right)=-U_{t,i}\left(s\right)\mathrm{d}s+\sqrt{2}\mathrm{d}B_{t,i}\left(s\right),\label{eq:Orstein-Uhlenbeck}
\end{equation}
where $B_{t,i}$ are independent $p-$dimensional standard Brownian
motions for $t\in1:T$ and $i\in1:N$. It is easy to check that we
can set equivalently $U_{t,i}^{\prime}=U_{t,i}\left(\delta_{T}\right)$
as the value of the Ornstein-Uhlenbeck process at time $s=\delta_{T}$
which has been initialized at time $s=0$ using $U_{t,i}\left(0\right)=U_{t,i}.$

Whenever it is clear, we will drop the $T$ index to keep the notation
reasonable. We define 
\[
\widehat{W}_{t}^{T}\left(\theta\right)=\widehat{W}_{t}^{T}(Y_{t}\mid\theta;U_{t})=\frac{\widehat{p}\left(\left.Y_{t}\right\vert \theta,U_{t}\right)}{p\left(Y_{t}\mid\theta\right)}=\frac{1}{N}\sum_{i=1}^{N}\varpi\left(Y_{t},U_{t,i};\theta\right),
\]
where the full notation shall be retained when evaluating at the proposal
$\theta^{\prime},U_{t}^{\prime}$ and 
\begin{equation}
\eta_{t}^{T}=\frac{\widehat{W}_{t}^{T}(Y_{t}\mid\theta^{\prime};U_{t}^{\prime})-\widehat{W}_{t}^{T}\left(\theta\right)}{\widehat{W}_{t}^{T}\left(\theta\right)}.\label{eq:zdef}
\end{equation}
Let $\mathcal{F}^{T}\subset\mathcal{G}$ be the sigma-algebra spanned
by $\left\{ \mathcal{U}^{T},\mathcal{Y}^{T}\right\} $ where $\mathcal{U}^{T}=\sigma\{U_{t,i};t\in1:T,i\in1:N\}$
and $\mathcal{Y}^{T}=\mathcal{\sigma}\left\{ Y_{t};t\in1:T\right\} $.
Let $\widetilde{\mathbb{E}}\left[\left.\cdot\right\vert \mathcal{Y}^{T}\right]$
denotes the expectation w.r.t $\{U_{t,i}\left(0\right);t\in1:T,i\in1:N\}\sim\overline{\pi}\left(\left.\cdot\right\vert \theta\right)$
and the Brownian motions $\{(B_{t,i}\left(s\right);s\geq0);t\in1:T,i\in1:N\}$
where $\overline{\pi}(\{u_{t,i}^{T};t\in1:T,i\in1:N\}|\theta)$ is
given by (\ref{eq:posteriorpanelfactorizes}) whereas $\mathbb{E}\left[\left.\cdot\right\vert \mathcal{Y}^{T}\right]$
denotes the expectation w.r.t $U_{t,i}\overset{\text{i.i.d.}}{\sim}\mathcal{N}\left(0_{p},I_{p}\right)$
and the Brownian motions $\{(B_{t,i}\left(s\right);s\geq0);t\in1:T,i\in1:N\}$.
Finally, we define the Stein operator $\mathcal{S}$ for a real-valued
function $g(y,u;\theta)$ 
\begin{equation}
\mathcal{S}\left\{ g(y,u;\theta)\right\} :=\left\langle \nabla_{u},\nabla_{u}g(y,u;\theta)\right\rangle -\left\langle u,\nabla_{u}g(y,u;\theta)\right\rangle .\label{eq:Steinoperator}
\end{equation}

\subsubsection{Assumptions\label{subsec:AssumptionsTheoremconditionalCLT}}
\begin{assumption}
\label{ass:momentsofW} There exists $\epsilon>0$ such that 
\[
\limsup_{T}\text{ }\mathbb{E}\left[\left(\widehat{W}_{1}^{T}\left(\theta\right)\right)^{-3-\epsilon}\right]<\infty.
\]
\end{assumption}
\begin{assumption}
\label{ass:dthetaFubini}There exists $\chi:\mathsf{Y}\times\mathbb{R}^{p}\rightarrow\mathbb{R}^{+}$
such that $\vartheta\mapsto\nabla_{\vartheta}\varpi\left(y,u;\vartheta\right)$
is continuous at $\vartheta=\theta,$ $\left\Vert \nabla_{\theta}\varpi\left(y,u;\theta\right)\right\Vert \leq\chi\left(y,u\right)$
for all $y,u\in\mathsf{Y}\times\mathbb{R}^{p}$, and 
\[
\mathbb{E}\left[\chi\left(Y_{1},U_{1,1}\right)^{4}\right]<\infty.
\]
\end{assumption}
\begin{assumption}
\label{ass:dUStein} We have 
\[
\mathbb{E}\left[\left\vert \left\langle \nabla_{u},\nabla_{u}\varpi\left(Y_{1},U_{1,1};\theta\right)\right\rangle \right\vert \right]<\infty.
\]
\end{assumption}
\begin{assumption}
\label{ass:SteindU} We have 
\[
\mathbb{E}\left[\left(\mathcal{S}\left\{ \left\Vert \nabla_{u}\varpi(Y_{1},U_{1,1};\theta)\right\Vert ^{2}\right\} \right)^{2}\right]<\infty.
\]
\end{assumption}
\begin{assumption}
\label{ass:dU4moments} There exists $\varkappa>0$ such that 
\[
\mathbb{E}\left[\left\Vert \nabla_{u}\varpi(Y_{1},U_{1,1};\theta)\right\Vert ^{4+\varkappa}\right]<\infty.
\]
\end{assumption}
\begin{assumption}
\label{ass:duSteinfourthmoment} We have 
\[
\mathbb{E}\left[\left(\mathcal{S}\left\{ \varpi\left(Y_{1},U_{1,1};\theta\right)\right\} \right)^{4}\right]<\infty.
\]
\end{assumption}

\subsubsection{Details of the proof}

To simplify the presentation of the proof, we only consider the case
where $\theta$ is a scalar parameter, the dimension of $U_{t,i}$
is $p=1$ and $\psi=1$, the multivariate extension is straightforward
although much more tedious. Let $\theta^{\prime}=\theta+\xi/\sqrt{T}$.
Notice that by definition of $\widehat{W}_{t}^{T}(Y_{t}\mid\theta^{\prime};U_{t}^{\prime}),$
$\widehat{W}_{t}^{T}:=\widehat{W}_{t}^{T}\left(\theta\right)$ and
a Taylor expansion we have 
\begin{align}
\sum_{t=1}^{T}\log\left(\frac{\widehat{W}_{t}^{T}(Y_{t}\mid\theta^{\prime};U_{t}^{\prime})}{\widehat{W}_{t}^{T}}\right) & =\sum_{t=1}^{T}\log(1+\eta_{t}^{T})\nonumber \\
 & =\sum_{t=1}^{T}\eta_{t}^{T}-\frac{1}{2}\sum_{t=1}^{T}[\eta_{t}^{T}]^{2}+\sum_{t=1}^{T}h(\eta_{t}^{T})[\eta_{t}^{T}]^{2},\label{eq:etaTaylor}
\end{align}
as $\log(1+x)=x-x^{2}/2+h(x)x^{2}$ with $h(x)=o(x)$ as $x\rightarrow0$.

The proof proceeds through several auxiliary Lemmas in three main
steps. First, we prove that $\sum_{t=1}^{T}\eta_{t}^{T}$ converges
to a zero-mean normal conditional upon $\mathcal{F}^{T}$. Second,
we show that $\sum_{t=1}^{T}\left(\eta_{t}^{T}\right)^{2}$ converges
in probability towards a constant. Third, we show that high-order
terms vanish in probability. The result then follows from Proposition
\ref{propn1}.

Using Itô's formula, we decompose $\eta_{t}^{T}$ as follows 
\begin{equation}
\eta_{t}^{T}=J_{t}^{T}+L_{t}^{T}+M_{t}^{T},\label{eq:definitionetat}
\end{equation}
where 
\begin{align}
J_{t}^{T} & =\frac{1}{N\widehat{W}_{t}^{T}}\sum_{i=1}^{N}\{\varpi(Y_{t},U_{t,i}\left(\delta_{T}\right);\theta+\xi/\sqrt{T})-\varpi(Y_{t},U_{t,i}\left(\delta_{T}\right);\theta)\},\label{eq:Jtdef}\\
L_{t}^{T} & =\int_{0}^{\delta_{T}}\frac{1}{N\widehat{W}_{t}^{T}}\sum_{i=1}^{N}\left\{ -\partial_{u}\varpi\left(Y_{t},U_{t,i}\left(s\right);\theta\right)U_{t,i}\left(s\right)+\partial_{u,u}^{2}\varpi\left(Y_{t},U_{t,i}\left(s\right);\theta\right)\right\} \mathrm{d}s,\label{eq:Ltdef}\\
M_{t}^{T} & =\int_{0}^{\delta_{T}}\frac{\sqrt{2}}{N\widehat{W}_{t}^{T}}\sum_{i=1}^{N}\partial_{u}\varpi\left(Y_{t},U_{t,i}\left(s\right);\theta\right)\mathrm{d}B_{t,i}\left(s\right).\label{eq:Mtdef}
\end{align}
The following preliminary Lemmas establish various properties of the
terms $J_{t}^{T},$ $L_{t}^{T},$ $M_{t}^{T}$ and $\eta_{t}^{T}$. 
\begin{lem}
\label{lem:Jt} The sequence $\left\{ J_{t}^{T};t\geq1\right\} $
defined in (\ref{eq:Jtdef}) satisfies 
\[
\widetilde{\mathbb{E}}\left(J_{t}^{T}\right)=0,\text{ \  \  \ }\widetilde{\mathbb{V}}\left(\sum_{t=1}^{T}J_{t}^{T}\right)=T\widetilde{\mathbb{V}}\left(J_{1}^{T}\right)\rightarrow0
\]
and $\sum_{t=1}^{T}J_{t}^{T}\overset{\widetilde{\mathbb{P}}}{\rightarrow}0$,
$\sum_{t=1}^{T}(J_{t}^{T})^{2}\overset{\widetilde{\mathbb{P}}}{\rightarrow}0$. 
\end{lem}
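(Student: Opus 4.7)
My plan proceeds in three steps, exploiting two key identities. First, the factor $1/\widehat{W}_t^T$ in the definition (\ref{eq:Jtdef}) of $J_t^T$ cancels exactly with the row-$t$ factor $\widehat{W}_t^T$ in the Radon-Nikodym derivative $\mathrm{d}\widetilde{\mathbb{P}}/\mathrm{d}\mathbb{P}$. Second, by stationarity of the Ornstein-Uhlenbeck process (\ref{eq:Orstein-Uhlenbeck}), $U_{t,i}(\delta_T)\sim\mathcal{N}(0_p,I_p)$ under $\mathbb{P}$, and the normalization of importance weights gives $\int\varpi(y,u;\vartheta)\varphi(u;0_p,I_p)\mathrm{d}u=1$ for every $\vartheta$; differentiating under the integral then yields $\mathbb{E}[\nabla_\theta\varpi(Y_1,U_{1,1};\theta)\mid Y_1]=0$.

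For $\widetilde{\mathbb{E}}(J_t^T)=0$, I write $\widetilde{\mathbb{E}}[J_t^T]=\mathbb{E}[\widehat{W}_t^T J_t^T]$; the cancellation reduces this to $\tfrac{1}{N}\sum_i\mathbb{E}[\varpi(Y_t,U_{t,i}(\delta_T);\theta+\xi/\sqrt{T})-\varpi(Y_t,U_{t,i}(\delta_T);\theta)]$, which vanishes termwise by the normalization identity. Since the Radon-Nikodym derivative factorizes across $t$, the $\{J_t^T\}_{t=1}^T$ are independent under $\widetilde{\mathbb{P}}$, giving $\widetilde{\mathbb{V}}(\sum_t J_t^T)=T\widetilde{\mathbb{V}}(J_1^T)$. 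The main content is then to show $\widetilde{\mathbb{E}}[(J_1^T)^2]=O(1/(TN))$. Setting $\Delta_{1,i}:=\varpi(Y_1,U_{1,i}(\delta_T);\theta+\xi/\sqrt{T})-\varpi(Y_1,U_{1,i}(\delta_T);\theta)$, the same cancellation identity gives $\widetilde{\mathbb{E}}[(J_1^T)^2]=\mathbb{E}[\widehat{W}_1^{-1}(N^{-1}\sum_i\Delta_{1,i})^2]$. Crucially $\mathbb{E}[\Delta_{1,i}\mid Y_1]=0$, so conditionally on $Y_1$ the $\Delta_{1,i}$ are centred i.i.d., and a Marcinkiewicz-Zygmund / $C_p$ bound gives $\mathbb{E}[(N^{-1}\sum_i\Delta_{1,i})^4\mid Y_1]\leq CN^{-2}\mathbb{E}[\Delta_{1,1}^4\mid Y_1]$. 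A mean-value-theorem inequality $|\Delta_{1,1}|\leq(|\xi|/\sqrt{T})\chi(Y_1,U_{1,1}(\delta_T))$ from Assumption~\ref{ass:dthetaFubini} then delivers $\mathbb{E}[\Delta_{1,1}^4]=O(T^{-2})$, and Cauchy-Schwarz combined with $\mathbb{E}[\widehat{W}_1^{-2}]<\infty$ (a consequence of Assumption~\ref{ass:momentsofW}) yields $\widetilde{\mathbb{E}}[(J_1^T)^2]\leq C/(TN)$, hence $T\widetilde{\mathbb{V}}(J_1^T)=O(1/N)\to 0$.

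The two in-probability statements then follow immediately from the preceding moment bounds: Chebyshev gives $\widetilde{\mathbb{P}}(|\sum_t J_t^T|>\epsilon)\leq T\widetilde{\mathbb{V}}(J_1^T)/\epsilon^2\to 0$, and Markov gives $\widetilde{\mathbb{P}}(\sum_t(J_t^T)^2>\epsilon)\leq T\widetilde{\mathbb{E}}[(J_1^T)^2]/\epsilon=O(1/N)\to 0$. The main technical obstacle is the mean-value inequality on $|\Delta_{1,1}|$, since Assumption~\ref{ass:dthetaFubini} only asserts the bound $\|\nabla_\vartheta\varpi\|\leq\chi$ pointwise at $\vartheta=\theta$; I would handle this via a Taylor expansion with integral remainder, exploiting continuity of $\vartheta\mapsto\nabla_\vartheta\varpi$ at $\theta$ and using the $\chi$-envelope plus dominated convergence to bound the remainder in $L^4$ by an $o(T^{-2})$ term, which is absorbed without changing the final rate.
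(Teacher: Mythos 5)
Your proof is correct and follows essentially the same route as the paper: the Radon--Nikodym cancellation reducing $\widetilde{\mathbb{E}}$ to $\mathbb{E}$ with a single inverse weight, the conditional centering of the increments via the normalisation of $\varpi$, the fourth-moment bound $\mathbb{E}[(\sum_i Z_i)^4]\leq cN^2\mathbb{E}[Z_1^4]$ for centred i.i.d. terms, Cauchy--Schwarz against $\mathbb{E}[(\widehat{W}_1^T)^{-2}]^{1/2}$, and the $O(1/(NT))$ rate yielding $T\widetilde{\mathbb{V}}(J_1^T)=O(1/N)\to 0$. Your closing remark on justifying the envelope bound for $\partial_\vartheta\varpi$ on the shrinking interval via an integral-remainder Taylor argument in fact treats more carefully a step the paper passes over with the bare assertion $\vert\partial_\vartheta\varpi\vert\leq 2\chi$ for $T$ large.
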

\begin{lem}
\label{lem:Lt} The sequence $\left\{ L_{t}^{T};t\geq1\right\} $
defined in (\ref{eq:Ltdef}) satisfies 
\[
\widetilde{\mathbb{E}}\left(L_{t}^{T}\right)=0,\text{ \  \  \ }\widetilde{\mathbb{V}}\left(\sum_{t=1}^{T}L_{t}^{T}\right)=T\widetilde{\mathbb{V}}\left(L_{1}^{T}\right)\rightarrow0,
\]
and $\sum_{t=1}^{T}L_{t}^{T}\overset{\widetilde{\mathbb{P}}}{\rightarrow}0$. 
\end{lem}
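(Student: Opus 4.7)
The plan is to use It\^{o}'s formula to collapse $L_t^T$ into an algebraic combination of the normalised weights and of $M_t^T$, to deduce $\widetilde{\mathbb{E}}[L_t^T]=0$ from reversibility of the Ornstein--Uhlenbeck semigroup together with the unbiasedness identity $\mathbb{E}[\widehat{W}_t^T\mid Y_t]=1$, to obtain the variance identity via the product structure of $\widetilde{\mathbb{P}}$ and (\ref{eq:posteriorpanelfactorizes}), and finally to bound $\widetilde{\mathbb{V}}(L_1^T)$ at rate $O(\delta_T^2/N)$ using Stein-type cancellation across the $N$ copies.

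Since $\mathcal{S}$ is precisely the generator of the OU dynamics (\ref{eq:Orstein-Uhlenbeck}), It\^{o}'s formula applied to $s\mapsto\varpi(Y_t,U_{t,i}(s);\theta)$ and averaged over $i\in 1:N$ yields the telescoping identity
\begin{equation*}
L_t^T+M_t^T=\widehat{W}_t^T(Y_t\mid\theta;U_t(\delta_T))/\widehat{W}_t^T-1.
\end{equation*}
The stochastic integral $M_t^T$ is a $\widetilde{\mathbb{P}}$-martingale in the Brownian filtration enlarged by $\mathcal{F}^T$ (its integrand is adapted and square integrable by Assumption~\ref{ass:dU4moments}), so $\widetilde{\mathbb{E}}[M_t^T\mid\mathcal{F}^T]=0$. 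For the remaining ratio, reversibility of the OU transition $p_{\delta_T}$ with respect to $\varphi$ gives $\varphi(u)p_{\delta_T}(u,u')=\varphi(u')p_{\delta_T}(u',u)$; this allows one to rewrite the joint $\widetilde{\mathbb{P}}$-density of $(U_t(0),U_t(\delta_T))$ given $Y_t$ with the weight $\widehat{W}_t^T(U_t(0))$ attached to $U_t(0)$ in a way that cancels its inverse under the expectation. Integrating $U_t(0)$ out then reduces the remaining integrand to $\widehat{W}_t^T(Y_t\mid\theta;U_t(\delta_T))\prod_k\varphi(u_k(\delta_T))$, whose integral equals $1$ by unbiasedness. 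Hence $\widetilde{\mathbb{E}}[L_t^T\mid Y_t]=0$ and \emph{a fortiori} $\widetilde{\mathbb{E}}[L_t^T]=0$.

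Independence of $(Y_t,U_{t,1:N}(0),B_{t,1:N}(\cdot))_{t=1}^T$ under $\widetilde{\mathbb{P}}$ follows from (\ref{eq:posteriorpanelfactorizes}) and from the independence of the driving Brownian motions, so $\widetilde{\mathbb{V}}(\sum_tL_t^T)=T\widetilde{\mathbb{V}}(L_1^T)$. To control the single-term variance, set $\bar{\xi}(s):=N^{-1}\sum_i\mathcal{S}\{\varpi(Y_1,\cdot;\theta)\}(U_{1,i}(s))$ so that $L_1^T=\int_0^{\delta_T}\bar{\xi}(s)/\widehat{W}_1^T\,\mathrm{d}s$. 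Jensen in time and a change of measure to $\mathbb{P}$ give
\begin{equation*}
\widetilde{\mathbb{E}}[(L_1^T)^2]\leq\delta_T^2\sup_{s\in[0,\delta_T]}\sqrt{\mathbb{E}[\bar{\xi}(s)^4]}\sqrt{\mathbb{E}[(\widehat{W}_1^T)^{-2}]}
\end{equation*}
by Cauchy--Schwarz. Under $\mathbb{P}$ the OU is initialised at its stationary law, so $U_{1,i}(s)\overset{\text{i.i.d.}}{\sim}\mathcal{N}(0,1)$ and the summands of $\bar{\xi}(s)$ are, conditionally on $Y_1$, i.i.d. with mean zero by Stein's identity. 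The usual fourth-moment expansion $\mathbb{E}[\bar{\xi}(s)^4]=N^{-3}\mathbb{E}[(\mathcal{S}\{\varpi\})^4]+3N^{-2}(1-N^{-1})\mathbb{E}[(\mathcal{S}\{\varpi\})^2]^2$, together with Assumption~\ref{ass:duSteinfourthmoment}, then gives $\mathbb{E}[\bar{\xi}(s)^4]=O(N^{-2})$, while Assumption~\ref{ass:momentsofW} bounds $\mathbb{E}[(\widehat{W}_1^T)^{-2}]$ uniformly in $T$. Hence $\widetilde{\mathbb{E}}[(L_1^T)^2]=O(\delta_T^2/N)=O(N/T^2)$, and $T\widetilde{\mathbb{V}}(L_1^T)=O(N/T)\to 0$ under the hypothesis $N/T\to 0$. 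Chebyshev's inequality then delivers $\sum_tL_t^T\overset{\widetilde{\mathbb{P}}}{\to}0$.

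The main obstacle is the Stein cancellation that turns the naive $O(\delta_T^2)$ bound on $\widetilde{\mathbb{E}}[(L_1^T)^2]$ into the sharper $O(\delta_T^2/N)$: the former would only give $T\widetilde{\mathbb{V}}(L_1^T)=O(N^2/T)$, which diverges whenever $N\gg T^{1/2}$ and hence is incompatible with the full sub-linear regime $N/T\to 0$ covered by the theorem. Verifying that Stein's identity applies to $\varpi(Y_1,\cdot;\theta)$, that all the relevant fourth-order quantities are finite, and that the change of measure between $\mathbb{P}$ and $\widetilde{\mathbb{P}}$ is legitimate, is precisely the content of Assumptions~\ref{ass:momentsofW}, \ref{ass:dUStein} and \ref{ass:duSteinfourthmoment}.
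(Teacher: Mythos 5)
Your proof is correct and follows essentially the same route as the paper's: both identify $L_t^T$ as the time integral of $N^{-1}\sum_i\mathcal{S}\{\varpi\}/\widehat{W}_t^T$, use the cancellation of $\widehat{W}_t^T$ under the change of measure from $\widetilde{\mathbb{P}}$ to $\mathbb{P}$, and obtain the crucial extra factor $N^{-1}$ in $\widetilde{\mathbb{V}}(L_1^T)=O(\delta_T^2/N)$ from Cauchy--Schwarz combined with the fourth moment of the i.i.d.\ conditionally centred Stein terms, so that $T\widetilde{\mathbb{V}}(L_1^T)=O(N/T)\to0$ and Chebyshev concludes. The only (harmless) deviation is the mean-zero step, which you derive from the telescoping It\^{o} identity together with $m$-reversibility of $K_{\rho_T}$ and unbiasedness of $\widehat{W}_t^T$, whereas the paper applies Stein's lemma directly to $\mathbb{E}[\mathcal{S}\{\varpi(Y_t,U_{t,i}(s);\theta)\}]=0$.
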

\begin{lem}
\label{lem:Mt} The sequence $\left\{ M_{t}^{T};t\geq1\right\} $
defined in (\ref{eq:Mtdef}) satisfies 
\[
\widetilde{\mathbb{E}}[(M_{t}^{T})^{2}]=O(1/T),\text{ \  \ }\left.\sum_{t=1}^{T}M_{t}^{T}\right\vert \mathcal{F}^{T}\Rightarrow\mathcal{N}\left(0,\frac{\kappa\left(\theta\right)^{2}}{2}\right).
\]
\end{lem}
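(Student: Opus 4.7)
The plan is to exploit the continuous-time It\^o embedding so that, conditional on $\mathcal{F}^T$, $M_t^T$ becomes a sum of $N$ It\^o integrals driven by the mutually independent Brownian motions $\{B_{t,i}\}_{i=1}^N$, with $\widehat{W}_t^T$ treated as a constant. Because the Brownian motions used for different $t$ are disjoint, the family $\{M_t^T\}_{t=1}^T$ is conditionally independent with mean zero, which turns the desired CLT into a conditional Lindeberg--Feller statement for a triangular array of independent zero-mean summands.

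For the second-moment claim I would apply It\^o's isometry to obtain
\[
\widetilde{\mathbb{E}}[(M_t^T)^2]=\frac{2}{N^2}\,\widetilde{\mathbb{E}}\!\left[\frac{1}{(\widehat{W}_t^T)^2}\sum_{i=1}^N\int_0^{\delta_T}\bigl[\partial_u\varpi(Y_t,U_{t,i}(s);\theta)\bigr]^2\mathrm{d}s\right],
\]
bound $(\widehat{W}_t^T)^{-2}$ via H\"older together with Assumption~\ref{ass:momentsofW}, bound the derivative moment via Assumption~\ref{ass:dU4moments}, and use that under $\widetilde{\mathbb{P}}$ the marginal law of each $U_{t,i}(s)$ differs from $\mathcal{N}(0_p,I_p)$ by an $O(1/N)$ correction (as in Lemma~\ref{Lemma:relationshipmomentproposalequilibrium}). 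Since $\delta_T=\psi N/T$, this yields $\widetilde{\mathbb{E}}[(M_t^T)^2]=O(\delta_T/N)=O(1/T)$.

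For the CLT, I would compute the conditional quadratic variation $V_T:=\sum_t\widetilde{\mathbb{E}}[(M_t^T)^2\mid\mathcal{F}^T]$ and prove it converges in probability to the announced variance. The leading-order term obtained by replacing $U_{t,i}(s)$ with $U_{t,i}(0)$ in the integrand becomes, after substituting $\delta_T=\psi N/T$,
\[
\frac{2\psi}{T}\sum_{t=1}^T\frac{1}{(\widehat{W}_t^T)^2}\cdot\frac{1}{N}\sum_{i=1}^N\bigl[\partial_u\varpi(Y_t,U_{t,i};\theta)\bigr]^2,
\]
and I would argue that, under $\widetilde{\mathbb{P}}$, $\widehat{W}_t^T\to 1$ in probability and the inner average over $i$ converges to $\mathbb{E}\bigl[[\partial_u\varpi(Y_t,U_{1,1};\theta)]^2\mid Y_t\bigr]$, after which an SLLN in $t$ delivers $\kappa(\theta)^2/2$. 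To bound the replacement error I would apply It\^o's formula to $[\partial_u\varpi]^2$ along the OU flow and use the Stein-type identity encoded by the operator $\mathcal{S}$, so that Assumption~\ref{ass:SteindU} makes the drift term summable and of negligible order after being multiplied by $\delta_T$ and summed over $t,i$.

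Finally, for the Lindeberg condition it suffices to bound $\sum_t\widetilde{\mathbb{E}}[(M_t^T)^4\mid\mathcal{F}^T]$ by $o(1)$; this follows from the Burkholder--Davis--Gundy inequality applied to each $M_t^T$, which bounds its conditional fourth moment by the fourth moment of the quadratic variation and thus reduces to integrability of $[\partial_u\varpi]^4$ and Assumption~\ref{ass:duSteinfourthmoment}, together with the scaling $\delta_T^2=O(N^2/T^2)$. The main obstacle I anticipate is the quantitative control of the substitution $U_{t,i}(s)\mapsto U_{t,i}(0)$: a naive Gronwall bound only gives per-term errors that do not beat the $T$ sum, so sharper cancellations extracted from the Stein operator identity will be essential, and the discrepancy between the laws of $\{U_{t,i}(0)\}$ under $\widetilde{\mathbb{P}}$ and the i.i.d.~Gaussian reference measure must be tracked at order $1/N$ throughout.
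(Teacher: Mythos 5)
Your overall architecture matches the paper's: the continuous-time Ornstein--Uhlenbeck embedding, conditional independence of the $M_t^T$ across $t$ given $\mathcal{F}^T$, It\^o's isometry for the second moment, It\^o's formula applied to $\{\partial_u\varpi\}^2$ together with the Stein operator $\mathcal{S}$ (whose $\mathbb{P}$-expectation vanishes by stationarity of the OU flow) to control the substitution $U_{t,i}(s)\mapsto U_{t,i}(0)$ in the conditional quadratic variation, and then a WLLN over $t$ (the paper runs this through a de la Vall\'ee-Poussin uniform-integrability argument for the triangular array $g_T(Y_1,U_1^T)$) followed by the conditional Lindeberg CLT of Lemma~\ref{lem:conditionalLindebergCLT}. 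Your identification of the Stein identity as the mechanism that beats the naive Gronwall bound is exactly right. One bookkeeping remark: your displayed leading term carries the factor $2\psi$, so the average over $t$ converging to $\mathbb{E}[\{\partial_u\varpi\}^2]=\kappa^2/(2\psi)$ gives $s_T^2\to\kappa^2$, not $\kappa^2/2$; you should track this factor of $2$ carefully against the stated limit.

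The genuine gap is in your Lindeberg verification. You propose to dominate $\sum_t\widetilde{\mathbb{E}}[(M_t^T)^2\mathbb{I}\{|M_t^T|\geq\varepsilon\}\mid\mathcal{F}^T]$ by $\varepsilon^{-2}\sum_t\widetilde{\mathbb{E}}[(M_t^T)^4\mid\mathcal{F}^T]$ and control the fourth moment by Burkholder--Davis--Gundy. But $(M_t^T)^4$ contains the factor $(\widehat{W}_t^T)^{-4}$, and after the change of measure $\mathrm{d}\widetilde{\mathbb{P}}/\mathrm{d}\mathbb{P}=\prod_s\widehat{W}_s^T$ one is left with $\mathbb{E}[(\widehat{W}_t^T)^{-3}(\cdot)^4]$; separating the weight from the stochastic integral by Cauchy--Schwarz then requires $\mathbb{E}[(\widehat{W}_1^T)^{-6}]<\infty$, while Assumption~\ref{ass:momentsofW} only provides negative moments of order $3+\epsilon$. (Trying H\"older with exponent $(3+\epsilon)/3$ on the weight instead pushes the burden onto moments of $\partial_u\varpi$ of order far above the $4+\varkappa$ guaranteed by Assumption~\ref{ass:dU4moments}.) The paper avoids this by \emph{not} discarding the indicator: it writes $\widetilde{\mathbb{E}}[(M_1^T)^2\mathbb{I}\{|M_1^T|\geq\varepsilon\}]$ as $\frac{2}{N^2}\widetilde{\mathbb{E}}[\mathbb{I}\,(\widehat{W}_1^T)^{-3}]^{1/2}\,\widetilde{\mathbb{E}}[(\widehat{W}_1^T)^{-1}(\sum_i\int\partial_u\varpi\,\mathrm{d}B_{1,i})^4]^{1/2}$, controls the first factor by H\"older (with exponents $1/(1+\epsilon)$ and $\epsilon/(1+\epsilon)$) together with Chebyshev and the already-established bound $\widetilde{\mathbb{E}}[(M_1^T)^2]=O(1/T)$, and controls the second via the Zakai moment inequality under $\mathbb{P}$, where the weight cancels. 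You would either need to adopt this indicator-preserving argument or strengthen the negative-moment assumption on $\widehat{W}_1^T$ for your route to close.
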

\begin{lem}
\label{lem:zsquared} The sequence $\left\{ \eta_{t}^{T};t\geq1\right\} $
defined in (\ref{eq:definitionetat}) satisfies 
\[
\sum_{t=1}^{T}(\eta_{t}^{T})^{2}\overset{\widetilde{\mathbb{P}}}{\rightarrow}\kappa^{2}\left(\theta\right).
\]
\end{lem}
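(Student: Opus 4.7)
\textbf{Proof proposal for Lemma \ref{lem:zsquared}.} The plan is to substitute $\eta_t^T = J_t^T + L_t^T + M_t^T$ from (\ref{eq:definitionetat}) into $\sum_{t=1}^T (\eta_t^T)^2$ and show that only $\sum_t (M_t^T)^2$ survives in the limit, converging to $\kappa^2(\theta)$. Expanding the square produces three diagonal pieces and three cross-products. The term $\sum_t (J_t^T)^2 \overset{\widetilde{\mathbb{P}}}{\to} 0$ is already supplied by Lemma \ref{lem:Jt}, and the three cross-products are dispatched by $|\sum_t A_t B_t| \leq (\sum_t A_t^2)^{1/2}(\sum_t B_t^2)^{1/2}$ as soon as $\sum_t (L_t^T)^2 \to 0$ in probability and $\sum_t (M_t^T)^2$ is $O_{\widetilde{\mathbb{P}}}(1)$. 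For $\sum_t (L_t^T)^2$, I would observe that the integrand in (\ref{eq:Ltdef}) equals $\mathcal{S}\{\varpi\}(Y_t, U_{t,i}(s);\theta)$ with $\mathcal{S}$ the Stein operator of (\ref{eq:Steinoperator}); a Cauchy--Schwarz bound on the $s$-integral, combined with H\"older on the factor $1/\widehat{W}_t^T$ via Assumption \ref{ass:momentsofW} and Assumption \ref{ass:duSteinfourthmoment} for the second moment of $\mathcal{S}\{\varpi\}$, yields $\widetilde{\mathbb{E}}[(L_t^T)^2] = O(\delta_T^2/N)$, hence $\sum_t \widetilde{\mathbb{E}}[(L_t^T)^2] = O(N/T) \to 0$.

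The heart of the proof is $\sum_t (M_t^T)^2 \overset{\widetilde{\mathbb{P}}}{\to} \kappa^2(\theta)$, which I split into the limit of the conditional mean and a vanishing centered residual. Since $\widehat{W}_t^T$ depends only on $U_{t,\cdot}(0)$ and is constant in the stochastic integration variable $s$, It\^o's isometry gives
\begin{equation*}
\widetilde{\mathbb{E}}[(M_t^T)^2 \mid \mathcal{Y}^T, U_{t,\cdot}(0)] = \frac{2}{N^2 (\widehat{W}_t^T)^2} \sum_{i=1}^N \int_0^{\delta_T} \mathbb{E}[\partial_u \varpi(Y_t, U_{t,i}(s);\theta)^2 \mid U_{t,i}(0)]\,\mathrm{d}s.
\end{equation*}
Changing measure back to $\mathbb{P}$ through the density $\widehat{W}_t^T$, approximating $U_{t,i}(s) \approx U_{t,i}(0)$ for $s \leq \delta_T \to 0$, and using $\widehat{W}_t^T \overset{\mathbb{P}}{\to} 1$ by the SLLN in the particles gives $\widetilde{\mathbb{E}}[(M_t^T)^2 \mid \mathcal{Y}^T] = (2\delta_T/N)\,\mathbb{E}[\partial_u \varpi(Y_t, U_{1,1};\theta)^2 \mid Y_t] + o(\delta_T/N)$. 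Summing in $t$, using $T\delta_T/N = 1$ (recall $\psi=1$ in the proof convention), and applying the SLLN over $Y_1,\ldots,Y_T$ reduces the conditional mean to $2\,\mathbb{E}[\partial_u \varpi(Y_1, U_{1,1};\theta)^2] = \kappa^2(\theta)$. For the residual, conditional independence across $t$ gives $\widetilde{\mathbb{V}}(\sum_t (M_t^T)^2) \leq \sum_t \widetilde{\mathbb{E}}[(M_t^T)^4]$; Burkholder--Davis--Gundy followed by Assumption \ref{ass:dU4moments} on the fourth moment of $\partial_u \varpi$ and H\"older with Assumption \ref{ass:momentsofW} for the resulting $(\widehat{W}_t^T)^{-4}$ factor bounds this by $C\,T\delta_T^2/N^2 = O(1/T)$.

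The main obstacle is the tight interaction between the $1/\widehat{W}_t^T$ singularity and the moments of $\partial_u \varpi$ and $\mathcal{S}\{\varpi\}$ under the change of measure to $\widetilde{\mathbb{P}}$: Assumption \ref{ass:momentsofW} provides a $(3+\epsilon)$-moment for $(\widehat{W}_t^T)^{-1}$ uniformly in $T$, which must be matched through H\"older to the $(4+\varkappa)$-moment in Assumption \ref{ass:dU4moments}, and this bookkeeping is precisely what dictates the form of Assumptions \ref{ass:momentsofW}--\ref{ass:duSteinfourthmoment}. A secondary subtlety is quantifying the error in replacing $U_{t,i}(s)$ by $U_{t,i}(0)$ uniformly on $[0,\delta_T]$ for $t \in 1{:}T$; this follows from smoothness of $\partial_u \varpi$ in $u$ (Assumption \ref{ass:dU4moments} together with the Stein-type controls) and Gaussian moments of the OU increment $U_{t,i}(s)-U_{t,i}(0)$, but propagating these bounds uniformly in $T$ is the main technical sticking point.
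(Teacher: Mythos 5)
Your overall decomposition coincides with the paper's: expand $\sum_t(\eta_t^T)^2$ using $\eta_t^T=J_t^T+L_t^T+M_t^T$, dispose of $\sum_t(J_t^T)^2$ via Lemma~\ref{lem:Jt}, of $\sum_t(L_t^T)^2$ via the variance bound of Lemma~\ref{lem:Lt} (which you re-derive rather than cite; your estimate $\widetilde{\mathbb{E}}[(L_t^T)^2]=O(\delta_T^2/N)=O(N/T^2)$ is exactly what that proof gives), and of the cross terms by Cauchy--Schwarz, reducing everything to $\sum_t(M_t^T)^2\overset{\widetilde{\mathbb{P}}}{\rightarrow}\kappa^2(\theta)$. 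Your computation of the conditional mean of that last sum is essentially the derivation of $s_T^2\overset{\widetilde{\mathbb{P}}}{\rightarrow}\kappa(\theta)^2$ already carried out in the proof of Lemma~\ref{lem:Mt}; note that replacing $U_{t,i}(s)$ by $U_{t,i}(0)$ on $[0,\delta_T]$ is not a pointwise approximation but precisely the Stein-operator remainder controlled by Assumption~\ref{ass:SteindU}, which you invoke only for the $L_t^T$ term, and the replacement of $(\widehat{W}_t^T)^{-1}$ by $1$ requires the uniform integrability argument of that proof, not merely $\widehat{W}_t^T\overset{\mathbb{P}}{\rightarrow}1$.

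The genuine gap is in your treatment of the fluctuations of $\sum_t(M_t^T)^2$. You bound $\widetilde{\mathbb{V}}(\sum_t(M_t^T)^2)\leq\sum_t\widetilde{\mathbb{E}}[(M_t^T)^4]$ and claim this is $O(T\delta_T^2/N^2)=O(1/T)$. However $\widetilde{\mathbb{E}}[(M_t^T)^4]=\mathbb{E}\bigl[(\widehat{W}_t^T)^{-3}\bigl(\tfrac{\sqrt{2}}{N}\sum_i\int_0^{\delta_T}\partial_u\varpi\,\mathrm{d}B_{t,i}\bigr)^4\bigr]$, and separating the two factors requires either $\mathbb{E}[(\widehat{W}_t^T)^{-6}]<\infty$ paired with an eighth moment of $\partial_u\varpi$ (Cauchy--Schwarz), or, using only the $(3+\epsilon)$-th negative moment of Assumption~\ref{ass:momentsofW}, a $4(3+\epsilon)/\epsilon$-th moment of $\partial_u\varpi$; neither is implied by Assumptions~\ref{ass:momentsofW} and~\ref{ass:dU4moments}. (The stronger moments are exactly what the paper assumes separately, e.g.\ Assumptions~\ref{ass:momentsofWhigher}--\ref{ass:dut}, when it genuinely needs fourth moments of $M_t^T$ for Proposition~\ref{Proposition:slowdown}.) The paper avoids the fourth moment entirely: since the $(M_t^T)^2$ are conditionally independent given $\mathcal{F}^T$, it applies the conditional WLLN, whose hypothesis is the truncated second-moment condition $\sum_t\widetilde{\mathbb{E}}[(M_t^T)^2\mathbb{I}\{|M_t^T|\geq\varepsilon\}\mid\mathcal{F}^T]\overset{\widetilde{\mathbb{P}}}{\rightarrow}0$, already verified in the proof of Lemma~\ref{lem:Mt} by a H\"older split that trades the indicator for a small power of $\widetilde{\mathbb{P}}(|M_1^T|\geq\varepsilon)=O(1/T)$ rather than demanding a higher moment. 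Replacing your Chebyshev step by that argument (or simply citing the Lindeberg verification of Lemma~\ref{lem:Mt}) closes the gap.
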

Armed with the above results, we can now prove Theorem~\ref{Theorem:conditionalCLTthetathetacand}.
Combining Lemmas~\ref{lem:Jt}, \ref{lem:Lt}, \ref{lem:Mt} and
\ref{lem:zsquared} with Lemma~\ref{propn1} from Section \ref{Section:Conditionalweakconvergence},
we immediately obtain that 
\[
\sum_{t=1}^{T}\eta_{t}-\frac{1}{2}\left(\eta_{t}^{T}\right)^{2}\left\vert \mathcal{F}^{T}\right.\Rightarrow\mathcal{N}\left(-\frac{\kappa\left(\theta\right)^{2}}{2},\kappa\left(\theta\right)^{2}\right).
\]
It remains to control the remainder from the Taylor expansion (\ref{eq:etaTaylor}).
We bound it using Lemma~\ref{lem:zsquared} as 
\[
\left\vert \sum_{t=1}^{T}h\left(\eta_{t}^{T}\right)\left(\eta_{t}^{T}\right)^{2}\right\vert \leq\max_{t}\text{ }\left\vert h\left(\eta_{t}^{T}\right)\right\vert \text{ }\sum_{t=1}^{T}\eta_{t}^{2}=\max_{t}\text{ }|h(\eta_{t}^{T})|\text{ }O_{_{\widetilde{\mathbb{P}}}}(1).
\]

Without loss of generality we can assume that $\left\vert h\left(x\right)\right\vert \leq g\left(\left\vert x\right\vert \right)$
where $g$ is increasing on $\left[0,\infty\right)$ and $\lim_{x\rightarrow0^{+}}g\left(x\right)=0$
so that 
\[
\max_{t}\text{ }\left\vert h\left(\eta_{t}^{T}\right)\right\vert \leq g\left(\max_{t}\text{ }\left\vert \eta_{t}^{T}\right\vert \right)
\]
and 
\begin{align}
\widetilde{\mathbb{P}}\left(\max_{t}\left\vert \eta_{t}^{T}\right\vert \leq\varepsilon\right) & ={\prod\nolimits _{t=1}^{T}}\left\{ 1-\widetilde{\mathbb{P}}\left(\left\vert \eta_{t}^{T}\right\vert >\varepsilon\right)\right\} \geq\left(1-\varepsilon^{-2}\widetilde{\mathbb{E}}\left(\left(\eta_{1}^{T}\right)^{2}\mathbb{I}\left(\left\vert \eta_{1}^{T}\right\vert >\varepsilon\right)\right)\right)^{T}.\label{eq:probabound}
\end{align}
By using the decomposition of $\eta_{1}^{T}$, we have using the C$_{p}$
inequality 
\begin{align*}
T\widetilde{\mathbb{E}}\left(\left(\eta_{1}^{T}\right)^{2}\mathbb{I}\left(\left\vert \eta_{1}^{T}\right\vert >\varepsilon\right)\right) & \leq c\text{ }T\left(\widetilde{\mathbb{V}}(J_{1}^{T})+\widetilde{\mathbb{V}}(L_{1}^{T})+\widetilde{\mathbb{E}}\left[\left(M_{1}^{T}\right)^{2}\mathbb{I}\left(\left\vert \eta_{1}^{T}\right\vert >\varepsilon\right)\right]\right)\\
 & =o(1),
\end{align*}
where we have used Lemmas \ref{lem:Jt} and \ref{lem:Lt} for the
terms involving $J_{1}^{T}$ and $L_{1}^{T}$. The term involving
$M_{1}^{T}$ vanishes by uniform integrability of the family $\left\{ T(M_{1}^{T})^{2};T\geq1\right\} $,
the proof of which can be found in the proof of Lemma~\ref{lem:Mt}
where the Lindeberg condition is verified. Therefore overall we have
\[
\widetilde{\mathbb{E}}\left(\left\vert \eta_{1}^{T}\right\vert ^{2}\mathbb{I}\left(\left\vert \eta_{1}^{T}\right\vert >\varepsilon\right)\right)=o(1/T),
\]
and thus (\ref{eq:probabound})$\ $converges towards $1$ as $T\rightarrow\infty$.
Hence we have $g\left(\underset{t}{\max}\left\vert \eta_{t}^{T}\right\vert \right)=o_{\widetilde{\mathbb{P}}}(1)$
and the result follows.

\subsubsection{Proofs of Auxiliary Results}
\begin{proof}[Proof of Lemma ~\textit{\ref{lem:Jt}}]
From Assumption \ref{ass:dthetaFubini}, we obtain directly $\widetilde{\mathbb{E}}\left(J_{t}^{T}\right)=0$
and we can rewrite $J_{t}^{T}$ as follows 
\[
J_{t}^{T}=\frac{1}{N\widehat{W}_{t}^{T}}\sum_{i=1}^{N}\int_{\theta}^{\theta^{\prime}}\partial_{\vartheta}\varpi(Y_{t},U_{t,i}(\delta_{T});\vartheta)\mathrm{d}\vartheta,
\]
where $\theta^{\prime}=\theta+\xi/\sqrt{T}$. Thus we obtain 
\begin{align*}
\widetilde{\mathbb{V}}\left(\sum_{t=1}^{T}J_{t}^{T}\right) & =\sum_{t=1}^{T}\widetilde{\mathbb{V}}\left(J_{t}^{T}\right)=\sum_{t=1}^{T}\widetilde{\mathbb{E}}\left(\left(J_{t}^{T}\right)^{2}\right)\\
 & =\sum_{t=1}^{T}\widetilde{\mathbb{E}}\left(\left(\frac{1}{N\widehat{W}_{t}^{T}}\sum_{i=1}^{N}\int_{\theta}^{\theta^{\prime}}\partial_{\vartheta}\varpi(Y_{t},U_{t,i}(\delta_{T});\vartheta)\mathrm{d}\vartheta\right)^{2}\right)\\
 & =\sum_{t=1}^{T}\frac{1}{N^{2}}\mathbb{E}\left((\widehat{W}_{t}^{T})^{-1}\left(\sum_{i=1}^{N}\int_{\theta}^{\theta^{\prime}}\partial_{\vartheta}\varpi(Y_{t},U_{t,i}(\delta_{T});\vartheta)\mathrm{d}\vartheta\right)^{2}\right)\\
 & \leq\sum_{t=1}^{T}\frac{1}{N^{2}}\mathbb{E}\left((\widehat{W}_{t}^{T})^{-2}\right)^{1/2}\mathbb{E}\left(\left[\sum_{i=1}^{N}\int_{\theta}^{\theta^{\prime}}\partial_{\vartheta}\varpi(Y_{t},U_{t,i}(\delta_{T});\vartheta)\mathrm{d}\vartheta\right]^{4}\right)^{1/2}\\
 & =\frac{T}{N^{2}}\mathbb{E}\left((\widehat{W}_{1}^{T})^{-2}\right)^{1/2}\mathbb{E}\left(\left[\sum_{i=1}^{N}\int_{\theta}^{\theta^{\prime}}\partial_{\vartheta}\varpi(Y_{t},U_{t,i}(\delta_{T});\vartheta)\mathrm{d}\vartheta\right]^{4}\right)^{1/2}\\
 & \leq c\frac{TN}{N^{2}}\mathbb{E}\left((\widehat{W}_{1}^{T})^{-2}\right)^{1/2}\mathbb{E}\left(\left[\int_{\theta}^{\theta^{\prime}}\partial_{\vartheta}\varpi(Y_{1},U_{1,1}(\delta_{T});\vartheta)\mathrm{d}\vartheta\right]^{4}\right)^{1/2}
\end{align*}
where we have interchanged derivative and integration by Assumption~\ref{ass:dthetaFubini},
have used the fact that the expectation of the integrals over $\vartheta$
have zero mean and that under $\mathbb{P}$ the terms are i.i.d. over
index $i$. We also use the fact that for i.i.d. zero-mean random
variables $Z_{i}$
\[
\mathbb{E}\left[\left(\sum_{i=1}^{P}Z_{i}\right)^{4}\right]\leq cP^{2}\mathbb{E}\left(Z_{1}^{4}\right).
\]
Hence, we have using Assumptions~\ref{ass:momentsofW} and \ref{ass:dthetaFubini}
that $\left\vert \partial_{\vartheta}\varpi(Y_{1},U_{1,1}(\delta_{T});\vartheta)\right\vert \leq2\chi(Y_{1},U_{1,1}(\delta_{T}))$
for $\vartheta\in\left[\theta\wedge\theta^{\prime},\theta\vee\theta^{\prime}\right]$
and $T$ large enough. When $\theta$ is multidimensional, this can
be established using the fundamental theorem of calculus for line
integrals. It follows that 
\begin{align*}
\widetilde{\mathbb{V}}\Big(\sum_{t=1}^{T}J_{t}^{T}\Big) & \leq c\frac{TN}{N^{2}}\mathbb{E}\Big(\Big[\int_{\theta}^{\theta^{\prime}}\partial_{\vartheta}\varpi(Y_{1},U_{1,1}(\delta_{T});\vartheta)\mathrm{d}\vartheta\Big]^{4}\Big)^{1/2}\\
 & \leq c\frac{TN}{N^{2}}\mathbb{E}\Big(\Big[\int_{\theta}^{\theta^{\prime}}\chi(Y_{1},U_{1,1}(\delta_{T}))\mathrm{d}\vartheta\Big]^{4}\Big)^{1/2}\\
 & =c\frac{TN}{N^{2}}\frac{\xi^{2}}{T}\mathbb{E}\Big(\chi(Y_{1},U_{1,1})^{4}\Big)^{1/2}=O\Big(\frac{1}{N}\Big).
\end{align*}

This concludes the proof of the lemma.
\end{proof}
\begin{proof}[Proof of Lemma ~\textit{\ref{lem:Lt}}]
\textit{ }We can rewrite $L_{t}^{T}$ given by (\ref{eq:Ltdef})
as 
\[
L_{t}^{T}=\int_{0}^{\delta_{T}}\frac{1}{N\widehat{W}_{t}^{T}}\left(\sum_{i=1}^{N}\mathcal{S}\left\{ \varpi\left(Y_{t},U_{t,i}\left(s\right);\theta\right)\right\} \right)\mathrm{d}s,
\]
where $\mathcal{S}$ is the Stein operator defined in (\ref{eq:Steinoperator}).
By Assumption~\ref{ass:dUStein}, we can apply Fubini's theorem to
interchange the order of integration, and Stein's lemma \citep[Lemma 1]{Stein1981}
shows that 
\[
\widetilde{\mathbb{E}}\left(\left.L_{t}^{T}\right\vert \mathcal{Y}^{T}\right)=\frac{1}{N}\sum_{i=1}^{N}\mathbb{E}\left(\left.\int_{0}^{\delta_{T}}\mathcal{S}\left\{ \varpi\left(Y_{t},U_{t,i}\left(s\right);\theta\right)\right\} \mathrm{d}s\right\vert \mathcal{Y}^{T}\right)=0,
\]
so in particular $\widetilde{\mathbb{E}}\left(L_{t}^{T}\right)=0$.
Hence, we have 
\begin{align}
\widetilde{\mathbb{V}}\left(L_{t}^{T}\right) & =\widetilde{\mathbb{E}}\left(\left[\int_{0}^{\delta_{T}}\frac{1}{N\widehat{W}_{t}^{T}}\sum_{i=1}^{N}\mathcal{S}\left\{ \varpi\left(Y_{t},U_{t,i}\left(s\right);\theta\right)\right\} \mathrm{d}s\right]^{2}\right)\nonumber \\
 & =\widetilde{\mathbb{E}}\left(\frac{1}{\left(N\widehat{W}_{t}^{T}\right)^{2}}\int_{0}^{\delta_{T}}\int_{0}^{\delta_{T}}\left[\sum_{i=1}^{N}\mathcal{S}\left\{ \varpi\left(Y_{t},U_{t,i}\left(s\right);\theta\right)\right\} \right]\left[\sum_{j=1}^{N}\mathcal{S}\left\{ \varpi\left(Y_{t},U_{t,j}\left(r\right);\theta\right)\right\} \right]\mathrm{d}r\text{ }\mathrm{d}s\right)\nonumber \\
 & =\mathbb{E}\left(\frac{1}{N^{2}\widehat{W}_{t}^{T}}\int_{0}^{\delta_{T}}\int_{0}^{\delta_{T}}\left[\sum_{i=1}^{N}\mathcal{S}\left\{ \varpi\left(Y_{t},U_{t,i}\left(s\right);\theta\right)\right\} \right]\left[\sum_{j=1}^{N}\mathcal{S}\left\{ \varpi\left(Y_{t},U_{t,j}\left(r\right);\theta\right)\right\} \right]\mathrm{d}r\text{ }\mathrm{d}s\right).\label{eq:termvarianceVLt}
\end{align}
The term (\ref{eq:termvarianceVLt}) can be rewritten as 
\begin{align*}
 & \int_{0}^{\delta_{T}}\int_{0}^{\delta_{T}}\mathbb{E}\left[\frac{1}{\widehat{W}_{t}^{T}}\left(\frac{1}{N}\sum_{i=1}^{N}\mathcal{S}\left\{ \varpi\left(Y_{t},U_{t,i}\left(s\right);\theta\right)\right\} \right)\left(\frac{1}{N}\sum_{j=1}^{N}\mathcal{S}\left\{ \varpi\left(Y_{t},U_{t,j}\left(r\right);\theta\right)\right\} \right)\right]\mathrm{d}r\mathrm{d}s\\
 & \leq\int_{0}^{\delta_{T}}\int_{0}^{\delta_{T}}\mathbb{E}\left[\left(\widehat{W}_{t}^{T}\right)^{-2}\right]^{1/2}\mathbb{E}\left[\left(\frac{1}{N}\sum_{i=1}^{N}\mathcal{S}\left\{ \varpi\left(Y_{t},U_{t,i}\left(s\right);\theta\right)\right\} \right)^{2}\left(\frac{1}{N}\sum_{j=1}^{N}\mathcal{S}\left\{ \varpi\left(Y_{t},U_{t,j}\left(r\right);\theta\right)\right\} \right)^{2}\right]^{1/2}\mathrm{d}r\mathrm{d}s\\
 & \leq\int_{0}^{\delta_{T}}\int_{0}^{\delta_{T}}\mathbb{E}\left[\left(\widehat{W}_{t}^{T}\right)^{-2}\right]^{1/2}\mathbb{E}\left[\left(\frac{1}{N}\sum_{i=1}^{N}\mathcal{S}\left\{ \varpi\left(Y_{t},U_{t,i}\left(s\right);\theta\right)\right\} \right)^{4}\right]^{1/4}\mathbb{E}\left[\left(\frac{1}{N}\sum_{j=1}^{N}\mathcal{S}\left\{ \varpi\left(Y_{t},U_{t,j}\left(r\right);\theta\right)\right\} \right)^{4}\right]^{1/4}\mathrm{d}r\mathrm{d}s\\
 & =\int_{0}^{\delta_{T}}\int_{0}^{\delta_{T}}\mathbb{E}\left[\left(\widehat{W}_{t}^{T}\right)^{-2}\right]^{1/2}\mathbb{E}\left[\left(\frac{1}{N}\sum_{i=1}^{N}\mathcal{S}\left\{ \varpi\left(Y_{t},U_{t,i}\left(s\right);\theta\right)\right\} \right)^{4}\right]^{1/2}\mathrm{d}r\mathrm{d}s\\
 & \leq c\frac{\delta_{T}^{2}}{N}=O(\frac{N}{T^{2}}),
\end{align*}
byAssumption~\ref{ass:duSteinfourthmoment}, $\mathbb{E}\left(\mathcal{S}\left\{ \varpi\left(Y_{t},U_{t,i}\left(s\right);\theta\right)\right\} \right)=0$,
and the fact that $U_{t,i}\left(s\right)$ are stationary and independent
over $i$ under $\mathbb{P}$. Therefore we have 
\[
\widetilde{\mathbb{V}}\left(\sum_{t=1}^{T}L_{t}^{T}\right)=T\widetilde{\mathbb{V}}\left(L_{1}^{T}\right)=O\left(\frac{N}{T^{2}}T\right)=O\left(\frac{N}{T}\right).
\]
This concludes the proof of the lemma. 
\end{proof}
\begin{proof}[Proof of Lemma ~\textit{\ref{lem:Mt}}]
\ We check here that the conditions of the conditional CLT\ given
in Lemma \ref{lem:conditionalLindebergCLT} of Section \ref{Section:Conditionalweakconvergence}
are satisfied. Consider the term $M^{T}$ given by (\ref{eq:Mtdef})
which can be decomposed as 
\begin{equation}
M^{T}=\sum_{t=1}^{T}M_{t}^{T}=\sum_{t=1}^{T}\sum_{i=1}^{N}M_{t,i}^{T},\label{eq:doublesumMt}
\end{equation}
where 
\begin{equation}
M_{t,i}^{T}=\frac{\sqrt{2}}{N\widehat{W}_{t}^{T}}\int_{0}^{\delta_{T}}\partial_{u}\varpi\left(Y_{t},U_{t,i}\left(s\right);\theta\right)\mathrm{d}B_{t,i}\left(s\right).\label{eq:Mtidef}
\end{equation}
It is straightforward to see that 
\[
\widetilde{\mathbb{E}}\left(\left.M_{t}^{T}\right\vert \mathcal{F}^{T}\right)=\mathbb{E}\left(\left.\frac{\sqrt{2}}{N}\sum_{i=1}^{N}\int_{0}^{\delta_{T}}\partial_{u}\varpi\left(Y_{t},U_{t,i}\left(s\right);\theta\right)\mathrm{d}B_{t,i}\left(s\right)\right\vert \mathcal{F}^{T}\right)=0
\]
and 
\begin{equation}
s_{T}^{2}=\widetilde{\mathbb{V}}\left(\left.M^{T}\right\vert \mathcal{F}^{T}\right)=\sum_{t=1}^{T}\widetilde{\mathbb{V}}\left(\left.M_{t}^{T}\right\vert \mathcal{F}^{T}\right).\label{eq:definitionsTsquared}
\end{equation}
The term $\widetilde{\mathbb{V}}\left(\left.M_{t}^{T}\right\vert \mathcal{F}^{T}\right)$
satisfies 
\begin{align*}
\widetilde{\mathbb{V}}\left(\left.M_{t}^{T}\right\vert \mathcal{F}^{T}\right) & =\sum_{i=1}^{N}\widetilde{\mathbb{V}}\left(\left.M_{t,i}^{T}\right\vert \mathcal{F}^{T}\right)\\
 & =\sum_{i=1}^{N}\frac{2}{N^{2}\left(\widehat{W}_{t}^{T}\right)^{2}}\int_{0}^{\delta_{T}}\widetilde{\mathbb{E}}\left(\left.\left\{ \partial_{u}\varpi\left(Y_{t},U_{t,i}\left(s\right);\theta\right)\right\} ^{2}\right\vert \mathcal{F}^{T}\right)\mathrm{d}s.
\end{align*}
Letting 
\[
g(y,u;\theta)=\left\{ \partial_{u}\varpi(y,u;\theta)\right\} ^{2},
\]
and using Itô's formula, we obtain 
\begin{align*}
 & \int_{0}^{\delta_{T}}\widetilde{\mathbb{E}}\left(\left.\left\{ \partial_{u}\varpi(Y_{t},U_{t,i}\left(s\right);\theta)\right\} ^{2}\right\vert \mathcal{F}^{T}\right)\mathrm{d}s\\
 & \hspace{3cm}=\int_{0}^{\delta_{T}}\left\{ \partial_{u}\varpi(Y_{t},U_{t,i}\left(0\right);\theta)\right\} ^{2}\mathrm{d}s+\int_{0}^{\delta_{T}}\int_{0}^{s}\widetilde{\mathbb{E}}\left(\left.\mathcal{S}\left\{ g(Y_{t},U_{t,i}\left(s\right);\theta)\right\} \right\vert \mathcal{F}^{T}\right)\mathrm{d}r\mathrm{d}s\\
 & \hspace{3cm}=\delta_{T}\left\{ \partial_{u}\varpi(Y_{t},U_{t,i}\left(0\right);\theta)\right\} ^{2}+\int_{0}^{\delta_{T}}\int_{0}^{s}\widetilde{\mathbb{E}}\left(\left.\mathcal{S}\left\{ g(Y_{t},U_{t,i}\left(s\right);\theta)\right\} \right\vert \mathcal{F}^{T}\right)\mathrm{d}r\mathrm{d}s,
\end{align*}
where 
\[
\int_{0}^{s}\widetilde{\mathbb{E}}\left(\left.\mathcal{S}\left\{ g(Y_{t},U_{t,i}\left(s\right);\theta)\right\} \right\vert \mathcal{F}^{T}\right)\mathrm{d}r=\int_{0}^{s}\int\mathcal{S}\left\{ g(Y_{t},e^{-r}U_{t,i}+\sqrt{1-e^{-2r}}\varepsilon;\theta)\right\} \varphi\left(\varepsilon;0,1\right)\mathrm{d}\varepsilon\mathrm{d}r.
\]
Therefore 
\begin{align}
s_{T}^{2} & =\sum_{t=1}^{T}\sum_{i=1}^{N}\frac{2}{N^{2}\left(\widehat{W}_{t}^{T}\right)^{2}}\int_{0}^{\delta_{T}}\widetilde{\mathbb{E}}\left(\left\{ \partial_{u}\varpi(Y_{t},U_{t,i}\left(s\right);\theta)\right\} ^{2}\left\vert \mathcal{F}^{T}\right.\right)\mathrm{d}s\nonumber \\
 & =\sum_{t=1}^{T}\sum_{i=1}^{N}\frac{2\delta_{T}}{N^{2}\left(\widehat{W}_{t}^{T}\right)^{2}}\left\{ \partial_{u}\varpi(Y_{t},U_{t,i};\theta)\right\} ^{2}\label{eq:identitysTsquared1}\\
 & +\sum_{t=1}^{T}\sum_{i=1}^{N}\frac{2}{N^{2}\left(\widehat{W}_{t}^{T}\right)^{2}}\int_{0}^{\delta_{T}}\int_{0}^{s}\widetilde{\mathbb{E}}\left(\left.\mathcal{S}\left\{ g(Y_{t},U_{t,i}\left(s\right);\theta)\right\} \right\vert \mathcal{F}^{T}\right)\mathrm{d}r\mathrm{d}s.\label{eq:identitysTsquared2}
\end{align}
To show that the term (\ref{eq:identitysTsquared2}) vanishes in probability,
we show that it vanishes in absolute mean 
\begin{align*}
 & \frac{1}{N^{2}}\mathbb{E}\left(\left\vert \sum_{t=1}^{T}\sum_{i=1}^{N}\frac{2}{\left(\widehat{W}_{t}^{T}\right)^{2}}\int_{0}^{\delta_{T}}\int_{0}^{s}\widetilde{\mathbb{E}}\left(\left.\mathcal{S}\left\{ g(Y_{t},U_{t,i}\left(r\right);\theta)\right\} \right\vert \mathcal{F}^{T}\right)\mathrm{d}r\mathrm{d}s\right\vert \right)\\
 & \leq\frac{1}{N^{2}}\sum_{t=1}^{T}\sum_{i=1}^{N}\widetilde{\mathbb{E}}\left(\frac{2}{\left(\widehat{W}_{t}^{T}\right)^{2}}\int_{0}^{\delta_{T}}\int_{0}^{s}\left\vert \widetilde{\mathbb{E}}\left(\left.\mathcal{S}\left\{ g(Y_{t},U_{t,i}\left(r\right);\theta)\right\} \right\vert \mathcal{F}^{T}\right)\right\vert \mathrm{d}r\mathrm{d}s\right)\\
 & =\frac{1}{N^{2}}\sum_{t=1}^{T}\sum_{i=1}^{N}\widetilde{\mathbb{E}}\left(\frac{2}{\left(\widehat{W}_{t}^{T}\right)^{2}}\int_{0}^{\delta_{T}}\int_{0}^{s}\left\vert \widetilde{\mathbb{E}}\left(\left.\mathcal{S}\left\{ g(Y_{t},U_{t,i}\left(r\right);\theta)\right\} \right\vert \mathcal{F}^{T}\right)\right\vert \mathrm{d}r\mathrm{d}s\right)\\
 & =\frac{1}{N^{2}}\sum_{t=1}^{T}\sum_{i=1}^{N}\int_{0}^{\delta_{T}}\int_{0}^{s}\widetilde{\mathbb{E}}\left(\widetilde{\mathbb{E}}\left[\left.\frac{2}{\left(\widehat{W}_{t}^{T}\right)^{2}}\left\vert \mathcal{S}\left\{ g(Y_{t},U_{t,i}\left(r\right);\theta)\right\} \right\vert \right\vert \mathcal{F}^{T}\right]\right)\mathrm{d}r\mathrm{d}s\\
 & =\frac{2}{N^{2}}\sum_{t=1}^{T}\sum_{i=1}^{N}\int_{0}^{\delta_{T}}\int_{0}^{s}\mathbb{E}\left(\frac{1}{\widehat{W}_{t}^{T}}\left\vert \mathcal{S}\left\{ g(Y_{t},U_{t,i}\left(r\right);\theta)\right\} \right\vert \right)\mathrm{d}r\mathrm{d}s\\
 & =\frac{2}{N^{2}}\sum_{t=1}^{T}\sum_{i=1}^{N}\int_{0}^{\delta_{T}}\int_{0}^{s}\mathbb{E}\left[\left(\widehat{W}_{t}^{T}\right)^{-2}\right]^{1/2}\mathbb{E}\left[\left(\mathcal{S}\left\{ g(Y_{t},U_{t,i}\left(r\right);\theta)\right\} \right)^{2}\right]^{1/2}\mathrm{d}r\mathrm{d}s\\
 & =\delta_{T}^{2}\frac{NT}{N^{2}}\mathbb{E}\left[\left(\widehat{W}_{t}^{T}\right)^{-2}\right]^{1/2}\mathbb{E}\left[\left(\mathcal{S}\left\{ g(Y_{1},U_{1,1};\theta)\right\} \right)^{2}\right]^{1/2}\\
 & =\delta_{T}\mathbb{E}\left[\left(\widehat{W}_{t}^{T}\right)^{-2}\right]^{1/2}\mathbb{E}\left[\left(\mathcal{S}\left\{ g(Y_{1},U_{1,1};\theta)\right\} \right)^{2}\right]^{1/2}=O(\delta_{T}),
\end{align*}
by Assumption~\ref{ass:SteindU} and the fact that $U_{t,i}\left(r\right)$
are stationary and i.i.d. over $t,i$ under $\mathbb{P}$. Going back
to our calculation of $s_{T}^{2}$, we now treat the term (\ref{eq:identitysTsquared1})
\[
\sum_{t=1}^{T}\sum_{i=1}^{N}\frac{2\delta_{T}}{N^{2}\left(\widehat{W}_{t}^{T}\right)^{2}}\left\{ \partial_{u}\varpi(Y_{t},U_{t,i};\theta)\right\} ^{2}=\frac{2}{T}\sum_{t=1}^{T}g_{T}\left(Y_{t},U_{t}\right),
\]
where 
\[
g_{T}\left(Y_{t},U_{t}\right):=\frac{1}{N}\sum_{i=1}^{N}\left(\widehat{W}_{t}^{T}\right)^{-2}\left\{ \partial_{u}\varpi(Y_{t},U_{t,i};\theta)\right\} ^{2}.
\]
In order to apply the WLLN we have to check that 
\[
\sum_{t=1}^{T}\widetilde{\mathbb{E}}\left(\frac{\left\vert g_{T}\left(Y_{t},U_{t}\right)\right\vert }{T}\mathbb{I}\left\{ \left\vert g_{T}\left(Y_{t},U_{t}\right)\right\vert \geq\epsilon T\right\} \right)=\widetilde{\mathbb{E}}\left(\left\vert g_{T}\left(Y_{1},U_{1}^{T}\right)\right\vert \mathbb{I}\left\{ \left\vert g_{T}\left(Y_{1},U_{1}^{T}\right)\right\vert \geq\epsilon T\right\} \right)\rightarrow0,
\]
or equivalently that 
\[
\left\{ g_{T}\left(Y_{1},U_{1}^{T}\right)\right\} _{T\geq1}=\left\{ \frac{1}{N}\sum_{i=1}^{N}\left(\widehat{W}_{1}^{T}\right)^{-2}\left\{ \partial_{u}\varpi(Y_{1},U_{1,i}^{T};\theta)\right\} ^{2};T\geq1\right\} ,
\]
is uniformly integrable. We use de la Vallée-Poussin theorem; i.e.,
$\left\{ X_{n};n\geq1\right\} $ is uniformly integrable if and only
if there exists a non-negative increasing convex function $g$ such
that $g(x)/x\rightarrow\infty$ as $x\rightarrow\infty$ and $\sup_{n\geq1}\widetilde{\mathbb{E}}\left[g\left(\left\vert X_{n}\right\vert \right)\right]<\infty$.

If $g$ is convex then by Jensen's inequality 
\begin{align*}
\widetilde{\mathbb{E}}\left[g\left(\frac{1}{N}\sum_{i=1}^{N}\left(\widehat{W}_{1}^{T}\right)^{-2}\left\{ \partial_{u}\varpi(Y_{1},U_{1,i}^{T};\theta)\right\} ^{2}\right)\right] & \leq\widetilde{\mathbb{E}}\left[\frac{1}{N}\sum_{i=1}^{N}g\left(\left(\widehat{W}_{1}^{T}\right)^{-2}\left\{ \partial_{u}\varpi(Y_{1},U_{1,i}^{T};\theta)\right\} ^{2}\right)\right]\\
 & =\widetilde{\mathbb{E}}\left[g\left(\left(\widehat{W}_{1}^{T}\right)^{-2}\left\{ \partial_{u}\varpi(Y_{1},U_{1,1};\theta)\right\} ^{2}\right)\right],
\end{align*}
since the variables $\{U_{1,i}^{T};i\in1:N\}$ are exchangeable under
$\overline{\pi}\left(\left.\cdot\right\vert \theta\right)$. Therefore
it will suffice to assume that for some non-negative, increasing convex
function $g$ such that $g(x)/x\rightarrow\infty$ 
\[
\lim\sup_{T}\text{ }\widetilde{\mathbb{E}}\left[g\left(\left(\widehat{W}_{1}^{T}\right)^{-2}\left\{ \partial_{u}\varpi(Y_{1},U_{1,1};\theta)\right\} ^{2}\right)\right]<\infty
\]
or equivalently that the family 
\[
\left\{ \left(\widehat{W}_{1}^{T}\right)^{-2}\left\{ \partial_{u}\varpi(Y_{1},U_{1,1};\theta)\right\} ^{2};T\geq1\right\} ,
\]
is uniformly integrable under $\widetilde{\mathbb{P}}$. However,
this is satisfied as there exists $\varepsilon>0$ such that 
\[
\lim\sup_{T}\text{ }\widetilde{\mathbb{E}}\left[\left(\left(\widehat{W}_{1}^{T}\right)^{-2}\left\{ \partial_{u}\varpi(Y_{1},U_{1,1};\theta)\right\} ^{2}\right)^{1+\varepsilon}\right]<\infty,
\]
which can be verified by using Cauchy-Schwarz inequality and Assumptions~\ref{ass:momentsofW}
and \ref{ass:dU4moments}. By applying now the WLLN, we have 
\[
\frac{1}{T}\sum_{t=1}^{T}\left(g_{T}\left(Y_{t},U_{t}\right)-\widetilde{\mathbb{E}}\left[g_{T}\left(Y_{t},U_{t}\right)\right]\right)\overset{\mathbb{P}}{\rightarrow}0,
\]
where 
\begin{align*}
\widetilde{\mathbb{E}}\left[g_{T}\left(Y_{1},U_{1}^{T}\right)\right] & =\widetilde{\mathbb{E}}\left[\frac{1}{N}\sum_{i=1}^{N}\left(\widehat{W}_{1}^{T}\right)^{-2}\left\{ \partial_{u}\varpi(Y_{1},U_{1,i}^{T};\theta)\right\} ^{2}\right]\\
 & =\widetilde{\mathbb{E}}\left[\left(\widehat{W}_{1}^{T}\right)^{-2}\left\{ \partial_{u}\varpi(Y_{1},U_{1,1};\theta)\right\} ^{2}\right]=\mathbb{E}\left[\left(\widehat{W}_{1}^{T}\right)^{-1}\left\{ \partial_{u}\varpi(Y_{1},U_{1,1};\theta)\right\} ^{2}\right].
\end{align*}

By Cauchy-Schwarz, Assumptions~\ref{ass:momentsofW} and \ref{ass:dU4moments}
again, we have 
\[
\lim\sup_{T}\text{ }\mathbb{E}\left\{ \left(\widehat{W}_{1}^{T}\right)^{-1-\varepsilon}\left\{ \partial_{u}\varpi(Y_{1},U_{1,1};\theta)\right\} ^{2+2\varepsilon}\right\} <\infty.
\]
Therefore the family $\{\left(\widehat{W}_{1}^{T}\right)^{-1}\left\{ \partial_{u}\varpi(Y_{1},U_{1,1};\theta)\right\} ^{2};T\geq1\}$
is also uniformly integrable under $\mathbb{P}$ and, since $\widehat{W}_{t}^{T}\overset{\mathbb{P}}{\rightarrow}1$,
we have 
\[
\mathbb{E}\left(\left(\widehat{W}_{1}^{T}\right)^{-1}\left\{ \partial_{u}\varpi(Y_{1},U_{1,1};\theta)\right\} ^{2}\right)\rightarrow\mathbb{E}\left(\left\{ \partial_{u}\varpi(Y_{1},U_{1,1};\theta)\right\} ^{2}\right)=\frac{\kappa\left(\theta\right)^{2}}{2}.
\]
Hence, it follows that $s_{T}^{2}\overset{\widetilde{\mathbb{P}}}{\rightarrow}\kappa\left(\theta\right)^{2}$
and condition (\ref{eq:sumofvars}) of Lemma \ref{lem:conditionalLindebergCLT}
is satisfied.

We now need to check the Lindeberg condition (\ref{eq:lind}), i.e.,
that for any $\varepsilon>0$ 
\begin{equation}
\sum_{t=1}^{T}\widetilde{\mathbb{E}}\left(\left.\left\vert M_{t}^{T}\right\vert ^{2}\mathbb{I}\left\{ \left\vert M_{t}^{T}\right\vert \geq\varepsilon\right\} \right\vert \mathcal{F}^{T}\right)\overset{\widetilde{\mathbb{P}}}{\rightarrow}0.\label{eq:LindebergMt}
\end{equation}

Since the l.h.s. of (\ref{eq:LindebergMt}) is non-negative, it is
enough to show that its unconditional expectation vanishes or equivalently
that $T\left\vert M_{1}^{T}\right\vert ^{2}$ is uniformly integrable.
We have 
\begin{align*}
\sum_{t=1}^{T}\widetilde{\mathbb{E}}\left(\left\vert M_{t}^{T}\right\vert ^{2}\mathbb{I}\left\{ \left\vert M_{t}^{T}\right\vert \geq\varepsilon\right\} \right) & =T\widetilde{\mathbb{E}}\left(\left\vert M_{1}^{T}\right\vert ^{2}\mathbb{I}\left\{ \left\vert M_{1}^{T}\right\vert \geq\varepsilon\right\} \right)\\
 & =T\widetilde{\mathbb{E}}\left[\left\{ \frac{\sqrt{2}}{N\widehat{W}_{1}^{T}}\sum_{i=1}^{N}\int_{0}^{\delta_{T}}\partial_{u}\varpi\left(Y_{1},U_{1,i}^{T}\left(s\right);\theta\right)\mathrm{d}B_{1,i}\left(s\right)\right\} ^{2}\mathbb{I}\left\{ \left\vert M_{1}^{T}\right\vert \geq\varepsilon\right\} \right]\\
 & =\frac{2T}{N^{2}}\widetilde{\mathbb{E}}\left[\left\{ \frac{\mathbb{I}\left\{ \left\vert M_{1}^{T}\right\vert \geq\varepsilon\right\} }{\left(\widehat{W}_{1}^{T}\right)^{3/2}}\right\} \frac{1}{\left(\widehat{W}_{1}^{T}\right)^{1/2}}\left\{ \sum_{i=1}^{N}\int_{0}^{\delta_{T}}\partial_{u}\varpi\left(Y_{1},U_{1,i}^{T}\left(s\right);\theta\right)\mathrm{d}B_{1,i}\left(s\right)\right\} ^{2}\right]\\
 & \leq\frac{2T}{N^{2}}\widetilde{\mathbb{E}}\left(\frac{\mathbb{I}\left\{ \left\vert M_{1}^{T}\right\vert \geq\varepsilon\right\} }{\left(\widehat{W}_{1}^{T}\right)^{3}}\right)^{1/2}\widetilde{\mathbb{E}}\left(\frac{1}{\left(\widehat{W}_{1}^{T}\right)}\left\{ \sum_{i=1}^{N}\int_{0}^{\delta_{T}}\partial_{u}\varpi\left(Y_{1},U_{1,i}^{T}\left(s\right);\theta\right)\mathrm{d}B_{1,i}\left(s\right)\right\} ^{4}\right)^{1/2}
\end{align*}
by Cauchy-Schwartz and 
\begin{align*}
&\widetilde{\mathbb{E}}\left(\frac{1}{\left(\widehat{W}_{1}^{T}\right)}\left\{ \sum_{i=1}^{N}\int_{0}^{\delta_{T}}\partial_{u}\varpi\left(Y_{1},U_{1,i}^{T}\left(s\right);\theta\right)\mathrm{d}B_{1,i}\left(s\right)\right\} ^{4}\right)\\
& =\mathbb{E}\left(\left\{ \sum_{i=1}^{N}\int_{0}^{\delta_{T}}\partial_{u}\varpi\left(Y_{1},U_{1,i}^{T}\left(s\right);\theta\right)\mathrm{d}B_{1,i}\left(s\right)\right\} ^{4}\right)\\
 & =\mathbb{E}\left(\mathbb{E}\left(\left.\left\{ \sum_{i=1}^{N}\int_{0}^{\delta_{T}}\partial_{u}\varpi\left(Y_{1},U_{1,i}^{T}\left(s\right);\theta\right)\mathrm{d}B_{1,i}\left(s\right)\right\} ^{4}\right\vert \mathcal{Y}^{T}\right)\right)\\
 & \leq cN^{2}\mathbb{E}\left[\left(\int_{0}^{\delta_{T}}\partial_{u}\varpi\left(Y_{1},U_{1,1}\left(s\right);\theta\right)\mathrm{d}B_{1,1}\left(s\right)\right)^{4}\right]\\
 & \leq cN^{2}\left\{ 3\int_{0}^{\delta_{T}}\mathbb{E}\left[\left(\partial_{u}\varpi\left(Y_{1},U_{1,1}\left(s\right);\theta\right)\right)^{4}\right]^{1/2}\mathrm{d}s\right\} ^{2}\\
 & =c^{\prime}N^{2}\delta_{T}^{2}\mathbb{E}\left[\left(\partial_{u}\varpi\left(Y_{1},U_{1,1};\theta\right)\right)^{4}\right]<\infty,
\end{align*}
where the penultimate inequality follows from \citep[Theorem 1]{Zakai1967}
and the last one by Assumption~\ref{ass:dU4moments}. Therefore,
we have 
\begin{align}
\sum_{t=1}^{T}\widetilde{\mathbb{E}}\left(\left\vert M_{t}^{T}\right\vert ^{2}\mathbb{I}\left\{ \left\vert M_{t}^{T}\right\vert \geq\varepsilon\right\} \right) & \leq\sqrt{c^{\prime}}\frac{2T}{N^{2}}N\delta_{T}\mathbb{E}\left[\left(\partial_{u}\varpi\left(Y_{1},U_{1,1};\theta\right)\right)^{4}\right]^{1/2}\text{ }\widetilde{\mathbb{E}}\left(\frac{\mathbb{I}\left\{ \left\vert M_{1}^{T}\right\vert \geq\varepsilon\right\} }{\left(\widehat{W}_{1}^{T}\right)^{3}}\right)^{1/2}\nonumber \\
 & =2\sqrt{c^{\prime}}\mathbb{E}\left[\left(\partial_{u}\varpi\left(Y_{1},U_{1,1};\theta\right)\right)^{4}\right]^{1/2}\text{ }\widetilde{\mathbb{E}}\left(\frac{\mathbb{I}\left\{ \left\vert M_{1}^{T}\right\vert \geq\varepsilon\right\} }{\left(\widehat{W}_{1}^{T}\right)^{3}}\right)^{1/2}.\label{eq:expressionLin1}
\end{align}

Using Holder's inequality, Assumption~\ref{ass:momentsofW} then
Chebyshev's inequality, we have 
\begin{align}
\widetilde{\mathbb{E}}\left(\frac{\mathbb{I}\left\{ \left\vert M_{1}^{T}\right\vert \geq\varepsilon\right\} }{\left(\widehat{W}_{t}^{T}\right)^{3}}\right) & \leq\widetilde{\mathbb{E}}\left[\left(\widehat{W}_{t}^{T}\right)^{-3-3\epsilon}\right]^{1/(1+\epsilon)}\text{ }\widetilde{{\mathbb{P}}}\left(|M_{1}^{T}|\geq\epsilon\right)^{\epsilon/(1+\epsilon)}\nonumber \\
 & \leq c^{\prime\prime}\text{ }\widetilde{{\mathbb{P}}}\left(|M_{1}^{T}|\geq\epsilon\right)^{\epsilon/(1+\epsilon)}\nonumber \\
 & \leq c^{\prime\prime}\text{ }\left(\frac{\widetilde{\mathbb{E}}\left[\left(M_{1}^{T}\right)^{2}\right]}{\epsilon^{2}}\right)^{\epsilon/(1+\epsilon)}.\label{eq:expressionlin2}
\end{align}
To proceed, we need to control the second moment of $M_{1}^{T}$ 
\begin{align}
\widetilde{\mathbb{E}}\left[\left(M_{1}^{T}\right)^{2}\right] & =\widetilde{\mathbb{E}}\left[\left(\frac{\sqrt{2}}{N\widehat{W}_{1}^{T}}\sum_{i=1}^{N}\int_{0}^{\delta_{T}}\partial_{u}\varpi\left(Y_{1},U_{1,i}^{T}\left(s\right);\theta\right)\mathrm{d}B_{1,i}\left(s\right)\right)^{2}\right]\nonumber \\
 & =\frac{2}{N^{2}}\mathbb{E}\left[\frac{1}{\widehat{W}_{1}^{T}}\left(\sum_{i=1}^{N}\int_{0}^{\delta_{T}}\partial_{u}\varpi\left(Y_{1},U_{1,i}^{T}\left(s\right);\theta\right)\mathrm{d}B_{1,i}\left(s\right)\right)^{2}\right]\nonumber \\
 & \leq\frac{2}{N^{2}}\mathbb{E}\left[(\widehat{W}_{1}^{T})^{-2}\right]^{1/2}\mathbb{E}\left[\left(\sum_{i=1}^{N}\int_{0}^{\delta_{T}}\partial_{u}\varpi\left(Y_{1},U_{1,i}^{T}\left(s\right);\theta\right)\mathrm{d}B_{1,i}\left(s\right)\right)^{4}\right]^{1/2}\nonumber \\
 & =\frac{2}{N^{2}}\mathbb{E}\left[(\widehat{W}_{1}^{T})^{-2}\right]^{1/2}\mathbb{E}\left[\mathbb{E}\left\{ \left.\left(\sum_{i=1}^{N}\int_{0}^{\delta_{T}}\partial_{u}\varpi\left(Y_{1},U_{1,i}^{T}\left(s\right);\theta\right)\mathrm{d}B_{1,i}\left(s\right)\right)^{4}\right\vert \mathcal{Y}^{T}\right\} \right]^{1/2}\nonumber \\
 & \leq c\frac{N}{N^{2}}\mathbb{E}\left[(\widehat{W}_{1}^{T})^{-2}\right]^{1/2}\mathbb{E}\left[\mathbb{E}\left\{ \left.\left[\int_{0}^{\delta_{T}}\partial_{u}\varpi\left(Y_{1},U_{1,i}^{T}\left(s\right);\theta\right)\mathrm{d}B_{1,i}\left(s\right)\right]^{4}\right\vert \mathcal{Y}^{T}\right\} \right]^{1/2}\nonumber \\
 & \leq c\frac{1}{N}\mathbb{E}\left[(\widehat{W}_{1}^{T})^{-2}\right]^{1/2}\mathbb{E}\left[\left(\int_{0}^{\delta_{T}}\partial_{u}\varpi\left(Y_{1},U_{1,1}\left(s\right);\theta\right)\mathrm{d}B_{1,1}\left(s\right)\right)^{4}\right]^{1/2}\nonumber \\
 & \leq\frac{c}{N}\mathbb{E}\left[(\widehat{W}_{1}^{T})^{-2}\right]^{1/2}\delta_{T}\mathbb{E}\left\{ \left\vert \partial_{u}\varpi\left(Y_{1},U_{1,1};\theta\right)\right\vert ^{4}\right\} ^{1/2}=O(1/T),\label{eq:varianceMt}
\end{align}
by Cauchy-Schwartz, \citep[Theorem 1]{Zakai1967} and Assumptions~\ref{ass:momentsofW}
and \ref{ass:dU4moments}.

By combining (\ref{eq:expressionLin1}), (\ref{eq:expressionlin2})
and (\ref{eq:varianceMt}), it follows that (\ref{eq:LindebergMt})
holds. Therefore by the Lindeberg central limit theorem of Lemma \ref{lem:conditionalLindebergCLT}
applied conditionally on $\mathcal{F}^{T}$ and using $s_{T}^{2}\overset{\widetilde{\mathbb{P}}}{\rightarrow}\kappa\left(\theta\right)^{2}$,
we obtain 
\[
\left.\sum_{t=1}^{T}M_{t}^{T}\right\vert \mathcal{F}^{T}\Rightarrow\mathcal{N}\left(0,\frac{\kappa\left(\theta\right)^{2}}{2}\right).
\]
\end{proof}
\begin{proof}[Proof of Lemma ~\textit{\ref{lem:zsquared}}]
\ Notice that 
\[
\frac{1}{2}\sum_{t=1}^{T}\left(\eta_{t}\right)^{2}=\frac{1}{2}\sum_{t=1}^{T}\left\{ \frac{\widehat{W}_{t}^{T}(Y_{t}\mid\theta^{\prime};V_{t})-\widehat{W}_{t}^{T}}{\widehat{W}_{t}^{T}}\right\} ^{2}=\frac{1}{2}\sum_{t=1}^{T}\left\{ J_{t}^{T}+H_{t}^{T}\right\} ^{2}=\frac{1}{2}\sum_{t=1}^{T}\left\{ \left[J_{t}^{T}\right]^{2}+\left[H_{t}^{T}\right]^{2}+2J_{t}^{T}H_{t}^{T}\right\} .
\]
We know from Lemma~\ref{lem:Jt} that $\sum_{t=1}^{T}(J_{t}^{T})^{2}$
$\overset{\widetilde{\mathbb{P}}}{\rightarrow}0$. The $\left(H_{t}^{T}\right)^{2}$
terms are given by 
\begin{align*}
\sum_{t=1}^{T}\left(H_{t}^{T}\right)^{2} & =\sum_{t=1}^{T}\left(L_{t}^{T}+M_{t}^{T}\right)^{2}\\
 & =\sum_{t=1}^{T}\left(L_{t}^{T}\right)^{2}+2L_{t}^{T}M_{t}^{T}+\left(M_{t}^{T}\right)^{2}.
\end{align*}
The first term vanishes in probability since by Lemma~\ref{lem:Lt}
\[
\widetilde{\mathbb{E}}\left({\textstyle \sum\nolimits _{t=1}^{T}}\left(L_{t}^{T}\right)^{2}\right)={\textstyle \sum\nolimits _{t=1}^{T}}\widetilde{\mathbb{V}}\left(L_{t}^{T}\right)\rightarrow0.
\]
For the product term notice that by two applications of the Cauchy-Schwartz
inequality 
\begin{align*}
\widetilde{\mathbb{E}}\left(\left\vert {\textstyle \sum\nolimits _{t=1}^{T}}L_{t}^{T}M_{t}^{T}\right\vert \right) & \leq\widetilde{\mathbb{E}}\left(\left\{ {\textstyle \sum\nolimits _{t=1}^{T}}\left(L_{t}^{T}\right)^{2}\right\} ^{1/2}\left\{ {\textstyle \sum\nolimits _{t=1}^{T}}\left(M_{t}^{T}\right)^{2}\right\} ^{1/2}\right)\\
 & \leq\widetilde{\mathbb{E}}\left({\textstyle \sum\nolimits _{t=1}^{T}}\left(L_{t}^{T}\right)^{2}\right)^{1/2}\widetilde{\mathbb{E}}\left({\textstyle \sum\nolimits _{t=1}^{T}}\left(M_{t}^{T}\right)^{2}\right)^{1/2}\\
 & =\left({\textstyle \sum\nolimits _{t=1}^{T}}\widetilde{\mathbb{V}}(L_{t}^{T})\right)^{1/2}\left({\textstyle \sum\nolimits _{t=1}^{T}}\widetilde{\mathbb{V}}(M_{t}^{T})\right)^{1/2}\rightarrow0,
\end{align*}
by Lemmas~\ref{lem:Lt} and \ref{lem:Mt}. Finally, we also have
\[
{\textstyle \sum\nolimits _{t=1}^{T}}\widetilde{\mathbb{E}}\left(\left(M_{t}^{T}\right)^{2}\right)=O\left(1\right)
\]
by Lemma \ref{lem:Mt}.

For the term involving the product $J_{t}^{T}H_{t}^{T}$, we have
similarly by two applications of the Cauchy-Schwartz inequality 
\begin{align*}
\widetilde{\mathbb{E}}\left(\left\vert {\textstyle \sum\nolimits _{t=1}^{T}}J_{t}^{T}H_{t}^{T}\right\vert \right) & \leq\widetilde{\mathbb{E}}\left[\left({\textstyle \sum\nolimits _{t=1}^{T}}\left(J_{t}^{T}\right)^{2}\right)^{1/2}\left({\textstyle \sum\nolimits _{t=1}^{T}}\left(H_{t}^{T}\right)^{2}\right)^{1/2}\right]\\
 & \leq\widetilde{\mathbb{E}}\left({\textstyle \sum\nolimits _{t=1}^{T}}\left(J_{t}^{T}\right)^{2}\right)^{1/2}\widetilde{\mathbb{E}}\left({\textstyle \sum\nolimits _{t=1}^{T}}\left(H_{t}^{T}\right)^{2}\right)^{1/2}.
\end{align*}
By Lemmas~\ref{lem:Jt}, \ref{lem:Lt} and \ref{lem:Mt}, the first
factor vanishes, while we have just shown that the second factor is
$O\left(1\right)$.

Finally, conditionally on $\mathcal{F}^{T},$ the terms $\left(M_{t}^{T}\right)^{2}$
are independent. We want to apply the conditional WLLN to show that
\[
{\textstyle \sum\nolimits _{t=1}^{T}}\left(M_{t}^{T}\right)^{2}-\widetilde{\mathbb{E}}\left(\left.\left(M_{t}^{T}\right)^{2}\right\vert \mathcal{F}^{T}\right)\overset{\widetilde{\mathbb{P}}}{\rightarrow}0.
\]
As we have already shown that Lemma \ref{lem:Mt} holds, we only need
to check that for any $\epsilon>0$ 
\[
{\textstyle \sum\nolimits _{t=1}^{T}}\widetilde{\mathbb{E}}\left(\left.\left(M_{t}^{T}\right)^{2}\mathbb{I}\left\{ \left\vert M_{t}^{T}\right\vert \geq\varepsilon\right\} \right\vert \mathcal{F}^{T}\right)\overset{\widetilde{\mathbb{P}}}{\rightarrow}0.
\]
This has already been established in the proof of Lemma \ref{lem:Mt}. 
\end{proof}

\subsection{Sufficient conditions to ensure Assumption \ref{Assumption:uniformCLT}
\label{Appendix:uniformCLT}}

We will provide here sufficient conditions to ensure convergence happens
almost surely, hence in probability. In the notation of Section\,\ref{Section:ProofCLTnightmare},
let $\mu^{T}$ denote the conditional law of 
\[
R^{T}:=\sum_{t=1}^{T}\log\left(\frac{\widehat{W}_{t}^{T}(Y_{t}\mid\theta^{\prime};U_{t}^{\prime})}{\widehat{W}_{t}^{T}}\right)=\sum_{t=1}^{T}\log(1+\eta_{t}^{T})=\sum_{t=1}^{T}\eta_{t}^{T}-\frac{1}{2}\sum_{t=1}^{T}[\eta_{t}^{T}]^{2}+\sum_{t=1}^{T}h(\eta_{t}^{T})[\eta_{t}^{T}]^{2},
\]
given $\mathcal{F}^{T}$ where $\theta,\xi$ are fixed, $\theta^{\prime}=\theta+\xi/\sqrt{T}$
, $\xi\sim\upsilon(\cdot)$, $U\sim\overline{\pi}_{T}\left(\left.\cdot\right\vert \theta\right)$,
$U^{\prime}\sim K_{\rho_{T}}\left(U,\cdot\right)$ with $\rho_{T}$
given by (\ref{eq:correlationscaling}) and $N_{T}\rightarrow\infty$
as $T\rightarrow\infty$ with $N_{T}/T\rightarrow0$. We want to control
the term 
\begin{align*}
 & \sup_{\theta\in N(\bar{\theta})}\mathbb{\widetilde{\mathbb{E}}}\left[\left.d_{BL}(\mu^{T},\varphi\left(\cdot;-\frac{\kappa^{2}(\theta)}{2},\kappa^{2}(\theta)\right)\right\vert \mathcal{Y}^{T}\right]\\
 & =\sup_{\theta\in N(\bar{\theta})}\iint\Bigg\{ \overline{\pi}_{T}\left(\left.\mathrm{d}u_{0}\right\vert \theta\right)\upsilon\left(\mathrm{d}\xi\right)\\
 &\qquad \times\underset{f:\left\Vert f\right\Vert _{BL}\leq1}{\sup}
 \bigg\vert \int K_{\rho_{T}}\left(u_{0},\mathrm{d}u_{1}\right)f\left\{ \log\left(\frac{\hat{p}(Y_{1:T}\mid\theta_{0}+\xi/\sqrt{T},u_{1})/p(Y_{1:T}\mid\theta+\xi/\sqrt{T})}{\hat{p}(Y_{1:T}\mid\theta,u_{0})//p(Y_{1:T}\mid\theta)}\right)\right\}\\
 &\qquad \qquad   -\int\varphi(\mathrm{d}w;-\frac{\kappa^{2}(\theta)}{2},\kappa^{2}(\theta))f\left(w\right)\bigg\vert \Bigg\}.
\end{align*}

We state sufficient conditions under which this result holds in the
setting where $d=1,$ $p=1$ and $\psi=1$. The extension to the multivariate
scenario is straightforward albeit tedious. As in Theorem \ref{Theorem:conditionalCLTthetathetacand},
we define 
\[
\kappa\left(\theta\right)^{2}=2\mathbb{E}\left(\left\{ \partial_{u}\varpi(Y_{1},U_{1,1};\theta)\right\} ^{2}\right).
\]
We will also write
\[
\kappa(y,\theta)^{2}=2\mathbb{E}\left(\left.\left\{ \partial_{u}\varpi(Y_{1},U_{1,1};\theta)\right\} ^{2}\right|Y_{1}=y\right).
\]

\begin{assumption}
\label{assu:unifcondclt}Let $B:\mathcal{Y}\to\mathbb{R}^{+}$ be
a measurable function such that $\mathbb{E}B(Y_{1})^{10}<\infty$,
and let $\epsilon_{T}\to0$ as $T\to\infty$. Assume that $\int\xi^{10}\upsilon(\mathrm{d}\xi)<\infty$,
that $\kappa^{2}(\cdot,\theta)$ is measurable for all $\theta$and
$\kappa^{2}(y,\cdot)$ is continuous in $\theta$ for all $y$, that
$\kappa(\theta)$ is locally Lispschitz around $\bar{\theta}$ and
that the following inequalities hold:
\begin{align}
 & \kappa^{2}(y,\theta)\leq B(y),\label{ass:condcltboundonkappa}\\
 & \sup_{\theta\in N(\bar{\theta})}\mathbb{\mathbb{E}}\left[\left.\left(\widehat{W}_{1}^{T}\right)^{-6}\right|Y_{1}=y\right]\leq B(y),\label{ass:condcltinversesixthmoment}\\
 & \sup_{\theta\in N(\bar{\theta})}\mathbb{\mathbb{E}}^{1/2}\left[\left.\left|\left(\widehat{W}_{t}^{T}\right)^{2}\right|\right|Y_{t}=y\right]\leq B(y),\label{ass:condcltsecondmoment}\\
 & \sup_{\theta\in N(\bar{\theta})}\mathbb{E}^{1/2}\left[\left.\left\{ 2\partial_{\theta}\varpi(y,U_{t,1};\theta)^{2}\right\} ^{2}\right|\right]\leq B(y),\label{ass:condcltvarofg}\\
 & \sup_{\theta\in N(\bar{\theta})}\mathbb{E}\left[\left.\left|\frac{1}{\widehat{W}_{t}^{T}}-1\right|^{2}\right|Y_{t}=y\right],\sup_{\theta\in N(\bar{\theta})}\mathbb{E}\left[\left.\left|\frac{1}{\left(\widehat{W}_{t}^{T}\right)^{2}}-1\right|^{2}\right|Y_{t}=y\right]\leq\epsilon_{T}B(y),\label{ass:condcltinverseweightto1}\\
 & \sup_{\theta\in N(\bar{\theta})}\mathbb{E}\left[\left.\left|\widehat{W}_{t}^{T}-1\right|^{2}\right|Y_{t}=y\right]\leq\epsilon_{T}B(y),\label{ass:condcltweightto1}\\
 & \sup_{\theta\in N(\bar{\theta})}\mathbb{E}\left[\left.\frac{\eta_{t}^{T}}{1+\eta_{t}^{T}}\right|Y_{t}=y\right]\leq B(y),\label{ass:condcltetat}\\
 & \sup_{\theta\in N(\bar{\theta})}\mathbb{E}\left[\left.\partial_{\theta}\varpi(Y_{t},U_{t,1};\theta)^{10}\right|Y_{t}=y\right]\leq B(y),\label{ass:condcltdthetaten}\\
 & \sup_{\theta\in N(\bar{\theta})}\mathbb{E}\left[\left.\left(\partial_{u}\varpi\left(Y_{t},U_{t,1};\theta\right)\right)^{10}\right|Y_{t}=y\right]\leq B(y),\label{ass:condcltduten}\\
 & \sup_{\theta\in N(\bar{\theta})}\mathbb{E}\left[\left.\left[\partial_{uuu}^{3}\varpi\left(Y_{t},U_{t,1}\left(0\right);\theta\right)\right]^{4}\right|Y_{t}=y\right]\leq B(y),\label{ass:condcltdddufour}\\
 & \sup_{\theta\in N(\bar{\theta})}\mathbb{E}\left[\left.\mathcal{S}\left\{ -\partial_{u}\varpi\left(Y_{t},U_{t,1};\theta\right)U_{t,1}+\partial_{u,u}^{2}\varpi\left(Y_{t},U_{t,1};\theta\right)\right\} ^{10}\right|Y_{t}=y\right]\leq B(y).\label{ass:condcltsteinten}
\end{align}
\\
\end{assumption}
Under Assumption\,\ref{assu:unifcondclt} then Assumption \ref{Assumption:uniformCLT}
is satisfied as established in the following theorem.
\begin{thm}
\label{thm:UniformCLT}Under Assumption\,\ref{assu:unifcondclt},
we have as $T\to\infty$ 
\begin{align*}
 & \sup_{\theta\in N(\bar{\theta})}\mathbb{\widetilde{\mathbb{E}}}\left[\left.d_{BL}(\mu^{T},\varphi\left(\cdot;-\frac{\kappa^{2}(\theta)}{2},\kappa^{2}(\theta)\right)\right\vert \mathcal{Y}^{T}\right]\to0\ \mathbb{P}^{Y}-\mathrm{a.s.}
\end{align*}
\end{thm}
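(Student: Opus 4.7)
The plan is to retrace the proof of Theorem~\ref{Theorem:conditionalCLTthetathetacand} while keeping careful track of how each quantitative bound depends on $\theta$, and then upgrade the resulting pointwise (in $\theta$) convergence in $\widetilde{\mathbb{P}}$-probability to $\sup_{\theta\in N(\bar{\theta})}$ convergence that is $\mathbb{P}^Y$-almost-sure. The moment conditions in Assumption~\ref{assu:unifcondclt}---all stated as domination by a fixed function $B(y)$ with $\mathbb{E}B(Y_1)^{10}<\infty$---are tailor made for this. A first useful observation is that $d_{\mathrm{BL}}\leq 2$, so that to prove $\mathbb{P}^Y$-a.s.\ convergence it suffices to combine a decay of the expectation in $T$ with a Borel--Cantelli step along a polynomial subsequence.

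First, I would re-run the decomposition $\eta_t^T=J_t^T+L_t^T+M_t^T$ and the expansion $\log(1+\eta_t^T)=\eta_t^T-\tfrac12(\eta_t^T)^2+h(\eta_t^T)(\eta_t^T)^2$ of Theorem~\ref{Theorem:conditionalCLTthetathetacand}, but feeding the conditional bounds (\ref{ass:condcltboundonkappa})--(\ref{ass:condcltsteinten}) into the computations of Lemmas~\ref{lem:Jt}--\ref{lem:zsquared}. Because each moment is bounded by $B(y)$ uniformly in $\theta\in N(\bar{\theta})$, Fubini, Cauchy--Schwarz and Jensen deliver explicit uniform decay rates in $T$ for $\widetilde{\mathbb{V}}(\sum_t J_t^T)$, $\widetilde{\mathbb{V}}(\sum_t L_t^T)$, the Lindeberg integral for $\sum_t M_t^T$, the Taylor remainder, and the discrepancy $|s_T^2-\kappa^2(\theta)|$. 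A standard smoothing argument (Stein's method with a Lipschitz test function, or a Berry--Esseen-type characteristic-function bound) then converts these into a quantitative estimate of the form
\[
\widetilde{\mathbb{E}}\left[\left.d_{\mathrm{BL}}\bigl(\mu^T,\varphi(\cdot;-\kappa^2(\theta)/2,\kappa^2(\theta))\bigr)\right|\mathcal{F}^T\right]\leq \Phi_T(Y_{1:T},\theta),
\]
where $\Phi_T$ is a random quantity built from empirical averages $T^{-1}\sum_t \psi_j(Y_t,\theta)$ with $|\psi_j(\cdot,\theta)|\leq cB(\cdot)$ plus deterministic $o(1)$ terms.

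Second, to obtain almost-sure convergence of $\sup_{\theta\in N(\bar{\theta})}\widetilde{\mathbb{E}}[\,\cdot\,|\mathcal{Y}^T]$, I would exploit the tenth-moment integrability of $B$. A Chebyshev estimate gives $\mathbb{P}^Y(\Phi_T(\cdot,\theta)>\varepsilon)\lesssim T^{-2}$ pointwise in $\theta$, so that Borel--Cantelli applied along $T_k=k^2$ (or any polynomial subsequence) yields a.s.\ convergence at fixed $\theta$; monotonicity in $T$ of the empirical-average contributions, together with the uniform moment control, fills in the gaps between subsequence indices. To replace the pointwise supremum by a supremum over $N(\bar{\theta})$, I would use the local Lipschitz continuity of $\kappa$ together with the equicontinuity in $\theta$ of $\theta\mapsto\kappa^2(y,\theta)$ and of the conditional moment maps appearing in $\Phi_T$ (which follows from the $B(y)$-domination via dominated convergence). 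A finite $\varepsilon$-cover of $N(\bar{\theta})$ and a modulus-of-continuity bound reduce $\sup_{\theta\in N(\bar{\theta})}\Phi_T(Y_{1:T},\theta)$ to the maximum over finitely many centres plus an $O(\varepsilon)$ remainder, and the former converges to $0$ a.s.

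The main obstacle will be the second step: producing a clean quantitative smoothing bound on $d_{\mathrm{BL}}$ that (i) is strong enough to absorb the conditional CLT proof essentially unchanged, and (ii) retains explicit $\theta$-dependence that is compatible with the cover argument. Concretely, the Lindeberg-type control of $\sum_t M_t^T$ needs to be re-expressed as a third- or fourth-moment Berry--Esseen-style estimate using (\ref{ass:condcltduten}) and (\ref{ass:condcltsteinten}); the handling of the quadratic correction $\tfrac12\sum_t(\eta_t^T)^2$ needs an analogous a.s.\ uniform law of large numbers rather than a mere convergence in probability; and the Taylor remainder must be dominated a.s.\ via the bound (\ref{ass:condcltetat}). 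Once these quantitative refinements are in place, the Borel--Cantelli and cover arguments are routine.
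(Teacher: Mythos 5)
Your roadmap identifies the right ingredients (uniform domination of all conditional moments by $B(y)$, control of the variance discrepancy $|\hat{s}_T^2(\theta)-\kappa^2(\theta)|$, the quadratic correction, the Taylor remainder, and a uniform law of large numbers for $\theta\mapsto T^{-1}\sum_t\kappa^2(Y_t,\theta)$), but it diverges from the paper at exactly the two points where real work is required, and one of your sub-claims fails as stated. The paper (Lemma \ref{lem:dblbound} together with Lemmas \ref{lem:controlst-kappa}--\ref{lem:controlR3}) never invokes Stein's method or a Berry--Esseen bound: after a further It\^o expansion of the martingale part $M^T$, its leading term is $\sum_{t,i}\frac{\sqrt{2}}{N\widehat{W}_t^T}\partial_u\varpi(Y_t,U_{t,i}(0);\theta)B_{t,i}(\delta_T)$, which is \emph{exactly} Gaussian conditional on $\mathcal{F}^T$ (the coefficients are $\mathcal{F}^T$-measurable and the increments $B_{t,i}(\delta_T)$ are independent of $\mathcal{F}^T$). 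Hence $R^T=\hat{s}_T(\theta)Z-\tfrac12\sum_t(\eta_t^T)^2+\mathcal{R}_1^T+\mathcal{R}_2^T$ with $Z\sim\mathcal{N}(0,1)$, and the whole problem collapses to $L^1$ bounds on $|\hat{s}_T(\theta)-\kappa(\theta)|$ and on three explicit remainders, each dominated by a deterministic vanishing rate times $T^{-1}\sum_t B(Y_t)^k$. Your Berry--Esseen route for the conditionally independent summands $\log(1+\eta_t^T)$ is not wrong in principle (third conditional absolute moments are of order $T^{-3/2}$ each, so the normalized error is $O(T^{-1/2})$), but it is substantially heavier, requires careful lower-bound control of $1+\eta_t^T$, and is precisely the step you flag as "the main obstacle"; the exact conditional Gaussianity of the leading stochastic integral is the key device you are missing.

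On the almost-sure uniformity: the paper needs neither Borel--Cantelli nor an $\varepsilon$-cover. Every bound except one is already uniform in $\theta$ because the $\theta$-dependence has been absorbed into $B$, and $T^{-1}\sum_t B(Y_t)^k\to\mathbb{E}[B(Y_1)^k]$ $\mathbb{P}^Y$-a.s.\ by the ordinary SLLN; these a.s.\ convergent averages are multiplied by deterministic rates ($N^{-1/2}$, $N/T$, $\epsilon_T$, $T^{-1/4}$, \dots) that vanish. The single genuinely $\theta$-dependent term, $\sup_{\theta}|T^{-1}\sum_t\kappa^2(Y_t,\theta)-\kappa^2(\theta)|$, is handled by Jennrich's uniform SLLN, which internally performs your compactness/equicontinuity argument (so your covering step is a legitimate alternative for that one term, though you should note it needs $N(\bar{\theta})$ precompact). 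However, your proposed gap-filling between polynomial subsequence indices "by monotonicity in $T$ of the empirical-average contributions" does not work: $T^{-1}\sum_{t\le T}B(Y_t)^k$ is not monotone in $T$. That sub-step fails as written, though it is easily repaired by discarding the subsequence machinery and invoking the SLLN directly.
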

The proof of this result will require establish a few preliminary
lemmas. Let us first recall the decomposition 
\begin{equation}
\eta_{t}^{T}=J_{t}^{T}+L_{t}^{T}+M_{t}^{T},\label{eq:decompositioneta}
\end{equation}
where $J_{t}^{T},L_{t}^{T}$ and $M_{t}^{T}$ are defined in (\ref{eq:Jtdef})-(\ref{eq:Mtdef}).
We rearrange the above expression as 
\begin{equation}
R^{T}=M^{T}-\frac{1}{2}\sum\left(\eta_{t}^{T}\right)^{2}+\mathcal{R}_{1}^{T},\label{eq:R1remainderdefinition}
\end{equation}
where $M^{T}:=\sum_{t=1}^{T}M_{t}^{T}=\sum_{t=1}^{T}\sum_{i=1}^{N}M_{t,i}^{T}$
where $M_{t,i}^{T}$ is defined in (\ref{eq:Mtidef}). 

We can further decompose $M_{t}^{T}$ as 
\begin{align*}
\sum_{t=1}^{T}M_{t}^{T} & =\sum_{t=1}^{T}\int_{0}^{\delta_{T}}\frac{\sqrt{2}}{N\widehat{W}_{t}^{T}}\sum_{i=1}^{N}\partial_{u}\varpi\left(Y_{t},U_{t,i}\left(s\right);\theta\right)\mathrm{d}B_{t,i}\left(s\right)\\
 & =\sum_{t=1}^{T}\int_{0}^{\delta_{T}}\frac{\sqrt{2}}{N\widehat{W}_{t}^{T}}\sum_{i=1}^{N}\partial_{u}\varpi\left(Y_{t},U_{t,i}\left(0\right);\theta\right)\mathrm{d}B_{t,i}\left(s\right)\\
 & \quad+\sum_{t=1}^{T}\int_{0}^{\delta_{T}}\frac{\sqrt{2}}{N\widehat{W}_{t}^{T}}\sum_{i=1}^{N}\left[-\int_{0}^{s}\partial_{uu}^{2}\varpi\left(Y_{t},U_{t,i}\left(r\right);\theta\right)U_{t,i}(r)\mathrm{d}r+\int_{0}^{s}\partial_{uuu}^{3}(Y_{t},U_{t,i}(r);\theta)\mathrm{d}r\right]\mathrm{d}B_{t,i}\left(s\right)\\
 & \quad+\sum_{t=1}^{T}\int_{0}^{\delta_{T}}\frac{\sqrt{2}}{N\widehat{W}_{t}^{T}}\sum_{i=1}^{N}\sqrt{2}\int_{0}^{s}\partial_{uu}^{2}\varpi\left(Y_{t},U_{t,i}\left(r\right);\theta\right)\mathrm{d}B_{t,i}(r)\mathrm{d}B_{t,i}\left(s\right)\\
 & =\sum_{t=1}^{T}\frac{\sqrt{2}}{N\widehat{W}_{t}^{T}}\sum_{i=1}^{N}\partial_{u}\varpi\left(Y_{t},U_{t,i}\left(0\right);\theta\right)B_{t,i}\left(\delta_{T}\right)+\mathcal{R}_{2}^{T},
\end{align*}
where 
\begin{align*}
\mathcal{R}_{2}^{T} & :=-\sum_{t=1}^{T}\frac{\sqrt{2}}{N\widehat{W}_{t}^{T}}\sum_{i=1}^{N}\int_{0}^{\delta_{T}}\int_{0}^{s}\mathcal{S}\left\{ \partial_{u}\varpi\left(Y_{t},U_{t,i}\left(r\right);\theta\right)\right\} \mathrm{d}r\mathrm{d}B_{t,i}\left(s\right)\\
 & \quad+\sum_{t=1}^{T}\frac{\sqrt{2}}{N\widehat{W}_{t}^{T}}\sum_{i=1}^{N}\int_{0}^{\delta_{T}}\int_{0}^{s}\sqrt{2}\int_{0}^{s}\mathbb{\widetilde{\mathbb{E}}}\left[\left.d_{BL}(\mu^{T},\varphi\left(\cdot;-\frac{\kappa^{2}(\theta)}{2},\kappa^{2}(\theta)\right)\right\vert \mathcal{Y}^{T}\right]\\
 &\qquad \qquad\partial_{uuu}^{3}\varpi\left(Y_{t},U_{t,i}\left(r\right);\theta\right)\mathrm{d}B_{t,i}(r)\mathrm{d}B_{t,i}\left(s\right).
\end{align*}
Thus we can write 
\begin{align*}
M^{T} & =\left\{ \sum_{i,t}\frac{2}{NT\left(\widehat{W}_{t}^{T}\right)^{2}}\left[\partial_{u}\varpi\left(Y_{t},U_{t,i}\left(0\right);\theta\right)\right]^{2}\right\} ^{1/2}Z+\mathcal{R}_{2}^{T}\\
 & =\hat{s}_{T}(\theta)Z+\mathcal{R}_{2}^{T},
\end{align*}
where $Z\sim\mathcal{N}(0,1)$. Finally let 
\[
\mathcal{R}_{3}^{T}:=\frac{1}{2}\mathbb{\widetilde{\mathbb{E}}}\left[\left.\left|\sum_{t=1}^{T}\left(\eta_{t}^{T}\right)^{2}-\kappa^{2}(\theta)\right|\right|\mathcal{F}^{T}\right].
\]

\begin{lem}
\label{lem:dblbound}We have 
\begin{align}
\mathbb{\widetilde{\mathbb{E}}}\left[\left.d_{BL}\left(\mu^{T},\varphi\left(\cdot;-\frac{\kappa^{2}(\theta)}{2},\kappa^{2}(\theta)\right)\right)\right\vert \mathcal{Y}^{T}\right] & \leq\sqrt{\frac{2}{\pi}}\mathbb{\widetilde{\mathbb{E}}}\left[\left.\left|\hat{s}_{T}(\theta)-\kappa(\theta)\right|\right\vert \mathcal{Y}^{T}\right]\nonumber \\
 & +\widetilde{\mathbb{E}}\left[\left.\left|\mathcal{R}_{1}^{T}\right|\right|\mathcal{Y}^{T}\right]+\mathbb{\widetilde{\mathbb{E}}}\left[\left.\left|\mathcal{R}_{2}^{T}\right|\right|\mathcal{Y}^{T}\right]+\widetilde{\mathbb{E}}\left[\left.\left|\mathcal{R}_{3}^{T}\right|\right|\mathcal{Y}^{T}\right].\label{eq:firstboundondbl}
\end{align}
\end{lem}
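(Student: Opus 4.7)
The plan is to apply the triangle inequality to bound $d_{\mathrm{BL}}(\mu^{T},\varphi(\cdot;-\kappa^{2}(\theta)/2,\kappa^{2}(\theta)))$ by two simpler distances, each controlled by an explicit coupling exploiting the Gaussian structure already revealed by the It\^o-based decomposition preceding the lemma. The key observation is that, given $\mathcal{F}^{T}$, the leading term in the decomposition of $M^{T}$, namely $\sum_{t,i}\frac{\sqrt{2}}{N\widehat{W}_{t}^{T}}\partial_{u}\varpi(Y_{t},U_{t,i}(0);\theta)B_{t,i}(\delta_{T})$, is a linear combination of the Brownian increments $B_{t,i}(\delta_{T})$ with $\mathcal{F}^{T}$-measurable coefficients; since these increments are independent of $\mathcal{F}^{T}$, the sum is conditionally $\mathcal{N}(0,\hat{s}_{T}^{2}(\theta))$ by direct variance computation (using $\delta_{T}=N/T$). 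This is precisely the representation $M^{T}=\hat{s}_{T}(\theta)Z+\mathcal{R}_{2}^{T}$ with $Z\mid\mathcal{F}^{T}\sim\mathcal{N}(0,1)$ stated before the lemma.

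With this identification, I would split
\[
d_{\mathrm{BL}}(\mu^{T},\varphi(\cdot;-\kappa^{2}/2,\kappa^{2}))\leq d_{\mathrm{BL}}(\mu^{T},\mathcal{N}(-\kappa^{2}/2,\hat{s}_{T}^{2}))+d_{\mathrm{BL}}(\mathcal{N}(-\kappa^{2}/2,\hat{s}_{T}^{2}),\mathcal{N}(-\kappa^{2}/2,\kappa^{2})).
\]
For the second term, I would use the Kantorovich--Rubinstein bound $d_{\mathrm{BL}}\leq W_{1}$ together with the comonotonic coupling $(-\kappa^{2}/2+\hat{s}_{T}Z',-\kappa^{2}/2+\kappa(\theta)Z')$ driven by a common $Z'\sim\mathcal{N}(0,1)$. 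The pointwise gap equals $|\hat{s}_{T}(\theta)-\kappa(\theta)||Z'|$, and $\widetilde{\mathbb{E}}|Z'|=\sqrt{2/\pi}$, yielding the first summand on the right-hand side of \eqref{eq:firstboundondbl} after taking $\widetilde{\mathbb{E}}[\cdot\mid\mathcal{Y}^{T}]$.

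For the first term, I would use the very same $Z$ appearing in $M^{T}=\hat{s}_{T}Z+\mathcal{R}_{2}^{T}$ to construct a coupling of $R^{T}$ (with conditional law $\mu^{T}$) and $\hat{s}_{T}Z-\kappa^{2}(\theta)/2$ (with conditional law $\mathcal{N}(-\kappa^{2}/2,\hat{s}_{T}^{2})$) on the original probability space. Combining \eqref{eq:R1remainderdefinition} with $M^{T}=\hat{s}_{T}Z+\mathcal{R}_{2}^{T}$ gives
\[
R^{T}-(\hat{s}_{T}Z-\kappa^{2}(\theta)/2)=\mathcal{R}_{1}^{T}+\mathcal{R}_{2}^{T}-\tfrac{1}{2}{\textstyle \sum_{t=1}^{T}}(\eta_{t}^{T})^{2}+\kappa^{2}(\theta)/2,
\]
so the triangle inequality in absolute value, followed by conditional expectation given $\mathcal{F}^{T}$ and the definition of $\mathcal{R}_{3}^{T}$, yields $d_{\mathrm{BL}}(\mu^{T},\mathcal{N}(-\kappa^{2}/2,\hat{s}_{T}^{2}))\leq\widetilde{\mathbb{E}}[|\mathcal{R}_{1}^{T}|\mid\mathcal{F}^{T}]+\widetilde{\mathbb{E}}[|\mathcal{R}_{2}^{T}|\mid\mathcal{F}^{T}]+|\mathcal{R}_{3}^{T}|$ (using that $\mathcal{R}_{3}^{T}\geq 0$ is $\mathcal{F}^{T}$-measurable).

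Taking $\widetilde{\mathbb{E}}[\cdot\mid\mathcal{Y}^{T}]$ of the resulting bound and applying the tower property assembles exactly \eqref{eq:firstboundondbl}. There is no genuine obstacle here, since the lemma is a structural decomposition whose proof rests only on elementary coupling arguments and the Kantorovich--Rubinstein inequality; the real work lies in the subsequent lemmas that will bound each of the four remainder terms on the right-hand side of \eqref{eq:firstboundondbl} uniformly in $\theta\in N(\bar{\theta})$ using Assumption~\ref{assu:unifcondclt}.
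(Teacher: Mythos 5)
Your proposal is correct and follows essentially the same route as the paper: both arguments rest on the identity $R^{T}=\hat{s}_{T}Z+\mathcal{R}_{2}^{T}-\frac{1}{2}\sum_{t}(\eta_{t}^{T})^{2}+\mathcal{R}_{1}^{T}$, the Lipschitz/coupling bound for each substitution (the paper telescopes expectations of $f\in BL(1)$ where you phrase the same steps as explicit couplings and a triangle inequality on $d_{\mathrm{BL}}$), and the common-$Z'$ comparison of $\mathcal{N}(0,\hat{s}_{T}^{2})$ with $\mathcal{N}(0,\kappa^{2})$ yielding the factor $\sqrt{2/\pi}$. The identification of the four remainder terms and the final conditioning on $\mathcal{Y}^{T}$ match the paper's proof exactly.
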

\begin{proof}
We first notice that if $f\in BL(1)$ we have
\begin{align*}
 & \left|\mathbb{\widetilde{\mathbb{E}}}\left[\left.f\left(R^{T}\right)\right|\mathcal{F}^{T}\right]-\int f(z)\varphi\left(z;-\frac{\kappa^{2}(\theta)}{2},\kappa^{2}(\theta)\right)\mathrm{d}z\right|\\
 & =\left|\mathbb{\widetilde{\mathbb{E}}}\left[\left.f\left(M^{T}-\frac{1}{2}\sum\left(\eta_{t}^{T}\right)^{2}+\mathcal{R}_{1}^{T}\right)\right|\mathcal{F}^{T}\right]-\int f(z)\varphi\left(z;-\frac{\kappa^{2}(\theta)}{2},\kappa^{2}(\theta)\right)\mathrm{d}z\right|\\
 & \leq\left|\mathbb{\widetilde{\mathbb{E}}}\left[\left.f\left(M^{T}-\frac{1}{2}\sum\left(\eta_{t}^{T}\right)^{2}+\mathcal{R}_{1}^{T}\right)\right|\mathcal{F}^{T}\right]-\mathbb{\mathbb{\widetilde{\mathbb{E}}}}\left[\left.f\left(M^{T}-\frac{1}{2}\sum\left(\eta_{t}^{T}\right)^{2}\right)\right|\mathcal{F}^{T}\right]\right|\\
 & \quad+\left|\mathbb{\mathbb{\widetilde{\mathbb{E}}}}\left[\left.f\left(M^{T}-\frac{1}{2}\sum\left(\eta_{t}^{T}\right)^{2}\right)\right|\mathcal{F}^{T}\right]-\int f(z)\varphi\left(z;-\frac{\kappa^{2}(\theta)}{2},\kappa^{2}(\theta)\right)\mathrm{d}z\right|\\
 & \leq\left|\mathbb{\mathbb{\widetilde{\mathbb{E}}}}\left[\left.f\left(M^{T}-\frac{1}{2}\sum\left(\eta_{t}^{T}\right)^{2}\right)\right|\mathcal{F}^{T}\right]-\int f(z)\varphi\left(z;-\frac{\kappa^{2}(\theta)}{2},\kappa^{2}(\theta)\right)\mathrm{d}z\right|+\left|\mathbb{E}\left[\left.\mathcal{R}_{1}^{T}\right|\mathcal{F}^{T}\right]\right|,
\end{align*}
since $|f(x)-f(y)|\leq|x-y|.$ Notice that for all $\theta$ the function
$f_{\theta}(z):=f(z-\kappa^{2}(\theta)/2)$ also belongs to $BL(1)$.
Continuing with our estimate we therefore have 
\begin{align*}
 & \left|\mathbb{\widetilde{\mathbb{E}}}\left[\left.f\left(M^{T}-\frac{1}{2}\sum\left(\eta_{t}^{T}\right)^{2}\right)\right|\mathcal{F}^{T}\right]-\int f(z)\varphi\left(z;-\frac{\kappa^{2}(\theta)}{2},\kappa^{2}(\theta)\right)\mathrm{d}z\right|\\
 & =\left|\mathbb{\widetilde{\mathbb{E}}}\left[\left.f_{\theta}\left(M^{T}-\frac{1}{2}\sum\left(\eta_{t}^{T}\right)^{2}+\frac{\kappa^{2}(\theta)}{2}\right)\right|\mathcal{F}^{T}\right]-\int f_{\theta}(z)\varphi\left(z;0,\kappa^{2}(\theta)\right)\mathrm{d}z\right|\\
 & \leq\left|\mathbb{\widetilde{\mathbb{E}}}\left[\left.f_{\theta}\left(M^{T}-\frac{1}{2}\sum\left(\eta_{t}^{T}\right)^{2}+\frac{\kappa^{2}(\theta)}{2}\right)\right|\mathcal{F}^{T}\right]-\mathbb{\widetilde{\mathbb{E}}}\left[\left.f_{\theta}\left(M^{T}\right)\right|\mathcal{F}^{T}\right]\right|\\
 & \quad+\left|\mathbb{\widetilde{\mathbb{E}}}\left[\left.f_{\theta}\left(M^{T}\right)\right|\mathcal{F}^{T}\right]-\int f_{\theta}(z)\varphi\left(z;0,\kappa^{2}(\theta)\right)\mathrm{d}z\right|\\
 & \leq\frac{1}{2}\mathbb{\widetilde{\mathbb{E}}}\left[\left.\left|\sum_{t=1}^{T}\left(\eta_{t}^{T}\right)^{2}-\kappa^{2}(\theta)\right|\right|\mathcal{F}^{T}\right]+\left|\mathbb{\widetilde{\mathbb{E}}}\left[\left.f_{\theta}\left(M^{T}\right)\right|\mathcal{Y}^{T}\right]-\int f_{\theta}(z)\varphi\left(z;0,\kappa^{2}(\theta)\right)\mathrm{d}z\right|\\
 & \leq\left|\mathbb{\widetilde{\mathbb{E}}}\left[\left.\mathcal{R}_{3}^{T}\right|\mathcal{F}^{T}\right]\right|+\left|\mathbb{\widetilde{\mathbb{E}}}\left[\left.\mathcal{R}_{2}^{T}\right|\mathcal{F}^{T}\right]\right|+\left|\int f_{\theta}(z)\varphi\left(z;0,\hat{s}_{T}^{2}(\theta)\right)\mathrm{d}z-\int f_{\theta}(z)\varphi\left(z;0,\kappa^{2}(\theta)\right)\mathrm{d}z\right|
\end{align*}
Collecting terms and taking supremum over $BL(1)$, we have 
\begin{align*}
d_{BL}\left(\mu^{T},\mathcal{\varphi}\left(\cdot;-\frac{\kappa^{2}(\theta)}{2},\kappa^{2}(\theta)\right)\right) & :=\sup_{f\in BL(1)}\left|\mathbb{\widetilde{\mathbb{E}}}\left[\left.f(R^{T})\right|\mathcal{F}^{T}\right]-\int f(z)\varphi\left(z;-\frac{\kappa^{2}(\theta)}{2},\kappa^{2}(\theta)\right)\mathrm{d}z\right|\\
 & \leq\sup_{f\in BL(1)}\left|\int f_{\theta}(z)\varphi\left(z;0,\hat{s}_{T}^{2}(\theta)\right)\mathrm{d}z-\int f_{\theta}(z)\varphi\left(z;0,\kappa^{2}(\theta)\right)\mathrm{d}z\right|+\\
 & +\left|\mathbb{E}\left[\left.\mathcal{R}_{1}^{T}\right|\mathcal{F}^{T}\right]\right|+\left|\mathbb{E}\left[\left.\mathcal{R}_{2}^{T}\right|\mathcal{F}^{T}\right]\right|+\left|\mathbb{E}\left[\left.\mathcal{R}_{3}^{T}\right|\mathcal{F}^{T}\right]\right|\\
 & \leq\sqrt{\frac{2}{\pi}}\left|\hat{s}_{T}(\theta)-\kappa(\theta)\right|+\left|\mathbb{E}\left[\left.\mathcal{R}_{1}^{T}\right|\mathcal{F}^{T}\right]\right|+\left|\mathbb{E}\left[\left.\mathcal{R}_{2}^{T}\right|\mathcal{F}^{T}\right]\right|+\left|\mathbb{E}\left[\left.\mathcal{R}_{3}^{T}\right|\mathcal{F}^{T}\right]\right|
\end{align*}
since $\left\{ f_{\theta}:f\in BL(1)\right\} =BL(1)$. The result
follows by taking the conditional expectation w.r.t. $\mathcal{Y}^{T}$
and elementary manipulations. 
\end{proof}
We now need to control the four terms appearing on the r.h.s. of (\ref{eq:firstboundondbl}).
This is done in the following four subsections.

\subsubsection{Control of $\left|\hat{s}_{T}(\theta)-\kappa(\theta)\right|$}
\begin{lem}
\label{lem:controlst-kappa}As $T\to\infty$ we have that 
\[
\sup_{\theta\in N(\bar{\theta})}\widetilde{\mathbb{E}}\left[\left.\left|\hat{s}_{T}(\theta)-\kappa(\theta)\right|\right|\mathcal{Y}^{T}\right]\to0\ \mathbb{P}^{Y}-\mathrm{a.s.}
\]
\end{lem}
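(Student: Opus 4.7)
My plan is to reduce the claim to the stronger statement
\[
\sup_{\theta\in N(\bar{\theta})}\widetilde{\mathbb{E}}\bigl[\,\bigl|\hat{s}_{T}(\theta)^{2}-\kappa(\theta)^{2}\bigr|\,\bigm|\mathcal{Y}^{T}\bigr]\xrightarrow[T\to\infty]{}0\quad\mathbb{P}^{Y}\text{-a.s.},
\]
which suffices because $|\sqrt{a}-\sqrt{b}|\leq\sqrt{|a-b|}$ for $a,b\geq 0$, combined with conditional Jensen, gives $\widetilde{\mathbb{E}}[|\hat{s}_{T}-\kappa||\mathcal{Y}^{T}]\leq\widetilde{\mathbb{E}}[|\hat{s}_{T}^{2}-\kappa^{2}||\mathcal{Y}^{T}]^{1/2}$. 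Setting $g_{t}^{T}(\theta):=\frac{1}{N}\sum_{i=1}^{N}2\{\partial_{u}\varpi(Y_{t},U_{t,i};\theta)\}^{2}/(\widehat{W}_{t}^{T})^{2}$, so that $\hat{s}_{T}^{2}(\theta)=T^{-1}\sum_{t}g_{t}^{T}(\theta)$, I split
\[
\hat{s}_{T}^{2}(\theta)-\kappa^{2}(\theta)=\underbrace{\tfrac{1}{T}\sum_{t}\bigl\{g_{t}^{T}-\widetilde{\mathbb{E}}[g_{t}^{T}\mid\mathcal{Y}^{T}]\bigr\}}_{\text{(I) variance}}+\underbrace{\tfrac{1}{T}\sum_{t}\bigl\{\widetilde{\mathbb{E}}[g_{t}^{T}\mid\mathcal{Y}^{T}]-\kappa^{2}(Y_{t};\theta)\bigr\}}_{\text{(II) bias}}+\underbrace{\tfrac{1}{T}\sum_{t}\kappa^{2}(Y_{t};\theta)-\kappa^{2}(\theta)}_{\text{(III) uniform SLLN}},
\]
and will treat the three pieces separately.

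For (II), the change of measure identity $\widetilde{\mathbb{E}}[f(U_{t})\mid\mathcal{Y}^{T}]=\mathbb{E}[f(U_{t})\widehat{W}_{t}^{T}\mid Y_{t}]$ (using the unbiasedness $\mathbb{E}[\widehat{W}_{s}^{T}\mid Y_{s}]=1$ on rows $s\neq t$) yields
\[
\widetilde{\mathbb{E}}[g_{t}^{T}\mid\mathcal{Y}^{T}]-\kappa^{2}(Y_{t};\theta)=2\,\mathbb{E}\Bigl[\{\partial_{u}\varpi(Y_{t},U_{t,1};\theta)\}^{2}\bigl(\tfrac{1}{\widehat{W}_{t}^{T}}-1\bigr)\Bigm|Y_{t}\Bigr],
\]
which by Cauchy--Schwarz together with Assumptions~\ref{ass:condcltinverseweightto1} and \ref{ass:condcltduten} (using Lyapunov to bring the tenth moment down to the fourth) is bounded by $C\sqrt{\epsilon_{T}}\,B(Y_{t})^{7/10}$ \emph{uniformly in} $\theta\in N(\bar{\theta})$. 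Averaging over $t$ and invoking the SLLN (since $\mathbb{E}[B(Y_{1})^{7/10}]\leq\mathbb{E}[B(Y_{1})]^{7/10}<\infty$ by Jensen) shows $\sup_{\theta\in N(\bar{\theta})}|\text{(II)}|=O(\sqrt{\epsilon_{T}})\to 0$ $\mathbb{P}^{Y}$-a.s.

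For (I), note that under $\widetilde{\mathbb{P}}$, conditionally on $\mathcal{Y}^{T}$ the random variables $g_{t}^{T}$ are independent across $t$, so
\[
\widetilde{\mathbb{E}}\bigl[|\text{(I)}|\bigm|\mathcal{Y}^{T}\bigr]\leq\Bigl(\tfrac{1}{T^{2}}\sum_{t}\widetilde{\mathbb{V}}[g_{t}^{T}\mid\mathcal{Y}^{T}]\Bigr)^{1/2}.
\]
I will bound $\widetilde{\mathbb{V}}[g_{t}^{T}\mid\mathcal{Y}^{T}]\leq\widetilde{\mathbb{E}}[(g_{t}^{T})^{2}\mid\mathcal{Y}^{T}]=\mathbb{E}[(g_{t}^{T})^{2}\widehat{W}_{t}^{T}\mid Y_{t}]$; Cauchy--Schwarz combined with Assumption~\ref{ass:condcltinversesixthmoment} to control the negative moments of $\widehat{W}_{t}^{T}$ and Jensen plus Assumption~\ref{ass:condcltduten} to control the eighth moment of $\partial_{u}\varpi$ yields $\widetilde{\mathbb{V}}[g_{t}^{T}\mid\mathcal{Y}^{T}]\leq C\,B(Y_{t})^{\alpha}$ for some $\alpha<1$, again uniformly in $\theta\in N(\bar{\theta})$. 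Hence (I) is $O(T^{-1/2})$ times an a.s.~bounded average, giving $\sup_{\theta}\widetilde{\mathbb{E}}[|\text{(I)}||\mathcal{Y}^{T}]\to 0$ $\mathbb{P}^{Y}$-a.s.

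For (III), shrinking $N(\bar{\theta})$ to a compact neighbourhood, the hypotheses that $\kappa^{2}(y,\cdot)$ is continuous in $\theta$ for every $y$, $\sup_{\theta\in N(\bar{\theta})}\kappa^{2}(y,\theta)\leq B(y)$, and $\mathbb{E}\,B(Y_{1})<\infty$ put us squarely in the setting of Jennrich's uniform strong law \citep{Jennrich1969}, delivering
\[
\sup_{\theta\in N(\bar{\theta})}\Bigl|\tfrac{1}{T}\sum_{t=1}^{T}\kappa^{2}(Y_{t};\theta)-\kappa^{2}(\theta)\Bigr|\xrightarrow[T\to\infty]{}0\quad\mathbb{P}^{Y}\text{-a.s.}
\]
Combining the three estimates establishes the target claim. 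The only real difficulty is extracting uniformity in $\theta$ from pointwise bounds; this is handled painlessly because every bound I use is in terms of $B(Y_{t})$ with the suprema over $\theta\in N(\bar{\theta})$ already absorbed into $B$ by Assumption~\ref{assu:unifcondclt}. The one place where a genuinely uniform-in-$\theta$ argument is needed is (III), where I appeal to Jennrich's theorem.
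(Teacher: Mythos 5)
Your proof is correct and follows the same overall strategy as the paper's (reduce to $|\hat{s}_T^2(\theta)-\kappa^2(\theta)|$ via $|\sqrt{a}-\sqrt{b}|\le\sqrt{|a-b|}$ and conditional Jensen; split off a fluctuation term, a weight-bias term, and a term handled by Jennrich's uniform SLLN), but your middle decomposition is genuinely different. The paper writes
\[
\hat{s}_T^2(\theta)-\kappa^2(\theta)=\frac{1}{T}\sum_t\frac{1}{N(\widehat{W}_t^T)^2}\sum_i\bigl[2g(Y_t,U_{t,i},\theta)-\kappa^2(Y_t,\theta)\bigr]+\frac{1}{T}\sum_t\Bigl[\frac{\kappa^2(Y_t,\theta)}{(\widehat{W}_t^T)^2}-\kappa^2(\theta)\Bigr],
\]
so its fluctuation term is centred at $\kappa^2(Y_t,\theta)$ \emph{within} each block and decays at rate $N^{-1/2}$ by i.i.d.-ness over the particle index $i$ under $\mathbb{P}$, while the residual bias $\kappa^2(Y_t,\theta)\{(\widehat{W}_t^T)^{-2}-1\}$ is killed by Assumptions~\ref{ass:condcltinverseweightto1}--\ref{ass:condcltweightto1}. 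You instead centre $g_t^T$ at $\widetilde{\mathbb{E}}[g_t^T\mid\mathcal{Y}^T]$, exploit the factorisation of $\widetilde{\mathbb{P}}$ across $t$ to get a $T^{-1/2}$ variance bound, and your bias term $2\,\mathbb{E}[\{\partial_u\varpi\}^2(1/\widehat{W}_t^T-1)\mid Y_t]$ (whose identity via the change of measure and exchangeability over $i$ you compute correctly) is controlled by Assumptions~\ref{ass:condcltinverseweightto1} and \ref{ass:condcltduten}. Both arguments use only hypotheses listed in Assumption~\ref{assu:unifcondclt}, and both inherit uniformity in $\theta$ for free because the suprema are absorbed into $B(Y_t)$; yours has the minor aesthetic advantage that the fluctuation rate $T^{-1/2}$ does not rely on $N_T\to\infty$, whereas the paper's $N^{-1/2}$ does.
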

\begin{proof}
This result is established as follows. For any real numbers $\alpha,\beta$,
we have
\[
\sqrt{\alpha^{2}}-\sqrt{\beta^{2}}\leq\sqrt{\left|\alpha^{2}-\beta^{2}\right|},
\]
thus 
\[
\widetilde{\mathbb{E}}\left[\left.\left|\hat{s}_{T}(\theta)-\kappa(\theta)\right|\right|\mathcal{Y}^{T}\right]\leq\widetilde{\mathbb{E}}\left[\left.\sqrt{\left|\hat{s}_{T}^{2}(\theta)-\kappa^{2}(\theta)\right|}\right|\mathcal{Y}^{T}\right]\leq\widetilde{\mathbb{E}}^{1/2}\left[\left.\left|\hat{s}_{T}^{2}(\theta)-\kappa^{2}(\theta)\right|\right|\mathcal{Y}^{T}\right].
\]
We will control the last term in the above expression. Let us write
$g(y,u,\theta):=\left[\partial_{u}\varpi(y;u,\theta)\right]^{2}$
and define 
\[
\kappa^{2}(y,\theta):=2\mathbb{E}g(y,U_{1,1},\theta).
\]
Therefore $\kappa^{2}(\theta)=\mathbb{E}\kappa^{2}(Y_{1},\theta)$.
We next compute 
\begin{align*}
\hat{s}_{T}^{2}(\theta)-\kappa^{2}(\theta) & =\sum_{t=1}^{T}\frac{1}{T}\sum_{i=1}^{N}\frac{1}{N\left(\widehat{W}_{t}^{T}\right)^{2}}\left[2g\left(Y_{t},U_{t,i},\theta\right)-\kappa^{2}(Y_{t},\theta)\right]+\sum_{t=1}^{T}\frac{1}{T}\left[\frac{\kappa^{2}(Y_{t},\theta)}{\left(\widehat{W}_{t}^{T}\right)^{2}}-\kappa^{2}(\theta)\right].
\end{align*}
First we notice that 
\begin{align*}
 & \widetilde{\mathbb{E}}\left[\left.\left|\frac{1}{N\left(\widehat{W}_{t}^{T}\right)^{2}}\sum_{i=1}^{N}\left[2g\left(Y_{t},U_{t,i},\theta\right)-\kappa^{2}(Y_{t},\theta)\right]\right|\right|\mathcal{Y}^{T}\right]\\
 & =\mathbb{\mathbb{E}}\left[\left.\left|\frac{1}{N\left(\widehat{W}_{t}^{T}\right)}\sum_{i=1}^{N}\left[2g\left(Y_{t},U_{t,i},\theta\right)-\kappa^{2}(Y_{t},\theta)\right]\right|\right|\mathcal{Y}^{T}\right]\\
 & \leq\mathbb{\mathbb{E}}^{1/2}\left[\left.\left(\widehat{W}_{t}^{T}\right)^{-2}\right|\mathcal{Y}^{T}\right]\mathbb{\mathbb{E}}^{1/2}\left[\left.\left(\frac{1}{N}\sum_{i=1}^{N}\left[2g\left(Y_{t},U_{t,i},\theta\right)-\kappa^{2}(Y_{t},\theta)\right]\right)^{2}\right|\mathcal{Y}^{T}\right]\\
 & =\frac{1}{\sqrt{N}}\mathbb{\mathbb{E}}^{1/2}\left[\left.\left(\widehat{W}_{t}^{T}\right)^{-2}\right|Y_{t}\right]\mathbb{V}^{1/2}\left[\left.2g(Y_{t},U_{t,1},\theta)\right|Y_{t}\right],
\end{align*}
since the terms are mean zero and independent over $i$. Therefore
by Assumptions\ref{ass:condcltboundonkappa} and \ref{ass:condcltvarofg}
\begin{align*}
 & \widetilde{\mathbb{E}}\left[\left.\left|\frac{1}{T}\sum_{t=1}^{T}\frac{1}{N\left(\widehat{W}_{t}^{T}\right)^{2}}\sum_{i=1}^{N}\left[2g\left(Y_{t},U_{t,i},\theta\right)-\kappa^{2}(Y_{t},\theta)\right]\right|\right|\mathcal{Y}^{T}\right]\\
 & \leq\frac{1}{T}\sum_{t=1}^{T}\mathbb{\mathbb{E}}\left[\left.\left|\frac{1}{N\left(\widehat{W}_{t}^{T}\right)}\sum_{i=1}^{N}\left[2g\left(Y_{t},U_{t,i},\theta\right)-\kappa^{2}(Y_{t},\theta)\right]\right|\right|\mathcal{Y}^{T}\right]\\
 & \leq\frac{1}{\sqrt{N}}\frac{1}{T}\sum_{t=1}^{T}\mathbb{\mathbb{E}}^{1/2}\left[\left.\left(\widehat{W}_{t}^{T}\right)^{-2}\right|Y_{t}\right]\mathbb{V}^{1/2}\left[\left.2g(Y_{t},U_{t,1},\theta)\right|Y_{t}\right].\\
 & \leq\frac{1}{\sqrt{N}}\frac{1}{T}\sum_{t=1}^{T}B(Y_{t})^{1/3+1}.
\end{align*}
Continuing we have to control the remainder term 
\begin{align*}
 & \widetilde{\mathbb{E}}\left[\left.\left|\frac{1}{T}\sum_{t=1}^{T}\left[\frac{\kappa^{2}(Y_{t},\theta)}{\left(\widehat{W}_{t}^{T}\right)^{2}}-\kappa^{2}(\theta)\right]\right|\right|\mathcal{Y}^{T}\right]\\
 & \leq\widetilde{\mathbb{E}}\left[\left.\left|\frac{1}{T}\sum_{t=1}^{T}\left[\frac{\kappa^{2}(Y_{t},\theta)}{\left(\widehat{W}_{t}^{T}\right)^{2}}-\kappa^{2}(Y_{t},\theta)\right]\right|\right|\mathcal{Y}^{T}\right]+\left|\sum_{t=1}^{T}\frac{1}{T}\left[\kappa^{2}(Y_{t},\theta)-\kappa^{2}(\theta)\right]\right|\\
 & \leq\sum_{t=1}^{T}\frac{1}{T}\kappa^{2}(Y_{t},\theta)\mathbb{E}\left[\left.\left|\frac{1}{\widehat{W}_{t}^{T}}-\widehat{W}_{t}^{T}\right|\right|Y_{t}\right]+\left|\sum_{t=1}^{T}\frac{1}{T}\left[\kappa^{2}(Y_{t},\theta)-\kappa^{2}(\theta)\right]\right|\\
 & \leq\sum_{t=1}^{T}\frac{1}{T}\kappa^{2}(Y_{t},\theta)\left\{ \mathbb{E}\left[\left.\left|\frac{1}{\widehat{W}_{t}^{T}}-1\right|\right|Y_{t}\right]+\mathbb{E}\left[\left.\left|\widehat{W}_{t}^{T}-1\right|\right|Y_{t}\right]\right\} +\left|\sum_{t=1}^{T}\frac{1}{T}\left[\kappa^{2}(Y_{t},\theta)-\kappa^{2}(\theta)\right]\right|\\
 & \leq\sum_{t=1}^{T}\frac{2}{T}\kappa^{2}(Y_{t},\theta)\epsilon_{T}B(Y_{t})+\left|\sum_{t=1}^{T}\frac{1}{T}\left[\kappa^{2}(Y_{t},\theta)-\kappa^{2}(\theta)\right]\right|,
\end{align*}
by (\ref{ass:condcltinverseweightto1}) and (\ref{ass:condcltweightto1}).
Finally, by assumption $\kappa^{2}(y,\theta)$, defined for $\theta\in N(\bar{\theta})$,
is continuous in $\theta$ for all $y$, and a measurable function
of $y$ for each $\theta$ and Assumption \ref{ass:condcltboundonkappa}
ensures that $\kappa^{2}(y,\theta)\leq B(y)$ for all $y\in\mathcal{Y}$
and $\theta\in N(\bar{\theta})$ and we also have $\mathbb{E}B(Y_{1})<\infty$
by assumption. Thus, by \citep[Theorem 2,][]{Jennrich1969} it follows
that as $T\to\infty$ 
\begin{equation}
\sup_{\theta\in N(\bar{\theta})}\left|\sum_{t=1}^{T}\frac{1}{T}\left[\kappa^{2}(Y_{t},\theta)-\kappa^{2}(\theta)\right]\right|\to0\ \mathbb{P}^{Y}-\mathrm{a.s.},\label{eq:uniformSLLN}
\end{equation}
In addition we have that as $T\to\infty$ 
\[
\sum_{t=1}^{T}\frac{2}{T}\kappa^{2}(Y_{t},\theta)\epsilon_{T}B(Y_{t})\leq\frac{2\epsilon_{T}}{T}\sum_{t=1}^{T}B^{2}(Y_{t})\to0\ \mathbb{P}^{Y}-\mathrm{a.s.}
\]
\end{proof}

\subsubsection{Control of $\mathcal{R}_{1}^{T}$}
\begin{lem}
\label{lem:controlR1}As $T\to\infty$ we have that 
\[
\sup_{\theta\in N(\bar{\theta})}\mathbb{\widetilde{\mathbb{E}}}\left[\left.\left|\mathcal{R}_{1}^{T}\right|\right|\mathcal{Y}^{T}\right]\to0\ \mathbb{P}^{Y}-\mathrm{a.s.}
\]
\end{lem}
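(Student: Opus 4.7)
The plan is to decompose
\begin{equation*}
\mathcal{R}_{1}^{T}=\sum_{t=1}^{T}J_{t}^{T}+\sum_{t=1}^{T}L_{t}^{T}+\sum_{t=1}^{T}h(\eta_{t}^{T})(\eta_{t}^{T})^{2},
\end{equation*}
which follows from (\ref{eq:R1remainderdefinition}), the Taylor expansion (\ref{eq:etaTaylor}), and the decomposition $\eta_{t}^{T}=J_{t}^{T}+L_{t}^{T}+M_{t}^{T}$. I would treat each of the three sums separately, aiming in each case for an almost sure bound on $\widetilde{\mathbb{E}}[\,\cdot\,|\mathcal{Y}^{T}]$ by a sample average of a function of $B(Y_{t})$, which then vanishes via the strong law of large numbers.

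For $\sum_{t}J_{t}^{T}$, a direct computation using Assumption \ref{ass:dthetaFubini} together with the change of measure $\widetilde{\mathbb{E}}[\,\cdot\,|\mathcal{Y}^{T}]=\mathbb{E}[\widehat{W}_{t}^{T}\,\cdot\,|Y_{t}]$ shows that $\widetilde{\mathbb{E}}[J_{t}^{T}|\mathcal{Y}^{T}]=0$ and that the terms are conditionally independent across $t$. By Jensen's inequality,
\begin{equation*}
\widetilde{\mathbb{E}}\bigl[\bigl|\textstyle\sum_{t}J_{t}^{T}\bigr|\,|\mathcal{Y}^{T}\bigr]^{2}\leq\textstyle\sum_{t}\widetilde{\mathbb{E}}\bigl[(J_{t}^{T})^{2}\,|\mathcal{Y}^{T}\bigr],
\end{equation*}
and the same steps as in the proof of Lemma \ref{lem:Jt}, now performed conditionally on $Y_{t}$ and using Cauchy--Schwarz together with Assumptions \ref{ass:condcltinversesixthmoment} and \ref{ass:condcltdthetaten}, produce a bound of the form $\widetilde{\mathbb{E}}[(J_{t}^{T})^{2}|\mathcal{Y}^{T}]\leq C\,B(Y_{t})^{\gamma}/(NT)$ that is uniform in $\theta\in N(\bar\theta)$. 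The SLLN applied to $T^{-1}\sum_{t}B(Y_{t})^{\gamma}$ (justified by $\mathbb{E}\,B(Y_{1})^{10}<\infty$) then yields $\sup_{\theta\in N(\bar{\theta})}\widetilde{\mathbb{E}}[|\sum_{t}J_{t}^{T}|\,|\mathcal{Y}^{T}]=O(N^{-1/2})$ $\mathbb{P}^{Y}$-a.s. The argument for $\sum_{t}L_{t}^{T}$ is analogous: Stein's lemma (Assumption \ref{ass:dUStein}) gives the zero conditional mean, and the computation from Lemma \ref{lem:Lt}, now done $Y_{t}$-conditionally and bounded via Assumption \ref{ass:condcltsteinten}, leads to $\widetilde{\mathbb{E}}[(L_{t}^{T})^{2}|\mathcal{Y}^{T}]\leq C\delta_{T}^{2}N^{-1}B(Y_{t})^{\gamma'}$ uniformly in $\theta$, so the sum is of order $(N/T)\cdot T^{-1}\sum_{t}B(Y_{t})^{\gamma'}$, which vanishes almost surely because $N/T\to 0$.

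For the Taylor remainder, the identity $\log(1+x)-x+x^{2}/2=h(x)x^{2}$ with $h(x)=o(x)$ ensures that for some $\epsilon_{0}>0$ and some constant $C$, $|h(x)x^{2}|\leq C|x|^{3}$ whenever $|x|\leq\epsilon_{0}$. Introducing $A_{T}:=\{\max_{t}|\eta_{t}^{T}|\leq\epsilon_{0}\}$, on $A_{T}$ we have $|\sum_{t}h(\eta_{t}^{T})(\eta_{t}^{T})^{2}|\leq C\epsilon_{0}\sum_{t}(\eta_{t}^{T})^{2}$, and Lemma \ref{lem:zsquared} (combined with the $Y_{t}$-conditional moment bounds in Assumption \ref{assu:unifcondclt}) shows that $\widetilde{\mathbb{E}}[\sum_{t}(\eta_{t}^{T})^{2}|\mathcal{Y}^{T}]$ is bounded by a sample average converging a.s.\ to a constant uniformly in $\theta$. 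Off $A_{T}$, Cauchy--Schwarz and the same union bound strategy as at the end of the proof of Theorem \ref{Theorem:conditionalCLTthetathetacand} reduce matters to controlling $\widetilde{\mathbb{E}}[(\eta_{1}^{T})^{2}\,\mathbb{I}\{|\eta_{1}^{T}|>\epsilon_{0}\}|\mathcal{Y}^{T}]$; using the decomposition of $\eta_{1}^{T}$ and the higher-order moments in Assumptions \ref{ass:condcltduten}, \ref{ass:condcltdddufour} and \ref{ass:condcltsteinten}, this is shown to be $o(1/T)$ uniformly in $\theta$, after which taking $\epsilon_{0}\downarrow 0$ gives the desired conclusion.

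The main obstacle is upgrading the pointwise-in-$\theta$ arguments of Lemmas \ref{lem:Jt}--\ref{lem:zsquared} to uniform-in-$\theta$ ones. This is precisely what Assumption \ref{assu:unifcondclt} is designed for: every moment appearing in the $Y_{t}$-conditional estimates must be dominated by a $\theta$-independent envelope $B(Y_{t})^{\gamma}$ with $\mathbb{E}\,B(Y_{1})^{10}<\infty$, and the inverse-weight moment exponent in Assumption \ref{ass:momentsofW} has to be strengthened to the uniform $-6$ of Assumption \ref{ass:condcltinversesixthmoment} in order to absorb the Cauchy--Schwarz losses incurred when applying the change of measure $\widetilde{\mathbb{E}}=\mathbb{E}[\widehat{W}_{t}^{T}\,\cdot\,]$. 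Once the envelopes are in place, one interchanges $\sup_{\theta}$ with expectation and invokes Jennrich's uniform SLLN \citep{Jennrich1969}, exactly as is done in the proof of Lemma \ref{lem:controlst-kappa} to derive (\ref{eq:uniformSLLN}).
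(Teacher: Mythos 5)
Your treatment of $\sum_{t}J_{t}^{T}$ and $\sum_{t}L_{t}^{T}$ is essentially the paper's argument: zero conditional mean, conditional independence across $t$, Jensen plus Cauchy--Schwarz against the inverse-weight moments, envelopes $B(Y_{t})^{\gamma}$ uniform over $N(\bar{\theta})$, and a (uniform) SLLN. The rates you report, $O(1/(NT))$ and $O(N/T^{2})$ per term before summing, match equations (\ref{eq:controlJttilde}) and (\ref{eq:controlLttilde}).

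The Taylor remainder is where you diverge from the paper, and your route has a genuine gap. You truncate on $A_{T}=\{\max_{t}|\eta_{t}^{T}|\leq\epsilon_{0}\}$ and claim that off $A_{T}$ everything reduces, via Cauchy--Schwarz, to controlling $\widetilde{\mathbb{E}}[(\eta_{1}^{T})^{2}\mathbb{I}\{|\eta_{1}^{T}|>\epsilon_{0}\}\mid\mathcal{Y}^{T}]$. But $h(x)x^{2}=\log(1+x)-x+x^{2}/2$ and $\eta_{t}^{T}>-1$ with no lower bound away from $-1$: the factor $h(\eta_{t}^{T})$ is \emph{unbounded} on $A_{T}^{\mathtt{C}}$ (it diverges logarithmically as $\eta_{t}^{T}\downarrow-1$), so second moments of $\eta_{t}^{T}$ alone cannot control $\widetilde{\mathbb{E}}[|h(\eta_{t}^{T})|(\eta_{t}^{T})^{2}\mathbb{I}(A_{T}^{\mathtt{C}})\mid\mathcal{Y}^{T}]$. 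Bounding $\widetilde{\mathbb{P}}(A_{T}^{\mathtt{C}})$ as at the end of the proof of Theorem \ref{Theorem:conditionalCLTthetathetacand} suffices there because that argument only needs convergence in probability of the remainder; here you need an $L^{1}(\widetilde{\mathbb{P}}(\cdot\mid\mathcal{Y}^{T}))$ bound, uniform in $\theta$, and the unboundedness of $h$ near $-1$ must be confronted. Also, deducing a bound on $\widetilde{\mathbb{E}}[\sum_{t}(\eta_{t}^{T})^{2}\mid\mathcal{Y}^{T}]$ from the convergence in probability of Lemma \ref{lem:zsquared} is not valid as stated; you would instead bound each $\widetilde{\mathbb{E}}[(\eta_{t}^{T})^{2}\mid\mathcal{Y}^{T}]$ directly from the $J,L,M$ decomposition, which is doable but is a different argument.

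The paper avoids the truncation entirely by writing $h(x)x^{2}=\int_{0}^{x}y^{2}(1+y)^{-1}\mathrm{d}y$ and applying Cauchy--Schwarz \emph{inside the integral}, splitting it into $\bigl[\int_{0}^{\eta}y^{4}\mathrm{d}y\bigr]^{1/2}$ and $\bigl[\int_{0}^{\eta}(1+y)^{-2}\mathrm{d}y\bigr]^{1/2}=\bigl[\eta/(1+\eta)\bigr]^{1/2}$. The first factor is controlled by tenth moments of $\widetilde{\eta}_{t}^{T}=\widehat{W}_{t}^{T}\eta_{t}^{T}$ (via (\ref{ass:condcltdthetaten})--(\ref{ass:condcltduten}) and Zakai's inequality), and the second factor is exactly what Assumption (\ref{ass:condcltetat}), $\mathbb{E}[\eta_{t}^{T}/(1+\eta_{t}^{T})\mid Y_{t}]\leq B(y)$, is there to handle; this assumption encodes the integrability of the singularity at $\eta=-1$ that your argument leaves uncontrolled. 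To repair your version you would need to supplement the off-$A_{T}$ step with a bound on $\widetilde{\mathbb{E}}[|\log(1+\eta_{t}^{T})|\,\mathbb{I}(A_{T}^{\mathtt{C}})\mid\mathcal{Y}^{T}]$, e.g.\ via negative moments of the proposed weight $\widehat{W}_{t}^{T}(Y_{t}\mid\theta';U_{t}')$, or simply adopt the integral-representation bound.
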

\begin{proof}
It follows from (\ref{eq:R1remainderdefinition}) that
\[
\mathcal{R}_{1}^{T}:=\sum_{t=1}^{T}J_{t}^{T}+\sum_{t=1}^{T}L_{t}^{T}+\sum_{t=1}^{T}h(\eta_{t}^{T})[\eta_{t}^{T}]^{2}
\]
with $h(x)=o(x)$. %
Recall that $h$ was defined through the Taylor expansion
\begin{align*}
\log(1+x)= & x-\frac{x^{2}}{2}+h(x)x^{2}\\
= & x-\frac{x^{2}}{2}+\int_{0}^{x}\left[\frac{y^{2}}{1+y}\right]\mathrm{d}y.
\end{align*}
Therefore we can write 
\begin{align}
\sum_{t=1}^{T}\widetilde{\mathbb{E}}\left[\left.h(\eta_{t}^{T})\left[\eta_{t}^{T}\right]^{2}\right|Y_{t}\right] & =\sum_{t=1}^{T}\widetilde{\mathbb{E}}\left[\left.\int_{0}^{\eta_{t}^{T}}\left[\frac{y^{2}}{1+y}\right]\mathrm{d}y\right|Y_{t}\right]\nonumber \\
 & \leq\sum_{t=1}^{T}\widetilde{\mathbb{E}}\left[\left.\left[\int_{0}^{\eta_{t}^{T}}y^{4}\mathrm{d}y\right]^{1/2}\left[\int_{0}^{\eta_{t}^{T}}\frac{\mathrm{d}y}{\left(1+y\right)^{2}}\right]^{1/2}\right|Y_{t}\right]\nonumber \\
 & \leq C\sum_{t=1}^{T}\mathbb{E}\left[\left.\widehat{W}_{t}^{T}\left[\eta_{t}^{T}\right]^{5/2}\frac{\left(\eta_{t}^{T}\right)^{1/2}}{\left(1+\eta_{t}^{T}\right)^{1/2}}\right|Y_{t}\right]\nonumber \\
 & \leq C\sum_{t=1}^{T}\mathbb{E}^{1/2}\left[\left.\left[\widehat{W}_{t}^{T}\right]^{2}\left[\eta_{t}^{T}\right]^{5}\right|Y_{t}\right]\mathbb{E}^{1/2}\left[\left.\frac{\eta_{t}^{T}}{\left(1+\eta_{t}^{T}\right)}\right|Y_{t}\right]\nonumber \\
 & \leq C\sum_{t=1}^{T}\mathbb{E}^{1/2}\left[\left.\left[\widehat{W}_{t}^{T}\right]^{2}\left[\eta_{t}^{T}\right]^{5}\right|Y_{t}\right]B(Y_{t})^{1/2}\label{eq:condcltfinalass1}
\end{align}
by (\ref{ass:condcltetat}). Letting $\widetilde{\eta}_{t}^{T}:=\widehat{W}_{t}^{T}\eta_{t}^{T}$
we have
\begin{align}
 & \mathbb{E}^{1/2}\left[\left.\left[\widehat{W}_{t}^{T}\right]^{2}\left[\eta_{t}^{T}\right]^{5}\right|Y_{t}\right]\nonumber \\
 & =\mathbb{E}^{1/2}\left[\left.\left[\widehat{W}_{t}^{T}\right]^{2-5}\left[\widetilde{\eta}_{t}^{T}\right]^{5}\right|Y_{t}\right]\nonumber \\
 & \leq\mathbb{E}^{1/4}\left[\left.\left[\widehat{W}_{t}^{T}\right]^{-6}\right|Y_{t}\right]\mathbb{E}^{1/4}\left[\left.\left[\widetilde{\eta}_{t}^{T}\right]^{10}\right|Y_{t}\right],\label{eq:condcltfinalass2}
\end{align}
and 
\begin{align}
\mathbb{E}\left[\left.\left[\widetilde{\eta}_{t}^{T}\right]^{10}\right|Y_{t}\right] & \leq C\left\{ \mathbb{E}\left[\left.\left[\widetilde{J}_{t}^{T}\right]^{10}\right|Y_{t}\right]+\mathbb{E}\left[\left.\left[\widetilde{L}_{t}^{T}\right]^{10}\right|Y_{t}\right]+\mathbb{E}\left[\left.\left[\widetilde{M}_{t}^{T}\right]^{10}\right|Y_{t}\right]\right\} ,\label{eq:condcltfinalass3}
\end{align}
where $\widetilde{J}_{t}^{T}:=\widehat{W}_{t}^{T}J_{t}^{T}$, $\widetilde{L}_{t}^{T}:=\widehat{W}_{t}^{T}L_{t}^{T}$
and $\widetilde{M}_{t}^{T}:=\widehat{W}_{t}^{T}M_{t}^{T}$. We now
control the terms $\widetilde{\mathbb{E}}\left[\left.\left|\sum_{t=1}^{T}J_{t}^{T}\right|\right|Y_{t}\right]^{2}$
and $\widetilde{\mathbb{E}}\left[\left.\left|\sum_{t=1}^{T}L_{t}^{T}\right|\right|Y_{t}\right]^{2}$
and the terms on the r.h.s. of (\ref{eq:condcltfinalass3}).

\emph{Terms} $\widetilde{J}_{t}^{T}$. Since $\widetilde{\mathbb{E}}\left[\left.J_{t}^{T}\right|Y_{t}\right]=0$
, and the terms are independent over $t$ we have
\begin{align*}
\widetilde{\mathbb{E}}\left[\left.\left|\sum_{t=1}^{T}J_{t}^{T}\right|\right|Y_{t}\right]^{2} & \leq\widetilde{\mathbb{E}}\left[\left.\left(\sum_{t=1}^{T}J_{t}^{T}\right)^{2}\right|Y_{t}\right]=\sum_{t=1}^{T}\widetilde{\mathbb{E}}\left[\left.\left(J_{t}^{T}\right)^{2}\right|Y_{t}\right]\\
 & =\sum_{t=1}^{T}\mathbb{E}\left[\left.\left[\widehat{W}_{t}^{T}\right]^{-1}\left(\widetilde{J}_{t}^{T}\right)^{2}\right|Y_{t}\right]\\
 & \leq\sum_{t=1}^{T}\mathbb{E}^{1/2}\left[\left.\left[\widehat{W}_{t}^{T}\right]^{-2}\right|Y_{t}\right]\mathbb{E}^{1/2}\left[\left.\left(\widetilde{J}_{t}^{T}\right)^{4}\right|Y_{t}\right]\\
 & \leq\sum_{t=1}^{T}B(Y_{t})^{1/2}\mathbb{E}^{1/5}\left[\left.\left(\widetilde{J}_{t}^{T}\right)^{10}\right|Y_{t}\right],
\end{align*}
by Holder's inequality. We thus have to control 
\begin{align*}
\mathbb{E}\left[\left.\left[\widetilde{J}_{t}^{T}\right]^{10}\right|Y_{t}\right] & \leq\mathbb{E}\left[\left.\left[\frac{1}{N}\sum_{i=1}^{N}\{\varpi(Y_{t},U_{t,i}\left(\delta_{T}\right);\theta+\xi/\sqrt{T})-\varpi(Y_{t},U_{t,i}\left(\delta_{T}\right);\theta)\}\right]^{10}\right|Y_{t}\right]\\
 & =\mathbb{E}\left[\left.\left[\frac{1}{N}\sum_{i=1}^{N}\{\varpi(Y_{t},U_{t,i};\theta+\xi/\sqrt{T})-\varpi(Y_{t},U_{t,i};\theta)\}\right]^{10}\right|Y_{t}\right].
\end{align*}
Since the terms are i.i.d. over $i$ and have zero mean we will use
the following fact: let $X_{1},\dots,X_{N}$ be i.i.d. and zero mean,
then 
\begin{align*}
\mathbb{E}\left[\left(\sum X_{i}\right)^{10}\right] & =\sum_{k=1}^{5}\sum_{i_{1}\neq\cdots\neq i_{k}}^{N}\sum_{\alpha\in A(k)}\prod_{j=1}^{k}\mathbb{E}\left[X_{i_{j}}^{2+\alpha_{j}}\right],
\end{align*}
where 
\[
A(k)=\left\{ \left(\alpha_{1},\dots,\alpha_{k}\right):\ \alpha_{1}+\cdots+\alpha_{k}=10-2k\right\} .
\]
Using Holder's inequality notice that since the factors are i.i.d.
we have
\begin{align*}
\prod_{j=1}^{k}\mathbb{E}\left[X_{i_{j}}^{2+\alpha_{j}}\right] & \leq\prod_{j=1}^{k}\mathbb{E}\left[X_{i_{j}}^{10}\right]^{\left(2+\alpha_{j}\right)/10}=\mathbb{E}\left[X_{1}^{10}\right].
\end{align*}
Therefore overall we have
\begin{align*}
\mathbb{E}\left[\left(\sum X_{i}\right)^{10}\right] & \leq\mathbb{E}\left[X_{1}^{10}\right]\sum_{k=1}^{5}\binom{N}{k}C(k)\leq C\mathbb{E}\left[X_{1}^{10}\right]\sum_{k=1}^{5}\binom{N}{k}\leq CN^{5}\mathbb{E}\left[X_{1}^{10}\right],
\end{align*}
since $C(k):=\sharp A(k)$ are combinatorial factors not depending
on $N$. Thus 
\begin{align*}
\mathbb{E}\left[\left.\left[\widetilde{J}_{t}^{T}\right]^{10}\right|Y_{t},\xi\right] & \leq\mathbb{E}\left[\left.\left[\frac{1}{N}\sum_{i=1}^{N}\{\varpi(Y_{t},U_{t,i};\theta+\xi/\sqrt{T})-\varpi(Y_{t},U_{t,i};\theta)\}\right]^{10}\right|Y_{t},\xi\right]\\
 & \leq\frac{C}{N^{10}}N^{5}\mathbb{E}\left[\left.\left(\varpi(Y_{t},U_{t,1};\theta+\xi/\sqrt{T})-\varpi(Y_{t},U_{t,1};\theta)\right)^{10}\right|Y_{t},\xi\right]\\
 & =\frac{C}{N^{10}}N^{5}\mathbb{E}\left[\left.\left(\int_{0}^{\xi/\sqrt{T}}\partial_{\theta}\varpi(Y_{t},U_{t,1};\theta+s)\mathrm{d}s\right)^{10}\right|Y_{t},\xi\right]\\
 & =\frac{C}{N^{5}}\left(\frac{\xi}{\sqrt{T}}\right)^{10}\mathbb{E}\left[\left.\left(\int_{0}^{\xi/\sqrt{T}}\partial_{\theta}\varpi(Y_{t},U_{t,1};\theta+s)\frac{\mathrm{d}s}{\xi/\sqrt{T}}\right)^{10}\right|Y_{t},\xi\right]\\
 & \leq\frac{C}{N^{5}}\left(\frac{\xi}{\sqrt{T}}\right)^{9}\mathbb{E}\left[\left.\int_{0}^{\xi/\sqrt{T}}\partial_{\theta}\varpi(Y_{t},U_{t,1};\theta+s)^{10}\mathrm{d}s\right|Y_{t},\xi\right]\\
 & =\frac{C}{N^{5}}\left(\frac{\xi}{\sqrt{T}}\right)^{9}\int_{0}^{\xi/\sqrt{T}}\mathbb{E}\left[\left.\partial_{\theta}\varpi(Y_{t},U_{t,1};\theta+s)^{10}\right|Y_{t}\right]\mathrm{d}s\\
 & \leq\frac{C}{N^{5}}\left(\frac{\xi}{\sqrt{T}}\right)^{10}B(Y_{t}),
\end{align*}
by (\ref{ass:condcltdthetaten}).

Since $\mathbb{E}[\xi^{10}]<\infty,$ we conclude that 
\begin{align}
\mathbb{E}\left[\left.\left[\widetilde{J}_{t}^{T}\right]^{10}\right|Y_{t}\right] & =\mathbb{E}\left[\left.\mathbb{E}\left[\left.\left(\widetilde{J}_{t}^{T}\right)^{10}\right|Y_{t},\xi\right]\right|Y_{t}\right]\nonumber \\
 & \leq\frac{C}{N^{5}T^{5}}B(Y_{t}),\label{eq:controlJ2}
\end{align}

Therefore we have 
\begin{align}
\widetilde{\mathbb{E}}\left[\left.\left|\sum_{t=1}^{T}J_{t}^{T}\right|\right|Y_{t}\right]^{2} & \leq\sum_{t=1}^{T}B(Y_{t})^{1/2}\mathbb{E}^{1/5}\left[\left.\left(\widetilde{J}_{t}^{T}\right)^{10}\right|Y_{t}\right]\nonumber \\
 & \leq\frac{1}{NT}\sum_{t=1}^{T}B(Y_{t})^{1/2}B(Y_{t})^{1/5}.\label{eq:controlJttilde}
\end{align}

\emph{Terms} $\widetilde{L}_{t}^{T}$. Since $\widetilde{\mathbb{E}}\left[\left.L_{t}^{T}\right|Y_{t}\right]=0$
, and the terms are independent over $t$ we have
\begin{align*}
\widetilde{\mathbb{E}}\left[\left.\left|\sum_{t=1}^{T}L_{t}^{T}\right|\right|Y_{t}\right]^{2} & \leq\widetilde{\mathbb{E}}\left[\left.\left(\sum_{t=1}^{T}L_{t}^{T}\right)^{2}\right|Y_{t}\right]=\sum_{t=1}^{T}\widetilde{\mathbb{E}}\left[\left.\left(L_{t}^{T}\right)^{2}\right|Y_{t}\right]\\
 & =\sum_{t=1}^{T}\mathbb{E}\left[\left.\left[\widehat{W}_{t}^{T}\right]^{-1}\left(\widetilde{L}_{t}^{T}\right)^{2}\right|Y_{t}\right]\\
 & \leq\sum_{t=1}^{T}\mathbb{E}^{1/2}\left[\left.\left[\widehat{W}_{t}^{T}\right]^{-2}\right|Y_{t}\right]\mathbb{E}^{1/2}\left[\left.\left(\widetilde{L}_{t}^{T}\right)^{4}\right|Y_{t}\right]\\
 & \leq\sum_{t=1}^{T}B(Y_{t})^{1/2}\mathbb{E}^{1/5}\left[\left.\left(\widetilde{L}_{t}^{T}\right)^{10}\right|Y_{t}\right].
\end{align*}

To proceed we estimate
\begin{align}
\mathbb{E}\left[\left.\left[\widetilde{L}_{t}^{T}\right]^{10}\right|Y_{t}\right] & =\mathbb{E}\left[\left.\left(\frac{1}{N}\sum_{i=1}^{N}\int_{0}^{\delta_{T}}\left\{ -\partial_{u}\varpi\left(Y_{t},U_{t,i}\left(s\right);\theta\right)U_{t,i}\left(s\right)+\partial_{u,u}^{2}\varpi\left(Y_{t},U_{t,i}\left(s\right);\theta\right)\right\} \mathrm{d}s\right)^{10}\right|Y_{t}\right]\nonumber \\
 & \leq\frac{C}{N^{5}}\mathbb{E}\left[\left.\left(\int_{0}^{\delta_{T}}\left\{ -\partial_{u}\varpi\left(Y_{t},U_{t,1}\left(s\right);\theta\right)U_{t,1}\left(s\right)+\partial_{u,u}^{2}\varpi\left(Y_{t},U_{t,1}\left(s\right);\theta\right)\right\} \mathrm{d}s\right)^{10}\right|Y_{t}\right]\nonumber \\
 & \leq\frac{C}{N^{5}}\delta_{T}^{10}\mathbb{E}\left[\left.\left(\int_{0}^{\delta_{T}}\left\{ -\partial_{u}\varpi\left(Y_{t},U_{t,1}\left(s\right);\theta\right)U_{t,1}\left(s\right)+\partial_{u,u}^{2}\varpi\left(Y_{t},U_{t,1}\left(s\right);\theta\right)\right\} \frac{\mathrm{d}s}{\delta_{T}}\right)^{10}\right|Y_{t}\right]\nonumber \\
 & \leq\frac{C}{N^{5}}\delta_{T}^{10}\mathbb{E}\left[\left.\int_{0}^{\delta_{T}}\left\{ -\partial_{u}\varpi\left(Y_{t},U_{t,1}\left(s\right);\theta\right)U_{t,1}\left(s\right)+\partial_{u,u}^{2}\varpi\left(Y_{t},U_{t,1}\left(s\right);\theta\right)\right\} ^{10}\frac{\mathrm{d}s}{\delta_{T}}\right|Y_{t}\right]\nonumber \\
 & =C\frac{N^{4}}{T^{9}}\mathbb{E}\left[\left.\int_{0}^{\delta_{T}}\left\{ -\partial_{u}\varpi\left(Y_{t},U_{t,1}\left(s\right);\theta\right)U_{t,1}\left(s\right)+\partial_{u,u}^{2}\varpi\left(Y_{t},U_{t,1}\left(s\right);\theta\right)\right\} ^{10}\mathrm{d}s\right|Y_{t}\right]\nonumber \\
 & =C\frac{N^{4}}{T^{9}}\int_{0}^{\delta_{T}}\mathbb{E}\left[\left.\left\{ -\partial_{u}\varpi\left(Y_{t},U_{t,1}\left(s\right);\theta\right)U_{t,1}\left(s\right)+\partial_{u,u}^{2}\varpi\left(Y_{t},U_{t,1}\left(s\right);\theta\right)\right\} ^{10}\right|Y_{t}\right]\mathrm{d}s\nonumber \\
 & \leq C\frac{N^{5}}{T^{10}}B(Y_{t}),\label{eq:controlL2}
\end{align}
by (\ref{ass:condcltsteinten}).Therefore we conclude that 
\begin{align}
\widetilde{\mathbb{E}}\left[\left.\left|\sum_{t=1}^{T}L_{t}^{T}\right|\right|Y_{t}\right]^{2} & \leq\sum_{t=1}^{T}B(Y_{t})^{1/2}\mathbb{E}^{1/5}\left[\left.\left(\widetilde{L}_{t}^{T}\right)^{10}\right|Y_{t}\right]\nonumber \\
 & \leq\left(\frac{N^{5}}{T^{10}}\right)^{1/5}\sum_{t=1}^{T}B(Y_{t})^{1/2}B(Y_{t})^{1/5}\nonumber \\
 & \leq\frac{N}{T}\frac{1}{T}\sum_{t=1}^{T}B(Y_{t})^{7/10}.\label{eq:controlLttilde}
\end{align}

\emph{Terms} $\widetilde{M}_{t}^{T}$. Finally we have, using \citep[Corollary 1, ][]{Zakai1967},
that 
\begin{align}
\mathbb{E}\left[\left.\left[\widetilde{M}_{t}^{T}\right]^{10}\right|Y_{t}\right] & =\mathbb{E}\left[\left.\left(\int_{0}^{\delta_{T}}\frac{\sqrt{2}}{N}\sum_{i=1}^{N}\partial_{u}\varpi\left(Y_{t},U_{t,i}\left(s\right);\theta\right)\mathrm{d}B_{t,i}\left(s\right)\right)^{10}\right|Y_{t}\right]\nonumber \\
 & \leq\frac{C2^{5}}{N^{5}}\mathbb{E}\left[\left.\left(\int_{0}^{\delta_{T}}\partial_{u}\varpi\left(Y_{t},U_{t,1}\left(s\right);\theta\right)\mathrm{d}B_{t,1}\left(s\right)\right)^{10}\right|Y_{t}\right]\nonumber \\
 & \leq\frac{C}{N^{5}}\delta_{T}^{4}\int_{0}^{\delta_{T}}\mathbb{E}\left[\left.\left(\partial_{u}\varpi\left(Y_{t},U_{t,1}\left(s\right);\theta\right)\right)^{10}\right|Y_{t}\right]\mathrm{d}s\nonumber \\
 & =\frac{C}{N^{5}}\delta_{T}^{5}\mathbb{E}\left[\left.\left(\partial_{u}\varpi\left(Y_{t},U_{t,1};\theta\right)\right)^{10}\right|Y_{t}\right]\leq\frac{C}{T^{5}}B(Y_{t}),\label{eq:controlMttilde}
\end{align}
by (\ref{ass:condcltduten}).

\emph{Overall control. }Bringing all terms together we thus have using
(\ref{eq:condcltfinalass1}), (\ref{eq:condcltfinalass2}), (\ref{eq:condcltfinalass3})
and (\ref{eq:controlMttilde}), (\ref{eq:controlJ2}) and (\ref{eq:controlL2})
that 
\begin{align}
\sum_{t=1}^{T}\widetilde{\mathbb{E}}\left[\left.h(\eta_{t}^{T})\left[\eta_{t}^{T}\right]^{2}\right|Y_{t}\right] & \leq C\sum_{t=1}^{T}B(Y_{t})^{1/2}\mathbb{E}^{1/2}\left[\left.\left[\widehat{W}_{t}^{T}\right]^{2}\left[\eta_{t}^{T}\right]^{5}\right|Y_{t}\right]\nonumber \\
 & \leq C\sum_{t=1}^{T}B(Y_{t})^{1/2}\mathbb{E}^{1/4}\left[\left.\left[\widehat{W}_{t}^{T}\right]^{-6}\right|Y_{t}\right]\mathbb{E}^{1/4}\left[\left.\left[\widetilde{\eta}_{t}^{T}\right]^{10}\right|Y_{t}\right]\nonumber \\
 & \leq C\sum_{t=1}^{T}B(Y_{t})^{3/2}\mathbb{E}^{1/4}\left[\left.\left[\widetilde{\eta}_{t}^{T}\right]^{10}\right|Y_{t}\right]\nonumber \\
 & \leq C\sum_{t=1}^{T}B(Y_{t})^{3/2}\left(\mathbb{E}\left[\left.\left[\widetilde{J}_{t}^{T}\right]^{10}\right|Y_{t}\right]+\mathbb{E}\left[\left.\left[\widetilde{L}_{t}^{T}\right]^{10}\right|Y_{t}\right]+\mathbb{E}\left[\left.\left[\widetilde{M}_{t}^{T}\right]^{10}\right|Y_{t}\right]\right)^{1/4}\nonumber \\
 & \leq C\sum_{t=1}^{T}B(Y_{t})\left(\frac{1}{\left(NT\right)^{5}}+\frac{N^{5}}{T^{10}}+\frac{1}{T^{5}}\right)^{1/4}\label{eq:controlheta}\\
 & \leq\frac{C}{T^{1/4}}\frac{1}{T}\sum_{t=1}^{T}B(Y_{t})\nonumber 
\end{align}
for $T$ large enough as $N_{T}/T\rightarrow0$. Hence by combining
the bounds (\ref{eq:controlheta}), (\ref{eq:controlJttilde}) and
(\ref{eq:controlLttilde}), Lemma \ref{lem:controlR1} follows.
\end{proof}

\subsubsection{Control of $\mathcal{R}_{2}^{T}$\label{subsec:controlofR2}}
\begin{lem}
\label{lem:controlR2}As $T\to\infty$ we have that 
\[
\sup_{\theta\in N(\bar{\theta})}\mathbb{\widetilde{\mathbb{E}}}\left[\left.\left|\mathcal{R}_{2}^{T}\right|\right|\mathcal{Y}^{T}\right]\to0\ \mathbb{P}^{Y}-\mathrm{a.s.}
\]
\end{lem}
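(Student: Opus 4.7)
The plan is to mirror the strategy used for the variance of $M^T$ in Lemma \ref{lem:Mt}, but in absolute mean and uniformly over $\theta \in N(\bar\theta)$. First I will reduce via Cauchy--Schwarz,
\[
\widetilde{\mathbb{E}}\!\left[\,|\mathcal{R}_2^T|\,\big|\,\mathcal{Y}^T\right]^2
\;\le\;\widetilde{\mathbb{E}}\!\left[(\mathcal{R}_2^T)^2\,\big|\,\mathcal{Y}^T\right],
\]
and then exploit the structural observation that $\mathcal{R}_2^T = \sum_{t=1}^T R_{2,t}^T$ with each $R_{2,t}^T$ depending only on $(Y_t,U_t,B_t)$. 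Under $\mathbb{P}$ the triples $(Y_t,U_t,B_t)$ are independent across $t$, and each $R_{2,t}^T$ is a (sum of) stochastic integral(s) in $B_t$, hence a zero--mean martingale increment given $(Y_t,\{U_{t,i}(0)\}_i)$. Combined with the product form of $\mathrm{d}\widetilde{\mathbb{P}}_T^{y_{1:T}}/\mathrm{d}\mathbb{P}_T = \prod_s \widehat{W}_s^T$ and the normalisation $\mathbb{E}[\widehat{W}_s^T\mid Y_s]=1$, this makes the cross terms vanish and yields
\[
\widetilde{\mathbb{E}}\!\left[(\mathcal{R}_2^T)^2\,\big|\,\mathcal{Y}^T\right]
\;=\;\sum_{t=1}^{T}\mathbb{E}\!\left[\widehat{W}_t^T (R_{2,t}^T)^2\,\big|\,Y_t\right].
\]

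Next I will decompose $R_{2,t}^T$ into its two pieces (the drift--type integral involving $\mathcal{S}\{\partial_u\varpi\}$ and the doubly iterated Itô integral involving $\partial_{uu}^2\varpi$) and bound each diagonal term as follows. Writing $R_{2,t}^T = (\widehat{W}_t^T)^{-1}\sum_i (X_{t,i}^{(1)}+X_{t,i}^{(2)})/N$ with $c$--constants absorbed, Cauchy--Schwarz separates the inverse-weight factor from the stochastic integral:
\[
\mathbb{E}\!\left[\widehat{W}_t^T (R_{2,t}^T)^2\,\big|\,Y_t\right]
\;\le\;C\,\mathbb{E}\!\left[(\widehat{W}_t^T)^{-2}\,\big|\,Y_t\right]^{1/2}
\,\mathbb{E}\!\left[\Big(\tfrac{1}{N}{\textstyle\sum_i}X_{t,i}^{(k)}\Big)^{\!4}\,\Big|\,Y_t\right]^{1/2}.
\]
The inverse--weight factor is controlled by Assumption (\ref{ass:condcltinversesixthmoment}). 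For the fourth moment of the sum, since the $X_{t,i}^{(k)}$ are i.i.d. zero--mean given $Y_t$ under $\mathbb{P}$, the argument of Lemma \ref{lem:controlR1} shows this is $O(N^{-2})\mathbb{E}[(X_{t,1}^{(k)})^4\mid Y_t]$. I then bound $\mathbb{E}[(X_{t,1}^{(k)})^4\mid Y_t]$ by a Zakai--type inequality \citep{Zakai1967} combined with Jensen's inequality for the inner time integral, reducing to fourth moments of $\mathcal{S}\{\partial_u\varpi\}$ (handled via Assumptions (\ref{ass:condcltsteinten}) and (\ref{ass:condcltduten})) and of $\partial_{uu}^2\varpi$ (absorbed into (\ref{ass:condcltdddufour}) after one application of Itô). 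The net result is
\[
\mathbb{E}\!\left[\widehat{W}_t^T (R_{2,t}^T)^2\,\big|\,Y_t\right]\;\le\;\frac{C\,\delta_T^{a}}{N^{\,b}}\,B(Y_t)
\]
for suitable exponents $a,b>0$ with $a>b$, so that summed over $t$ the bound is $O\!\big(T\,\delta_T^{a}N^{-b}\big)\,T^{-1}\sum_t B(Y_t)$.

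Finally, combining Cauchy--Schwarz with the above gives $\widetilde{\mathbb{E}}[|\mathcal{R}_2^T|\mid\mathcal{Y}^T]$ bounded by the square root of a vanishing deterministic factor (since $N_T/T\to 0$ and hence $\delta_T\to 0$) multiplied by $\big(T^{-1}\sum_{t=1}^T B(Y_t)\big)^{1/2}$, which converges $\mathbb{P}^Y$-a.s. to $\mathbb{E}[B(Y_1)]^{1/2}<\infty$ by the SLLN. Uniform convergence over $\theta\in N(\bar\theta)$ is automatic because the envelope $B(y)$ in Assumption \ref{assu:unifcondclt} was chosen to majorise all the relevant moments of Assumption \ref{assu:unifcondclt} uniformly in $\theta\in N(\bar\theta)$, so the constants in every step above are independent of $\theta$. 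The main obstacle I anticipate is the careful bookkeeping of the second step: tracking how the $\widehat{W}_t^T$ factor in the Radon--Nikodym derivative interacts with the $1/\widehat{W}_t^T$ appearing inside $R_{2,t}^T$, and verifying that the iterated--Itô piece $\int_0^{\delta_T}\!\!\int_0^s \partial_{uu}^2\varpi\,\mathrm{d}B(r)\,\mathrm{d}B(s)$ admits fourth--moment bounds of the right order despite the lack of independence between the two Brownian factors---this requires a double application of BDG/Zakai together with the Stein--type assumption (\ref{ass:condcltsteinten}).
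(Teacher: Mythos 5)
Your proposal is correct and follows essentially the same route as the paper: Jensen's inequality to pass to second moments, independence across $t$ plus the zero-mean (stochastic-integral) structure to kill cross terms under the change of measure, Cauchy--Schwarz to split off $(\widehat{W}_t^T)^{-2}$, the $O(N^{-2})$ fourth-moment bound for i.i.d.\ centred sums, Zakai's inequality and Jensen in time for the iterated integrals, and the uniform envelope $B$ with the SLLN for the $\mathbb{P}^{Y}$-a.s.\ conclusion. The only cosmetic difference is that the paper bounds the two pieces $\mathcal{R}_{21}^{T}$ and $\mathcal{R}_{22}^{T}$ separately rather than through a single unified template, which does not change the argument.
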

\begin{proof}
We have
\begin{align*}
\mathcal{R}_{2}^{T} & :=-\sum_{t=1}^{T}\frac{\sqrt{2}}{N\widehat{W}_{t}^{T}}\sum_{i=1}^{N}\int_{0}^{\delta_{T}}\int_{0}^{s}\mathcal{S}\left\{ \partial_{u}\varpi\left(Y_{t},U_{t,i}\left(r\right);\theta\right)\right\} \mathrm{d}r\mathrm{d}B_{t,i}\left(s\right)\\
 & \quad+\sum_{t=1}^{T}\frac{\sqrt{2}}{N\widehat{W}_{t}^{T}}\sum_{i=1}^{N}\int_{0}^{\delta_{T}}\int_{0}^{s}\sqrt{2}\int_{0}^{s}\partial_{uuu}^{3}\varpi\left(Y_{t},U_{t,i}\left(r\right);\theta\right)\mathrm{d}B_{t,i}(r)\mathrm{d}B_{t,i}\left(s\right)\\
 & =:\mathcal{R}_{21}^{T}+\mathcal{R}_{22}^{T}.
\end{align*}
 We control these two terms separately. We have 
\begin{align*}
\widetilde{\mathbb{E}}\left[\left.\left|\mathcal{R}_{21}^{T}\right|\right|\mathcal{Y}^{T}\right] & \leq\widetilde{\mathbb{E}}^{1/2}\left[\left.\left|\sum_{t=1}^{T}\frac{\sqrt{2}}{N\widehat{W}_{t}^{T}}\sum_{i=1}^{N}\int_{0}^{\delta_{T}}\int_{0}^{s}\mathcal{S}\left\{ \partial_{u}\varpi\left(Y_{t},U_{t,i}\left(r\right);\theta\right)\right\} \mathrm{d}r\mathrm{d}s\right|^{2}\right|\mathcal{Y}^{T}\right].
\end{align*}
Since conditionally on $\mathcal{Y}^{T}$ the vectors $\left\{ U_{t,i}:i\right\} $
are independent across $t$ and 
\begin{align*}
 & \widetilde{\mathbb{E}}\left[\left.\frac{\sqrt{2}}{N\widehat{W}_{t}^{T}}\sum_{i=1}^{N}\int_{0}^{\delta_{T}}\int_{0}^{s}\mathcal{S}\left\{ \partial_{u}\varpi\left(Y_{t},U_{t,i}\left(r\right);\theta\right)\right\} \mathrm{d}r\mathrm{d}s\right|\mathcal{Y}^{T}\right]\\
= & \mathbb{E}\left[\left.\frac{\sqrt{2}}{N}\sum_{i=1}^{N}\int_{0}^{\delta_{T}}\int_{0}^{s}\mathcal{S}\left\{ \partial_{u}\varpi\left(Y_{t},U_{t,i}\left(r\right);\theta\right)\right\} \mathrm{d}r\mathrm{d}s\right|\mathcal{Y}^{T}\right]=0,
\end{align*}
it follows that 
\begin{align*}
 & \widetilde{\mathbb{E}}\left[\left.\left|\sum_{t=1}^{T}\frac{\sqrt{2}}{N\widehat{W}_{t}^{T}}\sum_{i=1}^{N}\int_{0}^{\delta_{T}}\int_{0}^{s}\mathcal{S}\left\{ \partial_{u}\varpi\left(Y_{t},U_{t,i}\left(r\right);\theta\right)\right\} \mathrm{d}r\mathrm{d}s\right|^{2}\right|\mathcal{Y}^{T}\right]\\
 & =\sum_{t=1}^{T}\widetilde{\mathbb{E}}\left[\left.\frac{2}{N^{2}\left[\widehat{W}_{t}^{T}\right]^{2}}\left(\sum_{i=1}^{N}\int_{0}^{\delta_{T}}\int_{0}^{s}\mathcal{S}\left\{ \partial_{u}\varpi\left(Y_{t},U_{t,i}\left(r\right);\theta\right)\right\} \mathrm{d}r\mathrm{d}s\right)^{2}\right|\mathcal{Y}^{T}\right]\\
 & =\sum_{t=1}^{T}\mathbb{E}\left[\left.\frac{2}{\widehat{W}_{t}^{T}}\left(\frac{1}{N}\sum_{i=1}^{N}\int_{0}^{\delta_{T}}\int_{0}^{s}\mathcal{S}\left\{ \partial_{u}\varpi\left(Y_{t},U_{t,i}\left(r\right);\theta\right)\right\} \mathrm{d}r\mathrm{d}s\right)^{2}\right|\mathcal{Y}^{T}\right]\\
 & \leq2\sum_{t=1}^{T}\mathbb{E}^{1/2}\left[\left.\left(\widehat{W}_{t}^{T}\right)^{-2}\right|\mathcal{Y}^{T}\right]\mathbb{E}^{1/2}\left[\left.\left(\frac{1}{N}\sum_{i=1}^{N}\int_{0}^{\delta_{T}}\int_{0}^{s}\mathcal{S}\left\{ \partial_{u}\varpi\left(Y_{t},U_{t,i}\left(r\right);\theta\right)\right\} \mathrm{d}r\mathrm{d}s\right)^{4}\right|\mathcal{Y}^{T}\right]\\
 & \leq2\sum_{t=1}^{T}B(Y_{t})\times\mathbb{E}^{1/2}\left[\left.\left(\frac{1}{N}\sum_{i=1}^{N}\int_{0}^{\delta_{T}}\int_{0}^{s}\mathcal{S}\left\{ \partial_{u}\varpi\left(Y_{t},U_{t,i}\left(r\right);\theta\right)\right\} \mathrm{d}r\mathrm{d}s\right)^{4}\right|\mathcal{Y}^{T}\right]\\
 & \leq2\sum_{t=1}^{T}B(Y_{t})\times\left(\frac{C}{N^{2}}\mathbb{E}\left[\left.\left(\int_{0}^{\delta_{T}}\int_{0}^{s}\mathcal{S}\left\{ \partial_{u}\varpi\left(Y_{t},U_{t,1}\left(r\right);\theta\right)\right\} \mathrm{d}r\mathrm{d}s\right)^{4}\right|\mathcal{Y}^{T}\right]\right)^{1/2}\\
 & \leq C\sum_{t=1}^{T}B(Y_{t})\frac{1}{N}\frac{\delta_{T}^{4}}{4}\mathbb{E}^{1/2}\left[\left.\left(\int_{0}^{\delta_{T}}\int_{0}^{s}\mathcal{S}\left\{ \partial_{u}\varpi\left(Y_{t},U_{t,1}\left(r\right);\theta\right)\right\} \frac{\mathrm{d}r\mathrm{d}s}{\delta_{T}^{2}/2}\right)^{4}\right|\mathcal{Y}^{T}\right]\\
 & \leq C\sum_{t=1}^{T}B(Y_{t})\frac{1}{N}\frac{\delta_{T}^{4}}{4}\left[\int_{0}^{\delta_{T}}\int_{0}^{s}\mathbb{E}\left[\left.\mathcal{S}\left\{ \partial_{u}\varpi\left(Y_{t},U_{t,1}\left(r\right);\theta\right)\right\} ^{4}\right|\mathcal{Y}^{T}\right]\frac{\mathrm{d}r\mathrm{d}s}{\delta_{T}^{2}/2}\right]^{1/2}\\
 & \leq C\sum_{t=1}^{T}B(Y_{t})\frac{1}{N}\frac{\delta_{T}^{3}}{2\sqrt{2}}\left[\int_{0}^{\delta_{T}}\int_{0}^{s}\mathbb{E}\left[\left.\mathcal{S}\left\{ \partial_{u}\varpi\left(Y_{t},U_{t,1}\left(0\right);\theta\right)\right\} ^{4}\right|\mathcal{Y}^{T}\right]\mathrm{d}r\mathrm{d}s\right]^{1/2}\\
 & =C\sum_{t=1}^{T}B(Y_{t})\frac{1}{N}\frac{\delta_{T}^{4}}{4}\mathbb{E}^{1/2}\left[\left.\mathcal{S}\left\{ \partial_{u}\varpi\left(Y_{t},U_{t,1}\left(0\right);\theta\right)\right\} ^{4}\right|Y_{t}\right]\\
 & =C\sum_{t=1}^{T}B(Y_{t})^{2}\frac{1}{N}\frac{\delta_{T}^{4}}{4}=C\sum_{t=1}^{T}B(Y_{t})^{2}\frac{1}{N}\frac{N^{4}}{T^{4}}=\frac{N^{3}}{T^{3}}\frac{C}{T}\sum_{t=1}^{T}B(Y_{t})^{2},
\end{align*}
by (\ref{ass:condcltsteinten}). Thus 
\[
\widetilde{\mathbb{E}}\left[\left.\left|\mathcal{R}_{21}^{T}\right|\right|\mathcal{Y}^{T}\right]^{2}\leq C\delta_{T}^{3}\frac{1}{T}\sum_{t=1}^{T}B(Y_{t})^{2}.
\]
On the other hand using \citep[Corollary 1,][]{Zakai1967} twice,
we have 
\begin{align*}
&\widetilde{\mathbb{E}}\left[\left.\left|\mathcal{R}_{22}^{T}\right|\right|\mathcal{Y}^{T}\right]^{2}\\
& \leq\widetilde{\mathbb{E}}\left[\left.\left|\mathcal{R}_{22}^{T}\right|^{2}\right|\mathcal{Y}^{T}\right]\\
 & =\widetilde{\mathbb{E}}\left[\left.\left|\sum_{t=1}^{T}\frac{\sqrt{2}}{N\widehat{W}_{t}^{T}}\sum_{i=1}^{N}\int_{0}^{\delta_{T}}\int_{0}^{s}\int_{0}^{s}\partial_{uuu}^{3}\varpi\left(Y_{t},U_{t,i}\left(r\right);\theta\right)\mathrm{d}B_{t,i}(r)\mathrm{d}B_{t,i}\left(s\right)\right|^{2}\right|\mathcal{Y}^{T}\right]\\
 & =\sum_{t=1}^{T}\widetilde{\mathbb{E}}\left[\left.\left(\frac{\sqrt{2}}{N\widehat{W}_{t}^{T}}\sum_{i=1}^{N}\int_{0}^{\delta_{T}}\int_{0}^{s}\int_{0}^{s}\partial_{uuu}^{3}\varpi\left(Y_{t},U_{t,i}\left(r\right);\theta\right)\mathrm{d}B_{t,i}(r)\mathrm{d}B_{t,i}\left(s\right)\right)^{2}\right|\mathcal{Y}^{T}\right]\\
 & \leq\sum_{t=1}^{T}\mathbb{E}^{1/2}\left[\left.\left(\widehat{W}_{t}^{T}\right)^{-2}\right|\mathcal{Y}^{T}\right]\mathbb{E}^{1/2}\left[\left.\left(\frac{\sqrt{2}}{N}\sum_{i=1}^{N}\int_{0}^{\delta_{T}}\int_{0}^{s}\partial_{uuu}^{3}\varpi\left(Y_{t},U_{t,i}\left(r\right);\theta\right)\mathrm{d}B_{t,i}(r)\mathrm{d}B_{t,i}\left(s\right)\right)^{4}\right|\mathcal{Y}^{T}\right]\\
 & \leq\sum_{t=1}^{T}B(Y_{t})^{1/2}\mathbb{E}^{1/2}\left[\left.\left(\frac{\sqrt{2}}{N}\sum_{i=1}^{N}\int_{0}^{\delta_{T}}\int_{0}^{s}\partial_{uuu}^{3}\varpi\left(Y_{t},U_{t,1}\left(r\right);\theta\right)\mathrm{d}B_{t,1}(r)\mathrm{d}B_{t,1}\left(s\right)\right)^{4}\right|\mathcal{Y}^{T}\right]\\
 & \leq\sum_{t=1}^{T}B(Y_{t})^{1/2}\left\{ \frac{C}{N^{2}}\mathbb{E}\left[\left.\left(\int_{0}^{\delta_{T}}\int_{0}^{s}\sqrt{2}\int_{0}^{s}\partial_{uuu}^{3}\varpi\left(Y_{t},U_{t,1}\left(r\right);\theta\right)\mathrm{d}B_{t,1}(r)\mathrm{d}B_{t,1}\left(s\right)\right)^{4}\right|Y_{t}\right]\right\} ^{1/2}\\
 & \leq\sum_{t=1}^{T}B(Y_{t})^{1/2}\left\{ C\frac{4}{N^{2}}\delta_{T}\int_{0}^{\delta_{T}}\mathbb{E}\left[\left.\left[\int_{0}^{s}\partial_{uuu}^{3}\varpi\left(Y_{t},U_{t,1}\left(r\right);\theta\right)\mathrm{d}B_{t,1}(r)\right]^{4}\right|Y_{t}\right]\mathrm{d}s\right\} ^{1/2}\\
 & \leq\sum_{t=1}^{T}B(Y_{t}))^{1/2}\left\{ C\frac{4}{N^{2}}\delta_{T}\int_{0}^{\delta_{T}}s\int_{0}^{s}\mathbb{E}\left[\left.\left[\partial_{uuu}^{3}\varpi\left(Y_{t},U_{t,1}\left(r\right);\theta\right)\right]^{4}\right|Y_{t}\right]\mathrm{d}r\mathrm{d}s\right\} ^{1/2}\\
 & \leq\sum_{t=1}^{T}B(Y_{t}))^{1/2}\left\{ C\frac{4}{N^{2}}\mathbb{E}\left[\left.\left[\partial_{uuu}^{3}\varpi\left(Y_{t},U_{t,1}\left(0\right);\theta\right)\right]^{4}\right|Y_{t}\right]\delta_{T}\int_{0}^{\delta_{T}}s\int_{0}^{s}\mathrm{d}r\mathrm{d}s\right\} ^{1/2}\\
 & \leq\sum_{t=1}^{T}B(Y_{t}))^{1/2}\left\{ C\frac{4}{N^{2}}\mathbb{E}\left[\left.\left[\partial_{uuu}^{3}\varpi\left(Y_{t},U_{t,1}\left(0\right);\theta\right)\right]^{4}\right|Y_{t}\right]\delta_{T}\int_{0}^{\delta_{T}}s^{2}\mathrm{d}s\right\} ^{1/2}\\
 & \leq\sum_{t=1}^{T}B(Y_{t}))^{1/2}\left\{ C\frac{4}{N^{2}}\mathbb{E}\left[\left.\left[\partial_{uuu}^{3}\varpi\left(Y_{t},U_{t,1}\left(0\right);\theta\right)\right]^{4}\right|Y_{t}\right]\delta_{T}^{4}\right\} ^{1/2}\\
 & \leq\sum_{t=1}^{T}B(Y_{t}))^{1/2}\left\{ C\frac{4}{N^{2}}\frac{N^{4}}{T^{4}}\mathbb{E}\left[\left.\left[\partial_{uuu}^{3}\varpi\left(Y_{t},U_{t,1}\left(0\right);\theta\right)\right]^{4}\right|Y_{t}\right]\right\} ^{1/2}\\
 & \leq\frac{N}{T}\frac{1}{T}\sum_{t=1}^{T}B(Y_{t}),
\end{align*}
by (\ref{ass:condcltdddufour}). Since all bounds obtained are independent
of $\theta$ we have the following result. 
\end{proof}

\subsubsection{Control of $\mathcal{R}_{3}^{T}$}
\begin{lem}
\label{lem:controlR3}As $T\to\infty$ we have that 
\[
\sup_{\theta\in N(\bar{\theta})}\mathbb{\widetilde{\mathbb{E}}}\left[\left.\left|\mathcal{R}_{3}^{T}\right|\right|\mathcal{Y}^{T}\right]\to0\ \mathbb{P}^{Y}-\mathrm{a.s.}
\]
\end{lem}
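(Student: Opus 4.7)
The plan is to follow the same pattern as the proof of Lemma~\ref{lem:zsquared}, but now with the uniformity/a.s.\ controls afforded by Assumption~\ref{assu:unifcondclt}. Using $\eta_{t}^{T}=J_{t}^{T}+L_{t}^{T}+M_{t}^{T}$, I expand
\[
\sum_{t=1}^{T}(\eta_{t}^{T})^{2}=\sum_{t=1}^{T}(M_{t}^{T})^{2}+\sum_{t=1}^{T}\left[(J_{t}^{T})^{2}+(L_{t}^{T})^{2}+2J_{t}^{T}L_{t}^{T}+2J_{t}^{T}M_{t}^{T}+2L_{t}^{T}M_{t}^{T}\right],
\]
so that the statement reduces to (i) showing every summand in the bracket is negligible in conditional $L^{1}$ and (ii) showing $\sum_{t}(M_{t}^{T})^{2}$ is close to $\kappa^{2}(\theta)$.

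For step (i), each term is handled exactly as in the proof of Lemma~\ref{lem:controlR1}: I bound $\widetilde{\mathbb{E}}[\,|J_{t}^{T}L_{t}^{T}|\mid\mathcal{Y}^{T}]$ and its relatives by Cauchy--Schwarz and two-sided conditioning on $\mathcal{Y}^{T}$, then use the $L^{10}$-estimates~(\ref{eq:controlJ2}), (\ref{eq:controlL2}), (\ref{eq:controlMttilde}) already derived for $\widetilde{J}_{t}^{T},\widetilde{L}_{t}^{T},\widetilde{M}_{t}^{T}$ together with Assumption~(\ref{ass:condcltinversesixthmoment}). Each resulting bound is of the form $\varepsilon_{T}\cdot\frac{1}{T}\sum_{t=1}^{T}B(Y_{t})^{r}$ with $r\le 2$, where the prefactor $\varepsilon_{T}\to 0$ since $N_{T}/T\to 0$, and the dominating average is $\mathbb{P}^{Y}$-a.s.\ bounded by the SLLN applied to $B(Y_{t})^{r}$ (which is integrable by $\mathbb{E}[B(Y_{1})^{10}]<\infty$).

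For step (ii), I further decompose
\[
\sum_{t=1}^{T}(M_{t}^{T})^{2}=\widetilde{\mathbb{E}}\left[\sum_{t=1}^{T}(M_{t}^{T})^{2}\,\Big|\,\mathcal{F}^{T}\right]+\Delta^{T},\qquad \Delta^{T}:=\sum_{t=1}^{T}\left\{(M_{t}^{T})^{2}-\widetilde{\mathbb{E}}\left[(M_{t}^{T})^{2}\,|\,\mathcal{F}^{T}\right]\right\}.
\]
The conditional expectation is $s_{T}^{2}$; the decomposition (\ref{eq:identitysTsquared1})--(\ref{eq:identitysTsquared2}) identifies $s_{T}^{2}=\hat{s}_{T}^{2}(\theta)+\rho_{T}$, where $\rho_{T}$ is the Stein-operator double integral. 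A repetition of the estimate used below (\ref{eq:identitysTsquared2}), now conditional on $\mathcal{Y}^{T}$ and with the uniform bound~(\ref{ass:condcltsteinten}), gives $\widetilde{\mathbb{E}}[|\rho_{T}|\mid\mathcal{Y}^{T}]\le C\delta_{T}\cdot T^{-1}\sum_{t}B(Y_{t})$, which vanishes $\mathbb{P}^{Y}$-a.s.\ uniformly in $\theta$. Then $\hat{s}_{T}^{2}(\theta)\to\kappa^{2}(\theta)$ uniformly on $N(\bar\theta)$, $\mathbb{P}^{Y}$-a.s., is obtained by repeating the computation of Lemma~\ref{lem:controlst-kappa} one level up (the squared version instead of the absolute value), in particular invoking the uniform Jennrich SLLN~(\ref{eq:uniformSLLN}) for $\kappa^{2}(Y_{t},\theta)$. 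Finally, for $\Delta^{T}$, since $\{(M_{t}^{T})^{2}\}_{t}$ are independent conditional on $\mathcal{Y}^{T}$,
\[
\widetilde{\mathbb{E}}\left[(\Delta^{T})^{2}\,|\,\mathcal{Y}^{T}\right]=\sum_{t=1}^{T}\widetilde{\mathbb{V}}\!\left[(M_{t}^{T})^{2}\,|\,\mathcal{Y}^{T}\right]\le\sum_{t=1}^{T}\widetilde{\mathbb{E}}\!\left[(M_{t}^{T})^{4}\,|\,\mathcal{Y}^{T}\right],
\]
and the bound (\ref{eq:controlMttilde}) together with (\ref{ass:condcltinversesixthmoment}) and (\ref{ass:condcltduten}) yields $\widetilde{\mathbb{E}}[(M_{t}^{T})^{4}\mid\mathcal{Y}^{T}]\le CT^{-2}B(Y_{t})$, hence $\widetilde{\mathbb{E}}[|\Delta^{T}|\mid\mathcal{Y}^{T}]\le CT^{-1/2}(T^{-1}\sum_{t}B(Y_{t}))^{1/2}\to 0$ $\mathbb{P}^{Y}$-a.s., uniformly in $\theta$.

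The main obstacle is coordinating uniformity in $\theta\in N(\bar\theta)$ with the $\mathbb{P}^{Y}$-a.s.\ convergence: all $\theta$-dependent moment factors must be dominated by a single $\theta$-independent integrable $B(Y_{t})$, which is precisely the role of Assumption~\ref{assu:unifcondclt}. The only genuinely non-routine piece is the uniform convergence $\hat{s}_{T}^{2}(\cdot)\to\kappa^{2}(\cdot)$ on $N(\bar\theta)$; all other terms carry through mechanically once the decomposition is set up, because Assumption~\ref{assu:unifcondclt} was designed exactly so that each of the moment bounds already derived in Lemmas~\ref{lem:Jt}--\ref{lem:zsquared} holds with a $\theta$-uniform integrable majorant.
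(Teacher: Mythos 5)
Your proposal is correct and shares the paper's skeleton for the peripheral terms: the split $\eta_{t}^{T}=J_{t}^{T}+L_{t}^{T}+M_{t}^{T}$, the disposal of all squares and cross terms involving $J_{t}^{T}$ and $L_{t}^{T}$ via Cauchy--Schwarz and the $L^{10}$ bounds (\ref{eq:controlJ2}), (\ref{eq:controlL2}), (\ref{eq:controlMttilde}), and the final appeal to the uniform SLLN (\ref{eq:uniformSLLN}) are exactly what the paper does (it groups $K_{t}^{T}=J_{t}^{T}+L_{t}^{T}$, but that is cosmetic). Where you genuinely diverge is the central term $\sum_{t}(M_{t}^{T})^{2}$. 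The paper substitutes $B_{t,i}(\delta_{T})=\sqrt{\delta_{T}}\,\xi_{t,i}$ into the It\^o decomposition of $M_{t}^{T}$ itself, writes $M_{t}^{T}$ as an explicit Gaussian $\hat{s}$-type term plus the already-controlled remainder $\mathcal{R}_{2,t}^{T}$ of Lemma~\ref{lem:controlR2}, and then handles the resulting $\xi_{t}^{2}$ average directly ($J_{1}$, $J_{2}$, the $[\xi_{t}^{2}-1]$ fluctuation and the Jennrich SLLN). You instead condition: $\sum_{t}(M_{t}^{T})^{2}=s_{T}^{2}+\Delta^{T}$, identify $s_{T}^{2}=\hat{s}_{T}^{2}(\theta)+\rho_{T}$ via (\ref{eq:identitysTsquared1})--(\ref{eq:identitysTsquared2}), and kill $\Delta^{T}$ by a conditional second-moment bound using the conditional independence of the blocks over $t$. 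Both routes reduce to the same object $\frac{2}{TN}\sum_{t,i}(\widehat{W}_{t}^{T})^{-2}\{\partial_{u}\varpi\}^{2}$; your $\Delta^{T}$ plays the role of the paper's $[\xi_{t}^{2}-1]$ term, and your appeal to Lemma~\ref{lem:controlst-kappa} is legitimate since its proof in fact controls $\widetilde{\mathbb{E}}[|\hat{s}_{T}^{2}(\theta)-\kappa^{2}(\theta)|\,|\,\mathcal{Y}^{T}]$ before taking square roots. Your organisation is arguably the more transparent of the two.

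One caveat. Your control of $\rho_{T}$ (the Stein double integral (\ref{eq:identitysTsquared2})) requires a $\theta$-uniform conditional moment bound on $\mathcal{S}\{(\partial_{u}\varpi)^{2}\}$, i.e.\ the uniform-in-$\theta$, conditional-on-$Y_{1}$ analogue of Assumption~\ref{ass:SteindU}. You cite (\ref{ass:condcltsteinten}) for this, but that condition concerns $\mathcal{S}$ applied to the drift $-\partial_{u}\varpi\,u+\partial_{u,u}^{2}\varpi$, not to $g=(\partial_{u}\varpi)^{2}$, and no condition of Assumption~\ref{assu:unifcondclt} literally covers $\mathcal{S}\{g\}$. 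The step does not fail --- one simply needs to add the missing moment condition, which is entirely in the spirit of the existing list --- but as written the bound on $\rho_{T}$ is not justified by the stated hypotheses. The paper sidesteps this by never forming $s_{T}^{2}$ here: its It\^o remainder $\mathcal{R}_{2,t}^{T}$ involves $\mathcal{S}\{\partial_{u}\varpi\}$ and $\partial_{uuu}^{3}\varpi$ instead, for which Lemma~\ref{lem:controlR2} has already done the work.
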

\begin{proof}
This result is established as follows. Let us write $K_{t}^{T}=J_{t}^{T}+L_{t}^{T}$,
then we have 
\begin{align*}
\mathbb{\widetilde{\mathbb{E}}}\left[\left.\left|\sum_{t=1}^{T}\left(\eta_{t}^{T}\right)^{2}-\kappa^{2}(\theta)\right|\right|\mathcal{Y}^{T}\right] & \leq\mathbb{\widetilde{\mathbb{E}}}\left[\left.\left|\sum_{t=1}^{T}\left(\eta_{t}^{T}\right)^{2}-\kappa^{2}(\theta)\right|\right|\mathcal{Y}^{T}\right]\\
 & \leq\mathbb{\widetilde{\mathbb{E}}}\left[\left.\left|\sum_{t=1}^{T}\left(K_{t}^{T}\right)^{2}\right|+\left|\sum_{t=1}^{T}K_{t}^{T}M_{t}^{T}\right|\right|\mathcal{Y}^{T}\right]+\mathbb{\widetilde{\mathbb{E}}}\left[\left.\left|\sum_{t=1}^{T}\left(M_{t}^{T}\right)^{2}-\kappa^{2}(\theta)\right|\right|\mathcal{Y}^{T}\right]\\
 & \leq\mathbb{\widetilde{\mathbb{E}}}\left[\left.\left|\sum_{t=1}^{T}\left(K_{t}^{T}\right)^{2}\right|\right|\mathcal{Y}^{T}\right]+\sum_{t=1}^{T}\widetilde{\mathbb{E}}^{1/2}\left[\left.\left(K_{t}^{T}\right)^{2}\right|\mathcal{F}^{T}\right]\widetilde{\mathbb{E}}^{1/2}\left[\left.\left(M_{t}^{T}\right)^{2}\right|\mathcal{Y}^{T}\right]\\
 & +\mathbb{\widetilde{\mathbb{E}}}\left[\left.\left|\sum_{t=1}^{T}\left(M_{t}^{T}\right)^{2}-\kappa^{2}(\theta)\right|\right|\mathcal{Y}^{T}\right].
\end{align*}
Now from (\ref{eq:controlJttilde}) and (\ref{eq:controlLttilde})
it easily follows that 
\[
\mathbb{\widetilde{\mathbb{E}}}\left[\left.\left|\sum_{t=1}^{T}\left(K_{t}^{T}\right)^{2}\right|\right|\mathcal{Y}^{T}\right]\leq C\left(\frac{1}{NT}+\frac{N}{T^{2}}\right)\sum_{t=1}^{T}B(Y_{t})^{7/10}.
\]
In addition, from (\ref{eq:controlMttilde}) and (\ref{ass:condcltboundonkappa})
\begin{align*}
\widetilde{\mathbb{E}}\left[\left.\left(M_{t}^{T}\right)^{2}\right|\mathcal{Y}^{T}\right] & =\mathbb{E}\left[\left.\left(\widehat{W}_{t}^{T}\right)^{-1}\left(\widetilde{M}_{t}^{T}\right)^{2}\right|\mathcal{Y}^{T}\right]\\
 & \leq\mathbb{E}^{1/2}\left[\left.\left(\widehat{W}_{t}^{T}\right)^{-2}\right|\mathcal{Y}^{T}\right]\mathbb{E}^{1/2}\left[\left.\left(\widetilde{M}_{t}^{T}\right)^{4}\right|\mathcal{Y}^{T}\right]\\
 & \leq B(Y_{t})^{1/2}\mathbb{E}^{1/5}\left[\left.\left(\widetilde{M}_{t}^{T}\right)^{10}\right|\mathcal{Y}^{T}\right]\\
 & \leq\frac{1}{T}B(Y_{t})^{1/2}B(Y_{t})^{1/5},
\end{align*}
and thus we find that 
\begin{align*}
\sum_{t=1}^{T}\widetilde{\mathbb{E}}^{1/2}\left[\left.\left(K_{t}^{T}\right)^{2}\right|\mathcal{Y}^{T}\right]\widetilde{\mathbb{E}}^{1/2}\left[\left.\left(M_{t}^{T}\right)^{2}\right|\mathcal{Y}^{T}\right] & \leq\sum_{t=1}^{T}\left[C\left(\frac{1}{NT}+\frac{N}{T^{2}}\right)\frac{1}{T}\right]^{1/2}B(Y_{t})^{7/10}.
\end{align*}
Finally we have that 
\begin{align*}
M_{t}^{T} & =\frac{\sqrt{2}}{N\widehat{W}_{t}^{T}}\sum_{i=1}^{N}\partial_{u}\varpi\left(Y_{t},U_{t,i}\left(0\right);\theta\right)\sqrt{\delta_{T}}\xi_{t,i}\\
 & \quad-\frac{\sqrt{2}}{N\widehat{W}_{t}^{T}}\sum_{i=1}^{N}\int_{0}^{\delta_{T}}\int_{0}^{s}\mathcal{S}\left\{ \partial_{u}\varpi\left(Y_{t},U_{t,i}\left(r\right);\theta\right)\right\} \mathrm{d}r\mathrm{d}B_{t,i}\left(s\right)\\
 & \quad\quad+\int_{0}^{\delta_{T}}\frac{\sqrt{2}}{N\widehat{W}_{t}^{T}}\sum_{i=1}^{N}\sqrt{2}\int_{0}^{s}\partial_{uu}^{2}\varpi\left(Y_{t},U_{t,i}\left(r\right);\theta\right)\mathrm{d}B_{t,i}(r)\mathrm{d}B_{t,i}\left(s\right)\\
 & =\frac{\sqrt{2}}{\widehat{W}_{t}^{T}}\frac{1}{\sqrt{T}}\left[\frac{1}{N}\sum_{i=1}^{N}\left(\partial_{u}\varpi\left(Y_{t},U_{t,i}\left(0\right);\theta\right)\right)^{2}\right]^{1/2}\xi_{t}+\mathcal{R}_{2,t}^{T}.
\end{align*}
Therefore 
\[
\sum_{t=1}^{T}(M_{t}^{T})^{2}=\sum_{t=1}^{T}\frac{2}{\left[\widehat{W}_{t}^{T}\right]^{2}}\frac{1}{TN}\sum_{i=1}^{N}\left(\partial_{u}\varpi\left(Y_{t},U_{t,i}\left(0\right);\theta\right)\right)^{2}\xi_{t}^{2}+\mathcal{R}_{2,t}^{T}.
\]
From Section\,\ref{subsec:controlofR2} it follows that 
\[
\sup_{\theta\in B(\bar{\theta,}\epsilon)}\widetilde{\mathbb{E}}\left[\left.\sum_{t=1}^{T}\mathcal{R}_{2,t}^{T}\right|\mathcal{Y}^{T}\right]\to0\ \mathbb{P}^{Y}-\mathrm{a.s.}
\]
Thus we can focus on the remaining term. We have
\begin{align*}
 & \mathbb{\widetilde{\mathbb{E}}}\left[\left.\left|\frac{1}{T}\sum_{t=1}^{T}\frac{2}{\left[\widehat{W}_{t}^{T}\right]^{2}}\frac{1}{N}\sum_{i=1}^{N}\left(\partial_{u}\varpi\left(Y_{t},U_{t,i}\left(0\right);\theta\right)\right)^{2}\xi_{t}^{2}-\kappa^{2}(\theta)\right|\right|\mathcal{Y}^{T}\right]\\
 & \leq\mathbb{\widetilde{\mathbb{E}}}\left[\left.\left|\frac{2}{T}\sum_{t=1}^{T}\xi_{t}^{2}\left(\frac{1}{\left[\widehat{W}_{t}^{T}\right]^{2}}-1\right)\frac{1}{N}\sum_{i=1}^{N}\left(\partial_{u}\varpi\left(Y_{t},U_{t,i}\left(0\right);\theta\right)\right)^{2}\right|\right|\mathcal{Y}^{T}\right]\\
 & \quad+\mathbb{\widetilde{\mathbb{E}}}\left[\left.\left|\frac{2}{T}\sum_{t=1}^{T}\xi_{t}^{2}\frac{1}{N}\sum_{i=1}^{N}\left(\partial_{u}\varpi\left(Y_{t},U_{t,i}\left(0\right);\theta\right)\right)^{2}-\kappa^{2}(\theta)\right|\right|\mathcal{Y}^{T}\right]\\
 & \leq\frac{2}{T}\sum_{t=1}^{T}\mathbb{\widetilde{\mathbb{E}}}\left[\left.\left|\xi_{t}^{2}\left(\frac{1}{\left[\widehat{W}_{t}^{T}\right]^{2}}-1\right)\frac{1}{N}\sum_{i=1}^{N}\left(\partial_{u}\varpi\left(Y_{t},U_{t,i}\left(0\right);\theta\right)\right)^{2}\right|\right|\mathcal{Y}^{T}\right]\\
 & \quad+\mathbb{\widetilde{\mathbb{E}}}\left[\left.\left|\frac{1}{T}\sum_{t=1}^{T}\xi_{t}^{2}\frac{1}{N}\sum_{i=1}^{N}2\left(\partial_{u}\varpi\left(Y_{t},U_{t,i}\left(0\right);\theta\right)\right)^{2}-\kappa^{2}(\theta)\right|\right|\mathcal{Y}^{T}\right]\\
 & \leq\frac{2}{T}\sum_{t=1}^{T}\mathbb{\widetilde{\mathbb{E}}}^{1/2}\left[\left.\left(\frac{1}{\left[\widehat{W}_{t}^{T}\right]^{2}}-1\right)^{2}\right|Y_{t}\right]\mathbb{\widetilde{\mathbb{E}}}^{1/2}\left[\left.\left|\frac{1}{N}\sum_{i=1}^{N}\left(\partial_{u}\varpi\left(Y_{t},U_{t,i}\left(0\right);\theta\right)\right)^{2}\right|^{2}\right|Y_{t}\right]\\
 & \quad+\mathbb{\widetilde{\mathbb{E}}}\left[\left.\left|\frac{1}{T}\sum_{t=1}^{T}\xi_{t}^{2}\frac{1}{N}\sum_{i=1}^{N}2\left(\partial_{u}\varpi\left(Y_{t},U_{t,i}\left(0\right);\theta\right)\right)^{2}-\kappa^{2}(\theta)\right|\right|\mathcal{Y}^{T}\right]\\
 & =:J_{1}+J_{2}.
\end{align*}
since $\xi_{t}$ is independent of the remaining terms and $\mathbb{E}\xi_{t}^{2}=1$.
We control the first term using the Cauchy-Schwarz inequality, (\ref{ass:condcltsecondmoment}),
(\ref{ass:condcltinverseweightto1}) and the triangle inequality
\begin{align*}
J_{1} & \leq\frac{2}{T}\sum_{t=1}^{T}\mathbb{\mathbb{E}}^{1/2}\left[\left.\widehat{W}_{t}^{T}\left|\left(\frac{1}{\left[\widehat{W}_{t}^{T}\right]^{2}}-1\right)^{2}\right|\right|Y_{t}\right]\mathbb{\widetilde{\mathbb{E}}}^{1/2}\left[\left.\left|\frac{1}{N}\sum_{i=1}^{N}\left(\partial_{u}\varpi\left(Y_{t},U_{t,i}\left(0\right);\theta\right)\right)^{2}\right|^{2}\right|Y_{t}\right]\\
 & \leq\frac{2}{T}\sum_{t=1}^{T}\sqrt{\epsilon_{T}}B(Y_{t})\mathbb{\widetilde{\mathbb{E}}}^{1/2}\left[\left.\left|\left(\partial_{u}\varpi\left(Y_{t},U_{t,i}\left(0\right);\theta\right)\right)\right|^{4}\right|Y_{t}\right]\\
 & \leq\frac{2}{T}\sum_{t=1}^{T}\sqrt{\epsilon_{T}}B(Y_{t})\mathbb{\mathbb{E}}^{1/4}\left[\left.\left|\left(\partial_{u}\varpi\left(Y_{t},U_{t,i}\left(0\right);\theta\right)\right)\right|^{8}\right|Y_{t}\right]\mathbb{\mathbb{E}}^{1/2}\left[\left.\left|\left(\widehat{W}_{t}^{T}\right)^{2}\right|\right|Y_{t}\right]\\
 & \leq\frac{2}{T}\sum_{t=1}^{T}\sqrt{\epsilon_{T}}B(Y_{t})^{k}.
\end{align*}
Next we control 
\begin{align*}
J_{2} & =\mathbb{\widetilde{\mathbb{E}}}\left[\left.\left|\frac{1}{T}\sum_{t=1}^{T}\xi_{t}^{2}\frac{1}{N}\sum_{i=1}^{N}2\left(\partial_{u}\varpi\left(Y_{t},U_{t,i}\left(0\right);\theta\right)\right)^{2}-\kappa^{2}(\theta)\right|\right|\mathcal{Y}^{T}\right]\\
 & =\mathbb{\widetilde{\mathbb{E}}}\left[\left.\left|\frac{1}{T}\sum_{t=1}^{T}\xi_{t}^{2}\left[\frac{1}{N}\sum_{i=1}^{N}2g(Y_{t},U_{t,i};\theta)-\kappa^{2}(Y_{t},\theta)\right]\right|\right|\mathcal{Y}^{T}\right]+\mathbb{\widetilde{\mathbb{E}}}\left[\left.\left|\frac{1}{T}\sum_{t=1}^{T}\xi_{t}^{2}\kappa^{2}(Y_{t},\theta)-\kappa^{2}(\theta)\right|\right|\mathcal{Y}^{T}\right]\\
 & \leq\frac{1}{T}\sum_{t=1}^{T}\mathbb{\widetilde{\mathbb{E}}}\left[\left.\left|\xi_{t}^{2}\left[\frac{1}{N}\sum_{i=1}^{N}2g(Y_{t},U_{t,i};\theta)-\kappa^{2}(Y_{t},\theta)\right]\right|\right|\mathcal{Y}^{T}\right]+\mathbb{\widetilde{\mathbb{E}}}\left[\left.\left|\frac{1}{T}\sum_{t=1}^{T}\xi_{t}^{2}\kappa^{2}(Y_{t},\theta)-\kappa^{2}(\theta)\right|\right|\mathcal{Y}^{T}\right]\\
 & \leq\frac{1}{T}\sum_{t=1}^{T}\mathbb{E}^{1/2}\left[\left.\left|\left(\widehat{W}_{t}^{T}\right)^{2}\right|\right|\mathcal{Y}^{T}\right]\mathbb{\mathbb{E}}^{1/2}\left[\left.\left|\left[\frac{2}{N}\sum_{i=1}^{N}g(Y_{t},U_{t,i};\theta)-\kappa^{2}(Y_{t},\theta)\right]^{2}\right|\right|Y_{t}\right]\\
 & \qquad+\mathbb{\widetilde{\mathbb{E}}}\left[\left.\left|\frac{1}{T}\sum_{t=1}^{T}\xi_{t}^{2}\kappa^{2}(Y_{t},\theta)-\kappa^{2}(\theta)\right|\right|\mathcal{Y}^{T}\right]\\
 & \leq\frac{1}{T}\sum_{t=1}^{T}B(Y_{t})^{1/2}\frac{1}{\sqrt{N}}\mathbb{\mathbb{V}}^{1/2}\left[\left.\left|g(Y_{t},U_{t,i};\theta)\right|\right|Y_{t}\right]+\mathbb{\widetilde{\mathbb{E}}}\left[\left.\left|\frac{1}{T}\sum_{t=1}^{T}\left[\xi_{t}^{2}-1\right]\kappa^{2}(Y_{t},\theta)\right|\right|Y_{t}\right]\\
 & \qquad\quad+\left|\frac{1}{T}\sum_{t=1}^{T}\kappa(Y_{t},\theta)-\kappa^{2}(\theta)\right|\\
 & \leq\frac{1}{\sqrt{N}}\frac{1}{T}\sum_{t=1}^{T}B(Y_{t})+\mathbb{\widetilde{\mathbb{E}}}^{1/2}\left[\left.\left|\frac{1}{T}\sum_{t=1}^{T}\left[\xi_{t}^{2}-1\right]\kappa^{2}(Y_{t},\theta)\right|^{2}\right|Y_{t}\right]+\left|\frac{1}{T}\sum_{t=1}^{T}\kappa(Y_{t},\theta)-\kappa^{2}(\theta)\right|\\
 & \leq\frac{1}{\sqrt{N}}\frac{1}{T}\sum_{t=1}^{T}B(Y_{t})+\frac{1}{T^{2}}\sum_{t=1}^{T}\kappa^{4}(Y_{t},\theta)\mathbb{V}(\xi_{t}^{2})+\left|\frac{1}{T}\sum_{t=1}^{T}\kappa(Y_{t},\theta)-\kappa^{2}(\theta)\right|.
\end{align*}
Except from the very last term, everything else is independent of
$\theta.$ Therefore from (\ref{eq:uniformSLLN}) and the strong law
of large numbers we have the following result. 
\end{proof}

\subsubsection{Proof of Theorem \ref{thm:UniformCLT}}

The result follows now directly from Lemmas \ref{lem:dblbound}, \ref{lem:controlst-kappa},
\ref{lem:controlR1}, \ref{lem:controlR2} and \ref{lem:controlR3}.

\subsection{Proof of Theorem \ref{Theorem:weakconvergence}}

Let $\{\widetilde{\vartheta}_{n}^{T};n\geq0\}$ be the projection
on the first component of the stationary Markov chain $\{(\widetilde{\vartheta}_{n}^{T},\mathsf{U}_{n}^{T});n\geq0\}$
of invariant distribution $\widetilde{\pi}_{T}$ defined in (\ref{eq:rescaledtargetCPM})
and transition kernel $Q_{T}$ defined in (\ref{eq:rescaledtransitionkernelr})
and $\{\widetilde{\vartheta}_{n};n\geq0\}$ the stationary Markov
chain of invariant distribution $\varphi(\mathrm{d}\widetilde{\theta};0,\overline{\Sigma})$
and transition kernel $P$ defined in (\ref{eq:penaltykernel}). The
proof of Theorem \ref{Theorem:weakconvergence} relies on a set of
preliminary propositions. All the expectations in this section have
to be understood as conditional expectations w.r.t. $\mathcal{Y}^{T}$. 
\begin{prop}
\label{Proposition:Kernelconverges}Under the assumptions of Theorem
\ref{Theorem:weakconvergence}, for any subsequence $\left\{ T_{k};k\geq0\right\} $
we can extract a further subsequence $\left\{ T_{k}^{\prime};k\geq0\right\} $
such that almost surely along this subsequence we have 
\begin{equation}
\mathbb{E}\left(\left\vert Q_{T}f(\widetilde{\vartheta}_{0}^{T},\mathsf{U}_{0}^{T})-Pf(\widetilde{\vartheta}_{0}^{T})\right\vert \right)\rightarrow0\label{eq:kernelconvprop}
\end{equation}
for all $f\in B\left(\mathbb{R}^{d}\right)$. 
\end{prop}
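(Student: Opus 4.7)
The plan is to reduce the claim to three ingredients already available: the posterior approximation (Assumption \ref{ass:BVM}), the uniform conditional CLT for the loglikelihood ratio (Assumption \ref{Assumption:uniformCLT}), and continuity of $\kappa$ at $\overline{\theta}$. For $f\in B(\mathbb{R}^{d})$, using $\widetilde{q}(\widetilde{\theta},\widetilde{\theta}')=\upsilon(\widetilde{\theta}'-\widetilde{\theta})$, the change of variable $\xi=\widetilde{\theta}'-\widetilde{\theta}$ gives
\[
Q_{T}f(\widetilde{\theta},u)-Pf(\widetilde{\theta})=\int \upsilon(\mathrm{d}\xi)\bigl[\mathcal{A}_{T}(\widetilde{\theta},u,\xi)-\alpha_{P}(\widetilde{\theta},\widetilde{\theta}+\xi)\bigr]\bigl[f(\widetilde{\theta}+\xi)-f(\widetilde{\theta})\bigr],
\]
where $\mathcal{A}_{T}(\widetilde{\theta},u,\xi):=\int K_{\rho_{T}}(u,\mathrm{d}u')\,\alpha_{Q_{T}}\{(\widetilde{\theta},u),(\widetilde{\theta}+\xi,u')\}$. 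Since both acceptance probabilities are bounded by $1$, it suffices to show
\[
J_{T}:=\mathbb{E}\!\left[\int \upsilon(\mathrm{d}\xi)\,\bigl|\mathcal{A}_{T}(\widetilde{\vartheta}_{0}^{T},\mathsf{U}_{0}^{T},\xi)-\alpha_{P}(\widetilde{\vartheta}_{0}^{T},\widetilde{\vartheta}_{0}^{T}+\xi)\bigr|\right]\longrightarrow 0
\]
in $\mathbb{P}^{Y}$-probability. The subsequence statement and the extension to every $f\in B(\mathbb{R}^{d})$ then follow from the standard fact that $\mathbb{P}^{Y}$-probability convergence implies almost sure convergence along a subsequence, combined with a diagonal extraction over a countable determining class and linearity in $f$.

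Next I would identify the two components of $\mathcal{A}_{T}$. By $m$-reversibility of $K_{\rho_{T}}$ and the product form $\overline{\pi}_{T}(\theta,u)\propto\pi_{T}(\theta)m(u)\widehat{W}_{T}(\theta,u)$, the MH ratio inside $\alpha_{Q_{T}}$ factorises as $r_{\textsc{ex}}(\theta,\theta')\exp\{R_{T}(\theta,\theta')\}$, where $\theta=\widehat{\theta}_{T}+\widetilde{\theta}/\sqrt{T}$ and $\theta'=\theta+\xi/\sqrt{T}$. Assumption \ref{ass:BVM} implies, after substituting the true posterior by its Gaussian approximation, that $r_{\textsc{ex}}(\theta,\theta')\to\varphi(\widetilde{\theta}+\xi;0,\overline{\Sigma})\upsilon(-\xi)/[\varphi(\widetilde{\theta};0,\overline{\Sigma})\upsilon(\xi)]$ in $\mathbb{P}^{Y}$-probability, uniformly for $\widetilde{\theta}$ in a compact set. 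Conditional on $\mathcal{F}^{T}$, Assumption \ref{Assumption:uniformCLT} provides bounded-Lipschitz convergence of the law of $R_{T}(\theta,\theta+\xi/\sqrt{T})$ to $\mathcal{N}(-\kappa(\theta)^{2}/2,\kappa(\theta)^{2})$, uniformly for $\theta\in N(\overline{\theta})$; continuity of $\kappa$ then lets us replace the limit by $\mathcal{N}(-\kappa^{2}/2,\kappa^{2})$. Since $w\mapsto\min\{1,r\exp(w)\}$ is $1$-Lipschitz and bounded by $1$ for every $r\geq 0$, these two convergences combine into $\mathcal{A}_{T}(\widetilde{\theta},u,\xi)\to\alpha_{P}(\widetilde{\theta},\widetilde{\theta}+\xi)$ in the same conditional bounded-Lipschitz sense.

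To finish, I would truncate: pick $R$ large so that $\widetilde{\pi}_{T}(|\widetilde{\theta}|>R)$ is negligible (possible because $\widetilde{\pi}_{T}$ converges weakly to $\mathcal{N}(0,\overline{\Sigma})$ under Assumption \ref{ass:BVM}) and restrict $\xi$ to a compact set using tightness of $\upsilon$; for $T$ large, both $\theta$ and $\theta'$ then lie in $N(\overline{\theta})$ almost surely, so the supremum in Assumption \ref{Assumption:uniformCLT} controls the integrand of $J_{T}$. The main obstacle is the interaction of three layers of randomness---the outer data randomness under $\mathbb{P}^{Y}$, the rescaled stationary parameter $\widetilde{\vartheta}_{0}^{T}\sim\widetilde{\pi}_{T}$, and the auxiliary variable $\mathsf{U}_{0}^{T}\sim\overline{\pi}_{T}(\cdot|\theta)$---but Assumption \ref{Assumption:uniformCLT} is crafted precisely for this, providing $\widetilde{\mathbb{E}}[\,\cdot\,|\mathcal{Y}^{T}]$-control that is uniform in $\theta\in N(\overline{\theta})$ and vanishes in $\mathbb{P}^{Y}$-probability; what remains is a Fubini and dominated-convergence argument (dominated by $1$) to integrate successively over $\xi\sim\upsilon$ and $\widetilde{\vartheta}_{0}^{T}\sim\widetilde{\pi}_{T}$.
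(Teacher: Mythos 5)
Your overall architecture is the same as the paper's: reduce to the difference of acceptance probabilities integrated against $\upsilon$ (the identity $Q_{T}f-Pf=\int\upsilon(\mathrm{d}\xi)[\mathcal{A}_{T}-\alpha_{P}][f(\widetilde{\theta}+\xi)-f(\widetilde{\theta})]$ is just a cleaner packaging of the two terms the paper bounds separately), split the error into a posterior-approximation part and a loglikelihood-ratio-law part, localise to a neighbourhood of $\overline{\theta}$ using $\widehat{\theta}_{T}\to\overline{\theta}$ and the concentration of $\varphi(\cdot;\widehat{\theta}_{T},\overline{\Sigma}/T)$, invoke Assumption \ref{Assumption:uniformCLT} together with local Lipschitz continuity of $\kappa$, and finish with a subsequence extraction. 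Your remark that the bound is uniform over $\|f\|_{\infty}\leq1$ actually makes the diagonal argument over a countable determining class unnecessary, which is how the paper gets a single almost-sure subsequence valid for all $f$ simultaneously.

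There is, however, one genuine gap. You claim that Assumption \ref{ass:BVM} implies $r_{\textsc{ex}}(\theta,\theta^{\prime})\to\varphi(\widetilde{\theta}+\xi;0,\overline{\Sigma})\upsilon(-\xi)/[\varphi(\widetilde{\theta};0,\overline{\Sigma})\upsilon(\xi)]$ uniformly for $\widetilde{\theta}$ in a compact set. Assumption \ref{ass:BVM} is only an $L^{1}$ (total variation) statement about the densities; it gives no pointwise control whatsoever on $\pi_{T}$, so neither pointwise nor uniform convergence of the density \emph{ratio} follows---$\widetilde{\pi}_{T}$ could vanish or spike on a set of small Lebesgue measure and make $\widetilde{\pi}_{T}(\widetilde{\theta}_{1})/\widetilde{\pi}_{T}(\widetilde{\theta}_{0})$ arbitrarily bad. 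The paper avoids this by never converging the ratio: it writes the acceptance probability as $\min\{\widetilde{\pi}_{T}(\widetilde{\theta}_{0})\widetilde{q}(\widetilde{\theta}_{0},\widetilde{\theta}_{1}),\,\widetilde{\pi}_{T}(\widetilde{\theta}_{1})\widetilde{q}(\widetilde{\theta}_{1},\widetilde{\theta}_{0})\cdot\overline{p}(Y_{1:T}\mid\theta_{1},u_{1})/\overline{p}(Y_{1:T}\mid\theta_{0},u_{0})\}$, applies $|\min(x,y)-\min(w,z)|\leq|x-w|+|y-z|$ to compare with the Gaussian version, and integrates; the first resulting term is exactly $\int|\widetilde{\pi}_{T}-\varphi(\cdot;0,\overline{\Sigma})|$, and the second collapses to the same $L^{1}$ norm after using the $m$-reversibility of $K_{\rho_{T}}$ to absorb the likelihood-ratio weight back into $\widetilde{\pi}_{T}(\mathrm{d}u_{1}\mid\widetilde{\theta}_{1})$. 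Your argument needs this substitution (replace ``convergence of $r_{\textsc{ex}}$'' by the min-difference inequality at the level of unnormalised products) for the posterior-approximation step to be valid; the remainder of your proof, in particular the use of Assumption \ref{Assumption:uniformCLT} through the $1$-Lipschitz functions $w\mapsto\min\{1,r\exp(w)\}$ and the Fubini/domination step, is sound.
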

\begin{rem*}
We emphasize that the subset on which this almost sure convergence
occurs is independent of $f$. 
\end{rem*}
\begin{proof}[Proof of Proposition \textit{\ref{Proposition:Kernelconverges}}]

We define 
\[
\widetilde{r}_{T}\left(\widetilde{\theta}_{0},\widetilde{\theta}_{1}\right)=\frac{\widetilde{\pi}_{T}(\widetilde{\theta}_{1})\widetilde{q}(\widetilde{\theta}_{1},\widetilde{\theta}_{0})}{\widetilde{\pi}_{T}(\widetilde{\theta}_{0})\widetilde{q}(\widetilde{\theta}_{0},\widetilde{\theta}_{1})},\text{ \  \  \ }\widetilde{r}\left(\widetilde{\theta}_{0},\widetilde{\theta}_{1}\right)=\frac{\varphi(\widetilde{\theta}_{1};0,\overline{\Sigma})\widetilde{q}(\widetilde{\theta}_{1},\widetilde{\theta}_{0})}{\varphi(\widetilde{\theta}_{0};0,\overline{\Sigma})\widetilde{q}(\widetilde{\theta}_{0},\widetilde{\theta}_{1})},
\]
and write 
\[
\overline{p}(Y_{1:T}\mid\theta_{i},u_{i})=\frac{\widehat{p}(Y_{1:T}\mid\theta_{i},u_{i})}{p(Y_{1:T}\mid\theta_{i})},
\]
where $\theta_{i}=\widehat{\theta}_{T}+\widetilde{\theta}_{i}/\sqrt{T}$
for $i=0,1$. As Assumption\,\ref{Assumption:proposal} holds, we
have 
\begin{align}
 & Q_{T}f(\widetilde{\theta}_{0},u_{0})=\iint\widetilde{q}(\widetilde{\theta}_{0},\mathrm{d}\widetilde{\theta}_{1})f(\widetilde{\theta}_{1})K_{\rho_{T}}\left(u_{0},\mathrm{d}u_{1}\right)\min\left(1,\widetilde{r}_{T}\left(\widetilde{\theta}_{0},\widetilde{\theta}_{1}\right)\frac{\overline{p}(Y_{1:T}\mid\theta_{1},u_{1})}{\overline{p}(Y_{1:T}\mid\theta_{0},u_{0})}\right)\label{eq:kernelCPMH}\\
 & \qquad+f(\widetilde{\theta}_{0})\left[1-\iint\widetilde{q}(\widetilde{\theta}_{0},\mathrm{d}\widetilde{\theta}_{1})K_{\rho_{T}}\left(u_{0},\mathrm{d}u_{1}\right)\min\left(1,\widetilde{r}_{T}\left(\widetilde{\theta}_{0},\widetilde{\theta}_{1}\right)\frac{\overline{p}(Y_{1:T}\mid\theta_{1},u_{1})}{\overline{p}(Y_{1:T}\mid\theta_{0},u_{0})}\right)\right]\nonumber 
\end{align}
and 
\begin{align}
 & Pf(\widetilde{\theta}_{0})=\iint\widetilde{q}(\widetilde{\theta}_{0},\mathrm{d}\widetilde{\theta}_{1})f(\widetilde{\theta}_{1})\varphi\left(\mathrm{d}w;-\frac{\kappa^{2}}{2},\kappa^{2}\right)\min\left(1,\widetilde{r}\left(\widetilde{\theta}_{0},\widetilde{\theta}_{1}\right)\exp\left(w\right)\right)\label{eq:kernelpenalty}\\
 & \qquad+f(\widetilde{\theta}_{0})\left[1-\iint\widetilde{q}(\widetilde{\theta}_{0},\mathrm{d}\widetilde{\theta}_{1})\varphi\left(\mathrm{d}w;-\frac{\kappa^{2}}{2},\kappa^{2}\right)\min\left(1,\widetilde{r}\left(\widetilde{\theta}_{0},\widetilde{\theta}_{1}\right)\exp\left(w\right)\right)\right].\nonumber 
\end{align}
It follows that 
\begin{align*}
\lefteqn{\frac{1}{2}\mathbb{E}\left(\left\vert Q_{T}f(\widetilde{\vartheta}_{0}^{T},\mathsf{U}_{0}^{T})-Pf(\widetilde{\vartheta}_{0}^{T})\right\vert \right)}\\
 & \leq\frac{1}{2}\iiint\widetilde{\pi}_{T}(\mathrm{d}\widetilde{\theta}_{0},\mathrm{d}u_{0})\widetilde{q}(\widetilde{\theta}_{0},\mathrm{d}\widetilde{\theta}_{1})\left\vert f(\widetilde{\theta}_{1})\right\vert \left\vert \int K_{\rho_{T}}\left(u_{0},\mathrm{d}u_{1}\right)\min\left(1,\widetilde{r}_{T}\left(\widetilde{\theta}_{0},\widetilde{\theta}_{1}\right)\frac{\overline{p}(Y_{1:T}\mid\theta_{1},u_{1})}{\overline{p}(Y_{1:T}\mid\theta_{0},u_{0})}\right)\right.\\
 & \quad\qquad-\left.\int\varphi\left(\mathrm{d}w;-\frac{\kappa^{2}}{2},\kappa^{2}\right)\min\left(1,\widetilde{r}\left(\widetilde{\theta}_{0},\widetilde{\theta}_{1}\right)\exp\left(w\right)\right)\right\vert \\
 & \qquad+\frac{1}{2}\iiint\widetilde{\pi}_{T}(\mathrm{d}\widetilde{\theta}_{0},\mathrm{d}u_{0})\widetilde{q}(\widetilde{\theta}_{0},\mathrm{d}\widetilde{\theta}_{1})\left\vert f\left(\widetilde{\theta}_{0}\right)\right\vert \left\vert \int K_{\rho_{T}}\left(u_{0},\mathrm{d}u_{1}\right)\min\left(1,\widetilde{r}_{T}\left(\widetilde{\theta}_{0},\widetilde{\theta}_{1}\right)\frac{\overline{p}(Y_{1:T}\mid\theta_{1},u_{1})}{\overline{p}(Y_{1:T}\mid\theta_{0},u_{0})}\right)\right.\\
 & \qquad\quad\left.-\int\varphi\left(\mathrm{d}w;-\frac{\kappa^{2}}{2},\kappa^{2}\right)\min\left(1,\widetilde{r}\left(\widetilde{\theta}_{0},\widetilde{\theta}_{1}\right)\exp\left(w\right)\right)\right\vert \\
 & \leq\iiint\widetilde{\pi}_{T}(\mathrm{d}\widetilde{\theta}_{0},\mathrm{d}u_{0})\widetilde{q}(\widetilde{\theta}_{0},\mathrm{d}\widetilde{\theta}_{1})\left\vert \int K_{\rho_{T}}\left(u_{0},\mathrm{d}u_{1}\right)\min\left(1,\widetilde{r}_{T}\left(\widetilde{\theta}_{0},\widetilde{\theta}_{1}\right)\frac{\overline{p}(Y_{1:T}\mid\theta_{1},u_{1})}{\overline{p}(Y_{1:T}\mid\theta_{0},u_{0})}\right)\right.\\
 & \qquad-\left.\int\varphi\left(\mathrm{d}w;-\frac{\kappa^{2}}{2},\kappa^{2}\right)\min\left(1,\widetilde{r}\left(\widetilde{\theta}_{0},\widetilde{\theta}_{1}\right)\exp\left(w\right)\right)\right\vert .
\end{align*}
Hence, we have
\begin{align}
 & \frac{1}{2}\mathbb{E}\left(\left|Q_{T}f(\widetilde{\vartheta}_{0}^{T},\mathsf{U}_{0}^{T})-Pf(\widetilde{\vartheta}_{0}^{T})\right|\right)\nonumber \\
 & =\iiint\widetilde{\pi}_{T}(\left.\mathrm{d}u_{0}\right\vert \widetilde{\theta}_{0})\widetilde{q}(\widetilde{\theta}_{0},\mathrm{d}\widetilde{\theta}_{1})\left\vert \int K_{\rho_{T}}\left(u_{0},\mathrm{d}u_{1}\right)\min\left(\widetilde{\pi}_{T}(\widetilde{\theta}_{0}),\widetilde{\pi}_{T}(\widetilde{\theta}_{1})\frac{\widetilde{q}(\widetilde{\theta}_{1},\widetilde{\theta}_{0})}{\widetilde{q}(\widetilde{\theta}_{0},\widetilde{\theta}_{1})}\frac{\overline{p}(Y_{1:T}\mid\theta_{1},u_{1})}{\overline{p}(Y_{1:T}\mid\theta_{0},u_{0})}\right)\right.\nonumber \\
 & \qquad-\left.\widetilde{\pi}_{T}(\widetilde{\theta}_{0})\int\varphi\left(\mathrm{d}w;-\frac{\kappa^{2}}{2},\kappa^{2}\right)\min\left(1,\widetilde{r}\left(\widetilde{\theta}_{0},\widetilde{\theta}_{1}\right)\exp\left(w\right)\right)\right\vert \mathrm{d}\widetilde{\theta}_{0}\nonumber \\
 & \leq\iiint\widetilde{\pi}_{T}(\left.\mathrm{d}u_{0}\right\vert \widetilde{\theta}_{0})\int K_{\rho_{T}}\left(u_{0},\mathrm{d}u_{1}\right)\left\vert \min\left(\widetilde{\pi}_{T}(\widetilde{\theta}_{0})\widetilde{q}(\widetilde{\theta}_{0},\widetilde{\theta}_{1}),\widetilde{\pi}_{T}(\widetilde{\theta}_{1})\widetilde{q}(\widetilde{\theta}_{1},\widetilde{\theta}_{0})\frac{\overline{p}(Y_{1:T}\mid\theta_{1},u_{1})}{\overline{p}(Y_{1:T}\mid\theta_{0},u_{0})}\right)\right.\label{eq:firsttermofdecomposition}\\
 & \qquad\quad-\left.\min\left(\varphi(\widetilde{\theta}_{0};0,\overline{\Sigma})\widetilde{q}(\widetilde{\theta}_{0},\widetilde{\theta}_{1}),\varphi(\widetilde{\theta}_{1};0,\overline{\Sigma})\widetilde{q}(\widetilde{\theta}_{1},\widetilde{\theta}_{0})\frac{\overline{p}(Y_{1:T}\mid\theta_{1},u_{1})}{\overline{p}(Y_{1:T}\mid\theta_{0},u_{0})}\right)\right\vert \mathrm{d}\widetilde{\theta}_{0}\mathrm{d}\widetilde{\theta}_{1}\nonumber \\
 & \qquad+\iiint\widetilde{\pi}_{T}(\left.\mathrm{d}u_{0}\right\vert \widetilde{\theta}_{0})\widetilde{q}(\widetilde{\theta}_{0},\mathrm{d}\widetilde{\theta}_{1})\left\vert \varphi(\widetilde{\theta}_{0};0,\overline{\Sigma})\int K_{\rho_{T}}\left(u_{0},\mathrm{d}u_{1}\right)\min\left(1,\widetilde{r}\left(\widetilde{\theta}_{0},\widetilde{\theta}_{1}\right)\frac{\overline{p}(Y_{1:T}\mid\theta_{1},u_{1})}{\overline{p}(Y_{1:T}\mid\theta_{0},u_{0})}\right)\right.\label{eq:secondtermofdecomposition}\\
 & \qquad\quad-\left.\widetilde{\pi}_{T}\left(\widetilde{\theta}_{0}\right)\int\varphi\left(\mathrm{d}w;-\frac{\kappa^{2}}{2},\kappa^{2}\right)\min\left(1,\widetilde{r}\left(\widetilde{\theta}_{0},\widetilde{\theta}_{1}\right)\exp\left(w\right)\right)\right\vert \mathrm{d}\widetilde{\theta}_{0}\nonumber 
\end{align}
For the first term given in (\ref{eq:firsttermofdecomposition}),
using the inequality $\left\vert \min\left(x,y\right)-\min\left(w,z\right)\right\vert \leq\left\vert x-w\right\vert +\left\vert y-z\right\vert $,
we obtain the bound 
\begin{align}
\lefteqn{\iiint\widetilde{\pi}_{T}(\left.\mathrm{d}u_{0}\right\vert \widetilde{\theta}_{0})\int K_{\rho_{T}}\left(u_{0},\mathrm{d}u_{1}\right)\left\vert \min\left(\widetilde{\pi}_{T}(\widetilde{\theta}_{0})\widetilde{q}(\widetilde{\theta}_{0},\widetilde{\theta}_{1}),\widetilde{\pi}_{T}(\widetilde{\theta}_{1})\widetilde{q}(\widetilde{\theta}_{1},\widetilde{\theta}_{0})\frac{\overline{p}(Y_{1:T}\mid\theta_{1},u_{1})}{\overline{p}(Y_{1:T}\mid\theta_{0},u_{0})}\right)\right.}\nonumber \\
 & \qquad\quad-\left.\min\left(\varphi(\widetilde{\theta}_{0};0,\overline{\Sigma})\widetilde{q}(\widetilde{\theta}_{0},\widetilde{\theta}_{1}),\varphi(\widetilde{\theta}_{1};0,\overline{\Sigma})\widetilde{q}(\widetilde{\theta}_{1},\widetilde{\theta}_{0})\frac{\overline{p}(Y_{1:T}\mid\theta_{1},u_{1})}{\overline{p}(Y_{1:T}\mid\theta_{0},u_{0})}\right)\right\vert \mathrm{d}\widetilde{\theta}_{0}\mathrm{d}\widetilde{\theta}_{1}\nonumber \\
 & \leq\iiiint\widetilde{\pi}_{T}(\left.\mathrm{d}u_{0}\right\vert \widetilde{\theta}_{0})\widetilde{q}(\widetilde{\theta}_{0},\mathrm{d}\widetilde{\theta}_{1})K_{\rho_{T}}\left(u_{0},\mathrm{d}u_{1}\right)\left\vert \widetilde{\pi}_{T}(\widetilde{\theta}_{0})-\varphi(\widetilde{\theta}_{0};0,\overline{\Sigma})\right\vert \mathrm{d}\widetilde{\theta}_{0}\label{eq:firsttermequality}\\
 & \qquad+\iiiint\widetilde{\pi}_{T}(\left.\mathrm{d}u_{0}\right\vert \widetilde{\theta}_{0})K_{\rho_{T}}\left(u_{0},\mathrm{d}u_{1}\right)\left\vert \widetilde{\pi}_{T}(\widetilde{\theta}_{1})-\varphi(\widetilde{\theta}_{1};0,\overline{\Sigma})\right\vert \widetilde{q}(\widetilde{\theta}_{1},\mathrm{d}\widetilde{\theta}_{0})\frac{\overline{p}(Y_{1:T}\mid\theta_{1},u_{1})}{\overline{p}(Y_{1:T}\mid\theta_{0},u_{0})}\mathrm{d}\widetilde{\theta}_{1}\label{eq:secondtermequality}
\end{align}
The term\ (\ref{eq:firsttermequality}) satisfies 
\begin{align*}
 & \iiiint\widetilde{\pi}_{T}(\left.\mathrm{d}u_{0}\right\vert \widetilde{\theta}_{0})\widetilde{q}(\widetilde{\theta}_{0},\mathrm{d}\widetilde{\theta}_{1})K_{\rho_{T}}\left(u_{0},\mathrm{d}u_{1}\right)\left\vert \widetilde{\pi}_{T}(\widetilde{\theta}_{0})-\varphi(\widetilde{\theta}_{0};0,\overline{\Sigma})\right\vert \mathrm{d}\widetilde{\theta}_{0}\\
 & =\int\left\vert \widetilde{\pi}_{T}(\widetilde{\theta}_{0})-\varphi(\widetilde{\theta}_{0};0,\overline{\Sigma})\right\vert \mathrm{d}\widetilde{\theta}_{0}\overset{\mathbb{P}^{Y}}{\rightarrow}0
\end{align*}
by Assumption \ref{ass:BVM}. Therefore, for any subsequence\ $\left\{ T_{k};k\geq0\right\} $
we can extract a further subsequence $\left\{ T_{k}^{1};k\geq0\right\} $
such that along this subsequence 
\[
\int\left\vert \widetilde{\pi}_{T}(\widetilde{\theta}_{0})-\varphi(\widetilde{\theta}_{0};0,\overline{\Sigma})\right\vert \mathrm{d}\widetilde{\theta}_{0}\rightarrow0
\]
 almost surely. Since 
\[
\widetilde{\pi}_{T}(\left.\mathrm{d}u_{0}\right\vert \widetilde{\theta}_{0})=m_{T}\left(\mathrm{d}u_{0}\right)\overline{p}(Y_{1:T}\mid\theta_{0},u_{0})
\]
then the term (\ref{eq:secondtermequality}) satisfies along $\left\{ T_{k}^{1};k\geq0\right\} $
\begin{align*}
\lefteqn{\iiiint\widetilde{\pi}_{T}(\left.\mathrm{d}u_{0}\right\vert \widetilde{\theta}_{0})\widetilde{q}(\widetilde{\theta}_{1},\mathrm{d}\widetilde{\theta}_{0})K_{\rho_{T}}(u_{0},\mathrm{d}u_{1})\left\vert \widetilde{\pi}_{T}(\widetilde{\theta}_{1})-\varphi(\widetilde{\theta}_{1};0,\overline{\Sigma})\right\vert \frac{\overline{p}(Y_{1:T}\mid\theta_{1},u_{1})}{\overline{p}(Y_{1:T}\mid\theta_{0},u_{0})}\mathrm{d}\widetilde{\theta}_{1}}\\
 & =\iiiint m_{T}\left(\mathrm{d}u_{0}\right)\widetilde{q}(\widetilde{\theta}_{1},\mathrm{d}\widetilde{\theta}_{0})K_{\rho_{T}}\left(u_{0},\mathrm{d}u_{1}\right)\left\vert \widetilde{\pi}_{T}(\widetilde{\theta}_{1})-\varphi(\widetilde{\theta}_{1};0,\overline{\Sigma})\right\vert \overline{p}(Y_{1:T}\mid\theta_{1},u_{1})\mathrm{d}\widetilde{\theta}_{1}\\
 & =\iiint\widetilde{q}(\widetilde{\theta}_{1},\mathrm{d}\widetilde{\theta}_{0})m_{T}\left(\mathrm{d}u_{1}\right)\left\vert \widetilde{\pi}_{T}(\widetilde{\theta}_{1})-\varphi(\widetilde{\theta}_{1};0,\overline{\Sigma})\right\vert \overline{p}(Y_{1:T}\mid\theta_{1},u_{1})\mathrm{d}\widetilde{\theta}_{1}\text{ \ (}K_{\rho_{T}}\text{ }m\text{-invariant)}\\
 & =\iiint\widetilde{q}(\widetilde{\theta}_{1},\mathrm{d}\widetilde{\theta}_{0})\widetilde{\pi}_{T}(\left.\mathrm{d}u_{1}\right\vert \widetilde{\theta}_{1})\left\vert \widetilde{\pi}_{T}(\widetilde{\theta}_{1})-\varphi(\widetilde{\theta}_{1};0,\overline{\Sigma})\right\vert \mathrm{d}\widetilde{\theta}_{1}\\
 & =\int\left\vert \widetilde{\pi}_{T}(\widetilde{\theta}_{1})-\varphi(\widetilde{\theta}_{1};0,\overline{\Sigma})\right\vert \mathrm{d}\widetilde{\theta}_{1}\rightarrow0
\end{align*}
almost surely.

Going back to the term given by (\ref{eq:secondtermofdecomposition}),
we note that 
\begin{align}
\lefteqn{\iiint\widetilde{\pi}_{T}(\left.\mathrm{d}u_{0}\right\vert \widetilde{\theta}_{0})\widetilde{q}(\widetilde{\theta}_{0},\mathrm{d}\widetilde{\theta}_{1})\left\vert \varphi(\widetilde{\theta}_{0};0,\overline{\Sigma})\int K_{\rho_{T}}(u_{0},\mathrm{d}u_{1})\min\left(1,\widetilde{r}\left(\widetilde{\theta}_{0},\widetilde{\theta}_{1}\right)\frac{\overline{p}(Y_{1:T}\mid\theta_{1},u_{1})}{\overline{p}(Y_{1:T}\mid\theta_{0},u_{0})}\right)\right.}\nonumber \\
 & \qquad-\left.\widetilde{\pi}_{T}(\widetilde{\theta}_{0})\int\varphi\left(\mathrm{d}w;-\frac{\kappa^{2}}{2},\kappa^{2}\right)\min\left(1,\widetilde{r}\left(\widetilde{\theta}_{0},\widetilde{\theta}_{1}\right)\exp\left(w\right)\right)\right\vert \mathrm{d}\widetilde{\theta}_{0}\nonumber \\
 & \leq\iiint\widetilde{\pi}_{T}(\left.\mathrm{d}u_{0}\right\vert \widetilde{\theta}_{0})\widetilde{q}(\widetilde{\theta}_{0},\mathrm{d}\widetilde{\theta}_{1})\left\vert \varphi(\widetilde{\theta}_{0};0,\overline{\Sigma)}\int K_{\rho_{T}}\left(u_{0},\mathrm{d}u_{1}\right)\min\left(1,\widetilde{r}\left(\widetilde{\theta}_{0},\widetilde{\theta}_{1}\right)\frac{\overline{p}(Y_{1:T}\mid\theta_{1},u_{1})}{\overline{p}(Y_{1:T}\mid\theta_{0},u_{0})}\right)\right.\label{eq:firsttermofsecondterm}\\
 & \qquad\qquad-\left.\varphi\left(\widetilde{\theta}_{0};0,\overline{\Sigma}\right)\int\varphi\left(\mathrm{d}w;-\frac{\kappa^{2}}{2},\kappa^{2}\right)\min\left(1,\widetilde{r}\left(\widetilde{\theta}_{0},\widetilde{\theta}_{1}\right)\exp\left(w\right)\right)\right\vert \mathrm{d}\widetilde{\theta}_{0}\nonumber \\
 & \qquad+\iiint\left\vert \widetilde{\pi}_{T}(\widetilde{\theta}_{0})-\varphi(\widetilde{\theta}_{0};0,\overline{\Sigma})\right\vert \widetilde{q}(\widetilde{\theta}_{0},\mathrm{d}\widetilde{\theta}_{1})\varphi\left(\mathrm{d}w;-\frac{\kappa^{2}}{2},\kappa^{2}\right)\min\left(1,\widetilde{r}\left(\widetilde{\theta}_{0},\widetilde{\theta}_{1}\right)\exp\left(w\right)\right)\mathrm{d}\widetilde{\theta}_{0}\label{eq:secondtermofsecondterm}
\end{align}
where (\ref{eq:secondtermofsecondterm}) satisfies
\begin{align*}
 & \iiint\left\vert \widetilde{\pi}_{T}(\widetilde{\theta}_{0})-\varphi(\widetilde{\theta}_{0};0,\overline{\Sigma})\right\vert \widetilde{q}(\widetilde{\theta}_{0},\mathrm{d}\widetilde{\theta}_{1})\varphi\left(\mathrm{d}w;-\frac{\kappa^{2}}{2},\kappa^{2}\right)\min\left(1,\widetilde{r}\left(\widetilde{\theta}_{0},\widetilde{\theta}_{1}\right)\exp\left(w\right)\right)\mathrm{d}\widetilde{\theta}_{0}\\
 & \leq\iiint\left\vert \widetilde{\pi}_{T}(\widetilde{\theta}_{0})-\varphi(\widetilde{\theta}_{0};0,\overline{\Sigma})\right\vert \widetilde{q}(\widetilde{\theta}_{0},\mathrm{d}\widetilde{\theta}_{1})\varphi\left(\mathrm{d}w;-\frac{\kappa^{2}}{2},\kappa^{2}\right)\mathrm{d}\widetilde{\theta}_{0}\\
 & =\int\left\vert \widetilde{\pi}_{T}(\widetilde{\theta}_{0})-\varphi(\widetilde{\theta}_{0};0,\overline{\Sigma})\right\vert \mathrm{d}\widetilde{\theta}_{0}\rightarrow0
\end{align*}
almost surely along $\left\{ T_{k}^{1};k\geq0\right\} $. %
We can rewrite (\ref{eq:firsttermofsecondterm}) as 
\begin{align*}
 & \iiint\varphi\left(\mathrm{d}\theta_{0};\widehat{\theta}_{T},\frac{\overline{\Sigma}}{T}\right)\overline{\pi}_{T}\left(\left.\mathrm{d}u_{0}\right\vert \theta_{0}\right)q\left(\theta_{0},\mathrm{d}\theta_{1}\right)\left\vert \int K_{\rho_{T}}\left(u_{0},\mathrm{d}u_{1}\right)\min\left(1,\frac{\varphi(\sqrt{T}(\theta_{1}-\widehat{\theta}_{T});0,\overline{\Sigma})}{\varphi(\sqrt{T}(\theta_{0}-\widehat{\theta}_{T});0,\overline{\Sigma})}\frac{\overline{p}(Y_{1:T}\mid\theta_{1},u_{1})}{\overline{p}(Y_{1:T}\mid\theta_{0},u_{0})}\right)\right.\\
 & -\left.\int\varphi\left(\mathrm{d}w;-\frac{\kappa^{2}}{2},\kappa^{2}\right)\min\left(1,\frac{\varphi(\sqrt{T}(\theta_{1}-\widehat{\theta}_{T});0,\overline{\Sigma})}{\varphi(\sqrt{T}(\theta_{0}-\widehat{\theta}_{T});0,\overline{\Sigma})}\exp\left(w\right)\right)\right\vert \\
 & =\iiint\varphi\left(\mathrm{d}\theta_{0};\widehat{\theta}_{T},\frac{\overline{\Sigma}}{T}\right)\overline{\pi}_{T}\left(\left.\mathrm{d}u_{0}\right\vert \theta_{0}\right)\upsilon\left(\mathrm{d}\xi\right)\left\vert \int K_{\rho_{T}}\left(u_{0},\mathrm{d}u_{1}\right)\min\left(1,\frac{\varphi(\sqrt{T}(\theta_{0}+\xi/\sqrt{T}-\widehat{\theta}_{T});0,\overline{\Sigma})}{\varphi(\sqrt{T}(\theta_{0}-\widehat{\theta}_{T});0,\overline{\Sigma})}\right.\right.\\
 & \times\left.\frac{\overline{p}(Y_{1:T}\mid\theta_{0}+\xi/\sqrt{T},u_{1})}{\overline{p}(Y_{1:T}\mid\theta_{0},u_{0})}\right)-\left.\int\varphi\left(\mathrm{d}w;-\frac{\kappa^{2}}{2},\kappa^{2}\right)\min\left(1,\frac{\varphi(\sqrt{T}(\theta_{0}+\xi/\sqrt{T}-\widehat{\theta}_{T});0,\overline{\Sigma})}{\varphi(\sqrt{T}(\theta_{0}-\widehat{\theta}_{T});0,\overline{\Sigma})}\exp\left(w\right)\right)\right\vert .
\end{align*}
As $\widehat{\theta}_{T}\overset{\mathbb{P}^{Y}}{\rightarrow}\overline{\theta}$,
we extract a further subsequence $\left\{ T_{k}^{2};k\geq0\right\} $
of $\left\{ T_{k}^{1};k\geq0\right\} $ such that along this subsequence
$\widehat{\theta}_{T}\overset{}{\rightarrow}\overline{\theta}$ almost
surely. Hence if we let $A^{T}\left(\varepsilon\right)=\left\{ Y_{1:T}:\left\Vert \widehat{\theta}_{T}-\overline{\theta}\right\Vert <\varepsilon/2\right\} $
which satisfies $\mathbb{P}^{Y}\left(\left(A^{T}\left(\varepsilon\right)\right)^{\mathtt{C}}\right)=o\left(1\right)$
then along this subsequence $\mathbb{I}\left(A^{T}\left(\varepsilon\right)\right)\rightarrow1$
almost surely and therefore (\ref{eq:firsttermofsecondterm}) is equal
to 
\begin{align*}
 & \mathbb{I}\left(A^{T}\left(\varepsilon\right)\right)\iiint\varphi\left(\mathrm{d}\theta_{0};\widehat{\theta}_{T},\frac{\overline{\Sigma}}{T}\right)\overline{\pi}_{T}\left(\left.\mathrm{d}u_{0}\right\vert \theta_{0}\right)\upsilon\left(\mathrm{d}\xi\right)\left\vert \int K_{\rho_{T}}\left(u_{0},\mathrm{d}u_{1}\right)\min\left(1,\frac{\varphi(\sqrt{T}(\theta_{0}+\xi/\sqrt{T}-\widehat{\theta}_{T});0,\overline{\Sigma})}{\varphi(\sqrt{T}(\theta_{0}-\widehat{\theta}_{T});0,\overline{\Sigma})}\right.\right.\\
 & \times\left.\frac{\overline{p}(Y_{1:T}\mid\theta_{0}+\xi/\sqrt{T},u_{1})}{\overline{p}(Y_{1:T}\mid\theta_{0},u_{0})}\right)-\left.\int\varphi(\mathrm{d}w;-\kappa{}^{2}/2,\kappa{}^{2})\min\left(1,\frac{\varphi(\sqrt{T}(\theta_{0}+\xi/\sqrt{T}-\widehat{\theta}_{T});0,\overline{\Sigma})}{\varphi(\sqrt{T}(\theta_{0}-\widehat{\theta}_{T});0,\overline{\Sigma})}\exp\left(w\right)\right)\right\vert \text{ }\\
 & +o\left(1\right)
\end{align*}
almost surely. Along $\left\{ T_{k}^{2};k\geq0\right\} $, we can
rewrite the integral in the above expression as 
\begin{align*}
 & \mathbb{I}\left(A^{T}\left(\varepsilon\right)\right)\iiint\varphi\left(\mathrm{d}\theta_{0};\widehat{\theta}_{T},\frac{\overline{\Sigma}}{T}\right)\mathbb{I}\left(\left\Vert \widehat{\theta}_{T}-\theta_{0}\right\Vert <\varepsilon/2\right)\overline{\pi}_{T}\left(\left.\mathrm{d}u_{0}\right\vert \theta_{0}\right)\upsilon\left(\mathrm{d}\xi\right)\\
 & \left\vert \int K_{\rho_{T}}\left(u_{0},\mathrm{d}u_{1}\right)\min\left(1,\frac{\varphi(\sqrt{T}(\theta_{0}+\xi/\sqrt{T}-\widehat{\theta}_{T});0,\overline{\Sigma})}{\varphi(\sqrt{T}(\theta_{0}-\widehat{\theta}_{T});0,\overline{\Sigma})}\frac{\overline{p}(Y_{1:T}\mid\theta_{0}+\xi/\sqrt{T},u_{1})}{\overline{p}(Y_{1:T}\mid\theta_{0},u_{0})}\right)\right.\\
 & -\left.\int\varphi(\mathrm{d}w;-\kappa^{2}/2,\kappa^{2})\min\left(1,\frac{\varphi(\sqrt{T}(\theta_{0}+\xi/\sqrt{T}-\widehat{\theta}_{T});0,\overline{\Sigma})}{\varphi(\sqrt{T}(\theta_{0}-\widehat{\theta}_{T});0,\overline{\Sigma})}\exp\left(w\right)\right)\right\vert +o\left(1\right).
\end{align*}
Notice that the functions 
\[
x\mapsto\min\left(1,\frac{\varphi(\sqrt{T}(\theta_{0}+\xi/\sqrt{T}-\widehat{\theta}_{T});0,\overline{\Sigma})}{\varphi(\sqrt{T}(\theta_{0}-\widehat{\theta}_{T});0,\overline{\Sigma})}\exp\left(x\right)\right)
\]
are bounded above by $1$ and Lipschitz, with Lipschitz constants
bounded by $1$ uniformly in all parameters. Therefore (\ref{eq:firsttermofsecondterm})
is bounded almost surely along $\left\{ T_{k}^{2};k\geq0\right\} $
by 
\begin{align}
 & \mathbb{I}\left(A^{T}\left(\varepsilon\right)\right)\iiint\varphi\left(\mathrm{d}\theta_{0};\widehat{\theta}_{T},\frac{\overline{\Sigma}}{T}\right)\mathbb{I}\left(\left\Vert \widehat{\theta}_{T}-\theta_{0}\right\Vert <\varepsilon/2\right)\overline{\pi}_{T}\left(\left.\mathrm{d}u_{0}\right\vert \theta_{0}\right)\upsilon\left(\mathrm{d}\xi\right)\nonumber \\
 & \times\underset{f:\left\Vert f\right\Vert _{BL}\leq2}{\sup}\text{ }\left\vert \int K_{\rho_{T}}\left(u_{0},\mathrm{d}u_{1}\right)f\left\{ \log\left(\frac{\overline{p}(Y_{1:T}\mid\theta_{0}+\xi/\sqrt{T},u_{1})}{\overline{p}(Y_{1:T}\mid\theta_{0},u_{0})}\right)\right\} -\int\varphi(\mathrm{d}w;-\kappa{}^{2}/2,\kappa{}^{2})f\left(w\right)\right\vert +o\left(1\right),\label{eq:applicationconditionalCLT}
\end{align}
where $\left\Vert f\right\Vert _{BL}$ is defined in (\ref{def:BLfunction}). 

We further decompose (\ref{eq:applicationconditionalCLT}) as 
\begin{align}
 & \mathbb{I}\left(A^{T}\left(\varepsilon\right)\right)\iiint\varphi\left(\mathrm{d}\theta_{0};\widehat{\theta}_{T},\frac{\overline{\Sigma}}{T}\right)\mathbb{I}\left(\left\Vert \widehat{\theta}_{T}-\theta_{0}\right\Vert <\varepsilon/2\right)\overline{\pi}_{T}\left(\left.\mathrm{d}u_{0}\right\vert \theta_{0}\right)\upsilon\left(\mathrm{d}\xi\right)\nonumber \\
 & \times\underset{f:\left\Vert f\right\Vert _{BL}\leq2}{\sup}\text{ }\left\vert \int K_{\rho_{T}}\left(u_{0},\mathrm{d}u_{1}\right)f\left\{ \log\left(\frac{\overline{p}(Y_{1:T}\mid\theta_{0}+\xi/\sqrt{T},u_{1})}{\overline{p}(Y_{1:T}\mid\theta_{0},u_{0})}\right)\right\} -\int\varphi\left(\mathrm{d}w;-\kappa{}^{2}(\theta_{0})/2,\kappa{}^{2}(\theta_{0})\right)f\left(w\right)\right\vert \nonumber \\
 & +\mathbb{I}\left(A^{T}\left(\varepsilon\right)\right)\iiint\left.\left\{ \varphi\left(\mathrm{d}\theta_{0};\widehat{\theta}_{T},\frac{\overline{\Sigma}}{T}\right)\mathbb{I}\left(\left\Vert \widehat{\theta}_{T}-\theta_{0}\right\Vert <\varepsilon/2\right)\overline{\pi}_{T}\left(\left.\mathrm{d}u_{0}\right\vert \theta_{0}\right)\upsilon\left(\mathrm{d}\xi\right)\right.\right.\label{eq:justbeforeunifcondclt}\\
 & \times\left.d_{BL}\left(\mathcal{N}\left(-\kappa{}^{2}(\theta_{0})/2,\kappa{}^{2}(\theta_{0})\right),\mathcal{N}\left(-\kappa{}^{2}/2,\kappa{}^{2}\right)\right)\right\} +o(1).\nonumber 
\end{align}
The second term can be easily bounded above by
\begin{align*}
 & \mathbb{I}\left(A^{T}\left(\varepsilon\right)\right)\iiint\varphi\left(\mathrm{d}\theta_{0};\widehat{\theta}_{T},\frac{\overline{\Sigma}}{T}\right)\mathbb{I}\left(\left\Vert \widehat{\theta}_{T}-\theta_{0}\right\Vert <\varepsilon/2\right)\left[\frac{1}{2}\left|\kappa^{2}(\theta_{0})-\kappa^{2}(\bar{\theta})\right|+\left|\kappa(\theta_{0})-\kappa(\bar{\theta})\right|\sqrt{\frac{2}{\pi}}\right]\\
 & \leq\mathbb{I}\left(A^{T}\left(\varepsilon\right)\right)\iiint\varphi\left(\mathrm{d}\theta_{0};\widehat{\theta}_{T},\frac{\overline{\Sigma}}{T}\right)\mathbb{I}\left(\left\Vert \widehat{\theta}_{T}-\theta_{0}\right\Vert <\varepsilon/2\right)\left|\kappa(\theta_{0})-\kappa(\bar{\theta})\right|\left[\frac{1}{2}\left(\kappa(\theta_{0})+\kappa(\bar{\theta})\right)+\sqrt{\frac{2}{\pi}}\right]\\
 & \leq C\mathbb{I}\left(A^{T}\left(\varepsilon\right)\right)\iiint\varphi\left(\mathrm{d}\theta_{0};\widehat{\theta}_{T},\frac{\overline{\Sigma}}{T}\right)\mathbb{I}\left(\left\Vert \widehat{\theta}_{T}-\theta_{0}\right\Vert <\varepsilon/2\right)\left(\left|\kappa(\theta_{0})-\kappa(\widehat{\theta}_{T})\right|+\left|\kappa(\bar{\theta})-\kappa(\widehat{\theta}_{T})\right|\right)\\
\leq & C\mathbb{I}\left(A^{T}\left(\varepsilon\right)\right)\int\varphi\left(\mathrm{d}\theta_{0};\widehat{\theta}_{T},\frac{\overline{\Sigma}}{T}\right)\mathbb{I}\left(\left\Vert \widehat{\theta}_{T}-\theta_{0}\right\Vert <\varepsilon/2\right)\left[\left\Vert \theta_{0}-\widehat{\theta}_{T}\right\Vert +\left\Vert \widehat{\theta}_{T}-\bar{\theta}\right\Vert \right]\\
\leq & C\mathbb{I}\left(A^{T}\left(\varepsilon\right)\right)\left[T^{-1/2}\int\varphi\left(\mathrm{d}\zeta;0,I_{d}\right)\left\Vert \left(\overline{\Sigma}\right)^{1/2}\zeta\right\Vert +\left\Vert \widehat{\theta}_{T}-\bar{\theta}\right\Vert \right],
\end{align*}
where we have used the fact that $\kappa$ is locally Lipschitz around
$\bar{\theta}$. As we are in a subsequence along which $\widehat{\theta}_{T}\overset{}{\rightarrow}\overline{\theta}$
almost surely, then this quantity converges to zero almost surely
along this subsequence. Finally the first term of (\ref{eq:justbeforeunifcondclt})
can be controlled for $\varepsilon$ small enough by
\begin{align*}
 & \mathbb{I}\left(A^{T}\left(\varepsilon\right)\right)\int\varphi\left(\mathrm{d}\theta_{0};\widehat{\theta}_{T},\frac{\overline{\Sigma}}{T}\right)\mathbb{I}\left(\left\Vert \widehat{\theta}_{T}-\theta_{0}\right\Vert <\varepsilon/2\right)\\
\times & \sup_{\theta_{0}\in N(\bar{\theta})}\iint\left\{ \overline{\pi}_{T}\left(\left.\mathrm{d}u_{0}\right\vert \theta_{0}\right)\upsilon\left(\mathrm{d}\xi\right)\right.\\
\times & \left.\underset{f:\left\Vert f\right\Vert _{BL}\leq2}{\sup}\text{ }\left\vert \int K_{\rho_{T}}\left(u_{0},\mathrm{d}u_{1}\right)f\left\{ \log\left(\frac{\overline{p}(Y_{1:T}\mid\theta_{0}+\xi/\sqrt{T},u_{1})}{\overline{p}(Y_{1:T}\mid\theta_{0},u_{0})}\right)\right\} -\int\varphi(\mathrm{d}w;-\kappa{}^{2}(\theta_{0})/2,\kappa{}^{2}(\theta_{0}))f\left(w\right)\right\vert \right\} \\
= & \mathbb{I}\left(A^{T}\left(\varepsilon\right)\right)\int\varphi\left(\mathrm{d}\theta_{0};\widehat{\theta}_{T},\frac{\overline{\Sigma}}{T}\right)\mathbb{I}\left(\left\Vert \widehat{\theta}_{T}-\theta_{0}\right\Vert <\varepsilon/2\right)\\
\times & \sup_{\theta_{0}\in N(\bar{\theta})}\iint\left\{ \overline{\pi}_{T}\left(\left.\mathrm{d}u_{0}\right\vert \theta_{0}\right)\upsilon\left(\mathrm{d}\xi\right),\right.\\
\times & \left.\underset{f:\left\Vert f\right\Vert _{BL}\leq2}{\sup}\text{ }\left\vert \int K_{\rho_{T}}\left(u_{0},\mathrm{d}u_{1}\right)f\left\{ \log\left(\frac{\overline{p}(Y_{1:T}\mid\theta_{0}+\xi/\sqrt{T},u_{1})}{\overline{p}(Y_{1:T}\mid\theta_{0},u_{0})}\right)\right\} -\int\varphi(\mathrm{d}w;-\kappa{}^{2}(\theta_{0})/2,\kappa{}^{2}(\theta_{0}))f\left(w\right)\right\vert \right\} ,
\end{align*}
which vanishes in probability by Assumption \ref{Assumption:uniformCLT}.
Hence we can extract a further subsequence along which this convergence
happens almost surely. The result follows. 
\end{proof}

\begin{lem}
\label{Lemma:CVofPimpliesCVofPk}If along a subsequence $\left\{ T_{k};k\geq0\right\} $
we have almost surely 
\[
\mathbb{E}\left(\left\vert Q_{T}f(\widetilde{\vartheta}_{0}^{T},\mathsf{U}_{0}^{T})-Pf(\widetilde{\vartheta}_{0}^{T})\right\vert \right)\rightarrow0
\]
for all $f\in B\left(\mathbb{R}^{d}\right)$, then along $\left\{ T_{k};k\geq0\right\} $
we have almost surely 
\[
\mathbb{E}\left(\left\vert Q_{T}^{k}f(\widetilde{\vartheta}_{0}^{T},\mathsf{U}_{0}^{T})-P^{k}f(\widetilde{\vartheta}_{0}^{T})\right\vert \right)\rightarrow0
\]
for all $f\in B\left(\mathbb{R}^{d}\right)$ and all $k\geq1$. 
\end{lem}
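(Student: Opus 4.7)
The plan is to proceed by induction on $k$, with $k=1$ being precisely the hypothesis. For the inductive step, I will exploit two facts: the inductive hypothesis at level $k$, and that any bounded measurable function on $\mathbb{R}^{d}$ can be lifted to a bounded measurable function on the product space $\mathbb{R}^{d}\times\mathcal{U}$ to which the base hypothesis applies. Throughout, I will use that the chain $\{(\widetilde{\vartheta}_{n}^{T},\mathsf{U}_{n}^{T});n\geq 0\}$ is stationary with invariant distribution $\widetilde{\pi}_{T}$ whose $\widetilde{\theta}$-marginal is $\widetilde{\pi}_{T}(\widetilde{\theta})$, so that the law of $(\widetilde{\vartheta}_{n}^{T},\mathsf{U}_{n}^{T})$ does not depend on $n$.

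Fix $f\in B(\mathbb{R}^{d})$ and assume the claim at level $k$. Set $f_{k}:=P^{k}f$, which lies in $B(\mathbb{R}^{d})$ since $\|P^{k}f\|_{\infty}\leq\|f\|_{\infty}$, and view it also as a function on $\mathbb{R}^{d}\times\mathcal{U}$ depending only on its first argument. Write
\[
Q_{T}^{k+1}f-P^{k+1}f \;=\; Q_{T}\bigl(Q_{T}^{k}f-f_{k}\bigr)\;+\;\bigl(Q_{T}f_{k}-Pf_{k}\bigr).
\]
Taking absolute values, evaluating at $(\widetilde{\vartheta}_{0}^{T},\mathsf{U}_{0}^{T})$, and passing the absolute value inside the Markov kernel $Q_{T}$ in the first term yields
\[
\mathbb{E}\bigl(|Q_{T}^{k+1}f(\widetilde{\vartheta}_{0}^{T},\mathsf{U}_{0}^{T})-P^{k+1}f(\widetilde{\vartheta}_{0}^{T})|\bigr)\leq \mathbb{E}\bigl(Q_{T}|Q_{T}^{k}f-f_{k}|(\widetilde{\vartheta}_{0}^{T},\mathsf{U}_{0}^{T})\bigr)+\mathbb{E}\bigl(|Q_{T}f_{k}(\widetilde{\vartheta}_{0}^{T},\mathsf{U}_{0}^{T})-Pf_{k}(\widetilde{\vartheta}_{0}^{T})|\bigr).
\]

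For the second term, since $f_{k}\in B(\mathbb{R}^{d})$, the hypothesis applied to $f_{k}$ yields that this quantity tends to $0$ almost surely along $\{T_{k};k\geq 0\}$. For the first term, stationarity of the chain under $Q_{T}$ gives
\[
\mathbb{E}\bigl(Q_{T}|Q_{T}^{k}f-f_{k}|(\widetilde{\vartheta}_{0}^{T},\mathsf{U}_{0}^{T})\bigr)=\mathbb{E}\bigl(|Q_{T}^{k}f(\widetilde{\vartheta}_{1}^{T},\mathsf{U}_{1}^{T})-f_{k}(\widetilde{\vartheta}_{1}^{T})|\bigr)=\mathbb{E}\bigl(|Q_{T}^{k}f(\widetilde{\vartheta}_{0}^{T},\mathsf{U}_{0}^{T})-P^{k}f(\widetilde{\vartheta}_{0}^{T})|\bigr),
\]
which tends to $0$ almost surely along the subsequence by the inductive hypothesis. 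Summing the two bounds closes the induction.

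There is no real obstacle here beyond bookkeeping: the only subtle point is the mismatch in state spaces between $Q_{T}^{k}f$, which is a function of $(\widetilde{\theta},u)$, and $P^{k}f$, which depends only on $\widetilde{\theta}$. This is resolved by the standard trick of lifting $P^{k}f$ to the product space, after which the hypothesis, applicable to any bounded measurable function of $\widetilde{\theta}$, yields the control on the ``Metropolis--Hastings versus Penalty'' gap at the last step of the $(k+1)$-chain, while stationarity absorbs the propagation of the inductive error through a single application of $Q_{T}$.
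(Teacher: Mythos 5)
Your proof is correct and is essentially identical to the paper's: both use the decomposition $Q_{T}^{k+1}f-P^{k+1}f=Q_{T}\bigl(Q_{T}^{k}f-P^{k}f\bigr)+\bigl(Q_{T}(P^{k}f)-P(P^{k}f)\bigr)$, control the first term via Jensen's inequality and $\widetilde{\pi}_{T}$-invariance of $Q_{T}$, and the second by applying the base hypothesis to $P^{k}f\in B(\mathbb{R}^{d})$. No substantive difference.
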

\begin{rem*}
We emphasize again that the subset on which this almost sure convergence
occurs is independent of $f$ and $k$. 
\end{rem*}
\begin{proof}[Proof of Lemma \textit{\ref{Lemma:CVofPimpliesCVofPk}}]
We prove the result by induction. For $k=1$, this follows from the
assumption. Now we have 
\begin{align*}
\lefteqn{Q_{T}^{k+1}f(\widetilde{\theta}_{0},u_{0})-P^{k+1}f(\widetilde{\theta}_{0})}\\
 & =Q_{T}^{k+1}f(\widetilde{\theta}_{0},u_{0})-Q_{T}(P^{k}f)(\widetilde{\theta}_{0},u_{0})+Q_{T}(P^{k}f)(\widetilde{\theta}_{0},u_{0})-P^{k+1}f(\widetilde{\theta}_{0}).
\end{align*}
and therefore 
\begin{align*}
 & \mathbb{E}\left(\left\vert Q_{T}^{k+1}f(\widetilde{\vartheta}_{0}^{T},\mathsf{U}_{0}^{T})-P^{k+1}f(\widetilde{\vartheta}_{0}^{T})\right\vert \right)\\
 & \leq\mathbb{E}\left(\left\vert Q_{T}^{k+1}f(\widetilde{\vartheta}_{0}^{T},\mathsf{U}_{0}^{T})-Q_{T}(P^{k}f)(\widetilde{\vartheta}_{0}^{T},\mathsf{U}_{0}^{T})\right\vert \right)+\mathbb{E}\left(\left\vert Q_{T}(P^{k}f)(\widetilde{\vartheta}_{0}^{T},\mathsf{U}_{0}^{T})-P^{k+1}f(\widetilde{\vartheta}_{0}^{T})\right\vert \right)\\
 & \leq\mathbb{E}\left(\left\vert Q_{T}^{k}f(\widetilde{\vartheta}_{0}^{T},\mathsf{U}_{0}^{T})-P^{k}f(\widetilde{\vartheta}_{0}^{T},\mathsf{U}_{0}^{T})\right\vert \right)+\mathbb{E}\left(\left\vert Q_{T}(P^{k}f)(\widetilde{\vartheta}_{0}^{T},\mathsf{U}_{0}^{T})-P(P^{k}f)(\widetilde{\vartheta}_{0}^{T})\right\vert \right),
\end{align*}
since $Q_{T}$ is $\widetilde{\pi}_{T}$-invariant. We can now apply
the induction hypothesis to the functions $f$ and $P^{k}f$ as $P^{k}f\in B\left(\mathbb{R}^{d}\right)$. 
\end{proof}
\begin{prop}
\label{Proposition:KernelCVplusBVMequalweakCV}Under the assumptions
of Theorem \ref{Theorem:weakconvergence}, for any subsequence $\left\{ T_{k};k\geq0\right\} $
we can extract a further subsequence $\left\{ T_{k}^{\prime};k\geq0\right\} $
such that almost surely along this subsequence we have 
\[
\mathbb{E}\left[{\textstyle \prod\limits _{i=0}^{n}}f_{i}\left(\widetilde{\vartheta}_{k_{i}}^{T}\right)\right]\rightarrow\mathbb{E}\left[{\textstyle \prod\limits _{i=0}^{n}}f_{i}\left(\widetilde{\vartheta}_{k_{i}}\right)\right]
\]
for any $n\geq0$, any $0\leq k_{0}<k_{1}<k_{2}<\dots<k_{n}\in\mathbb{N}$
and $f_{0},\dots,f_{n}\in B\left(\mathbb{R}^{d}\right)$. 
\end{prop}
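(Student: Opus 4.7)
The plan is to argue by induction on $n$, peeling off the rightmost factor via the Markov property at each step, and using the kernel convergence supplied by Proposition \ref{Proposition:Kernelconverges} together with Lemma \ref{Lemma:CVofPimpliesCVofPk} to replace $Q_T^{k_n-k_{n-1}}f_n$ (a function of $(\widetilde{\theta},u)$) by $P^{k_n-k_{n-1}}f_n$ (a function of $\widetilde{\theta}$ alone), which can then be absorbed into $f_{n-1}$.

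Given any subsequence $\{T_k\}$, I would first invoke Proposition \ref{Proposition:Kernelconverges} to pass to a further subsequence along which (\ref{eq:kernelconvprop}) holds almost surely on a set of full measure independent of $f\in B(\mathbb{R}^d)$; Lemma \ref{Lemma:CVofPimpliesCVofPk} then automatically promotes this to almost-sure convergence of $\mathbb{E}(|Q_T^k f(\widetilde{\vartheta}_0^T,\mathsf{U}_0^T)-P^k f(\widetilde{\vartheta}_0^T)|)\to 0$ for every $k\geq 1$ and every $f\in B(\mathbb{R}^d)$ simultaneously. A final diagonal extraction ensures that Assumption \ref{ass:BVM} also delivers almost-sure total-variation convergence of $\widetilde{\pi}_T$ to $\varphi(\cdot;0,\overline{\Sigma})$ along the resulting subsequence $\{T_k'\}$. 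The base case $n=0$ is then immediate from stationarity and this total-variation convergence. For the inductive step, the Markov property of $\{(\widetilde{\vartheta}_n^T,\mathsf{U}_n^T);n\geq 0\}$ combined with the tower property yields
\[
\mathbb{E}\left[\prod_{i=0}^{n}f_i(\widetilde{\vartheta}_{k_i}^T)\right]=\mathbb{E}\left[\prod_{i=0}^{n-1}f_i(\widetilde{\vartheta}_{k_i}^T)\cdot Q_T^{k_n-k_{n-1}}f_n(\widetilde{\vartheta}_{k_{n-1}}^T,\mathsf{U}_{k_{n-1}}^T)\right].
\]
Stationarity of the extended chain transports the a.s.\ $L^1(\widetilde{\pi}_T)$ convergence from time $0$ to time $k_{n-1}$, and since $\prod_{i=0}^{n-1}f_i$ is bounded, the right-hand side equals
\[
\mathbb{E}\left[\prod_{i=0}^{n-2}f_i(\widetilde{\vartheta}_{k_i}^T)\cdot\widetilde{f}_{n-1}(\widetilde{\vartheta}_{k_{n-1}}^T)\right]+o(1),
\]
where $\widetilde{f}_{n-1}:=f_{n-1}\cdot P^{k_n-k_{n-1}}f_n\in B(\mathbb{R}^d)$. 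Applying the inductive hypothesis to $(f_0,\dots,f_{n-2},\widetilde{f}_{n-1})$ and then unfolding $P^{k_n-k_{n-1}}f_n$ via the Markov property of the limiting chain $\{\widetilde{\vartheta}_n\}$ (whose transition kernel $P$ is $\varphi(\cdot;0,\overline{\Sigma})$-reversible, as is straightforward to verify from the defining formula for $\alpha_P$ by the change of variables $w\mapsto -w$) closes the induction.

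The principal technical subtlety is that the intermediate object $Q_T^{k_n-k_{n-1}}f_n(\widetilde{\theta},u)$ genuinely depends on the auxiliary coordinate $u$, whereas its limit $P^{k_n-k_{n-1}}f_n(\widetilde{\theta})$ does not. The induction closes only because Lemma \ref{Lemma:CVofPimpliesCVofPk} supplies the $L^1(\widetilde{\pi}_T)$ replacement error uniformly in $k$ and in $f\in B(\mathbb{R}^d)$ along a \emph{single} subsequence; once the replacement is performed at a given step the ``$u$''-dependence is extinguished and the new test function $\widetilde{f}_{n-1}$ remains in the class $B(\mathbb{R}^d)$ to which the inductive hypothesis applies.
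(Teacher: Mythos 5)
Your proposal is correct and follows essentially the same route as the paper's proof: extract a subsequence via Proposition \ref{Proposition:Kernelconverges} (which already yields almost-sure total-variation convergence of $\widetilde{\pi}_T$ and hence the base case), then induct by exposing $Q_T^{k_n-k_{n-1}}f_n$ through the Markov property, replacing it by $P^{k_n-k_{n-1}}f_n$ at $L^1(\widetilde{\pi}_T)$ cost controlled by Lemma \ref{Lemma:CVofPimpliesCVofPk} and stationarity, and absorbing the result into the preceding test function. The only cosmetic differences are your explicit diagonal extraction for Assumption \ref{ass:BVM} (already subsumed in the paper's subsequence) and the unnecessary remark about reversibility of $P$, neither of which affects correctness.
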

\begin{proof}[Proof of Proposition \textit{\ref{Proposition:KernelCVplusBVMequalweakCV}}]
In Proposition \ref{Proposition:Kernelconverges}, we have extracted
a subsequence $\left\{ T_{k}^{\prime};k\geq0\right\} $ of $\left\{ T_{k};k\geq0\right\} $
such that along this subsequence 
\[
\int\left\vert \widetilde{\pi}_{T}(\widetilde{\theta}_{0})-\varphi(\widetilde{\theta}_{0};0,\overline{\Sigma})\right\vert \mathrm{d}\widetilde{\theta}_{0}\rightarrow0
\]
almost surely. Hence, along this subsequence, the result\ holds for
$n=0$. For $n=1,$ we have 
\begin{align*}
\lefteqn{\left\vert \mathbb{E}\left[f_{0}(\widetilde{\vartheta}_{k_{0}}^{T})f_{1}(\widetilde{\vartheta}_{k_{1}}^{T})\right]-\mathbb{E}\left[f_{0}(\widetilde{\vartheta}_{k_{0}})f_{1}(\widetilde{\vartheta}_{k_{1}})\right]\right\vert }\\
 & =\left\vert \int f_{0}(\widetilde{\theta}_{0})\widetilde{\pi}_{T}(\widetilde{\theta}_{0},u_{0})Q_{T}^{k_{1}-k_{0}}f_{1}(\widetilde{\theta}_{0},u_{0})\mathrm{d}\widetilde{\theta}_{0}\mathrm{d}u_{0}-\int f_{0}(\widetilde{\theta}_{0})\varphi(\widetilde{\theta}_{0};0,\overline{\Sigma})P^{k_{1}-k_{0}}f_{1}(\widetilde{\theta}_{0})\mathrm{d}\widetilde{\theta}_{0}\right\vert \\
 & \leq\left\vert \int f_{0}(\widetilde{\theta}_{0})\widetilde{\pi}_{T}(\widetilde{\theta}_{0},u_{0})\{Q_{T}^{k_{1}-k_{0}}f_{1}(\widetilde{\theta}_{0},u_{0})-P^{k_{1}-k_{0}}f_{1}(\widetilde{\theta}_{0})\}\mathrm{d}\widetilde{\theta}_{0}\mathrm{d}u_{0}\right\vert \\
 & +\left\vert \int f_{0}(\widetilde{\theta}_{0})\{\widetilde{\pi}_{T}(\widetilde{\theta}_{0})-\varphi(\widetilde{\theta}_{0};0,\overline{\Sigma})\}P^{k_{1}-k_{0}}f_{1}(\widetilde{\theta}_{0})\mathrm{d}\widetilde{\theta}_{0}\right\vert \\
 & \leq\mathbb{E}\left[\left\vert Q_{T}^{k_{1}-k_{0}}f_{1}(\widetilde{\vartheta}_{0}^{T},\mathsf{U}_{0}^{T})-P^{k_{1}-k_{0}}f_{1}(\widetilde{\vartheta}_{0}^{T})\right\vert \right]+\int\left\vert \widetilde{\pi}_{T}(\widetilde{\theta}_{0})-\varphi(\widetilde{\theta}_{0};0,\overline{\Sigma})\right\vert \mathrm{d}\widetilde{\theta}_{0}\text{.}
\end{align*}
Hence from Lemma \ref{Lemma:CVofPimpliesCVofPk}, the result also
follows for $n=1$. Now for any $n\geq1$, we have 
\begin{align}
\mathbb{E}\left[{\textstyle \prod\nolimits _{j=0}^{n+1}}f_{j}(\widetilde{\vartheta}_{k_{j}}^{T})\right] & =\mathbb{E}\left[{\textstyle \prod\nolimits _{j=0}^{n}}f_{j}(\widetilde{\vartheta}_{k_{j}}^{T})Q_{T}^{k_{n+1}-k_{n}}f_{n+1}(\widetilde{\vartheta}_{k_{n}}^{T},U_{k_{n}}^{T})\right]\nonumber \\
 & =\mathbb{E}\left[{\textstyle \prod\nolimits _{j=0}^{n}}f_{j}(\widetilde{\vartheta}_{k_{j}}^{T})P^{k_{n+1}-k_{n}}f_{n+1}(\widetilde{\vartheta}_{k_{n}}^{T})\right]\label{eq:firsttermWEAKCV}\\
 & \qquad+\mathbb{E}\left[{\textstyle \prod\nolimits _{j=0}^{n}}f_{j}(\widetilde{\vartheta}_{k_{j}}^{T})\{Q_{T}^{k_{n+1}-k_{n}}f_{n+1}(\widetilde{\vartheta}_{k_{n}}^{T},U_{k_{n}}^{T})-P^{k_{n+1}-k_{n}}f_{n+1}(\widetilde{\vartheta}_{k_{n}}^{T})\}\right].\label{eq:secondtermWEAKCV}
\end{align}
By the induction hypothesis, the first term (\ref{eq:firsttermWEAKCV})
converges to 
\[
\mathbb{E}\left[{\textstyle \prod\nolimits _{j=0}^{n}}f_{j}(\widetilde{\vartheta}_{k_{j}})P^{k_{n+1}-k_{n}}f_{n+1}(\widetilde{\vartheta}_{k_{n}})\right]=\mathbb{E}\left[{\textstyle \prod\nolimits _{j=0}^{n+1}}f_{j}(\widetilde{\vartheta}_{k_{j}})\right].
\]
So it remains to show that the remainder (\ref{eq:secondtermWEAKCV})
vanishes. We have 
\begin{align*}
 & \left\vert \mathbb{E}\left[{\textstyle \prod\nolimits _{j=0}^{n}}f_{j}(\widetilde{\vartheta}_{k_{j}}^{T})\{Q_{T}^{k_{n+1}-k_{n}}f_{n+1}(\widetilde{\vartheta}_{k_{n}}^{T},U_{k_{n}}^{T})-P^{k_{n+1}-k_{n}}f_{n+1}(\widetilde{\vartheta}_{k_{n}}^{T})\}\right]\right\vert \\
 & \leq\mathbb{E}\left[\left\vert Q_{T}^{k_{n+1}-k_{n}}f_{n+1}(\widetilde{\vartheta}_{k_{n}}^{T},U_{k_{n}}^{T})-P^{k_{n+1}-k_{n}}f_{n+1}(\widetilde{\vartheta}_{k_{n}}^{T})\right\vert \right].
\end{align*}
So using Lemma \ref{Lemma:CVofPimpliesCVofPk}, this term vanishes
and the result follows. 
\end{proof}
\begin{proof}[Proof of Theorem \ref{Theorem:weakconvergence}]
We have shown that for any subsequence $\left\{ T_{k};k\geq0\right\} $
there exists a further subsequence $\left\{ T_{k}^{\prime};k\geq0\right\} $
such that almost surely we have 
\begin{equation}
\mathbb{E}\left({\textstyle \prod\nolimits _{j=0}^{n}}f_{j}(\widetilde{\vartheta}_{k_{j}}^{T})\right)\rightarrow\mathbb{E}\left({\textstyle \prod\nolimits _{j=0}^{n}}f_{j}(\widetilde{\vartheta}_{k_{j}})\right),\label{eq:productbounded}
\end{equation}
for any $n\geq0$, any $0\leq k_{0}<k_{1}<k_{2}<\dots<k_{n}\in\mathbb{N}$
and any bounded functions $f_{0},\dots,f_{n}$. Therefore, we have
by \citep[Proposition 3.4.6]{ethier2005} that on this subsequence
the probability measures on $\left(\mathbb{R}^{d}\right)^{\infty}$
given by the laws of $\left\{ \Theta_{T};T\geq1\right\} $ converge
weakly towards the probability measure induced by the law of $\left\{ \widetilde{\vartheta}_{n};n\geq0\right\} $
almost surely. From this, the result follows from a standard argument;
see, e.g., \citep[Theorem 2.3.2]{durrett2010}. 
\end{proof}

\subsection{Proofs for the bounding chain \ref{sec: optimisation}}
\begin{proof}[Proof of Proposition \textit{\ref{Proposition:Qstarproperties}}]
It is straightforward to check that $Q^{\ast}$ is $\pi-$reversible
as it follows from (\ref{eq:Boundingchainkernel}) that 
\begin{align*}
\pi\left(\mathrm{d}\theta\right)Q^{\ast}\left(\theta,\mathrm{d}\theta^{\prime}\right) & =\varrho_{\text{\textsc{U}}}\left(\kappa\right)\pi\left(\mathrm{d}\theta\right)Q_{\textsc{ex}}\left(\theta,\mathrm{d}\theta^{\prime}\right)+\left\{ 1-\varrho_{\text{\textsc{U}}}\left(\kappa\right)\right\} \pi\left(\mathrm{d}\theta\right)\delta_{\theta}\left(\mathrm{d}\theta^{\prime}\right)\\
 & =\varrho_{\text{\textsc{U}}}\left(\kappa\right)\pi\left(\mathrm{d}\theta^{\prime}\right)Q_{\textsc{ex}}\left(\theta^{\prime},\mathrm{d}\theta\right)+\left\{ 1-\varrho_{\text{\textsc{U}}}\left(\kappa\right)\right\} \pi\left(\mathrm{d}\theta^{\prime}\right)\delta_{\theta^{\prime}}\left(\mathrm{d}\theta\right)\\
 & =\pi\left(\mathrm{d}\theta^{\prime}\right)Q^{\ast}\left(\theta^{\prime},\mathrm{d}\theta\right)
\end{align*}
given $Q_{\textsc{ex}}$ is $\pi-$reversible. Now, we can also rewrite
$Q^{\ast}$ as 
\[
Q^{\ast}\left(\theta,\mathrm{d}\theta^{\prime}\right)=\varrho_{\text{\textsc{U}}}\left(\kappa\right)\alpha_{\textsc{ex}}(\theta,\theta^{\prime})q\left(\theta,\mathrm{d}\theta^{\prime}\right)+\left\{ 1-\varrho_{\text{\textsc{U}}}\left(\kappa\right)\varrho_{\textsc{ex}}\left(\theta\right)\right\} \delta_{\theta}\left(\mathrm{d}\theta^{\prime}\right)
\]
so the acceptance probability of a proposal is given by $\varrho_{\text{\textsc{U}}}\left(\kappa\right)\alpha_{\textsc{ex}}(\theta,\theta^{\prime})\leq\alpha_{\widehat{Q}}\left(\theta,\theta^{\prime}\right)$
for any $\left(\theta,\theta^{\prime}\right)$ as \linebreak{}
$\min\left(1,a\right)\min\left(1,b\right)\leq\min\left(1,ab\right)$
for $a,b\geq0$. The inequality (\ref{eq:inequalityaverageacceptanceproba})
follows directly from this result. Moreover, it also follows that
$\mathrm{IF}\left(h,Q^{\ast}\right)\leq\mathrm{IF}(h,\widehat{Q})$
from \citep[Theorem 4]{Tierney1998}, which is a general state-space
version of the main result in \citep{peskun1973optimum}. To establish
the expression of $\mathrm{IF}\left(h,Q^{\ast}\right)$, we first
note that there exists a probability measure $e(h,Q_{\textsc{ex}})$
on $\left[-1,1\right]$ such that 
\[
\phi_{n}(h,Q_{\textsc{ex}})=\int_{-1}^{1}\lambda^{n}e(h,Q_{\textsc{ex}})(\mathrm{d}\lambda),\quad\mathrm{IF}(h,Q_{\textsc{ex}})=\int_{-1}^{1}\frac{1+\lambda}{1-\lambda}e(h,Q_{\textsc{ex}})(\mathrm{d}\lambda).
\]
This follows from the spectral representation of reversible Markov
chains; see e.g. \citep{KipnisVaradhan86}. From the expression (\ref{eq:Boundingchainkernel})
of $Q^{\ast}$, we have 
\[
\left(Q^{\ast}\right)^{n}=\sum_{k=0}^{n}\left(\begin{array}{c}
n\\
k
\end{array}\right)\varrho_{\text{\textsc{U}}}^{k}\left(\kappa\right)\left\{ 1-\varrho_{\text{\textsc{U}}}\left(\kappa\right)\right\} ^{n-k}Q_{\textsc{ex}}^{k}.
\]
Therefore, if we denote by $X\sim$Bin$(n;\varrho_{\text{\textsc{U}}}\left(\kappa\right))$
the number of acceptances from $0$ to $n$, we have 
\begin{equation}
\phi_{n}(h,Q^{\ast})=\sum\nolimits _{k=0}^{n}\int\lambda^{k}\Pr(X=k)e(h,Q_{\textsc{ex}})(\mathrm{d}\lambda)=\int\{(1-\varrho_{\text{\textsc{U}}}\left(\kappa\right))+\varrho_{\text{\textsc{U}}}\left(\kappa\right)\lambda\}^{n}e(h,Q_{\textsc{ex}})(\mathrm{d}\lambda),\label{eq:spectralrepresentationautocorrPstar}
\end{equation}
Hence, it follows that 
\begin{align*}
\mathrm{IF}\left(h,Q^{\ast}\right) & =1+2{\textstyle \sum\nolimits _{n=1}^{\infty}}\phi_{n}(h,Q^{\ast})\\
 & =\int\frac{1+(1-\varrho_{\text{\textsc{U}}}\left(\kappa\right))+\varrho_{\text{\textsc{U}}}\left(\kappa\right)\lambda}{\varrho_{\text{\textsc{U}}}\left(\kappa\right)-\varrho_{\text{\textsc{U}}}\left(\kappa\right)\lambda}e(h,Q_{\textsc{ex}})(\mathrm{d}\lambda)=\frac{1}{\varrho_{\text{\textsc{U}}}\left(\kappa\right)}(\mathrm{IF}(h,Q_{\textsc{ex}})+1)-1.
\end{align*}

Assuming $Q_{\textsc{ex}}$ is geometrically ergodic, then $\phi_{n}(h,Q_{\textsc{ex}})=\int_{-1+\epsilon}^{1-\epsilon}\lambda^{n}e(h,Q_{\textsc{ex}})(\mathrm{d}\lambda),$
where $0<\epsilon<1$ is the spectral gap. From (\ref{eq:spectralrepresentationautocorrPstar}),
a simple change of variables yields 
\[
\phi_{n}(h,Q^{\ast})=\int_{-1+\epsilon}^{1-\epsilon}\left[(1-\varrho_{\text{\textsc{U}}}\left(\kappa\right))+\varrho_{\text{\textsc{U}}}\left(\kappa\right)\lambda\right]^{n}e(h,Q_{\textsc{ex}})(\mathrm{d}\lambda)=\int_{1-2\varrho_{\text{\textsc{U}}}\left(\kappa\right)+\epsilon\varrho_{\text{\textsc{U}}}\left(\kappa\right)}^{1-\epsilon\varrho_{\text{\textsc{U}}}\left(\kappa\right)}\widetilde{\lambda}^{n}\widetilde{e}(h,Q^{\ast})(\mathrm{d}\widetilde{\lambda}).
\]
Thus $Q^{\ast}$ is also geometrically ergodic. 
\end{proof}
\begin{proof}[Proof of Proposition \textit{\ref{cor:RIF}}]
Parts (i) and (ii) are immediate. To simplify notation, we write
here $\mathrm{ARCT}=\mathrm{ARCT}\left(h,Q^{\ast}\right)$, $\mathrm{\mathrm{IF}=IF}(h,Q_{\textsc{ex}}),$
$\varrho_{\text{\textsc{U}}}(\kappa)=\varrho(\kappa)$, $\varphi\left(x\right)=\varphi\left(x;0,1\right)$.
We note that 
\begin{align*}
\log\left(\mathrm{ARCT}\right) & =\frac{1}{2}\log\left\{ \mathrm{IF}+1-\varrho\left(\kappa\right)\right\} -\log\left(\kappa\right)-\log\left(\varrho\left(\kappa\right)\right)-\frac{1}{2}\log\left\{ \mathrm{IF}\right\} ,
\end{align*}
so we obtain
\[
\frac{\partial\log\left(\mathrm{ARCT}\right)}{\partial\mathrm{IF}}=\frac{1}{2}\left\{ \frac{1}{\mathrm{IF}+1-\varrho\left(\kappa\right)}-\frac{1}{\mathrm{IF}}\right\} <0,
\]
which shows that $\mathrm{ARCT}$ decreases with $\mathrm{IF}$. We
also have 
\begin{align*}
\frac{\partial\log\left(\mathrm{ARCT}\right)}{\partial\kappa} & =\frac{1}{2}\varphi\left(\frac{\kappa}{2}\right)G\left(\kappa\right)-\frac{1}{\kappa},\quad G\left(\kappa\right)=\frac{1}{2-\varrho\left(\kappa\right)}+\frac{2}{\varrho\left(\kappa\right)}.
\end{align*}
The minimizing argument $\hat{\kappa}$ satisfies 
\[
J\left(\hat{\kappa},\mathrm{IF}\right)=\left.\frac{\partial\log\left(\mathrm{ARCT}\right)}{\partial\kappa}\right|_{\hat{\kappa}}=0.
\]
By implicit differentiation 
\begin{equation}
\frac{\mathrm{d}\mathrm{\hat{\kappa}}}{\mathrm{d}\mathrm{IF}}=-\frac{\partial J\left(\hat{\kappa},\mathrm{IF}\right)}{\partial\mathrm{IF}}/\frac{\partial J\left(\hat{\kappa},\mathrm{IF}\right)}{\partial\hat{\kappa}},\label{eq:implicitdiff}
\end{equation}
where
\begin{equation}
\frac{\partial J\left(\widehat{\kappa},\mathrm{IF}\right)}{\partial\mathrm{IF}}=-\frac{1}{2}\varphi\left(\frac{\widehat{\kappa}}{2}\right)\frac{1}{\left(IF+1-\varrho\left(\widehat{\kappa}\right)\right)^{2}}<0\label{eq:numnegative}
\end{equation}
and 
\begin{align}
\frac{\partial J\left(\widehat{\kappa},\mathrm{IF}\right)}{\partial\widehat{\kappa}} & =\frac{1}{2}\varphi\left(\frac{\widehat{\kappa}}{2}\right)\left\{ \frac{\partial G\left(\widehat{\kappa},IF\right)}{\partial\widehat{\kappa}}-\frac{\widehat{\kappa}}{4}G\left(\widehat{\kappa},IF\right)\right\} +\frac{1}{\widehat{\kappa}^{2}}\nonumber \\
 & =\frac{1}{2}\left\{ \varphi\left(\frac{\widehat{\kappa}}{2}\right)\frac{\partial G\left(\widehat{\kappa},IF\right)}{\partial\widehat{\kappa}}-\frac{1}{2}\right\} +\frac{1}{\widehat{\kappa}^{2}}\nonumber \\
 & \geq\frac{1}{2}\left\{ \varphi\left(\frac{\widehat{\kappa}}{2}\right)^{2}\left[\frac{2}{\varrho\left(\widehat{\kappa}\right)^{2}}-\frac{1}{\left(2-\varrho\left(\widehat{\kappa}\right)\right)^{2}}\right]-\frac{1}{2}\right\} +\frac{1}{\widehat{\kappa}^{2}}>0\label{eq:denpositive}
\end{align}
where we have used $J\left(\hat{\kappa},\mathrm{IF}\right)=0$ to
simplify the expression of this derivative. It follows from (\ref{eq:implicitdiff}),
(\ref{eq:numnegative}) and (\ref{eq:denpositive}) that the minimizing
argument of $\mathrm{ARCT}$ increases with $\mathrm{IF}$.
\end{proof}

\subsection{Conditional weak convergence\label{Section:Conditionalweakconvergence}}

Let $C_{b}$ the space of bounded continuous functions and $BL\left(1\right)$
the space of bounded Lipschitz functions $f$ with $\left\Vert f\right\Vert _{BL}\leq1$
where 
\begin{equation}
\left\Vert f\right\Vert _{BL}=\left\Vert f\right\Vert _{\infty}+\sup_{x,y:x\neq y}\text{ }\left\vert \frac{f\left(y\right)-f\left(x\right)}{y-x}\right\vert .\label{def:BLfunction}
\end{equation}
For sake of completeness, we present a version of the conditional
CLT for triangular arrays which allows us to conclude that the expectations
of any function $f\in C_{b}$ converge in probability. We have not
been able to find this precise statement in the literature so we present
a proof mimicking the steps of the proof of \citep[Theorem 7.2]{billingsley1968}
without any claim of originality. 
\begin{lem}
\label{lem:conditionalLindebergCLT} Let $\{X_{n,i}\}_{1\leq i\leq k_{n}}$
be a triangular array of real-valued random variables on a common
probability space $\left(\Omega,\mathcal{F},P\right)$ and $\{\mathcal{F}_{n}\}_{n\geq0}$
a sequence of sub-$\sigma$-algebras of $\mathcal{F}$ such that $\{X_{n,i}\}_{1\leq i\leq k_{n}}$
are conditionally independent given $\mathcal{F}_{n}$, $\mathbb{E}\left(X_{n,i}\mid\mathcal{F}_{n}\right)=0$
and $\sigma_{n,i}^{2}:=\mathbb{E}\left(X_{n,i}^{2}\mid\mathcal{F}_{n}\right)<\infty$.
Suppose also that for some $\sigma^{2}>0$, as $n\rightarrow\infty$
\begin{equation}
s_{n}^{2}:=\sum\nolimits _{i=1}^{k_{n}}\sigma_{n,i}^{2}\overset{P}{\rightarrow}\sigma^{2},\label{eq:sumofvars}
\end{equation}
and that for all $\epsilon>0$, 
\begin{equation}
\sum\nolimits _{i=1}^{k_{n}}\mathbb{E}\left(X_{n,i}^{2}\mathbf{1}\left\{ |X_{n,i}|\geq\epsilon\right\} \Big|\mathcal{F}_{n}\right)\overset{P}{\rightarrow}0.\label{eq:lind}
\end{equation}
Then we have 
\[
\sum\nolimits _{i=1}^{k_{n}}X_{n,i}|\mathcal{F}_{n}\Rightarrow\mu,
\]
where $\mu\left(\mathrm{d}x\right)=\varphi(\mathrm{d}x;0,\sigma^{2})$
in the sense that for all $f\in C_{b}$ 
\[
\mathbb{E}\left[f\left(\sum\nolimits _{i=1}^{k_{n}}X_{n,i}\right)\Big|\mathcal{F}_{n}\right]\overset{P}{\rightarrow}\mu\left(f\right).
\]
In particular, the random measures $\mu_{n}$ defined by a regular
version of the conditional probability distributions 
\begin{equation}
\mu_{n}(A)=\mathbb{P}\left(\left.\sum\nolimits _{i=1}^{k_{n}}X_{n,i}\in A\right\vert \mathcal{F}_{n}\right)\text{ for }A\in\mathcal{B}(\mathbb{R})\label{eq:convergencerandommeasures}
\end{equation}
converge weakly to $\mu$ in probability in the sense that 
\begin{equation}
d_{\mathrm{BL}}\left(\mu_{n},\mu\right):=\underset{f\in BL\left(1\right)}{\sup}\left\vert \mu_{n}\left(f\right)-\mu\left(f\right)\right\vert \overset{P}{\rightarrow}0.\label{eq:convergencerandommeasuresboundedLipschitz}
\end{equation}
\end{lem}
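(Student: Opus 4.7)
The plan is to first establish pointwise-in-$t$ convergence in probability of the conditional characteristic functions
$\phi_n(t) := \mathbb{E}[e^{it S_n}\mid\mathcal{F}_n]$, with $S_n = \sum_{i=1}^{k_n} X_{n,i}$, to the target $e^{-t^2\sigma^2/2}$, and then lift this pointwise statement to the two functional conclusions via a subsequence plus density argument. The overall structure mirrors the classical Lindeberg--Feller proof (cf.\ Billingsley, Theorem 7.2), but with ``$\overset{P}{\to}$'' replacing ``a.s.'' at every step.

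For the first step, conditional independence gives $\phi_n(t) = \prod_{i=1}^{k_n} \phi_{n,i}(t)$ where $\phi_{n,i}(t) = \mathbb{E}[e^{itX_{n,i}}\mid\mathcal{F}_n]$. Using the elementary bound $|e^{ix}-1-ix+x^2/2| \leq \min(x^2,|x|^3/6)$ together with $\mathbb{E}[X_{n,i}\mid\mathcal{F}_n]=0$, I would obtain, for any $\epsilon>0$,
\[
\Bigl| \phi_{n,i}(t) - \bigl(1 - \tfrac{t^2 \sigma_{n,i}^2}{2}\bigr) \Bigr| \;\leq\; \tfrac{t^2}{2}\,\mathbb{E}\bigl[X_{n,i}^2\mathbf{1}\{|X_{n,i}|\geq\epsilon\}\,\big|\,\mathcal{F}_n\bigr] + \tfrac{|t|^3\epsilon}{6}\,\sigma_{n,i}^2 .
\]
First, (\ref{eq:lind}) combined with (\ref{eq:sumofvars}) implies $\max_i \sigma_{n,i}^2 \overset{P}{\to} 0$, which lets me pass to $\log\phi_n(t) = \sum_i \log\phi_{n,i}(t)$ with the standard $\log(1+z) = z + O(|z|^2)$ estimate. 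Summing the display above over $i$, letting $\epsilon\downarrow 0$ after $n\to\infty$, and invoking (\ref{eq:sumofvars}) then gives $\phi_n(t) \overset{P}{\to} e^{-t^2\sigma^2/2}$ for each fixed $t$.

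For the second step I would argue by subsequences. Convergence in probability to a (deterministic) constant is equivalent to having, from every subsequence, a further almost surely convergent sub-subsequence; it therefore suffices to show that from any $\{n_k\}$ one can extract $\{n_{k_\ell}\}$ along which $d_{\mathrm{BL}}(\mu_{n_{k_\ell}},\mu)\to 0$ almost surely. Picking a countable dense $\{t_j\}\subset\mathbb{R}$ and diagonalising, I can arrange that on a set of probability one, $\phi_{n_{k_\ell}}(t_j) \to e^{-t_j^2\sigma^2/2}$ simultaneously for every $j$. Off a null set, L\'evy's continuity theorem applied pathwise to the regular conditional laws $\mu_{n_{k_\ell}}$ then yields weak convergence $\mu_{n_{k_\ell}}\Rightarrow\mu$, which on the real line is equivalent to $d_{\mathrm{BL}}(\mu_{n_{k_\ell}},\mu)\to 0$ (Portmanteau/Dudley). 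The statement $\mathbb{E}[f(S_n)\mid\mathcal{F}_n]\overset{P}{\to}\mu(f)$ for every $f\in C_b$ follows as a special case.

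The main obstacle is the Lindeberg--Feller step itself under the weakened hypothesis that $s_n^2 \overset{P}{\to}\sigma^2$ and (\ref{eq:lind}) hold only in probability: one cannot fix a single ``good path'' on which the deterministic Lindeberg argument runs. I would handle this by the same subsequence trick used in the second step: extract a sub-subsequence along which both (\ref{eq:sumofvars}) and (\ref{eq:lind}) hold almost surely, run the deterministic estimates pathwise on that sub-subsequence to get $\phi_{n_{k_\ell}}(t)\to e^{-t^2\sigma^2/2}$ a.s., and then promote this back to convergence in probability of $\phi_n(t)$ for the full sequence. Every remaining ingredient---the Taylor remainders, the passage from products to sums of logarithms, and the continuity-theorem step---is then routine.
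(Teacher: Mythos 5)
Your proposal is correct in outline but takes a genuinely different route from the paper. You work with conditional characteristic functions $\phi_n(t)=\prod_i\mathbb{E}[e^{itX_{n,i}}\mid\mathcal{F}_n]$, prove $\phi_n(t)\overset{P}{\to}e^{-t^2\sigma^2/2}$ via the classical third-moment Taylor bound, and then lift to the functional conclusions by extracting sub-subsequences on which everything holds almost surely and applying L\'evy's continuity theorem pathwise. The paper instead runs the Lindeberg \emph{swapping} argument directly on conditional expectations of smooth test functions: it introduces auxiliary Gaussians $\eta_{n,i}=\sigma_{n,i}\xi_{n,i}$, telescopes $f(\sum X_{n,i})-f(\sum\eta_{n,i})$ term by term, and shows the accumulated remainders vanish in probability \emph{for the full sequence}, using the in-probability hypotheses (\ref{eq:sumofvars}) and (\ref{eq:lind}) directly inside the error bounds; the $d_{\mathrm{BL}}$ statement is then quoted from Berti--Pratelli--Rigo. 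The paper's route avoids any subsequence extraction in the CLT step and lands immediately on the form of the conclusion ($\mathbb{E}[f(S_n)\mid\mathcal{F}_n]$ for $f\in C_b$); your route is more classical but carries extra bookkeeping. Two details you should make explicit: (i) since (\ref{eq:lind}) is a separate in-probability statement for each $\epsilon$, you must diagonalise over a countable sequence $\epsilon_m\downarrow0$ and use monotonicity of the Lindeberg sums in $\epsilon$ to get the condition almost surely for all $\epsilon$ simultaneously along your sub-subsequence; (ii) L\'evy's continuity theorem needs convergence of $\phi_n(t)$ for all $t$, not just a countable dense set, so you need either the pathwise tightness bound $\mu_n([-K,K]^{\mathtt{C}})\le s_n^2/K^2$ or the equicontinuity estimate $|\phi_n(t)-\phi_n(s)|\le|t-s|\,s_n$ (with $s_n\to\sigma$ a.s.\ along the sub-subsequence) to upgrade dense-set convergence to everywhere. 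Both are one-line fixes, so the argument goes through.
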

\begin{rem}
A random probability measure $\mu$ on a metric space $\mathcal{X}$
equipped with its Borel $\sigma$-algebra $\mathcal{B}\left(\mathcal{X}\right)$
is usually defined as a map $\mu$ from some probability space $\left(\Omega,\mathcal{F},P\right)$
to the space $\mathcal{P}\left(\mathcal{X}\right)$ of probability
measures on $\left(\mathcal{X},\mathcal{B}\left(\mathcal{X}\right)\right)$
such that for all $\omega\in\Omega$, $\mu\left(\omega,\cdot\right)\in\mathcal{P}\left(\mathcal{X}\right)$
and for all $A\in\mathcal{B}\left(\mathcal{X}\right)$, the map $\omega\longmapsto\mu\left(\omega,A\right)$
is measurable. As explained in \citep[Remark 3.20]{crauel2003}, such
a random measure is a measurable map from $\Omega$ to $\mathcal{P}\left(\mathcal{X}\right)$
w.r.t. the Borel $\sigma$-algebra on $\mathcal{P}\left(\mathcal{X}\right)$
induced by the topology of weak convergence. Indeed, from the above
definition of random measures, it follows that for any function $g\in C_{b}\left(\mathcal{X}\right)$,
the map $\omega\longmapsto\mu\left(\omega\right)\left(g\right)$ is
measurable. Since the map $\omega\longmapsto\mu\left(\omega\right)\left(g\right)$
can be written as a composition of 
\[
\omega\longmapsto\mu\left(\omega,\cdot\right)\overset{\mathcal{I}_{g}}{\longmapsto}\mu\left(\omega\right)\left(g\right),
\]
measurability of this map implies that for any $B\in$$\mathcal{B}\left(\mathbb{R}\right)$
we have $(\mathcal{I}_{g}\circ\mu)^{-1}\left(B\right)\in\mathcal{F}$
or equivalently that $\mu^{-1}(\mathcal{I}_{g}^{-1}\left(B\right))\in\mathcal{F}$.
Since the collection of sets $\left\{ \mathcal{I}_{g}^{-1}\left(B\right);B\in\mathcal{B}\left(\mathbb{R}\right),g\in C_{b}\left(\mathcal{X}\right)\right\} $
generates $\mathcal{B}\left(\mathcal{P}\left(\mathcal{X}\right)\right)$,
the mapping $\omega\longmapsto\mu\left(\omega,\cdot\right)$ is measurable
w.r.t. $\mathcal{P}\left(\mathcal{X}\right)$. In particular if $\mathcal{X}$
is Polish, the topology of weak convergence is metrized by the bounded
Lipschitz metric which is then continuous and therefore measurable.
One can easily check that the random probability measures specified
in Lemma \ref{lem:conditionalLindebergCLT} falls within this context.
Therefore the quantity on the l.h.s. of (\ref{eq:convergencerandommeasuresboundedLipschitz})
is measurable. 
\end{rem}
\begin{proof}[Proof of Lemma \textit{\ref{lem:conditionalLindebergCLT}}]
We first prove the result for $f$ bounded and infinitely differentiable,
with bounded derivatives of all orders. Without loss of generality,
we can assume that the probability space also supports a triangular
array of independent standard normal random variables $\{\xi_{n,i}\}_{1\leq i\leq k_{n}}$,
independent of $\{X_{n,i}\}_{n,i}$ and of $\mathcal{F}_{n}$ for
all $n$. For all $n$ and $1\leq i\leq k_{n}$ define $\eta_{n,i}:=\sigma_{n,i}\xi_{n,i}$.

Then using the standard Lindeberg approach, as employed in the proof
of \citep[Theorem 7.2]{billingsley1968}, we use the following telescoping
identity 
\begin{align*}
f\left(\sum\nolimits _{i=1}^{k_{n}}X_{n,i}\right) & =f\left(\sum\nolimits _{i=1}^{k_{n}}X_{n,i}\right)-f\left(\sum\nolimits _{i=1}^{k_{n}-1}X_{n,i}+\eta_{n,k_{n}}\right)\\
 & \qquad+f\left(\sum\nolimits _{i=1}^{k_{n}-1}X_{n,i}+\eta_{n,k_{n}}\right)-f\left(\sum\nolimits _{i=1}^{k_{n}-2}X_{n,i}+\eta_{n,k_{n}-1}+\eta_{n,k_{n}}\right)\\
 & \qquad+\cdots\\
 & \qquad+f\left(X_{n,1}+\sum\nolimits _{j=2}^{k_{n}}\eta_{n,j}\right)-f\left(\sum\nolimits _{i=1}^{k_{n}}\eta_{n,i}\right)\\
 & \qquad+f\left(\sum\nolimits _{i=1}^{k_{n}}\eta_{n,i}\right).
\end{align*}

Writing $Z$ for a standard normal, independent of all other variables
and $\mathcal{F}_{n}$ and $\{X_{n,i}\}_{i}$, notice first that 
\[
\mathbb{E}\left[f\left(\sum\nolimits _{i=1}^{k_{n}}\eta_{n,i}\right)\Big|\mathcal{F}_{n}\right]=\mathbb{E}\left[f(s_{n}Z)\Big|\mathcal{F}_{n}\right]\overset{P}{\rightarrow}\mathbb{E}\left[f(\sigma Z)\right].
\]
Therefore 
\[
\mathbb{E}\left[f\left(\sum\nolimits _{i=1}^{k_{n}}X_{n,i}\right)\Big|\mathcal{F}_{n}\right]=o_{P}(1)+\mathbb{E}\left[f(\sigma Z)\right]+\sum\nolimits _{i=1}^{k_{n}}\mathbb{E}[\mathcal{E}_{n,i}|\mathcal{F}_{n}],
\]
where 
\begin{align*}
\mathbb{E}[\mathcal{E}_{n,i}|\mathcal{F}_{n}] & :=\mathbb{E}\left[f\left(\sum\nolimits _{j=1}^{i}X_{n,j}+\sum\nolimits _{j=i+1}^{k_{n}}\eta_{n,j}\right)\Big|\mathcal{F}_{n}\right]-\mathbb{E}\left[f\left(\sum\nolimits _{j=1}^{i-1}X_{n,j}+\sum\nolimits _{j=i}^{k_{n}}\eta_{n,j}\right)\Big|\mathcal{F}_{n}\right]\\
 & =\mathbb{E}\left[f\left(\sum\nolimits _{j=1}^{i-1}X_{n,j}+\sum\nolimits _{j=i+1}^{k_{n}}\eta_{n,j}+X_{n,i}\right)\Big|\mathcal{F}_{n}\right]\\
 &\qquad -\mathbb{E}\left[f\left(\sum\nolimits _{j=1}^{i-1}X_{n,j}+\sum\nolimits _{j=i+1}^{k_{n}}\eta_{n,j}+\eta_{n,i}\right)\Big|\mathcal{F}_{n}\right].
\end{align*}
Letting 
\[
g(h):=\sup_{x}|f(x+h)-f(x)-f^{\prime}(x)h-\frac{1}{2}f^{\prime\prime}\left(x\right)h^{2}|,
\]
we have by the mean value theorem, and the fact that $f$ has bounded
derivative of order two that 
\begin{align*}
f(x+h)-f(x) & =\int_{x}^{x+h}f^{\prime}(s)\mathrm{d}s=f^{\prime}(x)h+\int_{x}^{x+h}\int_{x}^{s}f^{\prime\prime}(t)\mathrm{d}t\mathrm{d}s\\
 & =f^{\prime}(x)h+\frac{1}{2}f^{\prime\prime}\left(x\right)h^{2}+\int_{x}^{x+h}\int_{x}^{s}f^{\prime\prime}(t)-f^{\prime\prime}(x)\mathrm{d}t\mathrm{d}s,
\end{align*}
and the last term can be bounded above by 
\[
\left\vert \int_{s=x}^{x+h}\int_{t=x}^{s}f^{\prime\prime}(t)-f^{\prime\prime}(x)\mathrm{d}t\mathrm{d}s\right\vert \leq\int_{s=x}^{x+h}\int_{t=x}^{s}\left\vert f^{\prime\prime}(t)-f^{\prime\prime}(x)\right\vert \mathrm{d}t\mathrm{d}s\leq{h^{2}}\Vert f^{\prime\prime}\Vert_{\infty},
\]
and by 
\[
\int_{s=x}^{x+h}\int_{t=x}^{s}\left\vert f^{\prime\prime}(t)-f^{\prime\prime}(x)\right\vert \mathrm{d}t\mathrm{d}s\leq c{h^{3}}\Vert f^{\prime\prime\prime}\Vert_{\infty}
\]
Therefore there exists $K$ such that 
\[
g(h)\leq K\min\{h^{2},|h|^{3}\}.
\]

Let us look at one of these remainder terms. Write 
\begin{align*}
\mathcal{E}_{n,i} & =f\left(\sum\nolimits _{j=1}^{i-1}X_{n,j}+\sum\nolimits _{j=i+1}^{k_{n}}\eta_{n,j}+X_{n,i}\right)-f\left(\sum\nolimits _{j=1}^{i-1}X_{n,j}+\sum\nolimits _{j=i+1}^{k_{n}}\eta_{n,j}+\eta_{n,i}\right)\\
 & =f^{\prime}\left(\sum\nolimits _{j=1}^{i-1}X_{n,j}+\sum\nolimits _{j=i+1}^{k_{n}}\eta_{n,j}\right)(X_{n,i}-\eta_{n,i})\\
 & \qquad+\frac{1}{2}f^{\prime\prime}\left(\sum\nolimits _{j=1}^{i-1}X_{n,j}+\sum\nolimits _{j=i+1}^{k_{n}}\eta_{n,j}\right)(X_{n,i}^{2}-\eta_{n,i}^{2})+R_{n,i},
\end{align*}
where 
\[
|R_{n,i}|\leq g(X_{n,i})+g(\eta_{n,i}).
\]
Taking conditional expectations we observe that 
\begin{align*}
\lefteqn{\mathbb{E}\left[f^{\prime}\left(\sum\nolimits _{j=1}^{i-1}X_{n,j}+\sum\nolimits _{j=i+1}^{k_{n}}\eta_{n,j}\right)(X_{n,i}-\eta_{n,i})\Big|\mathcal{F}_{n}\right]}\\
 & =\mathbb{E}\left[f^{\prime}\left(\sum\nolimits _{j=1}^{i-1}X_{n,j}+\sum\nolimits _{j=i+1}^{k_{n}}\eta_{n,j}\right)\Big|\mathcal{F}_{n}\right]\times\mathbb{E}\left[(X_{n,i}-\eta_{n,i})\Big|\mathcal{F}_{n}\right]=0,
\end{align*}
by independence, conditional independence and the fact that $X_{n,i}$
are conditionally centred. Similarly 
\begin{align*}
\lefteqn{\mathbb{E}\left[f^{\prime\prime}\left(\sum\nolimits _{j=1}^{i-1}X_{n,j}+\sum\nolimits _{j=i+1}^{k_{n}}\eta_{n,j}\right)(X_{n,i}^{2}-\eta_{n,i}^{2})\Big|\mathcal{F}_{n}\right]}\\
 & =\mathbb{E}\left[f^{\prime}\left(\sum\nolimits _{j=1}^{i-1}X_{n,j}+\sum\nolimits _{j=i+1}^{k_{n}}\eta_{n,j}\right)\Big|\mathcal{F}_{n}\right]\times\mathbb{E}\left[(X_{n,i}^{2}-\eta_{n,i}^{2})\Big|\mathcal{F}_{n}\right]=0,
\end{align*}
since 
\[
\mathbb{E}[\eta_{n,i}^{2}|\mathcal{F}_{n}]=\sigma_{n,i}^{2}\mathbb{E}[\xi_{n,i}^{2}|\mathcal{F}_{n}]=\sigma_{n,i}^{2}.
\]

It remains to control the expression 
\[
\sum_{i=1}^{k_{n}}\mathbb{E}[g(X_{n,i})|\mathcal{F}_{n}]+\mathbb{E}[g(\eta_{n,i})|\mathcal{F}_{n}].
\]
For the first term, letting $\epsilon>0$ we have 
\begin{align*}
\sum\nolimits _{i=1}^{k_{n}}\mathbb{E}[g(X_{n,i})|\mathcal{F}_{n}] & =\sum\nolimits _{i=1}^{k_{n}}\mathbb{E}\left[g(X_{n,i})\mathbf{1}\{|X_{n,i}|<\epsilon\}\Big|\mathcal{F}_{n}\right]+\sum\nolimits _{i=1}^{k_{n}}\mathbb{E}\left[g(X_{n,i})\mathbf{1}\{|X_{n,i}|\geq\epsilon\}\Big|\mathcal{F}_{n}\right]\\
 & \leq K\sum\nolimits _{i=1}^{k_{n}}\mathbb{E}\left[|X_{n,i}|^{3}\mathbf{1}\{|X_{n,i}|<\epsilon\}\Big|\mathcal{F}_{n}\right]+K\sum\nolimits _{i=1}^{k_{n}}\mathbb{E}\left[|X_{n,i}|^{2}\mathbf{1}\{|X_{n,i}|\geq\epsilon\}\Big|\mathcal{F}_{n}\right]\\
 & \leq K\epsilon\sum\nolimits _{i=1}^{k_{n}}\mathbb{E}\left[|X_{n,i}|^{2}\mathbf{1}\{|X_{n,i}|<\epsilon\}\Big|\mathcal{F}_{n}\right]+K\sum\nolimits _{i=1}^{k_{n}}\mathbb{E}\left[|X_{n,i}|^{2}\mathbf{1}\{|X_{n,i}|\geq\epsilon\}\Big|\mathcal{F}_{n}\right]\\
 & \leq K\epsilon\sum\nolimits _{i=1}^{k_{n}}\sigma_{n,i}^{2}+K\sum\nolimits _{i=1}^{k_{n}}\mathbb{E}\left[|X_{n,i}|^{2}\mathbf{1}\{|X_{n,i}|\geq\epsilon\}\Big|\mathcal{F}_{n}\right]\overset{P}{\rightarrow}0,
\end{align*}
because $\epsilon>0$ is arbitrary, and the second term vanishes in
probability by hypothesis.

For the second term, we obtain similarly 
\begin{align*}
\sum\nolimits _{i=1}^{k_{n}}\mathbb{E}[g(\eta_{n,i})|\mathcal{F}_{n}] & \leq K\epsilon\sum\nolimits _{i=1}^{k_{n}}\mathbb{E}\left[|\eta_{n,i}|^{2}\mathbf{1}\{|\eta_{n,i}|<\epsilon\}\Big|\mathcal{F}_{n}\right]+K\sum\nolimits _{i=1}^{k_{n}}\mathbb{E}\left[|\eta_{n,i}|^{2}\mathbf{1}\{|\eta_{n,i}|\geq\epsilon\}\Big|\mathcal{F}_{n}\right]\\
 & \leq KC\epsilon\sum\nolimits _{i=1}^{k_{n}}\sigma_{n,i}^{2}+K\sum\nolimits _{i=1}^{k_{n}}\mathbb{E}\left[\sigma_{n,i}^{2}|Z|^{2}\mathbf{1}\{\sigma_{n,i}|Z|\geq\epsilon\}\Big|\mathcal{F}_{n}\right],
\end{align*}
where the second term on the r.h.s. of this inequality satisfies 
\[
K\sum\nolimits _{i=1}^{k_{n}}\epsilon^{2}\mathbb{E}\left[\epsilon^{-2}\sigma_{n,i}^{2}|Z|^{2}\mathbf{1}\{\sigma_{n,i}|Z|\geq\epsilon\}\Big|\mathcal{F}_{n}\right]\leq\frac{K}{\epsilon}\sum\nolimits _{i=1}^{k_{n}}\mathbb{E}\left[\sigma_{n,i}^{3}|Z|^{3}\Big|\mathcal{F}_{n}\right]=\frac{K}{\epsilon}\sum\nolimits _{i=1}^{k_{n}}\sigma_{n,i}^{3}\mathbb{E}[|Z|^{3}].
\]
Since 
\begin{align*}
\sigma_{n,i}^{2} & =\mathbb{E}[X_{n,i}^{2}|\mathcal{F}_{n}]\\
 & =\mathbb{E}\left[X_{n,i}^{2}\mathbf{1}\left\{ |X_{n,i}|\leq\epsilon\right\} \Big|\mathcal{F}_{n}\right]+\mathbb{E}\left[X_{n,i}^{2}\mathbf{1}\left\{ |X_{n,i}|>\epsilon\right\} \Big|\mathcal{F}_{n}\right]\\
 & =\epsilon^{2}+\mathbb{E}\left[X_{n,i}^{2}\mathbf{1}\left\{ |X_{n,i}|>\epsilon\right\} \Big|\mathcal{F}_{n}\right],
\end{align*}

we have that 
\[
\max_{i\leq k_{n}}\sigma_{n,i}^{2}\leq\epsilon^{2}+\sum\nolimits _{i=1}^{k_{n}}\mathbb{E}\left[X_{n,i}^{2}\mathbf{1}\left\{ |X_{n,i}|>\epsilon\right\} \Big|\mathcal{F}_{n}\right].
\]
Since $\epsilon>0$ is arbitrary, $\max_{i\leq k_{n}}\sigma_{n,i}^{2}\overset{P}{\rightarrow}0$,
and therefore 
\[
\sum\nolimits _{i=1}^{k_{n}}\sigma_{n,i}^{3}\leq\max_{i\leq k_{n}}\sigma_{n,i}^{2}\sum\nolimits _{i=1}^{k_{n}}\sigma_{n,i}^{2}\overset{P}{\rightarrow}0.
\]

To complete the proof, let $f\in C_{b}$ and $Z_{n}:=\sum X_{n,i}$.
Let $K>0$ be arbitrary and notice that 
\begin{align*}
\mathbb{E}\left[\left.f(Z_{n})\right\vert \mathcal{F}_{n}\right] & =\mathbb{E}\left[\left.f_{K}(Z_{n})\right\vert \mathcal{F}_{n}\right]+E_{1},\\
|E_{1}| & =\left\vert \mathbb{E}\left[\left.f_{K}(Z_{n})\right\vert \mathcal{F}_{n}\right]-\mathbb{E}\left[\left.f(Z_{n})\right\vert \mathcal{F}_{n}\right]\right\vert \\
 & \leq\mathbb{E}\left[\left.\left\vert f_{K}(Z_{n})-f(Z_{n})\right\vert \right\vert \mathcal{F}_{n}\right]\\
 & \leq(2\Vert f\Vert_{\infty}+1)\mathbb{P}\left(\left.|Z_{n}|\geq K\right\vert \mathcal{F}_{n}\right).
\end{align*}
Since $f_{K}$ is continuous and compactly supported, for any $\epsilon>0$
we can find $g_{K,\epsilon}\in C_{b}^{\infty}$, the space of continuous
functions with continuous bounded derivatives of all orders, such
that $\sup_{x}|g_{K,\epsilon}(x)-f_{K}(x)|<\epsilon$. Therefore we
also have 
\[
\mathbb{E}\left[\left.f_{K}(Z_{n})\right\vert \mathcal{F}_{n}\right]=\mathbb{E}\left[\left.g_{K,\epsilon}(Z_{n})\right\vert \mathcal{F}_{n}\right]+E_{2},
\]
where 
\[
|E_{2}|=\left\vert \mathbb{E}\left[\left.f_{K}(Z_{n})\right\vert \mathcal{F}_{n}\right]-\mathbb{E}\left[\left.g_{K,\epsilon}(Z_{n})\right\vert \mathcal{F}_{n}\right]\right\vert <\epsilon.
\]
Since $g_{K,\epsilon}\in C_{b}^{\infty}$ we know by the first result
that 
\[
\mathbb{E}\left[\left.g_{K,\epsilon}(Z_{n})\right\vert \mathcal{F}_{n}\right]=\mathbb{E}\left[g_{K,\epsilon}(\sigma Z)\right]+E_{3}(n),
\]
where $E_{3}(n)\overset{P}{\rightarrow}0$.

Moreover, we also have that 
\begin{align*}
\mathbb{E}\left[f(\sigma Z)\right] & =\mathbb{E}\left[g_{K,\epsilon}(\sigma Z)\right]+D_{1}+D_{2},\\
\left\vert D_{1}\right\vert  & \leq\mathbb{P}\left(\left\vert \sigma Z\right\vert \geq K\right),\\
\left\vert D_{2}\right\vert  & \leq\epsilon.
\end{align*}
Thus, overall we get that, for any $K>0$ and $\epsilon>0$ 
\begin{align*}
\lefteqn{\left\vert \mathbb{E}\left[\left.f(Z_{n})\right\vert \mathcal{F}_{n}\right]-\mathbb{E}\left[f(\sigma Z)\right]\right\vert }\\
 & \leq2\epsilon+E_{3}(n)+(2\Vert f\Vert_{\infty}+1)\mathbb{P}\left(\left.|Z_{n}|\geq K\right\vert \mathcal{F}_{n}\right)+\mathbb{P}\left(\left\vert \sigma Z\right\vert \geq K\right).
\end{align*}
We know that for any $K,\epsilon>0$, $E_{3}(n)\overset{P}{\rightarrow}0$.
It is clear that as $K\rightarrow\infty$ the last term vanishes,
while we also have that 
\begin{align*}
\mathbb{P}\left(\left.|Z_{n}|\geq K\right\vert \mathcal{F}_{n}\right) & \leq\frac{\mathbb{E}\left(\left.Z_{n}^{2}\right\vert \mathcal{F}_{n}\right)}{K^{2}}\\
 & =\frac{\sum\nolimits _{i=1}^{k_{n}}\sigma_{n,i}^{2}}{K^{2}}\overset{P}{\rightarrow}\frac{\sigma^{2}}{K^{2}},
\end{align*}
as $n\rightarrow\infty$ by assumption. Letting $K\rightarrow\infty$
we obtain the result.

Result (\ref{eq:convergencerandommeasures}) follows from Corollary
2.4 in \citep{bertipratellirigo2006} while (\ref{eq:convergencerandommeasuresboundedLipschitz})
follows from the discussion after Eq. (3) in this paper since $\mu_{n}$
and $\mu$ are measures on $\mathbb{R}$. 
\end{proof}
\begin{lem}
\label{propn1}Suppose that $Z_{n}:=\sum\nolimits _{i=1}^{k_{n}}X_{n,i}$
and $\mathcal{F}_{n}$ are as in Lemma \ref{lem:conditionalLindebergCLT}.
If $T_{n}\overset{P}{\rightarrow}c$, then 
\[
Z_{n}+T_{n}|\mathcal{F}_{n}\Rightarrow\mathcal{N}(c,\sigma^{2}).
\]
\end{lem}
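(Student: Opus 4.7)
The plan is to carry out a conditional Slutsky-type argument, splitting the approximation of $\mathbb{E}[f(\sigma Z + c)]$ into a ``replace $T_n$ by $c$'' step and an application of the conditional CLT already proved in Lemma~\ref{lem:conditionalLindebergCLT}. Fix $f \in C_b$ and decompose
\[
\mathbb{E}[f(Z_n + T_n) \mid \mathcal{F}_n] - \mathbb{E}[f(\sigma Z + c)] = A_n + B_n,
\]
with $A_n := \mathbb{E}[f(Z_n + T_n) - f(Z_n + c) \mid \mathcal{F}_n]$ and $B_n := \mathbb{E}[f(Z_n + c) \mid \mathcal{F}_n] - \mathbb{E}[f(\sigma Z + c)]$. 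The term $B_n$ will vanish in probability by applying Lemma~\ref{lem:conditionalLindebergCLT} to the shifted test function $g(x) := f(x + c)$, which is itself in $C_b$.

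For $A_n$, I would proceed by approximation, mirroring the final paragraph of the proof of Lemma~\ref{lem:conditionalLindebergCLT}. First truncate $f$ to a compactly supported continuous $f_K$ that agrees with $f$ on $[-K, K]$; the associated error is bounded by $(2\Vert f\Vert_\infty + 1)\{\mathbb{P}(|Z_n + T_n| > K \mid \mathcal{F}_n) + \mathbb{P}(|Z_n + c| > K \mid \mathcal{F}_n)\}$, which can be made uniformly small by choosing $K$ large, using Chebyshev's inequality $\mathbb{P}(|Z_n| > K \mid \mathcal{F}_n) \leq s_n^2 / K^2$ (with $s_n^2 \overset{P}{\rightarrow} \sigma^2$) together with $\mathbb{P}(|T_n - c| > K/2 \mid \mathcal{F}_n) \overset{P}{\rightarrow} 0$, the latter following from the tower property and $T_n \overset{P}{\rightarrow} c$. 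Then approximate $f_K$ uniformly by a Lipschitz $g$ with constant $L$. For such $g$ the pointwise bound
\[
|g(Z_n + T_n) - g(Z_n + c)| \leq \min\bigl(L|T_n - c|,\, 2\Vert g\Vert_\infty\bigr)
\]
gives a uniformly bounded random variable tending to $0$ in probability; by bounded convergence its unconditional expectation vanishes, and since the conditional expectations are nonnegative with vanishing mean, they themselves tend to $0$ in probability by Markov's inequality. Sending in turn the Lipschitz approximation error to $0$ and $K \uparrow \infty$ then yields $A_n \overset{P}{\rightarrow} 0$.

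The main obstacle is that $T_n$ is not assumed $\mathcal{F}_n$-measurable, so one cannot pull $|T_n - c|$ outside the conditional expectation directly; this is resolved by the tower-property step sketched above, which converts unconditional convergence in probability of a uniformly bounded nonnegative sequence into convergence in probability of the conditional expectations.
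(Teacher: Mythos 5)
Your proposal is correct and follows essentially the same route as the paper: the same decomposition into a ``replace $T_n$ by $c$'' term and a conditional-CLT term handled by Lemma~\ref{lem:conditionalLindebergCLT} applied to the shifted test function, the same truncation to a compactly supported $f_K$, and the same tower-property device to convert unconditional convergence of a bounded nonnegative quantity into convergence in probability of its conditional expectation. The only cosmetic difference is that you insert an extra Lipschitz approximation of $f_K$ where the paper invokes uniform continuity of $f_K$ directly; these are interchangeable.
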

\begin{proof}[Proof of Lemma ~\ref{propn1}]
\textit{ }Let $f\in C_{b}$. Let $K>0$ be arbitrary, and let $f_{K}$
be continuous so that $f_{K}(x)=f(x)$ for $|x|\leq K$, $f_{K}(x)=0$
for $|x|>K+1$ and $\Vert f_{K}\Vert_{\infty}\leq\Vert f\Vert_{\infty}$.
Then $f_{K}$ is continuous, and compactly supported, so also bounded
and uniformly continuous. Then 
\begin{align*}
\mathbb{E}\left[\left.f(Z_{n}+T_{n})\right\vert \mathcal{F}_{n}\right] & =\mathbb{E}\left[\left.f_{K}(Z_{n}+T_{n})\right\vert \mathcal{F}_{n}\right]+E_{1}(n),\\
|E_{1}(n)| & \leq(2\Vert f\Vert_{\infty}+1)\mathbb{P}\left(\left.|Z_{n}+T_{n}|\geq K\right\vert \mathcal{F}_{n}\right).
\end{align*}
Then 
\begin{align*}
\lefteqn{\left\vert \mathbb{E}\left[\left.f_{K}(Z_{n}+T_{n})\right\vert \mathcal{F}_{n}\right]-\mathbb{E}\left[f_{K}(\sigma Z+c)\right]\right\vert }\\
 & \leq\left\vert \mathbb{E}\left[\left.f_{K}(Z_{n}+T_{n})\right\vert \mathcal{F}_{n}\right]-\mathbb{E}\left[\left.f_{K}(Z_{n}+c)\right\vert \mathcal{F}_{n}\right]\right\vert +\left\vert \mathbb{E}\left[\left.f_{K}(Z_{n}+c)\right\vert \mathcal{F}_{n}\right]-\mathbb{E}\left[f_{K}(\sigma Z+c)\right]\right\vert .
\end{align*}
For the first term notice that since $f_{K}$ is uniformly continuous,
for any $\epsilon>0$, we can find $\epsilon^{\prime}>0$, so that
$|x-y|<\epsilon^{\prime}$ implies that $|f_{K}(x)-f_{K}(y)|<\epsilon$.
Therefore 
\begin{align*}
\lefteqn{\left\vert \mathbb{E}\left[\left.f_{K}(Z_{n}+T_{n})\right\vert \mathcal{F}_{n}\right]-\mathbb{E}\left[\left.f_{K}(Z_{n}+c)\right\vert \mathcal{F}_{n}\right]\right\vert }\\
 & \leq2\Vert f\Vert_{\infty}\mathbb{P}\left(\left.|T_{n}-c|\geq\epsilon^{\prime}\right\vert \mathcal{F}_{n}\right)+\mathbb{E}\left[\left.|f_{K}(Z_{n}+T_{n})-f_{K}(Z_{n}+c)|\mathbf{1}\left\{ |T_{n}-c|\leq\epsilon^{\prime}\right\} \right\vert \mathcal{F}_{n}\right]\\
 & \leq2\Vert f\Vert_{\infty}\mathbb{P}\left(\left.|T_{n}-c|\geq\epsilon^{\prime}\right\vert \mathcal{F}_{n}\right)+\epsilon.
\end{align*}
We know that 
\[
\mathbb{E}\left[\mathbb{P}\left(\left.|T_{n}-c|\geq\epsilon^{\prime}\right\vert \mathcal{F}_{n}\right)\right]\rightarrow0,
\]
and thus 
\[
\mathbb{P}\left(\left.|T_{n}-c|\geq\epsilon^{\prime}\right\vert \mathcal{F}_{n}\right)\overset{P}{\rightarrow}0.
\]
This proves that the first term vanishes in probability. For the second
term notice that $z\mapsto f_{K}(\cdot+c)$ is continuous and bounded,
and therefore the second term also vanishes in probability. 
\end{proof}

\subsection{Proof of Proposition \ref{Proposition:slowdown}\label{Section:Breakdownproof}}

We want to study $\mathrm{IF}\left(\Psi,Q_{T}\right)$ where $\Psi\left(u\right)=\nabla_{\vartheta}\log W(\widehat{\theta},u)$
is only a function of the auxiliary variables. To be precise, we should
write $\Psi^{T}\left(u^{T}\right)=\nabla_{\vartheta}\log W^{T}(\widehat{\theta}_{T},u^{T})$.
However, for presentation brevity, we drop the index $T$ also in
the following proof whenever there is no possible confusion. The kernel
$Q_{T}$ has been designed as a pseudo-marginal-like algorithm targetting
$\pi\left(\mathrm{d}\theta\right)$ while $U$ are auxiliary variables.
However, we can also think of $Q_{T}$ as a pseudo-marginal algorithm
targeting 
\[
\overline{\pi}\left(\mathrm{d}u\right)=\int\overline{\pi}\left(\mathrm{d}\theta,\mathrm{d}u\right)=m\left(\mathrm{d}u\right)\int\frac{\widehat{p}(y\mid\theta,u)}{p(y\mid\theta)}\pi\left(\mathrm{d}\theta\right),
\]
while $\theta$ is an auxiliary variable. In particular, the acceptance
probability of the CPM kernel (\ref{eq:transitionCPM}) can be rewritten
as 
\[
\alpha_{Q}\left\{ \left(\theta,u\right),\left(\theta^{\prime},u^{\prime}\right)\right\} =\min\left\{ 1,r\left(u,u^{\prime}\right)\frac{\overline{\pi}(\left.\theta^{\prime}\right\vert u^{\prime})q\left(\theta^{\prime},\theta\right)}{\overline{\pi}(\left.\theta\right\vert u)q\left(\theta,\theta^{\prime}\right)}\right\} ,
\]
with 
\[
r\left(u,u^{\prime}\right)=\frac{\overline{\pi}(u^{\prime})m\left(u\right)}{\overline{\pi}(u)m\left(u^{\prime}\right)}.
\]
Let us consider the following MH\ algorithm 
\[
\overline{Q}\left\{ \left(\theta,u\right),\left(\mathrm{d}\theta^{\prime},\mathrm{d}u^{\prime}\right)\right\} =K\left(u,\mathrm{d}u^{\prime}\right)\overline{\pi}(\left.\mathrm{d}\theta^{\prime}\right\vert u^{\prime})\alpha_{\overline{Q}}\left(u,u^{\prime}\right)+\left\{ 1-\varrho_{\overline{Q}}\left(u\right)\right\} \delta_{\left(\theta,u\right)}\left(\mathrm{d}\theta^{\prime},\mathrm{d}u^{\prime}\right),
\]
where 
\[
\alpha_{\overline{Q}}\left(u,u^{\prime}\right)=\min\left\{ 1,r\left(u,u^{\prime}\right)\right\} 
\]
and $1-\varrho_{\overline{Q}}\left(u\right)$ is the corresponding
rejection probability. This kernel admits the same invariant distribution
as $Q$ and we have 
\[
\int_{\Theta}\overline{Q}\left\{ \left(\theta,u\right),\left(\mathrm{d}\theta^{\prime},\mathrm{d}u^{\prime}\right)\right\} =\overline{Q}(u,\mathrm{d}u^{\prime})
\]
where 
\[
\overline{Q}\left(u,\mathrm{d}u^{\prime}\right)=K\left(u,\mathrm{d}u^{\prime}\right)\alpha_{\overline{Q}}\left(u,u^{\prime}\right)+\left\{ 1-\varrho_{\overline{Q}}\left(u\right)\right\} \delta_{u}\left(\mathrm{d}u^{\prime}\right)
\]
is the `ideal' marginal MH\ algorithm. The following lemma is an
adaptation from \citep[Proposition 2]{andrieuvihola2015}. 
\begin{lem}
\label{Proposition:Deltadifference}Let $g:\mathcal{U}^{2}\rightarrow\mathbb{R}^{+}$
be a measurable function. Define 
\begin{align*}
\Delta_{\overline{Q}}\left(g\right) & =\iint\overline{\pi}\left(\mathrm{d}\theta,\mathrm{d}u\right)\iint K\left(u,\mathrm{d}u^{\prime}\right)\overline{\pi}(\left.\mathrm{d}\theta^{\prime}\right\vert u^{\prime})\alpha_{\overline{Q}}\left(u,u^{\prime}\right)g\left(u,u^{\prime}\right),\\
\Delta_{Q}\left(g\right) & =\iint\overline{\pi}\left(\mathrm{d}\theta,\mathrm{d}u\right)\iint K\left(u,\mathrm{d}u^{\prime}\right)q\left(\theta,\mathrm{d}\theta^{\prime}\right)\alpha_{Q}\left\{ \left(\theta,u\right),\left(\theta^{\prime},u^{\prime}\right)\right\} g\left(u,u^{\prime}\right).
\end{align*}
Then we have $\Delta_{\overline{Q}}\left(g\right)\geq\Delta_{Q}\left(g\right).$ 
\end{lem}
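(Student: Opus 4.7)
The plan is to reduce the claimed inequality to a pointwise comparison in $(u,u')$ by peeling off the auxiliary integrations and then to invoke Jensen's inequality with the concavity of $x\mapsto\min\{1,x\}$.

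First I would factorize $\overline{\pi}(\mathrm{d}\theta,\mathrm{d}u)=\overline{\pi}(\mathrm{d}u)\overline{\pi}(\mathrm{d}\theta\mid u)$ and, in $\Delta_{\overline{Q}}(g)$, observe that the integrand $\alpha_{\overline{Q}}(u,u')g(u,u')$ does not depend on $(\theta,\theta')$, so that integrating out $\theta$ under $\overline{\pi}(\mathrm{d}\theta\mid u)$ and $\theta'$ under $\overline{\pi}(\mathrm{d}\theta'\mid u')$ yields
\[
\Delta_{\overline{Q}}(g)=\iint\overline{\pi}(\mathrm{d}u)K(u,\mathrm{d}u')\,\alpha_{\overline{Q}}(u,u')g(u,u').
\]
For $\Delta_Q(g)$, the same factorization leaves a nontrivial inner integral over $\theta,\theta'$, so it suffices to show that for each fixed $(u,u')$
\[
I(u,u'):=\iint\overline{\pi}(\mathrm{d}\theta\mid u)\,q(\theta,\mathrm{d}\theta')\min\!\left\{1,r(u,u')\frac{\overline{\pi}(\theta'\mid u')q(\theta',\theta)}{\overline{\pi}(\theta\mid u)q(\theta,\theta')}\right\}\le\min\{1,r(u,u')\}.
\]

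Next I would compute the mean of the ratio appearing inside the $\min$ under the probability measure $\mu(\mathrm{d}\theta,\mathrm{d}\theta'):=\overline{\pi}(\theta\mid u)q(\theta,\theta')\,\mathrm{d}\theta\,\mathrm{d}\theta'$: the densities cancel and one is left with $r(u,u')\int\overline{\pi}(\theta'\mid u')q(\theta',\theta)\,\mathrm{d}\theta\,\mathrm{d}\theta'=r(u,u')$. Since $x\mapsto\min\{1,x\}$ is concave on $\mathbb{R}^+$, Jensen's inequality gives
\[
I(u,u')\le\min\!\left\{1,\int\!\!\int\mu(\mathrm{d}\theta,\mathrm{d}\theta')\,r(u,u')\frac{\overline{\pi}(\theta'\mid u')q(\theta',\theta)}{\overline{\pi}(\theta\mid u)q(\theta,\theta')}\right\}=\min\{1,r(u,u')\}=\alpha_{\overline{Q}}(u,u').
\]
Multiplying by $g(u,u')\ge 0$ and integrating against $\overline{\pi}(\mathrm{d}u)K(u,\mathrm{d}u')$ yields $\Delta_Q(g)\le\Delta_{\overline{Q}}(g)$.

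There is no real obstacle here beyond bookkeeping: the one step that might look delicate is justifying the cancellation in the expectation computation and the use of Jensen's inequality, which both require that $\overline{\pi}(\theta\mid u)$ and $q(\theta,\theta')$ are strictly positive on the relevant domain so that the ratio is well defined $\mu$-almost surely; on the complementary null set the integrand in $I(u,u')$ is bounded by $1$ and contributes nothing to the argument. Non-negativity of $g$ is what turns the pointwise inequality into the integrated one, and the factorization $\overline{\pi}(\mathrm{d}\theta,\mathrm{d}u)=\overline{\pi}(\mathrm{d}u)\overline{\pi}(\mathrm{d}\theta\mid u)$ is what lets the $(u,u')$-marginal of the CPM dynamics be compared directly with the ``ideal'' marginal MH kernel $\overline{Q}$.
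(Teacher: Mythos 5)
Your proof is correct and is essentially the paper's argument: both reduce the comparison to the inner integral over $(\theta,\theta')$ and apply Jensen's inequality to the concave map $x\mapsto\min\{1,x\}$, using the fact that the $\theta$-densities and proposal densities cancel so that the averaged ratio equals $r(u,u')$. A small bonus of your presentation is that by establishing the pointwise bound $I(u,u')\le\alpha_{\overline{Q}}(u,u')$ before integrating against the non-negative measure $g(u,u')\,\overline{\pi}(\mathrm{d}u)K(u,\mathrm{d}u')$, you avoid the truncation/monotone-convergence step the paper needs to pass from bounded to general non-negative $g$.
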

\begin{proof}[Proof of Lemma \textit{\ref{Proposition:Deltadifference}}]
We can write for a bounded function $g$ 
\[
\Delta_{\overline{Q}}\left(g\right)-\Delta_{Q}\left(g\right)=\iint\overline{\pi}\left(\mathrm{d}u\right)K\left(u,\mathrm{d}u^{\prime}\right)g\left(u,u^{\prime}\right)\iint\overline{\pi}\left(\left.\mathrm{d}\theta\right\vert u\right)q\left(\theta,\mathrm{d}\theta^{\prime}\right)\left[\alpha_{\overline{Q}}\left(u,u^{\prime}\right)-\alpha_{Q}\left\{ \left(\theta,u\right),\left(\theta^{\prime},u^{\prime}\right)\right\} \right].
\]
Now we have by Jensen's inequality 
\begin{align*}
\iint\overline{\pi}\left(\left.\mathrm{d}\theta\right\vert u\right)q\left(\theta,\mathrm{d}\theta^{\prime}\right)\alpha_{Q}\left\{ \left(\theta,u\right),\left(\theta^{\prime},u^{\prime}\right)\right\}  & =\iint\overline{\pi}\left(\left.\mathrm{d}\theta\right\vert u\right)q\left(\theta,\mathrm{d}\theta^{\prime}\right)\min\left\{ 1,r\left(u,u^{\prime}\right)\frac{\overline{\pi}(\left.\theta^{\prime}\right\vert u^{\prime})q\left(\theta^{\prime},\theta\right)}{\overline{\pi}(\left.\theta\right\vert u)q\left(\theta,\theta^{\prime}\right)}\right\} \\
 & \leq\min\left\{ 1,r\left(u,u^{\prime}\right)\iint\overline{\pi}\left(\left.\mathrm{d}\theta\right\vert u\right)q\left(\theta,\mathrm{d}\theta^{\prime}\right)\frac{\overline{\pi}(\left.\theta^{\prime}\right\vert u^{\prime})q\left(\theta^{\prime},\theta\right)}{\overline{\pi}(\left.\theta\right\vert u)q\left(\theta,\theta^{\prime}\right)}\right\} \\
 & =\alpha_{\overline{Q}}\left(u,u^{\prime}\right).
\end{align*}
Hence $\Delta_{\overline{Q}}\left(g\right)\geq\Delta_{Q}\left(g\right)$
for bounded $g$. Monotone convergence and a truncation argument shows
this is true for general $g$. 
\end{proof}
The following Proposition follows now directly from Lemma \ref{Proposition:Deltadifference}
and by checking that the arguments of the proof of Theorem 7 in \citep[Proposition 2]{andrieuvihola2015}
are still valid in our scenario. 
\begin{prop}
\label{Proposition:InequalityMattiDrieu}Let $h:\mathcal{U}\rightarrow\mathbb{R}$
satisfying $\overline{\pi}\left(h^{2}\right)<\infty$ then $\mathrm{IF}(h,Q)\geq$
$\mathrm{IF}(h,\overline{Q}).$ 
\end{prop}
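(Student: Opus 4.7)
The plan is to turn the pointwise domination $\Delta_{\overline{Q}}(g)\geq\Delta_Q(g)$ from Lemma~\ref{Proposition:Deltadifference} into a Dirichlet form comparison and then apply the standard Peskun--Tierney / Kipnis--Varadhan spectral comparison, exactly as in the argument underlying \citep[Theorem 7]{andrieuvihola2015}.

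\textbf{Step 1: bridge the state spaces.} The immediate obstacle is that $Q$ acts on $\Theta\times\mathcal{U}$ whereas $\overline{Q}$ acts on $\mathcal{U}$ alone, so one cannot compare them directly. I would therefore lift $\overline{Q}$ to the joint space by the augmented kernel
\[
\widetilde{\overline{Q}}\{(\theta,u),(\mathrm{d}\theta',\mathrm{d}u')\}:=\overline{Q}(u,\mathrm{d}u')\,\overline{\pi}(\mathrm{d}\theta'\mid u'),
\]
and verify two properties by direct calculation: (i) $\widetilde{\overline{Q}}$ is $\overline{\pi}$-reversible (using reversibility of $\overline{Q}$ w.r.t.\ $\overline{\pi}(\mathrm{d}u)$), and (ii) under $\widetilde{\overline{Q}}$ the marginal chain $\{\mathsf{U}_n\}$ is exactly Markov with kernel $\overline{Q}$, so that for any $h\in L^{2}(\overline{\pi}(\mathrm{d}u))$ one has $\mathrm{IF}(h,\widetilde{\overline{Q}})=\mathrm{IF}(h,\overline{Q})$.

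\textbf{Step 2: Dirichlet form inequality via the lemma.} For $f(\theta,u)=h(u)$ the rejection parts of both kernels drop out of the Dirichlet forms, and a direct expansion gives
\[
2\mathcal{E}_{\widetilde{\overline{Q}}}(h)=\iint\overline{\pi}(\mathrm{d}u)\,K(u,\mathrm{d}u')\,\alpha_{\overline{Q}}(u,u')\{h(u')-h(u)\}^{2},
\]
\[
2\mathcal{E}_Q(h)=\iiiint\overline{\pi}(\mathrm{d}\theta,\mathrm{d}u)\,q(\theta,\mathrm{d}\theta')\,K(u,\mathrm{d}u')\,\alpha_Q\{(\theta,u),(\theta',u')\}\{h(u')-h(u)\}^{2}.
\]
Applying Lemma~\ref{Proposition:Deltadifference} with the nonnegative test function $g(u,u')=\{h(u')-h(u)\}^{2}$, and observing that the extra conditional $\overline{\pi}(\mathrm{d}\theta'\mid u')$ appearing in $\Delta_{\overline{Q}}$ integrates to one against the $u$-only function $g$, we obtain $\mathcal{E}_{\widetilde{\overline{Q}}}(h)\geq\mathcal{E}_Q(h)$ for every $h\in L^{2}(\overline{\pi})$ depending only on $u$.

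\textbf{Step 3: spectral comparison.} Both $Q$ and $\widetilde{\overline{Q}}$ are $\overline{\pi}$-reversible kernels on the same joint space, so the Peskun--Tierney comparison via Dirichlet forms (see \citep[Theorem 4]{Tierney1998} and the Kipnis--Varadhan variational formula, which is the engine of \citep[Theorem 7]{andrieuvihola2015}) yields $v(h,Q)\geq v(h,\widetilde{\overline{Q}})$ for any $h\in L^{2}(\overline{\pi})$ with $\overline{\pi}(h^{2})<\infty$; dividing by the common variance $\mathrm{Var}_{\overline{\pi}}(h)$ gives $\mathrm{IF}(h,Q)\geq\mathrm{IF}(h,\widetilde{\overline{Q}})=\mathrm{IF}(h,\overline{Q})$. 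The only thing to be checked from \citep{andrieuvihola2015} — which is what the paper alludes to by ``still valid in our scenario'' — is that the spectral/variational comparison continues to apply when one of the kernels is the lifted marginal kernel $\widetilde{\overline{Q}}$; this is automatic once the joint-space reversibility in Step 1 is in place. The main technical obstacle is thus concentrated in Step 1, namely identifying the correct lift $\widetilde{\overline{Q}}$ and verifying its reversibility, since all subsequent steps follow either from the lemma or from standard reversible comparison theorems.
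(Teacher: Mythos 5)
Your plan is essentially the paper's own proof, which is stated in one line as "combine Lemma \ref{Proposition:Deltadifference} with the arguments of Theorem 7 of \citep{andrieuvihola2015}": your Steps 1--3 reconstruct exactly that argument, and applying the lemma to $g(u,u')=\{h(u')-h(u)\}^{2}$ is the right way to convert it into a Dirichlet-form comparison (note the paper already defines $\overline{Q}$ on the joint space, so your lift $\widetilde{\overline{Q}}$ differs from it only in the rejection part, which is immaterial for $u$-only observables). One caveat on Step 3: since the Dirichlet-form ordering is established only for test functions of $u$, \citep[Theorem 4]{Tierney1998} does not apply off the shelf; what makes the Kipnis--Varadhan variational comparison work is that for the lifted kernel the supremum over all $g\in L^{2}(\overline{\pi})$ collapses to the supremum over $u$-only functions (because $\widetilde{\overline{Q}}g$ depends on $g$ only through $\mathbb{E}[g\mid u]$ and $\Vert g\Vert\geq\Vert\mathbb{E}[g\mid u]\Vert$ by Jensen), while for $Q$ only the trivial direction of the supremum restriction is needed --- so it is not quite ``automatic'' from reversibility alone, but it is precisely the content of the Andrieu--Vihola proof you cite.
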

Armed with Proposition \ref{Proposition:InequalityMattiDrieu}, we
will show that $\mathrm{IF}(\Psi,\overline{Q})\geq C\mathbb{V}_{\overline{\pi}}\left(\Psi\right)$
which implies that $\mathrm{IF}(\Psi,Q)\geq C\mathbb{V}_{\overline{\pi}}\left(\Psi\right)$
almost surely. Let $e\left(\Psi,\overline{Q}\right)\left(\mathrm{d}\lambda\right)$
denote the spectral measure of $\Psi$ w.r.t $Q$; see e.g. \citep{geyer1992},
\citep{KipnisVaradhan86}. This measure $e\left(\Psi,\overline{Q}\right)$
is supported on $\left[-1,1\right]$ as $\overline{Q}$ is reversible
and $\int_{-1}^{1}e\left(\Psi,\overline{Q}\right)\left(\mathrm{d}\lambda\right)=\mathbb{V}_{\overline{\pi}}\left(\Psi\right).$
We will show that 
\begin{equation}
\int\left(1-\lambda\right)e\left(\Psi,\overline{Q}\right)\left(\mathrm{d}\lambda\right)\leq C,\label{eq:ESJDClaim}
\end{equation}
almost surely where the l.h.s. of (\ref{eq:ESJDClaim}) is the Expected
Square Jump Distance (ESJD) of $\Psi$ . By applying Jensen's inequality
w.r.t. the probability measure $e\left(\Psi,\overline{Q}\right)\left(\mathrm{d}\lambda\right)/\mathbb{V}_{\overline{\pi}}\left(\Psi\right)$,
the above inequality will imply that 
\begin{align*}
\mathrm{IF}(\Psi,\overline{Q}) & =2\int\frac{1}{1-\lambda}\frac{e\left(\Psi,\overline{Q}\right)\left(\mathrm{d}\lambda\right)}{\mathbb{V}_{\overline{\pi}}\left(\Psi\right)}-1\\
 & \geq\frac{2}{\int\left(1-\lambda\right)\frac{e\left(\Psi,\overline{Q}\right)\left(\mathrm{d}\lambda\right)}{\mathbb{V}_{\overline{\pi}}\left(\Psi\right)}}-1=2C\mathbb{V}_{\overline{\pi}}\left(\Psi\right)-1
\end{align*}
almost surely. We now show that (\ref{eq:ESJDClaim}) holds, at least
under severe regularity conditions listed in the proof which however
make the calculations tractables. We postulate that this result holds
under much weaker assumptions. 
\begin{prop}
\label{Proposition:ESJDupperbounded}Under Assumptions \ref{ass:BVM}
and \ref{assumption:Taylor}-\ref{ass:THEEND}, the inequality (\ref{eq:ESJDClaim})
holds almost surely. 
\end{prop}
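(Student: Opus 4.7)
The plan is to reduce the bound on $\int(1-\lambda)e(\Psi,\overline{Q})(\mathrm{d}\lambda)$ to a computation of an \emph{Expected Squared Jump Distance} under the simpler kernel $\overline{\pi}\otimes K_{\rho_T}$, and then exploit the tensor-product structure of both $\overline{\pi}(u\mid\theta)=\prod_{t}\overline{\pi}(u_{t}\mid\theta)$ and $K_{\rho_{T}}$ to decompose $\Psi(u)-\Psi(u')$ into $T$ independent, small increments. Concretely, using $\int (1-\lambda)e(\Psi,\overline{Q})(\mathrm{d}\lambda)=\tfrac{1}{2}\widetilde{\mathbb{E}}_{\overline{\pi}\otimes\overline{Q}}[(\Psi(U')-\Psi(U))^{2}]$ and the trivial bound $\alpha_{\overline{Q}}\leq 1$, I would upper bound the ESJD by $\tfrac{1}{2}\int\overline{\pi}(\mathrm{d}u)K_{\rho_{T}}(u,\mathrm{d}u')(\Psi(u')-\Psi(u))^{2}$. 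Writing $\Psi(u)=\sum_{t=1}^{T}D_{t}(u_{t})$ with $D_{t}(u_{t})=\nabla_{\theta}\log\widehat{W}_{t}(\widehat{\theta},u_{t})$, the pairs $(u_{t},u'_{t})$ become mutually independent over $t$, so with $X_{t}:=D_{t}(u'_{t})-D_{t}(u_{t})$ one has
\[
\mathbb{E}\Big[\big(\Psi(u')-\Psi(u)\big)^{2}\Big]=\sum_{t=1}^{T}\mathbb{V}(X_{t})+\Big(\sum_{t=1}^{T}\mathbb{E}(X_{t})\Big)^{2}.
\]

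For the variance sum, I would Taylor expand $X_{t}$ to first order in $\Delta_{t}:=u'_{t}-u_{t}$ and use that under $K_{\rho_{T}}$ one has $\mathbb{E}[\Delta_{t}\mid u_{t}]=(\rho_{T}-1)u_{t}$ and $\mathrm{Cov}(\Delta_{t}\mid u_{t})=(1-\rho_{T}^{2})I$. The dominant contribution to $\mathbb{E}[X_{t}^{2}]$ is $(1-\rho_{T}^{2})\,\mathbb{E}\|\nabla_{u_{t}}D_{t}(u_{t})\|^{2}\simeq 2\delta_{T}\,\mathbb{E}\|\nabla_{u_{t}}D_{t}\|^{2}$. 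Because $D_{t}$ is a ratio of two averages over the $N$ particles, a direct differentiation (analogous to the identity $\partial_{u_{j}}D_{t}=N^{-1}[\partial_{u_{j}}\nabla_{\theta}\varpi_{j}-D_{t}\partial_{u_{j}}\varpi_{j}]+O(N^{-2})$) yields $\|\nabla_{u_{t}}D_{t}\|^{2}=O(1/N)$ under the regularity conditions listed in Assumptions~\ref{assumption:Taylor}--\ref{ass:THEEND}. Summing gives $\sum_{t}\mathbb{V}(X_{t})=O(T\delta_{T}/N)=O(\psi)$, uniformly in $T$.

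For the drift term, the same Taylor expansion produces $\mathbb{E}[X_{t}\mid u_{t}]\simeq\delta_{T}\,\mathcal{S}(D_{t})(u_{t})$, where $\mathcal{S}$ is the Stein operator of (\ref{eq:Steinoperator}). Writing $\overline{\pi}(u_{t}\mid\theta)=m(u_{t})\widehat{W}_{t}(u_{t})$ and integrating by parts against the Gaussian $m$, one obtains the key identity
\[
\mathbb{E}_{\overline{\pi}}[\mathcal{S}(D_{t})]=\mathbb{E}_{m}[\widehat{W}_{t}\,\mathcal{S}(D_{t})]=-\mathbb{E}_{m}[\nabla\widehat{W}_{t}\cdot\nabla D_{t}],
\]
and a direct computation, again using that each partial derivative of both $\widehat{W}_{t}$ and $D_{t}$ contains a factor $1/N$ while only $N$ particle indices contribute, gives this expectation to be $O(1/N)$. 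Therefore $\mathbb{E}[X_{t}]=O(\delta_{T}/N)$ and $(\sum_{t}\mathbb{E}[X_{t}])^{2}=O((T\delta_{T}/N)^{2})=O(\psi^{2})$. Combined with the variance estimate, this yields $\int(1-\lambda)e(\Psi,\overline{Q})(\mathrm{d}\lambda)=O(\psi)+O(\psi^{2})\leq C$ almost surely, which through the Jensen argument following (\ref{eq:ESJDClaim}) gives $\mathrm{IF}(\Psi,\overline{Q})\geq C'\mathbb{V}_{\overline{\pi}}(\Psi)$, and then Proposition~\ref{Proposition:InequalityMattiDrieu} delivers the conclusion.

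The main technical obstacle is controlling the second- and third-order Taylor remainders: since $\|\Delta_{t}\|$ has magnitude of order $\sqrt{N\delta_{T}}$, a naive Cauchy--Schwarz bound $|X_{t}|\leq\|\nabla_{u}D_{t}\|\,\|\Delta_{t}\|$ would produce $\mathbb{E}[X_{t}^{2}]=O(\delta_{T})$ per $t$, giving an unbounded ESJD of order $N\psi$. The improvement to $O(\delta_{T}/N)$ requires fully exploiting the Gaussian conditional structure of $\Delta_{t}$ and the fact that the contribution of each particle index to $\nabla_{u_{t}}D_{t}$ is itself $O(1/N)$ rather than $O(1)$; this is where the moment hypotheses on successive derivatives of $\varpi$ built into the regularity assumptions carry all the weight.
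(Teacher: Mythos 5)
Your overall strategy coincides with the paper's: bound the spectral quantity in (\ref{eq:ESJDClaim}) by the expected squared jump of $\Psi$ under one step of the correlated kernel (dropping the acceptance probability), exploit the factorisation $\Psi(u)=\sum_{t}D_{t}(u_{t})$ together with the product structure of $K_{\rho_{T}}$, and use the fact that each coordinate of $\nabla_{u_{t}}D_{t}$ carries a factor $1/N$, so the fluctuation contributes $O(\delta_{T}/N)$ per observation and $O(\psi)$ in total, while the drift mean cancels to $O(\delta_{T}/N)$ by Gaussian integration by parts. Your ``first-order term plus Stein-operator drift'' is exactly the paper's decomposition $\nabla\eta_{t}^{T}=\nabla L_{t}^{T}+\nabla M_{t}^{T}$ obtained from It\^o's formula on the Ornstein--Uhlenbeck embedding: the stochastic integrals $\nabla M_{t,1}^{T},\nabla M_{t,2}^{T}$ are your gradient fluctuation, the terms $\nabla L_{t,1}^{T},\nabla L_{t,2}^{T}$ are your $\delta_{T}\,\mathcal{S}(\cdot)$ drift, and the cancellation you obtain by integrating by parts appears there as $\widetilde{\mathbb{E}}[\nabla L_{t,1}^{T}]=0$ and $|\widetilde{\mathbb{E}}[\nabla L_{t,2}^{T}]|\leq c/T$.

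There are, however, two concrete gaps. First, the independence over $t$ on which your variance/mean split rests holds under $\overline{\pi}(\cdot\mid\widehat{\theta}_{T})$, not under the actual stationary law $\overline{\pi}(\mathrm{d}u)=m(\mathrm{d}u)\int\{\widehat{p}(y\mid\theta,u)/p(y\mid\theta)\}\pi(\mathrm{d}\theta)$ of the marginal chain $\overline{Q}$, under which the blocks $u_{1},\dots,u_{T}$ are coupled through $\theta$. Passing from the marginal to the conditional is precisely the role of Assumption \ref{assumption:boundedeverywhere}, and controlling the factor $1/(1+\eta_{t}^{T})$ in $\Psi(U')-\Psi(U)=\sum_{t}\nabla\eta_{t}^{T}/(1+\eta_{t}^{T})$ is the role of Assumption \ref{assumption:Taylor}; neither step appears in your argument even though both assumptions are in the hypothesis list. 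Second, the remainder problem you flag at the end is not an incidental technicality: a literal Taylor expansion of $D_{t}(u_{t}')-D_{t}(u_{t})$ in $\Delta_{t}$ needs third $u$-derivatives of $\varpi$ to preserve the mean cancellation (without it the drift contributes $O(T\delta_{T}^{2})=O(\psi^{2}N^{2}/T)$, which is unbounded when $N$ grows faster than $\sqrt{T}$), and the stated assumptions stop at $\partial_{u,u}^{2}\varpi$ and $\partial_{u,u,\vartheta}^{3}\varpi$. The paper sidesteps this by never expanding in $u$: Dynkin's formula represents the conditional increment exactly as $\int_{0}^{\delta_{T}}\mathbb{E}[\mathcal{S}D_{t}(U_{t}(s))\mid U_{t}(0)]\,\mathrm{d}s$ plus an exact stochastic integral of $\nabla_{u}$, so only the derivatives appearing in Assumptions \ref{ass:dut}--\ref{ass:dutheta} are ever required. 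To close your argument you would either need to recast the expansion in this exact semigroup form or strengthen the moment hypotheses.
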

The lengthy proof of this proposition is deferred to the next section.

\subsection{Proof of Proposition \ref{Proposition:ESJDupperbounded}\label{Appendix:breakdown}}

For presentation brevity, we will only prove the result for $d=1$
and $p=1$. Using the notation of Section \ref{Section:ProofCLTnightmare}
and a similar continuous-time embedding approach, we have for $U^{T}\left(\delta_{T}\right)\sim K_{\rho_{T}}\left(U^{T}\left(0\right),\cdot\right)$
\begin{align*}
 & \nabla\log W^{T}(\widehat{\theta}_{T},U\left(\delta_{T}\right))-\nabla\log W^{T}(\widehat{\theta}_{T},U\left(0\right))\\
 & =\sum_{t=1}^{T}\nabla\log\widehat{W}_{t}^{T}(Y_{t}\mid\widehat{\theta}_{T};U\left(\delta_{T}\right))-\nabla\log\widehat{W}_{t}^{T}(Y_{t}\mid\widehat{\theta}_{T};U_{t}\left(0\right))\\
 & =\sum_{t=1}^{T}\nabla\log\left(1+\eta_{t}^{T}\right)=\sum_{t=1}^{T}\frac{\nabla\eta_{t}^{T}}{1+\eta_{t}^{T}}
\end{align*}
where
\[
\eta_{t}^{T}=\frac{\widehat{W}_{t}^{T}(Y_{t}\mid\widehat{\theta}_{T};U\left(\delta_{T}\right)}{\widehat{W}_{t}^{T}(Y_{t}\mid\widehat{\theta}_{T};U_{t}\left(0\right)}-1.
\]
To simplify notation, we have written $\nabla$ to denote the derivative
w.r.t. $\vartheta$ evaluated at $\widehat{\theta}_{T}$. 

We will make here the following assumptions. Here $B\left(\overline{\theta}\right)$
denotes a neighbourhood of $\overline{\theta}$. 
\begin{assumption}
\label{assumption:Taylor}There exists $\epsilon>0$ such that for
$T$ large enough we have $\eta_{t}^{T}>-1+\epsilon$ for all $t$
in probability. 
\end{assumption}
\begin{assumption}
\label{assumption:boundedeverywhere}The function $u\longmapsto\pi_{T}\left(u\right)/\pi_{T}(\left.u\right\vert \widehat{\theta}_{T})$
is bounded w.r.t $u$ for $T$ large enough in probability. 
\end{assumption}
\begin{assumption}
\label{ass:momentsofWhigher}We have 
\[
\limsup_{T}\sup_{\theta\in B\left(\overline{\theta}\right)}\mathbb{E}\left[\left.\left(\widehat{W}_{1}^{T}\left(\theta\right)\right)^{-14}\right\vert Y_{1}\right]<B\left(Y_{1}\right).
\]
\end{assumption}
\begin{assumption}
\label{ass:momentsofgradWhigher}We have 
\[
\limsup_{T}\sup_{\theta\in B\left(\overline{\theta}\right)}\mathbb{E}\left[\left.\left(\nabla\widehat{W}_{1}^{T}\left(\theta\right)\right)^{16}\right\vert Y_{1}\right]<B\left(Y_{1}\right).
\]
\end{assumption}
\begin{assumption}
\label{ass:dut}We have 
\begin{align*}
\limsup_{T}\sup_{\theta\in B\left(\overline{\theta}\right)} \bigg\{
\mathbb{E}\Big[\left\vert \partial_{u}\varpi\left(Y_{1},U_{1,1}\left(0\right);\theta\right)\right\vert ^{8}
& +\left\vert \partial_{u}\varpi\left(Y_{1},U_{1,1}\left(0\right);\theta\right)U_{1,1}\left(0\right)\right\vert ^{8}\Big\vert Y_{1}\Big]\\
&+\mathbb{E}\Big[\left\vert \partial_{u,u}\varpi\left(Y_{1},U_{1,1}\left(0\right);\theta\right)\right\vert ^{8}\Big\vert Y_{1}\Big]\bigg\}<B\left(Y_{1}\right).
\end{align*}
\end{assumption}
\begin{assumption}
\label{ass:dutheta}We have 
\begin{align*}
\limsup_{T}\sup_{\theta\in B\left(\overline{\theta}\right)}\mathbb{E}\bigg[\left\vert \partial_{u,\vartheta}\varpi\left(Y_{1},U_{1,1}\left(0\right);\theta\right)\right\vert ^{16}
&+\left\vert \partial_{u,\vartheta}\varpi\left(Y_{1},U_{1,1}\left(0\right);\theta\right)U_{1,1}\left(0\right)\right\vert ^{8}\\
&+\left\vert \partial_{u,u,\vartheta}\varpi\left(Y_{1},U_{1,1}\left(0\right);\theta\right)\right\vert ^{8}\Big| Y_{1}\bigg]<B\left(Y_{1}\right).
\end{align*}
\end{assumption}
\begin{assumption}
\label{ass:THEEND}We have 
\[
\mathbb{E}_{Y_{1}\sim\mu}\left[B\left(Y_{1}\right)^{4}\right]<\infty.
\]
\end{assumption}
To establish the result of the proposition, it is enough to show that
the ESJD\ is $O\left(1\right)$ almost surely. \emph{All the expectations
in this section have to be understood conditional expectations w.r.t.}
$\mathcal{Y}^{T}$. Under Assumption \ref{assumption:Taylor}, we
have 
\begin{align*}
 & \nabla\log W^{T}(\widehat{\theta},U\left(\delta_{T}\right))-\nabla\log W^{T}(\widehat{\theta},U\left(0\right))=\sum_{t=1}^{T}\nabla\eta_{t}^{T}+\nabla\eta_{t}^{T}.f(\eta_{t}^{T}),
\end{align*}
with$\left\vert f(x)\right\vert \lesssim x$. In the sequel, the generic
notation $c$ is used to denote a constant that is independent of
$T$. To alleviate notations, we do not use distinct indices each
time such a constant appears, and keep using the notation $c$ even
though the corresponding constant may vary from one statement to the
other. However, to avoid confusion, we sometimes make a distinction
between such constants by using $c,$ $c^{\prime}$, $c^{\prime\prime}$
inside an argument. We also further drop the dependence of $W^{T}$,
$\widehat{\theta}_{T}$ and $\delta_{T}$ on $T$ when no confusion
is possible.

Using Assumption \ref{assumption:boundedeverywhere}, the ESJD\ satisfies
\begin{align*}
 & \mathbb{E}_{U\left(0\right)\sim\overline{\pi}}\left[\left(\frac{\nabla W\left(\widehat{\theta},U\left(0\right)\right)}{W\left(\widehat{\theta},U\left(0\right)\right)}-\frac{\nabla W\left(\widehat{\theta},U\left(\delta\right)\right)}{W\left(\widehat{\theta},U\left(\delta\right)\right)}\right)^{2}\cdot1\wedge\left(\frac{\overline{\pi}\left(U\left(\delta\right)\right)}{m\left(U\left(\delta\right)\right)}\right)/\left(\frac{\overline{\pi}\left(U\left(0\right)\right)}{m\left(U\left(0\right)\right)}\right)\right]\\
 & =\widetilde{\mathbb{E}}\left[\frac{\overline{\pi}\left(U\left(0\right)\right)}{\overline{\pi}\left(\left.U\left(0\right)\right\vert \widehat{\theta}\right)}\left(\frac{\nabla W\left(\widehat{\theta},U\left(0\right)\right)}{W\left(\widehat{\theta},U\left(0\right)\right)}-\frac{\nabla W\left(\widehat{\theta},U\left(\delta\right)\right)}{W\left(\widehat{\theta},U\left(\delta\right)\right)}\right)^{2}\cdot1\wedge\left(\frac{\overline{\pi}\left(U\left(\delta\right)\right)}{m\left(U\left(\delta\right)\right)}\right)/\left(\frac{\overline{\pi}\left(U\left(0\right)\right)}{m\left(U\left(0\right)\right)}\right)\right]\\
 & \leq\widetilde{\mathbb{E}}\left[\frac{\overline{\pi}\left(U\left(0\right)\right)}{\overline{\pi}\left(\left.U\left(0\right)\right\vert \widehat{\theta}\right)}\left(\frac{\nabla W\left(\widehat{\theta},U\left(0\right)\right)}{W\left(\widehat{\theta},U\left(0\right)\right)}-\frac{\nabla W\left(\widehat{\theta},U\left(\delta\right)\right)}{W\left(\widehat{\theta},U\left(\delta\right)\right)}\right)^{2}\right].\\
 & \leq c\text{ }\widetilde{\mathbb{E}}\left[\left(\sum_{t=1}^{T}\nabla\eta_{t}^{T}+\nabla\eta_{t}^{T}.f(\eta_{t}^{T})\right)^{2}\right]\\
 & \leq c^{\prime}\left(\widetilde{\mathbb{E}}\left[\left(\sum_{t=1}^{T}\nabla\eta_{t}^{T}\right)^{2}\right]+\widetilde{\mathbb{E}}\left[\left(\sum_{t=1}^{T}\nabla\eta_{t}^{T}.f(\eta_{t}^{T})\right)^{2}\right]\right)
\end{align*}
where $\widetilde{\mathbb{E}}$ is to be understood in the rest of
this section as having $U\left(0\right)\sim\overline{\pi}\left(\left.\cdot\right\vert \widehat{\theta}\right)$.

\subsubsection{Decomposition of $\eta_{t}^{T}$}

We have 
\[
\eta_{t}^{T}=L_{t}^{T}+M_{t}^{T},
\]
where 
\begin{align*}
L_{t}^{T} & =\int_{0}^{\delta_{T}}\frac{1}{N\widehat{W}_{t}^{T}\left(\widehat{\theta}\right)}\sum_{i=1}^{N}\left\{ -\partial_{u}\varpi\left(Y_{t},U_{t,i}\left(s\right);\widehat{\theta}\right)U_{t,i}^{T}\left(s\right)+\partial_{u,u}^{2}\varpi\left(Y_{t},U_{t,i}\left(s\right);\widehat{\theta}\right)\right\} \mathrm{d}s,\\
M_{t}^{T} & =\int_{0}^{\delta_{T}}\frac{\sqrt{2}}{N\widehat{W}_{t}^{T}\left(\widehat{\theta}\right)}\sum_{i=1}^{N}\partial_{u}\varpi\left(Y_{t},U_{t,i}\left(s\right);\widehat{\theta}\right)\mathrm{d}B_{t,s}^{i}.
\end{align*}
Here we write 
\[
\nabla L_{t}^{T}=\nabla L_{t,1}^{T}+\nabla L_{t,2}^{T},\text{ }\nabla M_{t}^{T}=\nabla M_{t,1}^{T}+\nabla M_{t,2}^{T},
\]
where 
\begin{align*}
\nabla L_{t,1}^{T} & =\int_{0}^{\delta_{T}}\frac{1}{N\widehat{W}_{t}^{T}\left(\widehat{\theta}\right)}\sum_{i=1}^{N}\left\{ -\partial_{u,\vartheta}\varpi\left(Y_{t},U_{t,i}\left(s\right);\widehat{\theta}\right)U_{t,i}^{T}\left(s\right)+\partial_{u,u,\vartheta}\varpi\left(Y_{t},U_{t,i}\left(s\right);\widehat{\theta}\right)\right\} \mathrm{d}s,\\
\nabla L_{t,2}^{T} & =-\int_{0}^{\delta_{T}}\frac{\sum_{i=1}^{N}\left\{ -\partial_{u}\varpi\left(Y_{t},U_{t,i}\left(s\right);\widehat{\theta}\right)U_{t,i}^{T}\left(s\right)+\partial_{u,u}^{2}\varpi\left(Y_{t},U_{t,i}\left(s\right);\widehat{\theta}\right)\right\} \nabla\log\widehat{W}_{t}^{T}\left(\widehat{\theta}\right)}{N\widehat{W}_{t}^{T}\left(\widehat{\theta}\right)}\mathrm{d}s,\\
\nabla M_{t,1}^{T} & =\int_{0}^{\delta_{T}}\frac{\sqrt{2}}{N\widehat{W}_{t}^{T}\left(\widehat{\theta}\right)}\sum_{i=1}^{N}\partial_{u,\vartheta}\varpi\left(Y_{t},U_{t,i}\left(s\right);\widehat{\theta}\right)\mathrm{d}B_{t,s}^{i},\\
\nabla M_{t,2}^{T} & =-\int_{0}^{\delta_{T}}\frac{\sqrt{2}\nabla\widehat{W}_{t}^{T}\left(\widehat{\theta}\right)}{N\left(\widehat{W}_{t}^{T}\left(\widehat{\theta}\right)\right)^{2}}\left(\sum_{i=1}^{N}\partial_{u}\varpi\left(Y_{t},U_{t,i}\left(s\right);\widehat{\theta}\right)\right)\mathrm{d}B_{t,s}^{i}.
\end{align*}

\subsubsection{Control of the term $\left(\sum_{t=1}^{T}\nabla\eta_{t}^{T}\right)^{2}$}

By the C$_{p}$ inequality, we have 
\begin{equation}
\widetilde{\mathbb{E}}\left[\left(\sum_{t=1}^{T}\nabla L_{t,1}^{T}+\nabla L_{t,2}^{T}+\nabla M_{t}^{T}\right)^{2}\right]\leq c\text{ }\left(\widetilde{\mathbb{E}}\left[\left(\sum_{t=1}^{T}\nabla L_{t,1}^{T}\right)^{2}\right]+\widetilde{\mathbb{E}}\left[\left(\sum_{t=1}^{T}\nabla L_{t,2}^{T}\right)^{2}\right]+\widetilde{\mathbb{E}}\left[\left(\sum_{t=1}^{T}\nabla M_{t}^{T}\right)^{2}\right]\right).\label{eq:sumgrad}
\end{equation}
We now need to control the three terms appearing on the r.h.s. of
(\ref{eq:sumgrad}).

\emph{Term} $\nabla L_{t,1}^{T}.$ We have 
\begin{equation}
\widetilde{\mathbb{E}}\left[\left(\sum_{t=1}^{T}\nabla L_{t,1}^{T}\right)^{2}\right]=\sum_{t=1}^{T}\widetilde{\mathbb{E}}\left[\left(\nabla L_{t,1}^{T}\right)^{2}\right]+\sum_{t,s:t\neq s}^{T}\widetilde{\mathbb{E}}\left[\nabla L_{t,1}^{T}.\nabla L_{s,1}^{T}\right].\label{eq:termLttilda1}
\end{equation}
We have for $s\neq t$ 
\[
\widetilde{\mathbb{E}}\left[\nabla L_{t,1}^{T}\text{ }\nabla L_{s,1}^{T}\right]=\widetilde{\mathbb{E}}\left[\nabla L_{t,1}^{T}\text{ }\nabla L_{s,1}^{T}\right]=\widetilde{\mathbb{E}}\left[\nabla L_{t,1}^{T}\right]\text{ }\widetilde{\mathbb{E}}\left[\nabla L_{s,1}^{T}\right]=0.
\]
Now we have 
\begin{align*}
 & \widetilde{\mathbb{E}}\left[\left(\nabla L_{t,1}^{T}\right)^{2}\right]\\
 & =\widetilde{\mathbb{E}}\left[\left(\int_{0}^{\delta_{T}}\frac{1}{N\widehat{W}_{t}^{T}\left(\widehat{\theta}\right)}\sum_{i=1}^{N}\left\{ -\partial_{u,\vartheta}\varpi\left(Y_{t},U_{t,i}\left(s\right);\widehat{\theta}\right)U_{t,i}^{T}\left(s\right)+\partial_{u,u,\vartheta}\varpi\left(Y_{t},U_{t,i}\left(s\right);\widehat{\theta}\right)\right\} \mathrm{d}s\right)^{2}\right]\\
 & =\mathbb{E}\Bigg[\left({\displaystyle \prod\nolimits _{r\neq t}}\widehat{W}_{r}^{T}\left(\widehat{\theta}\right)\right)\left(\widehat{W}_{t}^{T}\left(\widehat{\theta}\right)\right)^{-1}\\
 &\qquad \times
 \left(\int_{0}^{\delta_{T}}\frac{1}{N}\sum_{i=1}^{N}\left\{ -\partial_{u,\vartheta}\varpi\left(Y_{t},U_{t,i}\left(s\right);\widehat{\theta}\right)U_{t,i}\left(s\right)
 +\partial_{u,u,\vartheta}\varpi\left(Y_{t},U_{t,i}\left(s\right);\widehat{\theta}\right)\right\} \mathrm{d}s\right)^{2}\Bigg]\\
 & \leq\sup_{\theta\in B\left(\overline{\theta}\right)}\mathbb{E}\left[\left(\int_{0}^{\delta_{T}}\frac{1}{N}\sum_{i=1}^{N}\left\{ -\partial_{u,\vartheta}\varpi\left(Y_{t},U_{t,i}\left(s\right);\theta\right)U_{t,i}^{T}\left(s\right)+\partial_{u,u,\vartheta}\varpi\left(Y_{t},U_{t,i}\left(s\right);\theta\right)\right\} \mathrm{d}s\right)^{4}\right]^{1/2}\\
 & \times\text{ }\sup_{\theta\in B\left(\overline{\theta}\right)}\mathbb{E}\left[\left(\widehat{W}_{t}^{T}\left(\theta\right)\right)^{-2}\right]^{1/2}\text{ }\\
 & \leq c\frac{\left(\delta_{T}\right)^{2}}{N}B\left(Y_{t}\right)^{1/14+1/4}\leq c^{\prime}\frac{N}{T^{2}}B\left(Y_{t}\right)^{1/14+1/4},
\end{align*}
where we have used Assumptions \ref{ass:momentsofWhigher} and \ref{ass:dutheta}
and Cauchy-Schwarz inequality. To establish the last inequality, we
have also used the fact that 
\begin{align*}
\mathbb{E}\left[\left(\int_{0}^{\delta_{T}}f\left(U_{t}\left(s\right)\right)\mathrm{d}s\right)^{4}\right] & =\delta_{T}^{4}\mathbb{E}\left[\left(\int_{0}^{\delta_{T}}f\left(U_{t}\left(s\right)\right)\frac{\mathrm{d}s}{\delta_{T}}\right)^{4}\right]\\
 & \leq\delta_{T}^{4}\mathbb{E}\left[\int_{0}^{\delta_{T}}f^{4}\left(U_{t}\left(s\right)\right)\frac{\mathrm{d}s}{\delta_{T}}\right]\\
 & =\delta_{T}^{4}\mathbb{E}\left[f^{4}\left(U_{t}\left(0\right)\right)\right]\text{ (by stationarity).}
\end{align*}
Further on, we will not emphasize that the constant appearing in our
upper bounds are a power of $B\left(Y_{t}\right)$, which is assumed
to have a finite expectation under the distribution $\mu$ of the
observations under Assumption \ref{ass:THEEND}. Overall (\ref{eq:termLttilda1})
thus contributes $O\left(N/T\right)$ almost surely by the SLLN. To
shorten presentation, we keep the details to a minimum from now on.

\emph{Term} $\nabla L_{t,2}^{T}.$We have 
\begin{equation}
\widetilde{\mathbb{E}}\left[\left(\sum_{t=1}^{T}\nabla L_{t,2}^{T}\right)^{2}\right]=\sum_{t=1}^{T}\widetilde{\mathbb{E}}\left[\left(\nabla L_{t,2}^{T}\right)^{2}\right]+\sum_{t,s:t\neq s}^{T}\widetilde{\mathbb{E}}\left[\nabla L_{t,2}^{T}.\nabla L_{s,2}^{T}\right].\label{eq:termgradLt2}
\end{equation}
We have for $s\neq t$ 
\[
\widetilde{\mathbb{E}}\left[\nabla L_{t,2}^{T}.\nabla L_{s,2}^{T}\right]=\widetilde{\mathbb{E}}\left[\nabla L_{t,2}^{T}\right]\widetilde{\mathbb{E}}\left[\nabla L_{s,2}^{T}\right]
\]
and 
\begin{align*}
 & \left\vert \widetilde{\mathbb{E}}\left[\nabla L_{t,2}^{T}\right]\right\vert =\frac{1}{N}\left\vert \mathbb{E}\left[-\nabla\log\widehat{W}_{t}^{T}\left(\widehat{\theta}\right)\int_{0}^{\delta_{T}}\sum_{i=1}^{N}\left\{ -\partial_{u}\varpi\left(Y_{t},U_{t,i}^{T}\left(s\right);\widehat{\theta}\right)U_{t,i}^{T}\left(s\right)+\partial_{u,u}^{2}\varpi\left(Y_{t},U_{t,i}^{T}\left(s\right);\widehat{\theta}\right)\right\} \mathrm{d}s\right]\right\vert \\
 & \leq\frac{1}{N}\sup_{\theta\in B\left(\overline{\theta}\right)}\mathbb{E}\left[\left(\nabla\log\widehat{W}_{t}^{T}\left(\theta\right)\right)^{2}\right]^{1/2}\\
 & \times\sup_{\theta\in B\left(\overline{\theta}\right)}\mathbb{E}\left[\left(\int_{0}^{\delta_{T}}\sum_{i=1}^{N}\left\{ -\partial_{u}\varpi\left(Y_{t},U_{t,i}^{T}\left(s\right);\theta\right)U_{t,i}^{T}\left(s\right)+\partial_{u,u}^{2}\varpi\left(Y_{t},U_{t,i}^{T}\left(s\right);\theta\right)\right\} \mathrm{d}s\right)^{2}\right]^{^{1/2}}\\
 & =\frac{1}{N}\sup_{\theta\in B\left(\overline{\theta}\right)}\mathbb{E}\left[\frac{\left(\nabla\widehat{W}_{t}^{T}\left(\theta\right)\right)^{2}}{\left(\widehat{W}_{t}^{T}\left(\theta\right)\right)^{2}}\right]^{^{1/2}}\\
 & \times\sup_{\theta\in B\left(\overline{\theta}\right)}\mathbb{E}\left[\left(\int_{0}^{\delta_{T}}\sum_{i=1}^{N}\left\{ -\partial_{u}\varpi\left(Y_{t},U_{t,i}^{T}\left(s\right);\widehat{\theta}\right)U_{t,i}^{T}\left(s\right)+\partial_{u,u}^{2}\varpi\left(Y_{t},U_{t,i}^{T}\left(s\right);\widehat{\theta}\right)\right\} \mathrm{d}s\right)^{2}\right]^{1/2}\\
 & =\frac{1}{N}\sup_{\theta\in B\left(\overline{\theta}\right)}\mathbb{E}\left[\left(\nabla\widehat{W}_{t}^{T}\left(\theta\right)\right)^{4}\right]^{1/2}\sup_{\theta\in B\left(\overline{\theta}\right)}\mathbb{E}\left[\left(\widehat{W}_{t}^{T}\left(\theta\right)\right)^{-4}\right]^{1/2}\\
 & \times\sqrt{N}\sup_{\theta\in B\left(\overline{\theta}\right)}\mathbb{E}\left[\left(\int_{0}^{\delta_{T}}\left\{ -\partial_{u}\varpi\left(Y_{t},U_{t,1}^{T}\left(s\right);\widehat{\theta}\right)U_{t,i}^{T}\left(s\right)+\partial_{u,u}^{2}\varpi\left(Y_{t},U_{t,1}^{T}\left(s\right);\widehat{\theta}\right)\right\} \mathrm{d}s\right)^{2}\right]^{^{1/2}}\\
 & \leq c\frac{1}{N\sqrt{N}}\sqrt{N}\delta_{T}\leq c^{\prime}\frac{1}{T}.
\end{align*}
We have 
\begin{align*}
 & \widetilde{\mathbb{E}}\left[\left(\nabla L_{t,2}^{T}\right)^{2}\right]\\
 & =\mathbb{E}\left[W\left(\widehat{\theta},U\left(0\right)\right).\left(-\int_{0}^{\delta_{T}}\frac{\sum_{i=1}^{N}\left\{ -\partial_{u}\varpi\left(Y_{t},U_{t,i}^{T}\left(s\right);\widehat{\theta}\right)U_{t,i}^{T}\left(s\right)+\partial_{u,u}^{2}\varpi\left(Y_{t},U_{t,i}^{T}\left(s\right);\widehat{\theta}\right)\right\} \nabla\widehat{W}_{t}^{T}}{N\left(\widehat{W}_{t}^{T}\right)^{2}}\mathrm{d}s\right)^{2}\right]\\
 & =\mathbb{E}\left[\frac{\left(\nabla\log\widehat{W}_{t}^{T}\left(\widehat{\theta}\right)\right)^{2}}{\widehat{W}_{t}^{T}\left(\widehat{\theta}\right)}.\left(-\int_{0}^{\delta_{T}}\frac{1}{N}\sum_{i=1}^{N}\left\{ -\partial_{u}\varpi\left(Y_{t},U_{t,i}^{T}\left(s\right);\widehat{\theta}\right)U_{t,i}^{T}\left(s\right)+\partial_{u,u}^{2}\varpi\left(Y_{t},U_{t,i}^{T}\left(s\right);\widehat{\theta}\right)\right\} \mathrm{d}s\right)^{2}\right]\\
 & =\sup_{\theta\in B\left(\overline{\theta}\right)}\mathbb{E}\left[\frac{\left(\nabla\log\widehat{W}_{t}^{T}\left(\theta\right)\right)^{4}}{\left(\widehat{W}_{t}^{T}\left(\theta\right)\right)^{2}}\right]^{1/2}\\
 & \times\sup_{\theta\in B\left(\overline{\theta}\right)}\mathbb{E}\left[\left(-\int_{0}^{\delta_{T}}\frac{1}{N}\sum_{i=1}^{N}\left\{ -\partial_{u}\varpi\left(Y_{t},U_{t,i}^{T}\left(s\right);\theta\right)U_{t,i}^{T}\left(s\right)+\partial_{u,u}^{2}\varpi\left(Y_{t},U_{t,i}^{T}\left(s\right);\theta\right)\right\} \mathrm{d}s\right)^{4}\right]^{^{1/2}}\\
 & \leq c\frac{N}{T^{2}}\sup_{\theta\in B\left(\overline{\theta}\right)}\mathbb{E}\left[\left(\widehat{W}_{t}^{T}\left(\theta\right)\right)^{-12}\right]^{1/4}\sup_{\theta\in B\left(\overline{\theta}\right)}\mathbb{E}\left[\left(\nabla\widehat{W}_{t}^{T}\left(\theta\right)\right)^{8}\right]^{1/4}\\
 & \leq c^{\prime}\frac{N}{T^{2}}.
\end{align*}
Thus the term (\ref{eq:termgradLt2}) is $O\left(1\right)$ almost
surely.

\emph{Term} $\nabla M_{t,1}^{T}.$ We have 
\begin{equation}
\widetilde{\mathbb{E}}\left[\left(\sum_{t=1}^{T}\nabla M_{t,1}^{T}\right)^{2}\right]=\sum_{t=1}^{T}\widetilde{\mathbb{E}}\left[\left(\nabla M_{t,1}^{T}\right)^{2}\right]+\sum_{t,s:t\neq s}^{T}\widetilde{\mathbb{E}}\left[\nabla M_{t,1}^{T}.\nabla M_{s,1}^{T}\right].\label{eq:termgradMt1}
\end{equation}
We have for $s\neq t$ 
\[
\widetilde{\mathbb{E}}\left[\nabla M_{t,1}^{T}.\nabla M_{s,1}^{T}\right]=\widetilde{\mathbb{E}}\left[\nabla M_{t,1}^{T}\right]\widetilde{\mathbb{E}}\left[\nabla M_{s,1}^{T}\right]=0.
\]
Now we have 
\begin{align*}
\widetilde{\mathbb{E}}\left[\left(\nabla M_{t,1}^{T}\right)^{2}\right] & =\mathbb{E}\left[W\left(\widehat{\theta},U\left(0\right)\right).\left(\int_{0}^{\delta_{T}}\frac{\sqrt{2}}{N\widehat{W}_{t}^{T}}\sum_{i=1}^{N}\partial_{u,\vartheta}\varpi\left(Y_{t},U_{t,i}^{T}\left(s\right);\widehat{\theta}\right)\mathrm{d}B_{t,s}^{i}\right)^{2}\right]\\
 & =\mathbb{E}\left[\left(\int_{0}^{\delta_{T}}\frac{\sqrt{2}}{N\sqrt{\widehat{W}_{t}^{T}\left(\widehat{\theta}\right)}}\sum_{i=1}^{N}\partial_{u,\vartheta}\varpi\left(Y_{t},U_{t,i}^{T}\left(s\right);\widehat{\theta}\right)\mathrm{d}B_{t,s}^{i}\right)^{2}\right]\\
 & =\frac{2}{N^{2}}\sum_{i=1}^{N}\int_{0}^{\delta_{T}}\mathbb{E}\left[\frac{1}{\widehat{W}_{t}^{T}\left(\widehat{\theta}\right)}\left\{ \partial_{u,\vartheta}\varpi\left(Y_{t},U_{t,i}^{T}\left(s\right);\widehat{\theta}\right)\right\} ^{2}\right]\mathrm{d}s\\
 & \leq c\frac{\sup_{\theta\in B\left(\overline{\theta}\right)}\mathbb{E}\left[\left(\widehat{W}_{t}^{T}\left(\theta\right)\right)^{-2}\right]^{1/2}}{N^{2}}\sum_{i=1}^{N}\int_{0}^{\delta_{T}}\sup_{\theta\in B\left(\overline{\theta}\right)}\mathbb{E}\left[\left\{ \partial_{u,\vartheta}\varpi\left(Y_{t},U_{t,i}^{T}\left(s\right);\theta\right)\right\} ^{4}\right]^{1/2}\mathrm{d}s\\
 & \leq\frac{c^{\prime}}{T}.
\end{align*}
Hence the term (\ref{eq:termgradMt1}) is overall $O\left(1\right)$
almost surely.

\emph{Term }$\nabla M_{t,2}^{T}.$ We have 
\begin{equation}
\widetilde{\mathbb{E}}\left[\left(\sum_{t=1}^{T}\nabla M_{t,2}^{T}\right)^{2}\right]=\sum_{t=1}^{T}\widetilde{\mathbb{E}}\left[\left(\nabla M_{t,2}^{T}\right)^{2}\right]+\sum_{t,s:t\neq s}^{T}\widetilde{\mathbb{E}}\left[\nabla M_{t,2}^{T}\text{ }\nabla M_{s,2}^{T}\right].\label{eq:termgradMt2}
\end{equation}
We have for $s\neq t$ 
\[
\widetilde{\mathbb{E}}\left[\nabla M_{t,2}^{T}\text{ }\nabla M_{s,2}^{T}\right]=\widetilde{\mathbb{E}}\left[\nabla M_{t,2}^{T}\right]\text{ }\widetilde{\mathbb{E}}\left[\nabla M_{s,2}^{T}\right]
\]
where 
\begin{align*}
 & \widetilde{\mathbb{E}}\left[-\int_{0}^{\delta_{T}}\frac{\sqrt{2}\nabla\widehat{W}_{t}^{T}\left(\widehat{\theta}\right)}{N\left(\widehat{W}_{t}^{T}\left(\widehat{\theta}\right)\right)^{2}}\left(\sum_{i=1}^{N}\partial_{u}\varpi\left(Y_{t},U_{t,i}^{T}\left(s\right);\widehat{\theta}\right)\right)\mathrm{d}B_{t,s}^{i}\right]\\
 & =\mathbb{E}\left[-\int_{0}^{\delta_{T}}\frac{\sqrt{2}\nabla\widehat{W}_{t}^{T}\left(\widehat{\theta}\right)}{N\widehat{W}_{t}^{T}\left(\widehat{\theta}\right)}\left(\sum_{i=1}^{N}\partial_{u}\varpi\left(Y_{t},U_{t,i}^{T}\left(s\right);\widehat{\theta}\right)\right)\mathrm{d}B_{t,s}^{i}\right]=0
\end{align*}

We have 
\begin{align*}
\widetilde{\mathbb{E}}\left[\left(\nabla M_{t,2}^{T}\right)^{2}\right] & =\widetilde{\mathbb{E}}\left[\left(\sum_{i=1}^{N}\frac{\sqrt{2}\nabla\widehat{W}_{t}^{T}\left(\widehat{\theta}\right)}{N\left(\widehat{W}_{t}^{T}\left(\widehat{\theta}\right)\right)^{2}}\int_{0}^{\delta_{T}}\partial_{u}\varpi\left(Y_{t},U_{t,i}^{T}\left(s\right);\widehat{\theta}\right)\mathrm{d}B_{t,s}^{i}\right)^{2}\right]\\
 & =\mathbb{E}\left[\left(\sum_{i=1}^{N}\frac{\sqrt{2}\nabla\widehat{W}_{t}^{T}}{N\left(\widehat{W}_{t}^{T}\right)^{3/2}}\int_{0}^{\delta_{T}}\partial_{u}\varpi\left(Y_{t},U_{t,i}^{T}\left(s\right);\widehat{\theta}\right)\mathrm{d}B_{t,s}^{i}\right)^{2}\right]\\
 & =\frac{1}{N}\mathbb{E}\left[\frac{2\left(\nabla\widehat{W}_{t}^{T}\left(\widehat{\theta}\right)\right)^{2}}{\left(\widehat{W}_{t}^{T}\left(\widehat{\theta}\right)\right)^{3}}\int_{0}^{\delta_{T}}\left\{ \partial_{u}\varpi\left(Y_{t},U_{t,i}^{T}\left(s\right);\widehat{\theta}\right)\right\} ^{2}\mathrm{d}s\right]\\
 & =\frac{1}{N}\int_{0}^{\delta_{T}}\mathbb{E}\left[\frac{2\left(\nabla\widehat{W}_{t}^{T}\left(\widehat{\theta}\right)\right)^{2}}{\left(\widehat{W}_{t}^{T}\left(\widehat{\theta}\right)\right)^{3}}\left\{ \partial_{u}\varpi\left(Y_{t},U_{t,i}^{T}\left(s\right);\widehat{\theta}\right)\right\} ^{2}\right]\mathrm{d}s\\
 & \leq\frac{\delta_{T}}{N}\sup_{\theta\in B\left(\overline{\theta}\right)}\mathbb{E}\left[\left(\frac{\sqrt{2}\left(\nabla\widehat{W}_{t}^{T}\left(\theta\right)\right)^{2}}{\left(\widehat{W}_{t}^{T}\left(\theta\right)\right)^{3}}\right)^{2}\right]^{1/2}\sup_{\theta\in B\left(\overline{\theta}\right)}\mathbb{E}\left[\left\{ \partial_{u}\varpi\left(Y_{t},U_{t,1}^{T}\left(0\right);\theta\right)\right\} ^{4}\right]^{1/2}\\
 & \leq\frac{c}{T}.
\end{align*}
Hence the term (\ref{eq:termgradMt2}) is overall $O\left(1\right)$
almost surely.

\subsubsection{Control of $\sum_{t=1}^{T}\nabla\eta_{t}^{T}.f(\eta_{t}^{T})$}

We have 
\begin{align*}
\widetilde{\mathbb{E}}\left[\left(\sum_{t=1}^{T}\nabla\eta_{t}^{T}.f(\eta_{t}^{T})\right)^{2}\right] & \leq\widetilde{\mathbb{E}}\left[\sum_{t=1}^{T}\left(\nabla\eta_{t}^{T}\right)^{2}.\sum_{t=1}^{T}f(\eta_{t}^{T})^{2}\right]\\
 & \leq c\text{ }\widetilde{\mathbb{E}}\left[\sum_{t=1}^{T}\left(\nabla\eta_{t}^{T}\right)^{2}.\sum_{t=1}^{T}\eta_{t}^{T}{}^{2}\right]\\
 & \leq c^{\prime}\text{ }\widetilde{\mathbb{E}}\left[\left(\sum_{t=1}^{T}\left(\nabla\eta_{t}^{T}\right)^{2}\right)^{2}\right]^{1/2}\widetilde{\mathbb{E}}\left[\left(\sum_{t=1}^{T}\left(\eta_{t}^{T}\right)^{2}\right)^{2}\right]^{1/2}.
\end{align*}

\emph{Control of} $\left(\sum_{t=1}^{T}\left(\nabla\eta_{t}^{T}\right)^{2}\right)^{2}$.
We have 
\begin{align*}
\widetilde{\mathbb{E}}\left[\left(\sum_{t=1}^{T}\left(\nabla\eta_{t}^{T}\right)^{2}\right)^{2}\right] & =\sum_{t=1}^{T}\widetilde{\mathbb{E}}\left[\left(\nabla\eta_{t}^{T}\right)^{4}\right]+\sum_{t,s:t\neq s}^{T}\widetilde{\mathbb{E}}\left[\left(\nabla\eta_{t}^{T}\right)^{2}\left(\nabla\eta_{s}^{T}\right)^{2}\right]\\
 & =\sum_{t=1}^{T}\widetilde{\mathbb{E}}\left[\left(\nabla\eta_{t}^{T}\right)^{4}\right]+\sum_{t,s:t\neq s}^{T}\widetilde{\mathbb{E}}\left[\left(\nabla\eta_{t}^{T}\right)^{2}\right]\widetilde{\mathbb{E}}\left[\left(\nabla\eta_{s}^{T}\right)^{2}\right].
\end{align*}

We have 
\begin{align*}
\widetilde{\mathbb{E}}\left[\left(\nabla\eta_{t}^{T}\right)^{4}\right] & =\widetilde{\mathbb{E}}\left[\left(\nabla L_{t,1}^{T}+\nabla L_{t,2}^{T}+\nabla M_{t,1}^{T}+\nabla M_{t,2}^{T}\right)^{4}\right]\\
 & \leq c\text{ }\left(\widetilde{\mathbb{E}}\left[\left(\nabla L_{t,1}^{T}\right)^{4}\right]+\widetilde{\mathbb{E}}\left[\left(\nabla L_{t,2}^{T}\right)^{4}\right]+\widetilde{\mathbb{E}}\left[\left(\nabla M_{t,1}^{T}\right)^{4}\right]+\widetilde{\mathbb{E}}\left[\left(\nabla M_{t,2}^{T}\right)^{4}\right]\right).
\end{align*}
And now 
\begin{align*}
 & \widetilde{\mathbb{E}}\left[\left(\nabla L_{t,1}^{T}\right)^{4}\right]\\
 & =\mathbb{E}\left[W\left(\widehat{\theta},U\left(0\right)\right).\left(\nabla L_{t,1}^{T}\right)^{4}\right]\\
 & =\mathbb{E}\left[\left(\widehat{W}_{t}^{T}\left(\widehat{\theta}\right)\right)^{-3}.\left(\int_{0}^{\delta_{T}}\frac{1}{N}\sum_{i=1}^{N}\left\{ -\partial_{u,\vartheta}\varpi\left(Y_{t},U_{t,i}^{T}\left(s\right);\widehat{\theta}\right)U_{t,i}^{T}\left(s\right)+\partial_{u,u,\vartheta}\varpi\left(Y_{t},U_{t,i}^{T}\left(s\right);\widehat{\theta}\right)\right\} \mathrm{d}s\right)^{4}\right]\\
 & \leq\sup_{\theta\in B\left(\overline{\theta}\right)}\mathbb{E}\left[\left(\widehat{W}_{t}^{T}\left(\theta\right)\right)^{-6}\right]^{1/2}\\
 & \times\sup_{\theta\in B\left(\overline{\theta}\right)}\mathbb{E}\left[\left(\int_{0}^{\delta_{T}}\frac{1}{N}\sum_{i=1}^{N}\left\{ -\partial_{u,\vartheta}\varpi\left(Y_{t},U_{t,i}^{T}\left(s\right);\theta\right)U_{t,i}^{T}\left(s\right)+\partial_{u,u,\vartheta}\varpi\left(Y_{t},U_{t,i}^{T}\left(s\right);\theta\right)\right\} \mathrm{d}s\right)^{8}\right]^{1/2}\\
 & \leq c\frac{N^{2}}{T^{4}}.
\end{align*}
We also have 
\begin{align*}
 & \widetilde{\mathbb{E}}\left[\left(\nabla L_{t,2}^{T}\right)^{4}\right]\\
 & =\mathbb{E}\left[W\left(\widehat{\theta},U\left(0\right)\right).\left(\nabla L_{t,2}^{T}\right)^{4}\right]\\
 & =\mathbb{E}\left[W\left(\widehat{\theta},U\left(0\right)\right).\left(\int_{0}^{\delta_{T}}\frac{\sum_{i=1}^{N}\left\{ -\partial_{u}\varpi\left(Y_{t},U_{t,i}^{T}\left(s\right);\widehat{\theta}\right)U_{t,i}^{T}\left(s\right)+\partial_{u,u}^{2}\varpi\left(Y_{t},U_{t,i}^{T}\left(s\right);\widehat{\theta}\right)\right\} \nabla\widehat{W}_{t}^{T}\left(\widehat{\theta}\right)}{N\left(\widehat{W}_{t}^{T}\left(\widehat{\theta}\right)\right)^{2}}\mathrm{d}s\right)^{4}\right]\\
 & =\mathbb{E}\left[\nabla\widehat{W}_{t}^{T}\left(\widehat{\theta}\right)\left(\widehat{W}_{t}^{T}\left(\widehat{\theta}\right)\right)^{-3}.\left(\int_{0}^{\delta_{T}}\frac{\sum_{i=1}^{N}\left\{ -\partial_{u}\varpi\left(Y_{t},U_{t,i}^{T}\left(s\right);\widehat{\theta}\right)U_{t,i}^{T}\left(s\right)+\partial_{u,u}^{2}\varpi\left(Y_{t},U_{t,i}^{T}\left(s\right);\widehat{\theta}\right)\right\} }{N}\mathrm{d}s\right)^{4}\right]\\
 & \leq\sup_{\theta\in B\left(\overline{\theta}\right)}\mathbb{E}\left[\left(\nabla\widehat{W}_{t}^{T}\left(\theta\right)\right)^{2}\left(\widehat{W}_{t}^{T}\left(\theta\right)\right)^{-6}\right]^{1/2}\\
 & \times\sup_{\theta\in B\left(\overline{\theta}\right)}\mathbb{E}\left[\left(\int_{0}^{\delta_{T}}\frac{\sum_{i=1}^{N}\left\{ -\partial_{u}\varpi\left(Y_{t},U_{t,i}^{T}\left(s\right);\theta\right)U_{t,i}^{T}\left(s\right)+\partial_{u,u}^{2}\varpi\left(Y_{t},U_{t,i}^{T}\left(s\right);\theta\right)\right\} }{N}\mathrm{d}s\right)^{8}\right]^{1/2}\\
 & \leq\sup_{\theta\in B\left(\overline{\theta}\right)}\mathbb{E}\left[\left(\nabla\widehat{W}_{t}^{T}\left(\theta\right)\right)^{4}\right]^{1/4}\sup_{\theta\in B\left(\overline{\theta}\right)}\mathbb{E}\left[\left(\widehat{W}_{t}^{T}\left(\theta\right)\right)^{-12}\right]^{1/4}\\
 & \times\sup_{\theta\in B\left(\overline{\theta}\right)}\mathbb{E}\left[\left(\int_{0}^{\delta_{T}}\frac{\sum_{i=1}^{N}\left\{ -\partial_{u}\varpi\left(Y_{t},U_{t,i}^{T}\left(s\right);\theta\right)U_{t,i}^{T}\left(s\right)+\partial_{u,u}^{2}\varpi\left(Y_{t},U_{t,i}^{T}\left(s\right);\theta\right)\right\} }{N}\mathrm{d}s\right)^{8}\right]^{1/2}\\
 & \leq c\frac{N^{2}}{T^{4}}.
\end{align*}
We have 
\begin{align*}
\widetilde{\mathbb{E}}\left[\left(\nabla M_{t,1}^{T}\right)^{4}\right] & =\mathbb{E}\left[W\left(\widehat{\theta},U\left(0\right)\right).\left(\int_{0}^{\delta_{T}}\frac{\sqrt{2}}{N\widehat{W}_{t}^{T}}\sum_{i=1}^{N}\partial_{u,\vartheta}\varpi\left(Y_{t},U_{t,i}^{T}\left(s\right);\widehat{\theta}\right)\mathrm{d}B_{t,s}^{i}\right)^{4}\right]\\
 & \leq\sup_{\theta\in B\left(\overline{\theta}\right)}\mathbb{E}\left[\left(\widehat{W}_{t}^{T}\left(\theta\right)\right)^{-6}\right]^{1/2}\sup_{\theta\in B\left(\overline{\theta}\right)}\mathbb{E}\left[\left(\frac{\sqrt{2}}{N}\sum_{i=1}^{N}\int_{0}^{\delta_{T}}\partial_{u,\vartheta}\varpi\left(Y_{t},U_{t,i}^{T}\left(s\right);\theta\right)\mathrm{d}B_{t,s}^{i}\right)^{8}\right]^{1/2}\\
 & \leq c\text{ }\sup_{\theta\in B\left(\overline{\theta}\right)}\mathbb{E}\left[\left(\widehat{W}_{t}^{T}\left(\theta\right)\right)^{-6}\right]^{1/2}\frac{1}{N^{2}}\sup_{\theta\in B\left(\overline{\theta}\right)}\mathbb{E}\left[\left(\int_{0}^{\delta_{T}}\partial_{u,\vartheta}\varpi\left(Y_{t},U_{t,1}^{T}\left(s\right);\theta\right)\mathrm{d}B_{t,s}^{i}\right)^{8}\right]^{1/2}\\
 & \leq c^{\prime}\text{ }\sup_{\theta\in B\left(\overline{\theta}\right)}\mathbb{E}\left[\left(\widehat{W}_{t}^{T}\left(\theta\right)\right)^{-6}\right]^{1/2}\frac{1}{N^{2}}\left[\int_{0}^{\delta_{T}}\sup_{\theta\in B\left(\overline{\theta}\right)}\mathbb{E}\left(\left(\partial_{u,\vartheta}\varpi\left(Y_{t},U_{t,1}^{T}\left(0\right);\theta\right)\right)^{8}\right)^{1/4}\mathrm{d}s\right]^{2}\\
 & \leq\frac{c^{\prime\prime}}{T^{2}}
\end{align*}
and 
\begin{align*}
\widetilde{\mathbb{E}}\left[\left(\nabla M_{t,2}^{T}\right)^{4}\right] & =\mathbb{E}\left[W\left(\widehat{\theta},U\left(0\right)\right).\left(\int_{0}^{\delta_{T}}\frac{\sqrt{2}\nabla\widehat{W}_{t}^{T}}{N\left(\widehat{W}_{t}^{T}\right)^{2}}\left(\sum_{i=1}^{N}\partial_{u}\varpi\left(Y_{t},U_{t,i}^{T}\left(s\right);\widehat{\theta}\right)\right)\mathrm{d}B_{t,s}^{i}\right)^{4}\right]\\
 & \leq\sup_{\theta\in B\left(\overline{\theta}\right)}\mathbb{E}\left[\left(\widehat{W}_{t}^{T}\left(\theta\right)\right)^{-14}\right]^{1/2}\sup_{\theta\in B\left(\overline{\theta}\right)}\mathbb{E}\left[\left(\frac{\sqrt{2}\nabla\widehat{W}_{t}^{T}\left(\theta\right)}{N}\sum_{i=1}^{N}\int_{0}^{\delta_{T}}\left(\partial_{u}\varpi\left(Y_{t},U_{t,i}^{T}\left(s\right);\theta\right)\right)\mathrm{d}B_{t,s}^{i}\right)^{8}\right]^{1/2}\\
 & \leq c\text{ }\sup_{\theta\in B\left(\overline{\theta}\right)}\mathbb{E}\left[\left(\widehat{W}_{t}^{T}\left(\theta\right)\right)^{-14}\right]^{1/2}\sup_{\theta\in B\left(\overline{\theta}\right)}\mathbb{E}\left[\left(\nabla\widehat{W}_{t}^{T}\left(\theta\right)\right)^{16}\right]^{1/4}\\
 & \times\sup_{\theta\in B\left(\overline{\theta}\right)}\mathbb{E}\left[\left(\frac{1}{N}\sum_{i=1}^{N}\int_{0}^{\delta_{T}}\left(\partial_{u}\varpi\left(Y_{t},U_{t,i}^{T}\left(s\right);\theta\right)\right)\mathrm{d}B_{t,s}^{i}\right)^{16}\right]^{1/4}\\
 & \leq\frac{c}{N^{2}}\sup_{\theta\in B\left(\overline{\theta}\right)}\mathbb{E}\left[\left(\int_{0}^{\delta_{T}}\partial_{u,\vartheta}\varpi\left(Y_{t},U_{t,i}^{T}\left(s\right);\theta\right)\mathrm{d}B_{t,s}^{i}\right)^{16}\right]^{1/4}\\
 & \leq c^{\prime}\frac{N^{2}}{T^{4}}.
\end{align*}

\emph{Control of} $\left(\sum_{t=1}^{T}\left(\eta_{t}^{T}\right)^{2}\right)^{2}$.
We have 
\begin{align*}
\widetilde{\mathbb{E}}\left[\left(\sum_{t=1}^{T}\left(\eta_{t}^{T}\right)^{2}\right)^{2}\right] & =\sum_{t=1}^{T}\widetilde{\mathbb{E}}\left[\left(\eta_{t}^{T}\right)^{4}\right]+\sum_{t,s:t\neq s}^{T}\widetilde{\mathbb{E}}\left[\left(\eta_{t}^{T}\right)^{2}\left(\eta_{s}^{T}\right)^{2}\right]\\
 & =\sum_{t=1}^{T}\widetilde{\mathbb{E}}\left[\left(\eta_{t}^{T}\right)^{4}\right]+\sum_{t,s:t\neq s}^{T}\widetilde{\mathbb{E}}\left[\left(\eta_{t}^{T}\right)^{2}\right]\widetilde{\mathbb{E}}\left[\left(\eta_{s}^{T}\right)^{2}\right]
\end{align*}
where 
\[
\widetilde{\mathbb{E}}\left[\left(\eta_{t}^{T}\right)^{2}\right]\widetilde{\mathbb{E}}\left[\left(\eta_{s}^{T}\right)^{2}\right]\leq\widetilde{\mathbb{E}}\left[\left(\eta_{t}^{T}\right)^{4}\right]^{1/2}\widetilde{\mathbb{E}}\left[\left(\eta_{t}^{T}\right)^{4}\right]^{1/2},
\]
with 
\[
\eta_{t}^{T}=L_{t}^{T}+M_{t}^{T}.
\]

Now we have 
\begin{align*}
\widetilde{\mathbb{E}}\left[\left(L_{t}^{T}\right)^{4}\right] & =\mathbb{E}\left[W\left(\widehat{\theta},U\left(0\right)\right)\left(L_{t}^{T}\right)^{4}\right]\\
 & =\mathbb{E}\left[W\left(\widehat{\theta},U\left(0\right)\right)\left(\int_{0}^{\delta_{T}}\frac{1}{N\widehat{W}_{t}^{T}}\sum_{i=1}^{N}\left\{ -\partial_{u}\varpi\left(Y_{t},U_{t,i}^{T}\left(s\right);\widehat{\theta}\right)U_{t,i}^{T}\left(s\right)+\partial_{u,u}^{2}\varpi\left(Y_{t},U_{t,i}^{T}\left(s\right);\widehat{\theta}\right)\right\} \mathrm{d}s\right)^{4}\right]\\
 & =\mathbb{E}\left[\left(\widehat{W}_{t}^{T}\left(\widehat{\theta}\right)\right)^{-3}\left(\frac{1}{N}\sum_{i=1}^{N}\int_{0}^{\delta_{T}}\left\{ -\partial_{u}\varpi\left(Y_{t},U_{t,i}^{T}\left(s\right);\widehat{\theta}\right)U_{t,i}^{T}\left(s\right)+\partial_{u,u}^{2}\varpi\left(Y_{t},U_{t,i}^{T}\left(s\right);\widehat{\theta}\right)\right\} \mathrm{d}s\right)^{4}\right]\\
 & \leq c\text{ }\sup_{\theta\in B\left(\overline{\theta}\right)}\mathbb{E}\left[\left(\widehat{W}_{t}^{T}\left(\theta\right)\right)^{-6}\right]^{1/2}\\
 & \times\sup_{\theta\in B\left(\overline{\theta}\right)}\mathbb{E}\left[\left(\frac{1}{N}\sum_{i=1}^{N}\int_{0}^{\delta_{T}}\left\{ -\partial_{u}\varpi\left(Y_{t},U_{t,i}^{T}\left(s\right);\theta\right)U_{t,i}^{T}\left(s\right)+\partial_{u,u}^{2}\varpi\left(Y_{t},U_{t,i}^{T}\left(s\right);\theta\right)\right\} \mathrm{d}s\right)^{8}\right]^{1/2}\\
 & \leq c^{\prime}\frac{N^{2}}{T^{4}}
\end{align*}
and 
\begin{align*}
\widetilde{\mathbb{E}}\left[\left(M_{t}^{T}\right)^{4}\right] & =\mathbb{E}\left[W\left(\widehat{\theta},U\left(0\right)\right)\left(M_{t}^{T}\right)^{4}\right]\\
 & =\mathbb{E}\left[W\left(\widehat{\theta},U\left(0\right)\right)\left(\int_{0}^{\delta_{T}}\frac{\sqrt{2}}{N\widehat{W}_{t}^{T}}\sum_{i=1}^{N}\partial_{u}\varpi\left(Y_{t},U_{t,i}^{T}\left(s\right);\widehat{\theta}\right)\mathrm{d}B_{t,s}^{i}\right)^{4}\right]\\
 & =\mathbb{E}\left[\left(\widehat{W}_{t}^{T}\left(\widehat{\theta}\right)\right)^{-3}\left(\frac{\sqrt{2}}{N}\sum_{i=1}^{N}\int_{0}^{\delta_{T}}\partial_{u}\varpi\left(Y_{t},U_{t,i}^{T}\left(s\right);\widehat{\theta}\right)\mathrm{d}B_{t,s}^{i}\right)^{4}\right]\\
 & \leq\sup_{\theta\in B\left(\overline{\theta}\right)}\mathbb{E}\left[\left(\widehat{W}_{t}^{T}\left(\theta\right)\right)^{-6}\right]^{1/2}\sup_{\theta\in B\left(\overline{\theta}\right)}\mathbb{E}\left[\left(\frac{\sqrt{2}}{N}\sum_{i=1}^{N}\int_{0}^{\delta_{T}}\partial_{u}\varpi\left(Y_{t},U_{t,i}^{T}\left(s\right);\theta\right)\mathrm{d}B_{t,s}^{i}\right)^{8}\right]^{1/2}\\
 & \leq\sup_{\theta\in B\left(\overline{\theta}\right)}\mathbb{E}\left[\left(\widehat{W}_{t}^{T}\left(\theta\right)\right)^{-6}\right]^{1/2}\frac{1}{N^{2}}\sup_{\theta\in B\left(\overline{\theta}\right)}\mathbb{E}\left[\left(\int_{0}^{\delta_{T}}\partial_{u}\varpi\left(Y_{t},U_{t,1}^{T}\left(s\right);\theta\right)\mathrm{d}B_{t,s}^{i}\right)^{8}\right]^{1/2}\\
 & \leq\sup_{\theta\in B\left(\overline{\theta}\right)}\mathbb{E}\left[\left(\widehat{W}_{t}^{T}\left(\theta\right)\right)^{-6}\right]^{1/2}\frac{1}{N^{2}}\left(\int_{0}^{\delta_{T}}\sup_{\theta\in B\left(\overline{\theta}\right)}\mathbb{E}\left[\left(\partial_{u}\varpi\left(Y_{t},U_{t,1}^{T}\left(0\right);\theta\right)^{8}\right)\right]^{1/2}\mathrm{d}s\right)^{2}\\
 & \leq c\frac{\delta_{T}^{2}}{N^{2}}\leq\frac{c^{\prime}}{T^{2}}.
\end{align*}

Combining all the terms, we have shown that the ESJD\ is $O\left(1\right)$
almost surely.
\end{document}